\theoremstyle{plain}
\numberwithin{equation}{section}
\theoremstyle{plain}
\newtheorem{theorem}{Theorem}[section]
\theoremstyle{plain}
\theoremstyle{definition}
\newtheorem{definition}[theorem]{Definition}
\theoremstyle{definition}
\newtheorem{example}[theorem]{Example}
\theoremstyle{plain}
\newtheorem{lemma}[theorem]{Lemma}
\theoremstyle{plain}
\newtheorem{corollary}[theorem]{Corollary}
\theoremstyle{definition}
\newtheorem{remark}[theorem]{Remark}
\newtheorem*{conjecture}{Conjecture}
\begin{document}
	
\title{Efficiency of estimators for locally asymptotically normal quantum statistical models}
\author{Akio Fujiwara%
	\thanks{fujiwara@math.sci.osaka-u.ac.jp}\\
	{Department of Mathematics, Osaka University}\\ 
	{Toyonaka, Osaka 560-0043, Japan}\\ \\
	and \\ \\
	Koichi Yamagata%
	\thanks{yamagata@se.kanazawa-u.ac.jp (Current affiliation: Kanazawa University)}\\
	{Principles of Informatics Research Division} \\
		{National Institute of Informatics} \\
	{Hitotsubashi, Chiyoda-ku, Tokyo 101-8430, Japan}
}%
\date{}

\maketitle




\global\long\def\E{\mathcal{E}}%
\global\long\def\S{\mathcal{S}}%
\global\long\def\R{\mathbb{R}}%
\global\long\def\C{\mathbb{C}}%
\global\long\def\Q{\mathbb{Q}}%
\global\long\def\Z{\mathbb{Z}}%
\global\long\def\D{\mathcal{D}}%
\global\long\def\M{\mathcal{M}}%
\global\long\def\X{\mathcal{X}}%
\global\long\def\F{\mathcal{F}}%
\global\long\def\B{\mathcal{B}}%
\global\long\def\T{\mathcal{T}}%
\global\long\def\para{\xi}%
\global\long\def\P{\mathcal{P}}%
\global\long\def\Para{\Xi}%
\global\long\def\H{\mathcal{H}}%
\global\long\def\Tr{{\rm Tr}\,}%
\global\long\def\L{\mathcal{L}}%
\global\long\def\conv#1{\stackrel{#1}{\rightsquigarrow}}%
\global\long\def\convp#1{\underset{#1}{\to}}%
\global\long\def\cconv#1{\stackrel[\D]{#1}{\rightsquigarrow}}%
\global\long\def\ket#1{\left|#1\right\rangle }%
\global\long\def\bra#1{\left\langle #1\right|}%
\global\long\def\i{\sqrt{-1}}%
\global\long\def\ind{\mathbbm{1}}%
\global\long\def\indicate{\ind}%
\global\long\def\superL{{\mathcal L}}%
\global\long\def\superR{{\mathcal R}}%
\global\long\def\braket#1#2{\left\langle #1\mid#2\right\rangle }%
\global\long\def\Ratio{\mathcal{R}}%
\global\long\def\N{\mathbb{N}}%
\global\long\def\A{\mathcal{A}}%

\global\long\def\trans{\,^{t}}

\global\long\def\convd#1{\overset{#1}{\rightsquigarrow}}%
\global\long\def\convp#1{\overset{#1}{\rightarrow}}%
\global\long\def\convbe{\stackrel{\rightsquigarrow}{\rightsquigarrow}}%

\global\long\def\tcirc#1{\textcircled{\footnotesize#1}}%
\global\long\def\tbox#1{\ovalbox{\footnotesize#1}}%
\global\long\def\tdbox#1{\doublebox{\footnotesize#1}}%

\global\long\def\convd#1{\overset{#1}{\rightsquigarrow}}%


\begin{abstract}
We herein establish an asymptotic representation theorem for locally asymptotically normal quantum statistical models. 
This theorem enables us to study the asymptotic efficiency of quantum estimators such as quantum regular estimators and quantum minimax estimators, leading to a universal tight lower bound beyond the i.i.d.~assumption. 
This formulation complements the theory of quantum contiguity developed in the previous paper [Fujiwara and Yamagata, {\it Bernoulli} {\bf 26} (2020) 2105-2141],
providing a solid foundation of the theory of weak quantum local asymptotic normality.
\end{abstract}

\section{Introduction}\label{sec:Introduction}

In classical statistics, a sequence $\{ P_{\theta}^{(n)}:\theta\in\Theta\subset\R^{d}\}$ of statistical models on measurable spaces $(\Omega^{(n)}, \F^{(n)})$ is called {\em locally asymptotically normal} (LAN) at $\theta_{0}\in\Theta$ 
(in the `weak' sense) 
if the log-likelihood ratio $\log \,({dP_\theta^{(n)}}/{dP_{\theta_{0}}^{(n)}})$ 
is expanded in the local parameter $h:=\sqrt{n}(\theta-\theta_0)$ 
as 
\begin{equation} \label{eq:lan}
 \log\frac{dP_{\theta_{0}+h/\sqrt{n}}^{(n)}}{dP_{\theta_{0}}^{(n)}}
 =h^{i}\Delta_{i}^{(n)}-\frac{1}{2}h^{i}h^{j}J_{ij}+o_{P_{\theta_0}}(1).
\end{equation}
Here, $\Delta^{(n)}=(\Delta_1^{(n)},\,\dots,\,\Delta_d^{(n)})$ is a list of $d$-dimensional random vectors on each $(\Omega^{(n)}, \F^{(n)})$ that exhibits 
\[ \Delta^{(n)}\convd 0 N(0,J) \]
with $J$ being a $d\times d$ real symmetric strictly positive matrix, 
the arrow $\convd h$ stands for the convergence in distribution under $P_{\theta_{0}+h/\sqrt{n}}^{(n)}$,  the remainder term $o_{P_{\theta_0}}(1)$ converges in probability to zero under $P_{\theta_0}^{(n)}$, and Einstein's summation convention is used. 

There is an obvious similarity between \eqref{eq:lan} and the log-likelihood ratio of the Gaussian shift model:
\begin{equation*} 
 \log \frac{dN(Jh,J)}{dN(0,J)} (X_1,\dots,\,X_d)=h^i X_i -\frac{1}{2}h^ih^j J_{ij}.
\end{equation*}
In fact, this similarity is a manifestation of a profound connection between the local parameter model 
$\{ P_{\theta_{0}+h/\sqrt{n}}^{(n)} : h\in\R^{d}\}$ 
and the Gaussian shift model $\{ N(Jh,J) : h\in\R^{d}\}$, 
playing an important role in asymptotic statistics \cite{vaart}. 

In general, a statistical theory comprises two parts: 
one is to prove the existence of a statistic that possesses a certain desired property (direct part), and 
the other is to prove the non-existence of a statistic that exceeds that property (converse part). 
In the problem of asymptotic efficiency, 
the converse part, the impossibility to do asymptotically better than the best which can be done in the limit situation, 
is ensured by the so-called asymptotic representation theorem \cite[Theorem 7.10]{vaart}. 

\begin{theorem}[Asymptotic representation theorem] \label{thm:crep}
Assume that $\{ P_{\theta}^{(n)} : \theta\in\Theta\subset\R^{d}\}$
is LAN at $\theta_{0}\in\Theta$.
Let $T^{(n)}$ be statistics on the local models $P_{\theta_{0}+h/\sqrt{n}}^{(n)}$
that are weakly convergent under every $h\in\R^d$. 
Then, there exists a randomized statistic $T$
on the Gaussian shift model $N(Jh,J)$ such that 
$T^{(n)}\conv hT\label{eq:crep}$
for every $h$.
\end{theorem}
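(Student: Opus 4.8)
The plan is to follow the classical strategy for proving the asymptotic representation theorem, adapting each step to the quantum setting where the local models $P_{\theta_0+h/\sqrt n}^{(n)}$ are replaced by families of quantum states, but since the statement here is the \emph{classical} Theorem~\ref{thm:crep} serving as a template, I will sketch the standard proof à la van der Vaart. First I would reduce everything to the joint behaviour of the pair $(T^{(n)}, \Delta^{(n)})$: the LAN expansion \eqref{eq:lan} shows that the likelihood ratios $dP_{\theta_0+h/\sqrt n}^{(n)}/dP_{\theta_0}^{(n)}$ are asymptotically a measurable function of $\Delta^{(n)}$ alone (up to $o_{P_{\theta_0}}(1)$), so by Prohorov's theorem I can pass to a subsequence along which $(T^{(n)}, \Delta^{(n)})$ converges in distribution under $P_{\theta_0}^{(n)}$ to some pair $(T, \Delta)$ on a suitable probability space, with $\Delta\sim N(0,J)$.

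The key step is Le Cam's third lemma together with a change-of-measure argument: because the log-likelihood ratios converge jointly with $(T^{(n)},\Delta^{(n)})$ and the limit experiment generated by $\Delta$ is exactly the Gaussian shift $\{N(Jh,J):h\in\R^d\}$ (identifying $\Delta$ with the observation $X$), one shows that the limit law of $T^{(n)}$ under $P_{\theta_0+h/\sqrt n}^{(n)}$ equals the law of $T$ computed under the measure with density $dN(Jh,J)/dN(0,J)$ evaluated at $\Delta$. Concretely, for bounded continuous $f$,
\[
 E_{\theta_0+h/\sqrt n}\bigl[f(T^{(n)})\bigr]
 = E_{\theta_0}\Bigl[f(T^{(n)})\,e^{h^i\Delta_i^{(n)}-\frac12 h^ih^jJ_{ij}+o_{P_{\theta_0}}(1)}\Bigr]
 \longrightarrow E\Bigl[f(T)\,e^{h^i\Delta_i-\frac12 h^ih^jJ_{ij}}\Bigr],
\]
using uniform integrability of the likelihood ratios (which follows since they have expectation tending to $1$). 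The right-hand side is precisely $E_{N(Jh,J)}[\,\overline f(X)\,]$ where $\overline f(x):=E[f(T)\mid \Delta=x]$ defines, via a regular conditional distribution, a Markov kernel from the Gaussian shift sample space to the range of $T$; this kernel \emph{is} the randomized statistic $T$ on $\{N(Jh,J)\}$ we seek, and the displayed identity says exactly $T^{(n)}\convd h T$ along the subsequence.

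Finally I would remove the passage to a subsequence: since every subsequence has a further subsequence along which $T^{(n)}\convd h T$ for a limit that is necessarily of the above form, and since the marginal limit law under each fixed $h$ is thereby pinned down, a standard subsequence argument gives convergence along the full sequence. The main obstacle I anticipate is the uniform integrability / contiguity input needed to justify the interchange of limit and expectation in the change-of-measure step — i.e.\ showing that the sequence of likelihood ratios is asymptotically uniformly integrable under $P_{\theta_0}^{(n)}$ so that weak convergence of the triple upgrades to convergence of the tilted expectations; in the classical case this is Le Cam's first/third lemma, and it is precisely this ingredient that the companion theory of quantum contiguity in the cited \emph{Bernoulli} paper is designed to supply in the quantum analogue. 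A secondary technical point is the existence of a regular conditional distribution of $T$ given $\Delta$, which requires the range of $T$ to be a nice (e.g.\ Polish) space; this is the standard setting and poses no real difficulty.
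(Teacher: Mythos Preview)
Your proposal is correct and follows essentially the same route as the paper's proof in Appendix~\ref{suppl:clan}: both extract a subsequence via Prohorov/tightness to obtain a joint limit of $(T^{(n)},\Delta^{(n)})$ under $P_{\theta_0}^{(n)}$, then use contiguity and uniform integrability of the likelihood ratios to transfer this to all $h$ via the change-of-measure identity you wrote. The only cosmetic difference is that the paper constructs the Markov kernel explicitly as a Radon--Nikodym density $M_t(x)=d\overline\mu_t/d\overline\mu_\infty$ of sub-probability measures $\mu_t^{(n)}(B)=P_{\theta_0}^{(n)}(T^{(n)}\le t,\ \Delta^{(n)}\in B)$, rather than invoking regular conditional distributions abstractly; this is the same object as your $E[f(T)\mid\Delta=x]$, just built by hand in a way that parallels the POVM construction used later in the quantum Theorem~\ref{thm:qRep}.
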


For an accessible proof, see Appendix \ref{suppl:clan}.
Theorem \ref{thm:crep} allows us to deduce in several precise mathematical senses that no estimator can asymptotically do better than what can be achieved in the limiting Gaussian shift model. 
For example, this theorem leads to the convolution theorem, which tells us that \emph{regular} estimators (estimators whose asymptotic behavior in a small neighborhood of $\theta_0$ is more or less stable as the parameter varies) have a limiting distribution which in a very strong sense is more disperse than the optimal limiting distribution which we expect from the limiting statistical problem.
Another option is to use the representation theorem to derive the asymptotic minimax theorem, 
telling us that the worst behavior of an estimator as $\theta$ varies in a shrinking neighborhood of $\theta_0$ cannot improve on what we expect from the limiting problem.
This theorem applies to \emph{all} possible estimators, but only discusses their \emph{worst} behavior in a neighborhood of $\theta_0$. 

Extending the notion of local asymptotic normality to the quantum domain was pioneered by Gu\c{t}\u{a} and Kahn \cite{guta_qubit, guta_qudit}. 
They proved that, given a quantum parametric model 
$\S(\C^{D})=\{ \rho_{\theta}>0 : \theta\in\Theta\subset\R^{D^{2}-1}\} $
comprising the totality of faithful density operators on a $D$-dimensional
Hilbert space and a point $\theta_{0}$ on the parameter space $\Theta$
at which $\rho_{\theta_{0}}$ is nondegenerate 
(i.e., every eigenvalue of $\rho_{\theta_{0}}$ is simple), 
there exist quantum channels $\Gamma^{(n)}$ and $\Lambda^{(n)}$, 
as well as compact sets  $K^{(n)} \subset \R^{D^{2}-1}$ satisfying $K^{(n)}\uparrow \R^{D^2-1}$, such that 
\[
\lim_{n\to\infty}\sup_{h\in K^{(n)}} \left\Vert \sigma_{h}-\Gamma^{(n)}(\rho_{\theta_{0}+h/\sqrt{n}}^{\otimes n}) \right\Vert _{1}=0
\quad\text{and}\quad
\lim_{n\to\infty}\sup_{h\in K^{(n)}} \left \Vert \Lambda^{(n)}(\sigma_{h})-\rho_{\theta_{0}+h/\sqrt{n}}^{\otimes n} \right\Vert _{1}=0,
\]
where $\{ \sigma_{h} :  h\in\R^{D^{2}-1}\}$ is a family of classical/quantum-mixed Gaussian shift model.
Later, Lahiry and Nussbaum \cite{low_rank} extended their formulation to models that comprise non-faithful density operators but have the same rank.
Note that these formulations are not a direct analogue of the weak LAN
defined by (\ref{eq:lan}); in particular, the convergence to a quantum
Gaussian shift model is evaluated not by the convergence in distribution
but by the convergence in trace norm. In this sense, their formulation
could be referred to as a `strong' q-LAN (cf., \cite{GillGuta}). 
Meanwhile, Gu\c{t}\u{a} and Jen\v{c}ov\'a \cite{GutaJencova} also tried to formulate a `weak' q-LAN based on the Connes cocycle derivative, which was sometimes regarded as a proper quantum analogue of the likelihood ratio. However, they did not establish an asymptotic expansion formula which would be directly analogous to \eqref{eq:lan}.

A different approach to a `weak' q-LAN was put forward by the present authors \cite{qlan_first, qcontiguity}.
Given two quantum states $\rho,\sigma\in\S(\H)$ on a finite dimensional Hilbert space $\H$, 
define the {\em square-root likelihood ratio} $\Ratio\left(\sigma\mid\rho\right)$ of $\sigma$ relative to $\rho$ as the positive operator $R$ satisfying the quantum Lebesgue decomposition 
$\sigma=R\rho R+\sigma^{\perp}$,
where the singular part $\sigma^{\perp}$ is the positive operator that satisfies $\Tr\rho\sigma^{\perp}=0$.
The notion of (weak) q-LAN is defined as follows. (See \cite{qcontiguity} for details.)

\begin{definition}[q-LAN]\label{def:qLAN}
A sequence $\S^{(n)}=\{ \rho_{\theta}^{(n)}\mid\theta\in\Theta\subset\R^{d}\}$ of quantum statistical models on Hilbert spaces $\H^{(n)}$ is called {\em quantum locally asymptotically normal} (q-LAN) at $\theta_0\in\Theta$ 
if the square-root likelihood ratio 
$R_{h}^{(n)}=\Ratio(\rho_{\theta_{0}+h/\sqrt{n}}^{(n)}\mid\rho_{\theta_{0}}^{(n)})$
is expanded in $h\in\R^{d}$ as
\begin{equation} \label{eq:R_expand}
 \log\left( R_{h}^{(n)} + o_{L^{2}}(\rho_{\theta_{0}}^{(n)}) \right)^2
 =h^{i}\Delta_{i}^{(n)}-\frac{1}{2}(h^{i}h^{j} J_{ij} )I^{(n)}+o_{D}(h^{i}\Delta_{i}^{(n)},\rho_{\theta_{0}}^{(n)}).
\end{equation}
Here, $\Delta^{(n)}=(\Delta_{1}^{(n)},\dots,\Delta_{d}^{(n)})$ is
a list of observables on each $\H^{(n)}$ that exhibits
\begin{equation*} 
\Delta^{(n)}\conv{\rho_{\theta_{0}}^{(n)}}N(0,J)
\end{equation*}
with $J$ being a $d\times d$ complex nonnegative matrix%
\footnote{
For a complex covariance matrix $J$, the state $N(0,J)$ is regarded as a hybrid classical/quantum Gaussian state. 
Specifically, $N(0, J)$ is classical if and only if ${\rm Im}\,J=0$, and is purely quantum if and only if ${\rm Im}\,J$ is invertible. 
For more information, see 
Appendix \ref{app:degenerate_CCR}.
}
satisfying ${\rm Re}\, J>0$, 
the arrow $\conv{\rho_{\theta_{0}}^{(n)}}$ stands for the quantum convergence in distribution under $\rho_{\theta_0}^{(n)}$ defined by the convergence of the quasi-characteristic function,
and $o_{L^{2}}(\rho_{\theta_{0}}^{(n)})$ and $o_{D}(h^{i}\Delta_{i}^{(n)},\rho_{\theta_{0}}^{(n)})$ are infinitesimal remainder terms in $L^2$ and in distribution, respectively.
\end{definition}

One may recognize a clear parallelism between the classical definition \eqref{eq:lan} and the quantum one \eqref{eq:R_expand}. 
In fact, the theory of weak q-LAN based on \eqref{eq:R_expand} has been successfully applied to quantum statistical models satisfying mild regularity conditions,
culminating in the derivation of (an abstract version of) the quantum Le Cam third lemma \cite{qcontiguity}.  
However, this theory is not yet fully satisfactory because it lacks tools to cope with the converse problems, that is, to prove the impossibility of doing asymptotically better than the best which can be done on the limiting model specified by the quantum Le Cam third lemma.
For example, we do not know conditions to get rid of asymptotically superefficient estimators that break the Holevo bound in an i.i.d.~model. 

In the context of these circumstances, we aim to establish a noncommutative counterpart of Theorem \ref{thm:crep} that enables us to study the converse part in quantum asymptotic statistics.
The paper is organized as follows. 
In Section \ref{sec:main_results}, we summarize the main results, including the asymptotic quantum representation theorem for q-LAN models, and a universal tight bound for efficiency that generalizes the Holevo bound to generic (not necessarily i.i.d.) models.  
This section will also serve as an overview of the paper.
In Section \ref{sec:preliminary}, we provide some mathematical tools and a number of lemmas that are used in the proof of the representation theorem, and the proof of the theorem itself is carried over to the succeeding Section \ref{sec:proof_qrep}. 
In Section \ref{sec:applications}, we apply the representation theorem 
to the analysis of efficiency for sequences of quantum estimators such as the quantum Hodges estimator, 
quantum regular estimators, quantum minimax estimators, and the quantum James-Stein estimator.
Section \ref{sec:conclusions} is devoted to concluding remarks.

Some additional materials are provided in Appendix, including 
a proof of Theorem \ref{thm:crep} (Appendix \ref{suppl:clan}),
a comprehensible account of degenerate canonical commutation relation (CCR) and hybrid classical/quantum Gaussian states (Appendix \ref{app:degenerate_CCR}),
a detailed account of the notion of $D$-extendibility 
(Appendix \ref{app:D-extension}),
and proofs of lemmas and theorems presented in Sections \ref{sec:preliminary} and \ref{sec:applications} (Appendix \ref{app:ProofLemmas} and \ref{app:ProofApplications}, respectively). 

\section{Main results}\label{sec:main_results}

Assume that a sequence $\S^{(n)}=\{ \rho_{\theta}^{(n)} : \theta\in\Theta\subset\R^{d}\}$ of quantum statistical models is q-LAN at $\theta_0\in\Theta$ as in Definition \ref{def:qLAN}.
In view of the classical representation theorem (Theorem \ref{thm:crep}), one may envisage the following

\begin{conjecture}
Let $M^{(n)}=\{M^{(n)} (B)\}_{B\in\B(\R^s)}$ be a sequence of POVMs over the Borel $\sigma$-algebra $\B(\R^s)$
of $\R^s$ such that the corresponding sequence of classical probability measures 
\[ \L_{h}^{(n)}:=\Tr \rho_{\theta_{0}+h/\sqrt{n}}^{(n)} M^{(n)} \]
is weakly convergent to some probability measure $\L_{h}$ for every $h$. 
Then there would exist a POVM $M^{(\infty)}=\{M(B)\}_{B\in\B(\R^s)}$ on ${\rm CCR}({\rm Im}\,J)$ such that 
\[ \phi_h(M^{(\infty)}(B))=\L_{h}(B) \]
for every $h$, where $\phi_h \sim N(({\rm Re}\,J)\,h,J)$.
\end{conjecture}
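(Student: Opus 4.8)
The plan is to transcribe the classical argument behind Theorem~\ref{thm:crep} (Appendix~\ref{suppl:clan}) into the noncommutative setting, replacing weak convergence of experiments by the q-LAN expansion \eqref{eq:R_expand} together with the quantum Le~Cam machinery of \cite{qcontiguity}, and replacing the classical extraction of a limiting Markov kernel by a quantum analogue of Le~Cam/Prohorov compactness. The guiding dictionary is: likelihood ratios $\leftrightarrow$ the observables $\Delta^{(n)}$ (and bounded functions of $\log(R_h^{(n)})^2$); the limiting $\sigma$-field $\leftrightarrow$ the Weyl algebra ${\rm CCR}({\rm Im}\,J)$; a randomized statistic $\leftrightarrow$ a POVM on ${\rm CCR}({\rm Im}\,J)$.

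\textbf{Step 1 (the limit experiment).} By q-LAN at $\theta_0$ and the computations underlying the quantum Le~Cam third lemma in \cite{qcontiguity}, the joint quantum distribution under $\rho_{\theta_0}^{(n)}$ of $\Delta^{(n)}$ together with bounded functions of $\log(R_h^{(n)})^2$ converges to the corresponding joint state on ${\rm CCR}({\rm Im}\,J)$; pushing this forward via the (quantum) third lemma, the observables $\Delta^{(n)}$ converge in quantum distribution under $\rho_{\theta_0+h/\sqrt n}^{(n)}$ to the hybrid classical/quantum Gaussian state $\phi_h\sim N(({\rm Re}\,J)h,J)$ (Appendix~\ref{app:degenerate_CCR}). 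In this sense the local models $\{\rho_{\theta_0+h/\sqrt n}^{(n)} : h\in\R^d\}$ converge to the Gaussian shift $\{\phi_h : h\in\R^d\}$ on ${\rm CCR}({\rm Im}\,J)$.

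\textbf{Step 2 (attaching the measurement and extracting a limit).} For a finite configuration of Weyl-type exponentials $W^{(n)}(\xi)$ built from $\Delta^{(n)}$ and a bounded continuous $f$ on $\R^s$, put
\[
 c^{(n)}(\xi,f;h) := \Tr \rho_{\theta_0+h/\sqrt n}^{(n)}\,\tfrac12\bigl( W^{(n)}(\xi)\,M^{(n)}(f) + M^{(n)}(f)\,W^{(n)}(\xi)^{*}\bigr),\qquad M^{(n)}(f):=\textstyle\int f\,dM^{(n)}.
\]
The hypothesis $\L_h^{(n)}\to\L_h$ weakly makes $\{\L_0^{(n)}\}_n$ tight; together with the convergence of the $\Delta^{(n)}$-marginals from Step~1, a diagonal argument over a countable convergence-determining family of test data $(\xi,f)$ and a dense set of $h$'s (the $h$-dependence entering only through the state) produces, along a subsequence, a limiting positive normalized functional on a $C^*$-completion of ${\rm CCR}({\rm Im}\,J)\otimes C_b(\R^s)$. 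This functional corresponds to a POVM $M^{(\infty)}=\{M(B)\}_{B\in\B(\R^s)}$ on ${\rm CCR}({\rm Im}\,J)$ satisfying $\phi_h(M^{(\infty)}(B))=\L_h(B)$ for every $h$, and since only the existence of one such $M^{(\infty)}$ is asserted, passage to a single convergent subsequence suffices.

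\textbf{Main obstacle.} The crux of Step~2 is the affiliation claim: that the limiting functional genuinely factors through ${\rm CCR}({\rm Im}\,J)$ rather than through some strictly larger operator algebra, i.e.\ that the part of $M^{(n)}$ not captured by the algebra generated by $\Delta^{(n)}$ becomes asymptotically invisible to the states $\rho_{\theta_0+h/\sqrt n}^{(n)}$. Classically this is automatic---the limiting experiment's $\sigma$-field is generated by the likelihood ratios, so every limiting randomized statistic is measurable with respect to it---but in the quantum case there is no fixed ambient algebra containing the $B(\H^{(n)})$, the observables $\Delta^{(n)}$ need not generate $B(\H^{(n)})$, and the target Weyl algebra is degenerate when ${\rm Im}\,J$ is singular, so the affiliation is not for free. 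This is precisely where the notion of $D$-extendibility (Appendix~\ref{app:D-extension}) and a quantum substitute for Le~Cam's compactness of the space of transitions must enter; it is in fact conceivable that, without an extra regularity hypothesis, the conclusion as literally stated is too strong and must be relaxed to having $M^{(\infty)}$ live on a $D$-extended CCR algebra. Controlling this gap---and the bookkeeping caused by the hybrid classical/quantum structure of $\phi_h$---is where the real work lies; once it is settled, the identity $\phi_h(M^{(\infty)}(B))=\L_h(B)$ for all $h$ drops out of matching quantum characteristic functions through the third lemma.
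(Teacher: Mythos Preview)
The statement you are trying to prove is \emph{false}. In the paper it is explicitly labeled a \textbf{Conjecture} and is immediately followed by a counterexample (Example~\ref{eg:noLimitPOVM}). In that example one has a one-parameter pure state model with $J=1$ real, so ${\rm CCR}({\rm Im}\,J)={\rm CCR}(0)=L^\infty(\R)$ and $\phi_h\sim N(h,1)$ is the classical Gaussian shift. A concrete sequence of binary POVMs $M^{(n)}$ is exhibited whose limiting law is $\L_h(0)=e^{-h^2/4}$, $\L_h(1)=1-e^{-h^2/4}$. Any POVM on $L^\infty(\R)$ realizing this would be given by a measurable function $0\le m(x)\le 1$ with $\int m(x)\,p_h(x)\,dx=e^{-h^2/4}$ for all $h$; solving the convolution equation forces $m(x)=\sqrt{2}\,e^{-x^2/2}$ a.e., which violates $m\le 1$. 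Hence no such $M^{(\infty)}$ exists on ${\rm CCR}({\rm Im}\,J)$.

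Your own ``Main obstacle'' paragraph essentially diagnoses the failure correctly: the affiliation claim in Step~2---that the limiting functional factors through ${\rm CCR}({\rm Im}\,J)$---is exactly what breaks. The part of $M^{(n)}$ orthogonal to the algebra generated by $\Delta^{(n)}$ does \emph{not} become asymptotically invisible to the states; the counterexample shows it can carry nontrivial limiting information. This is precisely why the paper introduces $D$-extendibility (Definition~\ref{def:propertyD}) and proves the representation theorem (Theorem~\ref{thm:qRep}) with the limiting POVM living on the larger algebra ${\rm CCR}({\rm Im}\,\Sigma)$ associated with a $D$-extension $X^{(n)}$ of $\Delta^{(n)}$, not on ${\rm CCR}({\rm Im}\,J)$. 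Your suspicion that ``the conclusion as literally stated is too strong and must be relaxed to having $M^{(\infty)}$ live on a $D$-extended CCR algebra'' is exactly the paper's resolution.
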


However, such a naive guess fails, as the following example shows. 

\begin{example}\label{eg:noLimitPOVM}
Let us consider the following one-dimensional pure state model:
\[
\rho_{\theta}=\frac{2}{e^{\theta}+e^{-\theta}}\,e^{\frac{\theta}{2}\sigma_x}\, \rho_0 \, e^{\frac{\theta}{2}\sigma_x},
\qquad (\theta\in\R)
\]
where
\[
 \sigma_x=\begin{pmatrix}0 & 1\\ 1 & 0\end{pmatrix} 
 \quad \mbox{and} \quad 
 \rho_0=\begin{pmatrix}1 & 0\\ 0 & 0\end{pmatrix}.
\]
This model has an SLD $\sigma_x$ at $\theta=0$. 
Let
\[
\Delta^{(n)}:=\frac{1}{\sqrt{n}}\sum_{k=1}^{n}I^{\otimes(k-1)}\otimes\sigma_x\otimes I^{\otimes(n-k)}. 
\]
Then it is shown (cf., \cite[Section 3.2]{qlan_first}, \cite[Section 7.3]{qcontiguity}) that $\rho_\theta^{\otimes n}$ is q-LAN at $\theta=0$, and 
\[
\Delta^{(n)}\conv{\rho_{h/\sqrt{n}}^{\otimes n}}N(h,1).
\]
However, there is a sequence of POVMs that does not have a limiting POVM on the (classical) Gaussian shift model $N(h,1)$. 

Let $M^{(n)}$ be a binary-valued POVM on $\rho_{h/\sqrt{n}}^{\otimes n}$ defined by
\[
M^{(n)}(0)=\rho_0^{\otimes n},\quad
M^{(n)}(1)=I^{(n)}-\rho_0^{\otimes n}.
\]
Then
\[
\lim_{n\to\infty}\Tr\rho_{h/\sqrt{n}}^{\otimes n}M^{(n)}(0)
=\lim_{n\to\infty}\left(\Tr\rho_{h/\sqrt{n}}\;\rho_0\right)^{n}=e^{-\frac{1}{4}h^{2}}, 
\]
and thus the sequence of POVMs has a limiting distribution
\[
\L_{h}(0)=e^{-\frac{1}{4}h^{2}},\quad
\L_{h}(1)=1-e^{-\frac{1}{4}h^{2}}
\]
for each $h\in\R$. 

Now, suppose that this distribution is realized by a binary-valued POVM $M^{(\infty)}$ that is independent of $h$.
Since the limiting Gaussian shift model $N(h,1)$ is classical, $M^{(\infty)}$ is represented by a measurable function $m(x)$ on $\R$ such that 
\[
M^{(\infty)}(0)=m(x),\quad
M^{(\infty)}(1)=1-m(x).
\]
Specifically, $0 \le m(x) \le 1$ for all $x\in\R$, and
\begin{equation}\label{eq:counter_ex}
 e^{-\frac{1}{4}h^{2}}=\int_{-\infty}^\infty m(x) p_h(x) dx
\end{equation}
for all $h\in\R$, where $p_h(x)=\frac{1}{\sqrt{2\pi} }e^{-(x-h)^2/2}$ is the density function of $N(h,1)$. 
However, \eqref{eq:counter_ex} has the solution
\[
 m(x)=\sqrt{2}\,e^{-\frac{1}{2}x^{2}},\quad(\mbox{a.e.})
\]
which does not fulfill the requirement that $0 \le m(x) \le 1$. This is a contradiction.
\end{example}

Example \ref{eg:noLimitPOVM} demonstrates that we need some additional condition to establish an asymptotic representation theorem in the quantum domain.
In fact, the following condition will prove to be sufficient.

\begin{definition}[$D$-extendibility]\label{def:propertyD}
Given a sequence
$\S^{(n)}=\{ \rho_{\theta}^{(n)} : \theta\in\Theta\subset\R^{d}\}$
of quantum statistical models on $\H^{(n)}$, a sequence $X^{(n)}=(X_{1}^{(n)},\dots,X_{r}^{(n)})$ of observables on $\H^{(n)}$ is called {\em asymptotically $D$-invariant} at $\theta_0\in\Theta$ if it fulfills the following requirements: 
\begin{equation}\label{eq:X_cond2}
X^{(n)}\conv{\rho_{\theta_{0}}^{(n)}}N(0,\Sigma)
\end{equation}
for some $r\times r$ nonnegative matrix $\Sigma$ with ${\rm Re}\,\Sigma>0$, 
and
\begin{equation}\label{eq:X_cond3}
\lim_{n\to\infty}
\Tr\sqrt{\rho_{\theta_{0}}^{(n)}}e^{\sqrt{-1}\xi^{i}X_{i}^{(n)}}\sqrt{\rho_{\theta_{0}}^{(n)}}e^{\sqrt{-1}\eta^{i}X_{i}^{(n)}} 
=e^{-\frac{1}{2}
 \begin{pmatrix} \xi \\ \eta \end{pmatrix}^\top
 \begin{pmatrix} \Sigma & \Sigma\#\Sigma^{\top} \\ \Sigma\#\Sigma^{\top} & \Sigma^{\top} \end{pmatrix}
 \begin{pmatrix} \xi \\ \eta \end{pmatrix}}
\end{equation}
for all $\xi,\eta\in\R^{r}$, where $\#$ stands for the operator geometric mean \cite{{Ando},{kubo}}. 

A sequence $\S^{(n)}=\{ \rho_{\theta}^{(n)} : \theta\in\Theta\subset\R^{d}\}$ of quantum statistical models that is q-LAN at $\theta_0\in\Theta$ is called {\em $D$-extendible} at $\theta_0$
if there exists a sequence $X^{(n)}=(X_{i}^{(n)})_{1\le i\le r}$ of observables as well as an $r\times d$ real matrix $F$ such that 
\begin{equation}\label{eq:X_cond1}
\Delta_{k}^{(n)}=F_{k}^{i}X_{i}^{(n)}\qquad (1\le k \le d,\; n\in\N)
\end{equation}
and $X^{(n)}$ is asymptotically $D$-invariant at $\theta_0\in\Theta$. 
Such a sequence $X^{(n)}$ is called a {\em $D$-extension} of $\Delta^{(n)}$. 
\end{definition}

\begin{remark}\label{rem:DinvExt}
One may have the impression that the condition \eqref{eq:X_cond3} is strange and intractable; but in reality it is not too restrictive in applications. 
For example, 
let $\S=\{ \rho_{\theta} : \theta\in\Theta\subset\R^{d}\}$ be a quantum statistical model on a finite dimensional Hilbert space $\H$. 
Then, under some mild regularity conditions, 
the sequence $\S^{(n)}:=\{ \rho_{\theta}^{\otimes n} : \theta\in\Theta\subset\R^{d}\}$ of i.i.d.~models on $\H^{\otimes n}$ is not only q-LAN at a given $\theta_0\in\Theta$ \cite[Theorem 7.6]{qcontiguity}, but also $D$-extendible at $\theta_0$. 
For a proof, see Appendix \ref{app:D-extension}, where the idea behind the term `asymptotic $D$-invariance' is also clarified 
and a proper perspective on the model in Example \ref{eg:noLimitPOVM} is demonstrated.
There are of course models $\S^{(n)}$ that are non-i.i.d.~but are, nevertheless, q-LAN and $D$-extendible;
a simple example is provided in Appendix \ref{app:D-extension}.
\end{remark}

With this additional requirement of $D$-extendibility, we can prove the following

\begin{theorem}[Asymptotic quantum representation theorem]\label{thm:qRep}
Assume that a sequence 
$\S^{(n)}=\{ \rho_{\theta}^{(n)} : \theta\in\Theta\subset\R^{d}\}$
of quantum statistical models is q-LAN and $D$-extendible at $\theta_0\in\Theta$.  
Let $M^{(n)}=\{M^{(n)} (B)\}_{B\in\B(\R^s)}$ be a sequence of POVMs over $\R^s$ such that the corresponding sequence of classical probability measures 
\[ \L_{h}^{(n)}:=\Tr \rho_{\theta_{0}+h/\sqrt{n}}^{(n)} M^{(n)} \]
is weakly convergent to some probability measure $\L_{h}$ for every $h$. 
Then there exists a POVM $M^{(\infty)}=\{M(B)\}_{B\in\B(\R^s)}$ on ${\rm CCR}({\rm Im}\,\Sigma)$ such that 
\[ \phi_h(M^{(\infty)}(B))=\L_{h}(B) \]
for every $h$, where $\phi_h \sim N(({\rm Re}\,\tau)\,h,\Sigma)$ with $\tau=\Sigma F$.
\end{theorem}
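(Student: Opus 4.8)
The plan is to mimic the classical proof of Theorem~\ref{thm:crep} (the one sketched in Appendix~\ref{suppl:clan}), with the square-root-likelihood machinery playing the role of the likelihood ratios. The first step is to pass from the full POVM sequence $M^{(n)}$ to the data it actually produces on the finite-dimensional coordinate process: for each finite collection $h_1,\dots,h_m\in\R^d$, consider the joint law under $\rho_{\theta_0}^{(n)}$ of the outcome of $M^{(n)}$ together with the square-root likelihood ratios $R_{h_1}^{(n)},\dots,R_{h_m}^{(n)}$, or rather the operators $\Delta_i^{(n)}$ appearing in the q-LAN expansion \eqref{eq:R_expand}. Since $D$-extendibility gives us the observables $X^{(n)}=(X_1^{(n)},\dots,X_r^{(n)})$ with $X^{(n)}\conv{\rho_{\theta_0}^{(n)}}N(0,\Sigma)$ and \eqref{eq:X_cond1}--\eqref{eq:X_cond3}, I would work throughout with the pair $(M^{(n)}, X^{(n)})$ rather than $(M^{(n)},\Delta^{(n)})$: the extra condition \eqref{eq:X_cond3} is precisely what is needed to control the non-commutativity between the measurement $M^{(n)}$ and the Gaussian observables in the limit.

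The second, and central, step is a quantum-to-quantum limit theorem: I would show that the sequence of pairs $(M^{(n)},X^{(n)})$ converges, in an appropriate joint sense, to a limiting object consisting of a POVM $M^{(\infty)}$ together with the canonical observables of $\mathrm{CCR}(\mathrm{Im}\,\Sigma)$. Concretely, one tracks the characteristic-function-type quantities $\Tr\rho_{\theta_0}^{(n)}\, e^{\sqrt{-1}\xi^i X_i^{(n)}}\, M^{(n)}(B)$ (and products built from the square-root $\sqrt{\rho_{\theta_0}^{(n)}}$ sandwiching $e^{\sqrt{-1}\xi^i X_i^{(n)}}$, as in \eqref{eq:X_cond3}); using the preparatory lemmas of Section~\ref{sec:preliminary} one argues these converge, along a subsequence, to the matching quantities for a genuine quantum system carrying a Gaussian state $\phi\sim N(0,\Sigma)$ on $\mathrm{CCR}(\mathrm{Im}\,\Sigma)$ and a POVM $M^{(\infty)}$ there. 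The boundedness $0\le M^{(n)}(B)\le I$ is what makes the limiting $M^{(\infty)}(B)$ again a bona fide effect, $0\le M^{(\infty)}(B)\le I$ — this is exactly the point that failed for the naive conjecture in Example~\ref{eg:noLimitPOVM}, and the condition \eqref{eq:X_cond3}, which forces the joint $(\xi,\eta)$-covariance to be the geometric-mean form compatible with a true Gaussian state, is what rescues it.

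The third step is to identify the shift. By the quantum Le Cam third lemma from \cite{qcontiguity}, the q-LAN expansion \eqref{eq:R_expand} together with $\Delta^{(n)}=F X^{(n)}$ implies that, under the shifted states $\rho_{\theta_0+h/\sqrt n}^{(n)}$, the observables $X^{(n)}$ converge to $N(\tau h,\Sigma)$ with $\tau=\Sigma F$ — more precisely the outcome statistics of $M^{(n)}$ under $\rho_{\theta_0+h/\sqrt n}^{(n)}$ converge to those of $M^{(\infty)}$ under the translated state $\phi_h\sim N((\mathrm{Re}\,\tau)h,\Sigma)$. Combined with the hypothesis that $\L_h^{(n)}\rightsquigarrow\L_h$, this gives $\phi_h(M^{(\infty)}(B))=\L_h(B)$ for every $h$. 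Finally, since this holds for every $h$ and the candidate $M^{(\infty)}$ extracted in Step~2 might a priori depend on the subsequence, one notes that the defining relations $\phi_h(M^{(\infty)}(B))=\L_h(B)$ for all $h$ pin down $M^{(\infty)}$ uniquely (the Gaussian family $\{\phi_h\}$ is separating on $\mathrm{CCR}(\mathrm{Im}\,\Sigma)$), so the whole sequence converges and the limiting POVM is well-defined.

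The main obstacle, I expect, is Step~2: making rigorous the sense in which a sequence of POVMs on genuinely different Hilbert spaces $\H^{(n)}$ "converges" to a POVM on the CCR algebra, and in particular extracting the limiting effects $M^{(\infty)}(B)$ as operators (not merely as a consistent family of numbers $\phi_h(M^{(\infty)}(B))$) so that $0\le M^{(\infty)}(B)\le I$ is automatic. This is where one needs the full strength of condition \eqref{eq:X_cond3} — it provides the uniform control on the joint noncommutative moments of $X^{(n)}$ and the phases $e^{\sqrt{-1}\xi^i X_i^{(n)}}$ that lets one invoke a quantum analogue of Prokhorov's theorem / a GNS-type compactness argument from Section~\ref{sec:preliminary}, rather than just weak-$*$ compactness of the scalar data which, as Example~\ref{eg:noLimitPOVM} shows, is not enough. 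Once the limiting measurement is realized on an actual CCR system, Steps~1 and~3 are essentially bookkeeping with the q-LAN expansion and the quantum Le Cam third lemma.
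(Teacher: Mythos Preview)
Your outline is correct at the level of broad strategy---extract a subsequential limit of the ``sandwiched'' data, then use the quantum Le Cam third lemma to identify the shift---and you have correctly located the real difficulty in Step~2. But the mechanism you sketch there (``track $\Tr\rho_{\theta_0}^{(n)} e^{\sqrt{-1}\xi^i X_i^{(n)}} M^{(n)}(B)$ and invoke a GNS-type compactness'') is not sharp enough, and in fact misses the central construction of the paper's proof.

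The key device is a \emph{purification/doubling} trick. One does not work directly on $\mathrm{CCR}(\mathrm{Im}\,\Sigma)$; instead, after splitting $\Sigma=\Sigma_c\oplus\Sigma_q$ with $\mathrm{Im}\,\Sigma_q$ invertible, one enlarges to $\mathrm{CCR}(O_c)\otimes\mathrm{CCR}(S_q)\otimes\mathrm{CCR}(-S_q)$ and chooses a cyclic vector $\psi$ there representing the \emph{pure} Gaussian state with covariance block $\Sigma_q\#\Sigma_q^\top$ (Corollary~\ref{lem:pure_double}). At finite $n$ one correspondingly defines $A^{(n)}(\xi_c,\xi_q,\xi_a):=W^{(n)}(\xi_c,\xi_q)\sqrt{\rho_{\theta_0}^{(n)}}\,W^{(n)}(0,\xi_a)$---note the $\xi_q$ part acts on the \emph{left} of $\sqrt{\rho_{\theta_0}^{(n)}}$ and the $\xi_a$ part on the \emph{right}---and studies
\[
\varphi_t^{(n)}(\xi;\eta):=\Tr A^{(n)}(\xi)^* M_t^{(n)} A^{(n)}(\eta).
\]
Condition~\eqref{eq:X_cond3} is exactly what makes the $t=\infty$ limit of $\varphi_t^{(n)}$ converge to the sandwiched coherent-state kernel $\varphi_I(\xi;\eta)=\langle\psi(\xi),\psi(\eta)\rangle$ on the enlarged space; Lemma~\ref{thm:bochoner_dshift} then reconstructs an operator $\tilde M_\alpha$ with $0\le\tilde M_\alpha\le I$ from the subsequential limit $\varphi_\alpha$, and Lemma~\ref{thm:commutant_qc} (via an asymptotic commutation identity that uses the $A^{(n)}\!\sim\!B^{(n)}$ interchange and Lemma~\ref{lem:asymptoticCCR}) shows $\tilde M_\alpha$ lies in $\mathrm{CCR}(O_c)\otimes\mathrm{CCR}(S_q)\otimes I_a$, i.e.\ descends to $\mathrm{CCR}(\mathrm{Im}\,\Sigma)$. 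Your one-sided quantities $\Tr\rho_{\theta_0}^{(n)} e^{\sqrt{-1}\xi^i X_i^{(n)}} M^{(n)}(B)$ do not carry enough information to recover an \emph{operator} on the CCR algebra---that is precisely the failure mode of Example~\ref{eg:noLimitPOVM}---and the doubling is what supplies the missing half.

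Two smaller points: the paper does not need, and does not prove, that the full sequence converges or that $M^{(\infty)}$ is unique; a single subsequential construction suffices for the existence claim. And Step~3 is not quite bookkeeping: it requires the \emph{sandwiched} Le Cam third lemma (Corollary~\ref{cor:lecam3_sand}), an extension of the version in \cite{qcontiguity} that allows an operator $A^{(n)}$ between the two likelihood-ratio factors, proved in Section~\ref{sec:preliminary} via a sandwiched L\'evy--Cram\'er continuity theorem (Lemmas~\ref{thm:sand_bounded}--\ref{thm:sand_uni_int}).
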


Theorem \ref{thm:qRep} allows us to convert a statistical problem for the local parameter model $\{\rho^{(n)}_{\theta_0+h/\sqrt{n}} : h \in \R^d \}$ into another one for the limiting quantum Gaussian shift model 
$\{ N( ({\rm Re}\, \tau) h, \Sigma) : h \in \R^d \}$. 
It is thus natural to expect that the Holevo bound%
\footnote{
The Holevo bound $c_G^{(H)}$ for a generic quantum statistical model $\{\rho_\theta: \theta\in\Theta\subset\R^d\}$ on a Hilbert space $\H$ is given by the minimum of $\Tr GZ(B)+\Tr\left| \sqrt{G}\,{\rm Im}\,Z(B) \sqrt{G}\right|$ over all Hermitian operators $B=(B_1,\dots, B_d)$ on $\H$ satisfying the local unbiasedness condition ${\rm Re}\Tr \rho_{\theta} L_i B_j=\delta_{ij}$, where $L_i$ is the $i$th SLD and $Z(B)$ is the $d\times d$ matrix whose $(i,j)$th entry is $Z_{ij}(B):=\Tr\rho_{\theta} B_j B_i$. 
The reduced expression \eqref{eq:rep_bound} for the quantum Gaussian shift model $\{ N( ({\rm Re}\, \tau) h, \Sigma) : h \in \R^d \}$ is derived in \cite[Appendix B]{qlan_first}.
}
for the limiting model $N( ({\rm Re}\, \tau) h, \Sigma)$, given a weight matrix $G>0$, i.e., 
\begin{align}\label{eq:rep_bound}
c_{G}^{(rep)} & :=\min_{K}\left\{ \Tr GZ+\Tr\left|\sqrt{G}\,{\rm Im}\,Z\sqrt{G}\right|:Z=K^{\top}\Sigma K,\right.\\
 & \qquad\qquad \left.K\text{ is an \ensuremath{r\times d} real matrix satisfying }K^{\top}\left({\rm Re}\,\tau\right)=I\right\}, \nonumber
\end{align}
will be of fundamental importance in quantum asymptotics. 
Note that the $D$-extension in Definition \ref{def:propertyD} is not unique; however, it can be shown that the bound $c_{G}^{(rep)}$ is independent of the choice of a $D$-extension (Corollary \ref{cor:rep_bound}).
In what follows, we shall call this universal bound the asymptotic representation bound.

Indeed, the bound $c_{G}^{(rep)}$ plays a crucial role in asymptotic quantum statistics. 
For example, it gives the ultimate limit of estimation for regular estimators (Theorems \ref{thm:regular} and \ref{thm:achieve}) and minimax estimators (Theorem \ref{thm:minimax_local}). 
Moreover, the bound $c_{G}^{(rep)}$ for an i.i.d.~model $\S^{(n)}=\{\rho_\theta^{\otimes n}\}$ is identical to the standard Holevo bound $c_{G}^{(H)}$ for the base model $\rho_\theta$ (Theorem \ref{thm:achieve} and Appendix \ref{app:D-extension}).
Thus, the asymptotic representation bound $c_{G}^{(rep)}$ can be regarded as a 
fully generalized version of the Holevo bound that is also applicable to non-i.i.d.~models.

Incidentally, as one can see from the proof, Theorem \ref{thm:qRep} is valid even if the scaling factors $\sqrt{n}$ in Definition \ref{def:qLAN} and Theorem \ref{thm:qRep} are both replaced with an arbitrary monotone increasing positive sequence $r_n \uparrow\infty$. 
Also, one can replace the domain $\R^d$ of the local parameter $h$ to an arbitrary subset of $\R^d$. 
Classical analogues of these generalizations are found, for example, in \cite[Definition 7.14, Theorem 9.4]{vaart}.  

\section{Preliminaries} \label{sec:preliminary}

In this section, we devise some mathematical tools and prepare a number of lemmas toward the proof of Theorem \ref{thm:qRep}.
First we give a condition for a quantum Gaussian state to be pure.
We then introduce a new way of representing bounded operators on a ${\rm CCR}(S)$ which is analogous to the Husimi representation \cite{Husimi}.
We further extend quantum L\'evy-Cram\'er continuity theorem \cite{qLevyCramer} and quantum Le Cam third lemma \cite{qcontiguity} so that they are directly applicable to the proof of Theorem \ref{thm:qRep}. 
All the proofs of the lemmas and corollaries presented in this section are deferred to 
Appendix \ref{app:ProofLemmas}.
For the definition of von Neumann algebra ${\rm CCR}(S)$ with possibly degenerate $S$ and quantum Gaussian states on it, see 
Appendix \ref{app:degenerate_CCR}.

\subsection{Condition for a quantum Gaussian state to be pure}

\begin{lemma}[Minimum uncertainty]\label{lem:pure_gaussian}
Let $J=V+\i S$ be a $d\times d$ nonnegative matrix in which $S={\rm Im}\,J$ is invertible. 
Then the quantum Gaussian state $N(0,J)$ on the von Neumann algebra ${\rm CCR}(S)$ is pure if and only if $\det V=\det S$.
\end{lemma}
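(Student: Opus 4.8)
The plan is to reduce the problem to a symplectic normal-form computation. First I would recall that since $S = \mathrm{Im}\,J$ is real, skew-symmetric and invertible, there is a real symplectic transformation $T$ (with respect to the symplectic form defined by $S$) bringing $S$ to a standard block form $S_0$, with $d = 2m$ even. Applying $T$ implements a Bogoliubov automorphism of $\mathrm{CCR}(S)$ under which the Gaussian state $N(0,J)$ is carried to $N(0, J')$ with $J' = TVT^{\top} + \i S_0$, and purity is invariant under this automorphism. Note also that $\det V$ and $\det S$ transform by the same factor $(\det T)^2$, so the condition $\det V = \det S$ is invariant; hence it suffices to prove the claim in the normal form.

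Next I would diagonalize the covariance further. For a centered Gaussian state on $m$ modes with covariance matrix $V$ relative to the standard symplectic structure, Williamson's theorem gives a symplectic $W$ with $W V W^{\top} = \mathrm{diag}(\nu_1,\nu_1,\dots,\nu_m,\nu_m)$, the $\nu_j \ge 1/2$ (in the normalization where $S_0$ has entries $\pm 1/2$, or $\ge 1$ in the $\pm 1$ normalization) being the symplectic eigenvalues; the uncertainty relation $V + \i S_0 \ge 0$ is precisely equivalent to $\nu_j \ge$ that bound for every $j$. The state then factorizes as a tensor product of one-mode thermal states with mean photon numbers determined by the $\nu_j$, and such a product is pure if and only if each factor is the vacuum, i.e. if and only if every $\nu_j$ attains the minimal value. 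So the state is pure iff $V$ saturates the uncertainty relation in every symplectic eigenvalue.

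It then remains to translate "$\nu_j$ all minimal" into the determinant condition. In the normal form, $\det S_0$ equals $(1/2)^{2m}$ (or $1$), while $\det V = \prod_j \nu_j^2$; so all $\nu_j$ minimal is equivalent to $\det V = \det S$ provided we also know $\det V \ge \det S$ always, with equality forcing each $\nu_j$ to be extremal. That monotonicity is immediate from $\nu_j \ge \nu_{\min}$ for all $j$. Unwinding the symplectic reductions gives the statement for the original $J = V + \i S$. The main obstacle, and the step I would be most careful about, is bookkeeping the normalization of the symplectic form implicit in the paper's convention for $\mathrm{CCR}(S)$ and in the Gaussian characteristic function used in Appendix \ref{app:degenerate_CCR} — i.e. making sure the "$\nu_j = $ minimum" threshold is exactly the one that makes $\det V = \det S$ come out with no stray powers of $2$; once the conventions are pinned down, everything else is the standard Williamson/thermal-decomposition argument. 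An alternative, more computational route that avoids Williamson's theorem is to use the explicit formula for the purity $\mathrm{Tr}\,\rho^2 = (\det(2V))^{-1/2}$ (again up to normalization) of a Gaussian state and compare with $\mathrm{Tr}\,\rho^2 = 1$, but I would still want the symplectic eigenvalue picture to argue that $\det V = \det S$ genuinely characterizes purity rather than merely being necessary.
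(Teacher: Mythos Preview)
Your Williamson-normal-form argument is correct and would work, but it is considerably more elaborate than what the paper does. The paper's proof is a three-line computation: using the noncommutative Parseval identity for ${\rm CCR}(S)$ (from Holevo's book), one writes
\[
\Tr\rho^{2}=\sqrt{\frac{\det S}{\pi^{d}}}\int_{\R^{d}}\left|\F_{\xi}[\rho]\right|^{2}d\xi
=\sqrt{\frac{\det S}{\pi^{d}}}\int_{\R^{d}} e^{-\xi^{\top}V\xi}\,d\xi
=\sqrt{\frac{\det S}{\det V}},
\]
and purity is equivalent to $\Tr\rho^2=1$. This is precisely the ``alternative, more computational route'' you mention at the end and then set aside; your reservation that it might give only necessity is unfounded, since $\Tr\rho^2=1$ characterizes purity for any density operator, not merely one direction. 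Your symplectic-eigenvalue approach has the advantage of being self-contained and of making the physical content (saturation of the uncertainty relation mode by mode) transparent, at the cost of the bookkeeping you flag. The paper's approach trades that structural picture for a one-line Gaussian integral, with the normalization absorbed into the Parseval formula cited from \cite{holevo}.
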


\begin{proof} 
See Appendix \ref{app:ProofLemmas}.
\end{proof}

\begin{corollary}\label{lem:pure_double}
Let $J=V+\i S$ be a $d\times d$ nonnegative matrix in which both $V={\rm Re}\,J$ and $S={\rm Im}\,J$ are invertible. Then the quantum Gaussian state 
\[
N\left(\begin{pmatrix}0\\
0
\end{pmatrix},\begin{pmatrix}J & J\#J^{\top}\\
J\#J^{\top} & J^{\top}
\end{pmatrix}\right)
\]
is pure.
\end{corollary}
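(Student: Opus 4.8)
The plan is to reduce the statement to Lemma~\ref{lem:pure_gaussian} by computing the real and imaginary parts of the $2d\times 2d$ covariance matrix
\[
\widetilde J=\begin{pmatrix}J & J\#J^{\top}\\ J\#J^{\top} & J^{\top}\end{pmatrix}
\]
and checking the minimum-uncertainty identity $\det(\mathrm{Re}\,\widetilde J)=\det(\mathrm{Im}\,\widetilde J)$. First I would record that, since $J=V+\i S$ with $V,S$ real, one has $J^{\top}=V-\i S$ (using $V=V^{\top}$, $S=-S^{\top}$ for the covariance of a Gaussian state, or more precisely the symmetry/antisymmetry forced by $J$ being a valid covariance matrix — this should be spelled out). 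The operator geometric mean $J\#J^{\top}$ of the two nonnegative matrices $J,J^{\top}$ is itself a nonnegative (hence Hermitian) matrix; write $J\#J^{\top}=P$ with $P=P^{*}\ge 0$. I would then argue that $P$ is in fact \emph{real symmetric}: since $\overline{J}=J^{\top}$ and the geometric mean commutes with the congruence/conjugation $X\mapsto \overline X$, we get $\overline P=\overline J\#\overline{J^{\top}}=J^{\top}\#J=P$, so $P$ is real, and being Hermitian it is symmetric. Consequently
\[
\mathrm{Re}\,\widetilde J=\begin{pmatrix}V & P\\ P & V\end{pmatrix},
\qquad
\mathrm{Im}\,\widetilde J=\begin{pmatrix}S & 0\\ 0 & -S\end{pmatrix}.
\]

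Next I would compute the two determinants. For the imaginary part, $\det(\mathrm{Im}\,\widetilde J)=\det S\cdot\det(-S)=(-1)^{d}(\det S)^{2}$; since $\mathrm{Im}\,\widetilde J$ must be invertible for Lemma~\ref{lem:pure_gaussian} to apply, this is where the hypothesis $S=\mathrm{Im}\,J$ invertible is used. For the real part, using the block structure with commuting blocks $V$ and $P$ (both are functions of the single pair $J,J^\top$; in particular $VP=PV$ because $P$ is a geometric-mean, a norm limit of polynomials in $V\pm\i S$ symmetrized — this commutation point needs a clean justification, possibly instead by the explicit formula $J\#J^{\top}=V^{1/2}\bigl(V^{-1/2}J^{\top}V^{-1/2}\bigr)^{1/2}V^{1/2}\cdots$ or via the integral representation of $\#$), one gets $\det\begin{pmatrix}V&P\\ P&V\end{pmatrix}=\det(V-P)\det(V+P)=\det(V^{2}-P^{2})$. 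So the identity to verify becomes $\det(V^{2}-P^{2})=(-1)^{d}(\det S)^{2}=\det(-S^{2})=\det(S^{2})$ up to the sign, i.e.\ essentially $\det(V^{2}-(J\#J^{\top})^{2})=\pm\det(S^2)$.

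The crux is therefore the algebraic identity $(J\#J^{\top})^{2}=V^{2}+S^{2}$ (equivalently $P^{2}=V^2+S^2$), from which $V^{2}-P^{2}=-S^{2}$ and the determinant identity (with the correct $(-1)^d$ bookkeeping) follows immediately, giving purity via Lemma~\ref{lem:pure_gaussian}. To prove $P^{2}=V^{2}+S^{2}$ I would use the defining property of the geometric mean of positive matrices $A\#B$ as the unique positive solution $X$ of $XA^{-1}X=B$ (Riccati equation), here $A=J$, $B=J^{\top}$: so $P$ is the positive $X$ with $XJ^{-1}X=J^{\top}$. Multiplying, $P^{2}=PJ^{-1}X\cdot J=\cdots$; more cleanly, note $J^{\top}J=\overline{JJ^{\top}}$ and that $JJ^{\top}=(V+\i S)(V-\i S)=V^{2}+S^{2}+\i(SV-VS)$, while $J^{\top}J=V^{2}+S^{2}-\i(SV-VS)$, so both $JJ^{\top}$ and $J^{\top}J$ are positive with the \emph{same} real part $V^2+S^2$; one then checks that $P:=J\#J^{\top}$ satisfies $P^{2}=$ the common value by a symmetry argument (the geometric mean is invariant under $X\mapsto \overline X$ swapping the two factors, and $\overline{P^2}=P^2$ forces the imaginary parts to cancel, leaving $P^2$ equal to the real symmetric matrix $V^2+S^2$, once one knows a priori that $P^2$ lies in the commutative algebra generated by $J$ and $J^\top$). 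The main obstacle I anticipate is exactly this last step — justifying rigorously that $(J\#J^{\top})^{2}=V^{2}+S^{2}$ and the attendant commutativity of $V$ and $J\#J^{\top}$ — since the operator geometric mean of two \emph{non-commuting} positive matrices does not in general have a transparent square; I expect one must exploit the special relation $J^{\top}=\overline J$ together with the integral (or Riccati) representation of $\#$, and this is where care is needed. Everything after that is a one-line determinant computation plus an invocation of Lemma~\ref{lem:pure_gaussian}.
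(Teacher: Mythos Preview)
Your overall strategy---reduce to Lemma~\ref{lem:pure_gaussian} by computing $\det(\mathrm{Re}\,\widetilde J)$ and $\det(\mathrm{Im}\,\widetilde J)$---is exactly the paper's approach, and your identification of $\mathrm{Re}\,\widetilde J=\begin{pmatrix}V&P\\P&V\end{pmatrix}$, $\mathrm{Im}\,\widetilde J=\begin{pmatrix}S&0\\0&-S\end{pmatrix}$ with $P=J\#J^{\top}$ real symmetric is correct. The block determinant identity $\det\begin{pmatrix}V&P\\P&V\end{pmatrix}=\det(V+P)\det(V-P)$ is also correct and does \emph{not} require $VP=PV$: it follows from the congruence by $\frac{1}{\sqrt2}\begin{pmatrix}I&I\\I&-I\end{pmatrix}$, which block-diagonalizes the matrix. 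This congruence is precisely how the paper proceeds.

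The genuine gap is the algebraic identity you aim for: $(J\#J^{\top})^{2}=V^{2}+S^{2}$ is \emph{false} in general, and so is the commutativity $VP=PV$ that you would need to pass from $\det(V+P)\det(V-P)$ to $\det(V^{2}-P^{2})$. A quick counterexample: take $V=\mathrm{diag}(1,2)$ and $S=\begin{pmatrix}0&1\\-1&0\end{pmatrix}$; then $V^{2}+S^{2}=\mathrm{diag}(0,3)$ is singular, whereas $P=J\#J^{\top}$ is the geometric mean of two strictly positive matrices and hence strictly positive, so $P^{2}\neq V^{2}+S^{2}$. Your heuristic that ``$P^{2}$ lies in the commutative algebra generated by $J,J^{\top}$'' breaks down because $J$ and $J^{\top}$ do not commute.

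The fix is the explicit formula you mention but do not pursue, which the paper states as a separate lemma: writing $S_{V}:=V^{-1/2}SV^{-1/2}$, one has
\[
J\#J^{\top}=V^{1/2}\bigl(I+S_{V}^{2}\bigr)^{1/2}V^{1/2}.
\]
(This is verified directly from the Riccati characterization $X J^{-1} X = J^\top$.) Then $V\pm P=V^{1/2}\bigl(I\pm(I+S_{V}^{2})^{1/2}\bigr)V^{1/2}$, and since $I$ commutes with $(I+S_{V}^{2})^{1/2}$ one gets
\[
\det(V+P)\det(V-P)=(\det V)^{2}\det\!\bigl(I-(I+S_{V}^{2})\bigr)=(\det V)^{2}\det(-S_{V}^{2})=(\det S)^{2},
\]
using that $d$ is even (since $S$ is real skew-symmetric and invertible). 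This matches $\det(\mathrm{Im}\,\widetilde J)=(\det S)^{2}$ and finishes the proof via Lemma~\ref{lem:pure_gaussian}.
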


\begin{proof} 
See Appendix \ref{app:ProofLemmas}.
\end{proof}

\subsection{Sandwiched coherent state representation of operators on a CCR algebra}

Let $\H$ be a Hilbert space that represents the von Neumann algebra ${\rm CCR}(S)$, where $S$ is a skew-symmetric real $d\times d$ matrix that is not necessarily invertible, and let $\{X_i\}_{1\le i\le d}$ be the canonical observables of ${\rm CCR}(S)$. 
Fix a cyclic%
\footnote{
A vector $\psi\in\H$ is called {\em cyclic} for a linear subspace $\mathcal{A}$ of $B(\H)$ if the linear space $\mathcal{A}\psi:=\{A\psi: A\in\mathcal{A}\}$ is norm-dense in $\H$.
}
unit vector $\psi\in\H$ for ${\rm CCR}(S)$, and let 
\[ \psi(\xi):=e^{\i\xi^{i}X_{i}}\psi,\quad (\xi\in\R^{d}). \]
Associated with a bounded operator $A\in B(\H)$ is a continuous function $\varphi_A: \R^d\times\R^d\to \C$ defined by
\[ \varphi_{A}(\xi;\eta): =\langle \psi(\xi), A \psi(\eta) \rangle, \quad (\xi, \eta\in\R^{d}). \]
We shall call $\varphi_A$ the {\em sandwiched coherent state representation} of a bounded operator $A$. 

We are interested in the converse problem: when does a function $\varphi: \R^d\times\R^d\to \C$ uniquely determine an operator $A\in B(\H)$ satisfying $\varphi(\xi;\eta)=\langle \psi(\xi), A \psi(\eta) \rangle$? 
Let $D$ be a dense subset of $\R^d$. 
A function $\varphi: D \times D\to \C$ is called {\em positive semidefinite} if, for all $r\in\N$ and $\{ \xi^{(i)}\} _{1\le i \le r} \subset D$, the $r\times r$ matrix whose $(i,j)$th entry is $\varphi(\xi^{(i)};\xi^{(j)})$ is positive semidefinite, i.e., 
\[ \left[\varphi(\xi^{(i)};\xi^{(j)})\right]_{1\leq i,j \leq r} \geq 0. \]
In this case we denote $\varphi \succ 0$. 
Further, for two functions $\varphi_1$ and $\varphi_2$, we denote $\varphi_1 \succ \varphi_2$ if $\varphi_{1}-\varphi_{2}\succ 0$.

\begin{lemma}\label{thm:bochoner_dshift}
Suppose that $\varphi:D\times D\to\C$ satisfies $0 \prec \varphi \prec  \varphi_{I}$. 
Then there exists a unique operator $A$ satisfying $0\le A\le I$ and $\varphi=\varphi_{A}$. 
Consequently, $\varphi$ is continuously extended to the totality of $\R^{d}\times\R^{d}$.
\end{lemma}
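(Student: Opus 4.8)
The plan is to realize $A$ as the operator whose matrix elements in the overcomplete family $\{\psi(\xi)\}_{\xi\in D}$ are prescribed by $\varphi$, via a standard reproducing-kernel / GNS-type construction, and then to upgrade the hypotheses $0\prec\varphi\prec\varphi_I$ into the operator inequalities $0\le A\le I$. Concretely, I would form the complex vector space $V$ of finite formal linear combinations $\sum_k c_k e_{\xi^{(k)}}$ of symbols indexed by $D$, and equip it with the sesquilinear form $\langle\sum_k c_k e_{\xi^{(k)}},\sum_l d_l e_{\eta^{(l)}}\rangle_\varphi:=\sum_{k,l}\overline{c_k}\,d_l\,\varphi(\xi^{(k)};\eta^{(l)})$. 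Positive semidefiniteness of $\varphi$ makes this a (possibly degenerate) inner product; quotienting by its null space and completing yields a Hilbert space $\H_\varphi$ with a natural map $e_\xi\mapsto[\xi]\in\H_\varphi$.

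The second step is to compare $\H_\varphi$ with $\H$. Because $\varphi\prec\varphi_I$, i.e.\ $\varphi_I-\varphi\succ0$, the assignment $[\xi]\mapsto\psi(\xi)$ is well-defined and norm-nonincreasing on the dense span of the $[\xi]$, hence extends to a bounded operator $W:\H_\varphi\to\H$ with $\|W\|\le1$; similarly $\varphi\succ0$ alone gives nothing extra here, but the point is that $W$ has dense range since $\{\psi(\xi):\xi\in D\}$ spans a dense subspace of $\H$ (as $\psi$ is cyclic and $D$ is dense, using strong continuity of $\xi\mapsto\psi(\xi)$). Dually, the map $\psi(\xi)\mapsto[\xi]$ need not be bounded, so instead I would set $A:=WW^*\in B(\H)$. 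Then $0\le A\le\|W\|^2 I\le I$, and for $\xi,\eta\in D$,
\[
\langle\psi(\xi),A\psi(\eta)\rangle=\langle W^*\psi(\xi),W^*\psi(\eta)\rangle_\varphi=\langle[\xi],[\eta]\rangle_\varphi=\varphi(\xi;\eta),
\]
where the middle equality uses $W^*\psi(\xi)=[\xi]$, which itself follows from $\langle W^*\psi(\xi),[\eta]\rangle_\varphi=\langle\psi(\xi),\psi(\eta)\rangle=\varphi_I(\xi;\eta)$ — wait, this is where care is needed: this would force $\varphi=\varphi_I$. So in fact the correct identification is subtler, and I would instead define $A$ directly as the unique bounded operator with $\langle\psi(\xi),A\psi(\eta)\rangle=\varphi(\xi;\eta)$, whose existence is exactly what must be proved: one checks that the densely defined sesquilinear form $B(u,v)$ on $\mathrm{span}\{\psi(\xi)\}\times\mathrm{span}\{\psi(\eta)\}$ given by $B(\psi(\xi),\psi(\eta)):=\varphi(\xi;\eta)$ is well-defined (i.e.\ respects linear relations among the $\psi(\xi)$) and bounded, using precisely $0\prec\varphi\prec\varphi_I$ to control $|B(u,v)|$ and the kernel. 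Then $A$ is the operator representing $B$, and $0\le A\le I$ reads off from $0\prec\varphi\prec\varphi_I$ evaluated on finite combinations $u=\sum_k c_k\psi(\xi^{(k)})$.

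The main obstacle is the well-definedness of $B$, i.e.\ showing that whenever $\sum_k c_k\psi(\xi^{(k)})=0$ in $\H$ one also has $\sum_k c_k\varphi(\xi^{(k)};\eta^{(l)})=0$ for all $\eta^{(l)}\in D$. Equivalently, one needs the null space of the kernel $[\varphi_I(\xi^{(i)};\xi^{(j)})]$ to be contained in the null space of $[\varphi(\xi^{(i)};\xi^{(j)})]$, which is exactly the content of $\varphi\prec\varphi_I$ (a positive semidefinite quadratic form vanishes precisely on the kernel of its Gram matrix, and $\varphi_I-\varphi\ge0$ together with $\varphi\ge0$ sandwiches kernels). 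Granting this, $B$ is well-defined, the bound $\|B\|\le1$ follows from Cauchy–Schwarz for the form $\varphi_I-\varphi$ plus $\varphi\ge0$, and uniqueness of $A$ is immediate since $\{\psi(\xi):\xi\in D\}$ is total in $\H$. Finally, continuity of $\varphi_A(\xi;\eta)=\langle e^{\i\xi^iX_i}\psi,A\,e^{\i\eta^iX_i}\psi\rangle$ on all of $\R^d\times\R^d$ follows from strong continuity of the Weyl unitaries, giving the claimed continuous extension.
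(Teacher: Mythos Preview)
Your proposal is correct and, after the false start with the $WW^*$ construction (which you rightly abandon), takes essentially the same route as the paper: define the sesquilinear form $B$ on $\mathrm{span}\{\psi(\xi):\xi\in D\}$ by $B(\psi(\xi),\psi(\eta)):=\varphi(\xi;\eta)$, verify well-definedness via the kernel inclusion $\ker[\varphi_I(\xi^{(i)};\xi^{(j)})]\subset\ker[\varphi(\xi^{(i)};\xi^{(j)})]$ that follows from $0\prec\varphi\prec\varphi_I$, and then read off $0\le A\le I$ and uniqueness from density. The paper carries out exactly this, writing the kernel-inclusion step explicitly by adjoining an index $\xi^{(0)}$ with coefficient $c_0=0$ and using that a vector in the null space of a positive semidefinite matrix is annihilated by every row.
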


\begin{proof} 
See Appendix \ref{app:ProofLemmas}.
\end{proof}

Lemma \ref{thm:bochoner_dshift} establishes a one-to-one correspondence between bounded operators $A$ satisfying $0\le A\le I$ and functions $\varphi$ satisfying $0\prec \varphi \prec \varphi_I$. 
In what follows, the operator $A$ that is recovered from the function $\varphi$ is denoted by $V(\varphi)$. 

Now, let $S=O_{c}\oplus S_{q}\oplus S_{a}$, where $O_{c}$ is a $d_{c}\times d_{c}$ zero matrix, $S_{q}$ a $d_{q}\times d_{q}$ skew-symmetric real invertible matrix, and $S_{a}$ a $d_{a}\times d_{a}$ skew-symmetric real invertible matrix%
\footnote{
The subscripts $c$, $q$, and $a$ stand for the classical, quantum, and ancillary systems, respectively.
}. 
Then, ${\rm CCR}(S)={\rm CCR}(O_{c})\otimes {\rm CCR}(S_{q})\otimes {\rm CCR}(S_{a})$, and the canonical observables are 
\[
 \left\{ \hat X_{c,i}:=X_{c,i}\otimes I_{q}\otimes I_{a}\right\} _{i} \cup 
 \left\{ \hat X_{q,j}:=I_{c}\otimes X_{q,j}\otimes I_{a}\right\} _{j} \cup
 \left\{ \hat X_{a,k}:=I_{c}\otimes I_{q}\otimes X_{a,k}\right\} _{k},
\]
where $\{X_{c,i}\}_{i}$, $\{X_{q,j}\}_{j}$, and $\{X_{a,k}\}_{k}$ are the canonical observables of ${\rm CCR}(O_{c})$, ${\rm CCR}(S_{q})$, and ${\rm CCR}(S_{a})$, respectively.
In the Schr\"odinger representation, the algebra ${\rm CCR}(S)$ is represented on the Hilbert space $\H:=\H_c\otimes \H_q \otimes \H_a$, where $\H_c:=L^2(\R^{d_c})$, $\H_q:=L^2(\R^{d_{q}/2})$, and $\H_a:=L^2(\R^{d_{a}/2})$, and
\begin{align*}
{\rm CCR}(O_{c}) & =\overline{{\rm Span}}^{\rm SOT}\left\{ e^{\i \xi_{c}^{i} X_{c,i} }\right\} _{\xi_{c}\in\R^{d_{c}}}=L^{\infty}(\R^{d_{c}}), \\
{\rm CCR}(S_{q}) & =\overline{{\rm Span}}^{\rm SOT}\left\{ e^{\i \xi_{q}^{j} X_{q,j} }\right\} _{\xi_{q}\in\R^{d_{q}}}=B(\H_q),\\
{\rm CCR}(S_{a}) & =\overline{{\rm Span}}^{\rm SOT}\left\{ e^{\i \xi_{a}^{k} X_{a,k} }\right\} _{\xi_{a}\in\R^{d_{a}}}=B(\H_a),
\end{align*}
where $\overline{{\rm Span}}^{\rm SOT}$ denotes the closure of the linear span with respect to the strong operator topology (SOT). 
Since $L^{\infty}(\R^{d_{c}})$ is a maximal abelian subalgebra%
\footnote{A von Neumann subalgebra $M$ of $B(\H)$ that satisfies $M'=M$ is called a {\em maximal abelian subalgebra} (MASA). 
The name comes from the fact that if $N$ is an abelian von Neumann algebra such that $M\subset N \subset B(\H)$, then $M=N$. In fact, since $M\subset N$, we have $(M\subset N\subset)\, N'\subset M'=M$, so that $M=N$. 
}
of $B(\H_c)$,
the celebrated commutant theorem \cite{{KadisonRingrose_v2},{Hiai:2021}} yields  
\begin{equation}\label{eq:commutant}
\left({\rm CCR}(O_{c})\otimes {\rm CCR}(S_{q})\otimes I_{a}\right)'={\rm CCR}(O_{c})\otimes I_{q}\otimes {\rm CCR}(S_{a}).
\end{equation}
In this identity, $I_q$ and $I_a$ symbolically represent the centers of ${\rm CCR}(S_q)$ and ${\rm CCR}(S_a)$, respectively. 

Let $\psi\in\H$ be a cyclic unit vector for ${\rm CCR}(S)$. 
Then, the sandwiched coherent state representation of $A\in {\rm CCR}(S)$ is given by
\begin{align*}
\varphi_{A}(\xi_{c},\xi_{q},\xi_{a};\eta_{c},\eta_{q},\eta_{a})  
 & =\langle\psi(\xi_{c},\xi_{q},\xi_{a}), A \psi(\eta_{c},\eta_{q},\eta_{a}) \rangle, 
\end{align*}
where $\xi_{c},\eta_{c}\in\R^{d_{c}}$, $\xi_{q},\eta_{q}\in\R^{d_{q}}$, $\xi_{a},\eta_{a}\in\R^{d_{a}}$, and
\[
 \psi(\xi_{c},\xi_{q},\xi_{a})=e^{\i (\xi_{c}^{i} \hat X_{c,i}+\xi_{q}^{j} \hat X_{q,j}+\xi_{a}^{k} \hat X_{a,k}) } \psi.
\] 
Conversely, due to Lemma \ref{thm:bochoner_dshift}, a bounded continuous function $\varphi(\xi_{c},\xi_{q},\xi_{a};\eta_{c},\eta_{q},\eta_{a})$ satisfying $0\prec \varphi \prec \varphi_I$ uniquely determines an operator $A=V(\varphi)$ satisfying $0\le A \le I$. 
Moreover, the following Lemma gives a criterion for $V(\varphi)$ to be an element of ${\rm CCR}(O_{c})\otimes {\rm CCR}(S_{q})\otimes I_{a}$, 
which means that $V(\varphi)$ can be regarded as an operator acting on ${\rm CCR}(O_{c})\otimes {\rm CCR}(S_{q})$. 

\begin{lemma}\label{thm:commutant_qc}
Suppose that a bounded continuous function $\varphi(\xi_{c},\xi_{q},\xi_{a};\eta_{c},\eta_{q},\eta_{a})$ that fulfills the condition $0\prec \varphi \prec \varphi_I$ satisfies the identity
\[
\varphi(\xi_{c},\xi_{q},\xi_{a};\eta_{c},\eta_{q},\eta_{a})=e^{-\i\xi_{a}^{\top}S_{a}\eta_{a}}\varphi(\xi_{c}-\eta_{c},\xi_{q},\xi_{a}-\eta_{a};0,\eta_{q},0)
\]
for all $\xi_{c},\eta_{c}\in\R^{d_{c}}$, $\xi_{q},\eta_{q}\in\R^{d_{q}}$, $\xi_{a},\eta_{a}\in\R^{d_{a}}$. 
Then 
\[
V(\varphi)\in {\rm CCR}(O_{c})\otimes {\rm CCR}(S_{q})\otimes I_{a}. 
\]
\end{lemma}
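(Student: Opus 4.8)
The goal is to show that the positivity-and-boundedness operator $V(\varphi)$, which Lemma \ref{thm:bochoner_dshift} produces from $\varphi$, actually lies in the subalgebra ${\rm CCR}(O_{c})\otimes {\rm CCR}(S_{q})\otimes I_{a}$. By the commutant identity \eqref{eq:commutant}, this is equivalent to showing that $V(\varphi)$ commutes with every element of ${\rm CCR}(O_{c})\otimes I_{q}\otimes {\rm CCR}(S_{a})$, and since that algebra is generated (in the SOT) by the Weyl unitaries $U_{c,a}(\zeta_{c},\zeta_{a}):=e^{\i(\zeta_{c}^{i}\hat X_{c,i}+\zeta_{a}^{k}\hat X_{a,k})}$, it suffices to check $[V(\varphi),U_{c,a}(\zeta_{c},\zeta_{a})]=0$ for all $\zeta_{c}\in\R^{d_{c}}$, $\zeta_{a}\in\R^{d_{a}}$. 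The plan is therefore to translate the hypothesised functional identity on $\varphi$ into exactly this commutation statement, using the fact that matrix elements in the coherent vectors $\psi(\xi)$ separate operators (because $\psi$ is cyclic and the coherent vectors are dense).

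First I would record the Weyl relations in the chosen decomposition: the $c$-canonical observables commute with everything in their own factor, the $q$- and $a$-factors obey the usual canonical commutation relations with forms $S_{q}$ and $S_{a}$, and conjugating a coherent vector $\psi(\eta_{c},\eta_{q},\eta_{a})$ by $U_{c,a}(\zeta_{c},\zeta_{a})$ produces, up to an explicit scalar phase coming from the $\zeta_a$–$\eta_a$ symplectic pairing (and none from the classical part), the shifted coherent vector $\psi(\eta_{c}+\zeta_{c},\eta_{q},\eta_{a}+\zeta_{a})$. Concretely, $U_{c,a}(\zeta_{c},\zeta_{a})\psi(\eta_{c},\eta_{q},\eta_{a})=e^{\i\,\phi(\zeta,\eta)}\psi(\eta_{c}+\zeta_{c},\eta_{q},\eta_{a}+\zeta_{a})$ for a bilinear phase $\phi$ determined by $S_a$. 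Then $\langle\psi(\xi),\,U_{c,a}(\zeta_{c},\zeta_{a})^{*}\,V(\varphi)\,U_{c,a}(\zeta_{c},\zeta_{a})\,\psi(\eta)\rangle$ becomes $e^{\i(\phi(\zeta,\eta)-\phi(\zeta,\xi))}\,\varphi(\xi_{c}+\zeta_{c},\xi_{q},\xi_{a}+\zeta_{a};\eta_{c}+\zeta_{c},\eta_{q},\eta_{a}+\zeta_{a})$.

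The heart of the argument is then a direct computation: apply the hypothesised identity
\[
\varphi(\xi_{c},\xi_{q},\xi_{a};\eta_{c},\eta_{q},\eta_{a})=e^{-\i\xi_{a}^{\top}S_{a}\eta_{a}}\,\varphi(\xi_{c}-\eta_{c},\xi_{q},\xi_{a}-\eta_{a};0,\eta_{q},0)
\]
both to $\varphi$ at the shifted arguments and to $\varphi$ at the original arguments. On the right-hand side the differences $\xi_{c}+\zeta_{c}-(\eta_{c}+\zeta_{c})=\xi_{c}-\eta_{c}$ and $\xi_{a}+\zeta_{a}-(\eta_{a}+\zeta_{a})=\xi_{a}-\eta_{a}$ are \emph{unchanged} by the shift, so the only thing that can differ between the shifted and unshifted matrix elements is the accumulated phase. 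I would then verify that the phase $e^{-\i(\xi_{a}+\zeta_{a})^{\top}S_{a}(\eta_{a}+\zeta_{a})}$ coming from the identity, combined with the Weyl phase $e^{\i(\phi(\zeta,\eta)-\phi(\zeta,\xi))}$ from conjugation, collapses exactly to $e^{-\i\xi_{a}^{\top}S_{a}\eta_{a}}$ — the cross terms in $\zeta_a$ cancel precisely because $S_a$ is skew-symmetric and $\phi$ is the canonical symplectic phase. Hence $\langle\psi(\xi),\,U_{c,a}^{*}V(\varphi)U_{c,a}\,\psi(\eta)\rangle=\langle\psi(\xi),V(\varphi)\psi(\eta)\rangle$ for all $\xi,\eta$, and since the coherent vectors are total in $\H$ this gives $U_{c,a}^{*}V(\varphi)U_{c,a}=V(\varphi)$, i.e. $V(\varphi)$ commutes with all of ${\rm CCR}(O_{c})\otimes I_{q}\otimes {\rm CCR}(S_{a})$. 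Invoking \eqref{eq:commutant} then yields $V(\varphi)\in{\rm CCR}(O_{c})\otimes{\rm CCR}(S_{q})\otimes I_{a}$, as claimed.

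I expect the one genuinely delicate point to be the bookkeeping of the Weyl phases: one must fix a convention for the $a$-factor symplectic form and for the ordering in $e^{\i(\zeta_c^i\hat X_{c,i}+\zeta_a^k\hat X_{a,k})}$ so that $\phi(\zeta,\eta)$ has the right sign, and then confirm that $\phi(\zeta,\eta)-\phi(\zeta,\xi)$ cancels the $\zeta_a$-dependent part of $(\xi_a+\zeta_a)^\top S_a(\eta_a+\zeta_a)$ modulo $2\pi$. A secondary, purely technical remark is that one should first check $V(\varphi)$ and $U_{c,a}^{*}V(\varphi)U_{c,a}$ are both legitimately recovered by Lemma \ref{thm:bochoner_dshift} from their respective sandwiched representations (both satisfy $0\prec\cdot\prec\varphi_I$, the conjugated one because conjugation by a unitary preserves $0\le\cdot\le I$), so that equality of the two functions on $D\times D$ forces equality of the operators — but this is immediate from the uniqueness clause of that lemma and needs no extra work.
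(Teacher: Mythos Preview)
Your proposal is correct and follows essentially the same route as the paper's proof: reduce via the commutant identity \eqref{eq:commutant} to checking that $V(\varphi)$ commutes with each Weyl unitary $e^{\i(\zeta_{c}^{i}\hat X_{c,i}+\zeta_{a}^{k}\hat X_{a,k})}$, then compute the conjugated matrix element $\langle\psi(\xi),\,U_{c,a}^{*}V(\varphi)U_{c,a}\,\psi(\eta)\rangle$, apply the hypothesised identity at the shifted arguments, and verify that the Weyl phase $e^{\i(-\zeta_a^\top S_a\xi_a+\zeta_a^\top S_a\eta_a)}$ exactly cancels the $\zeta_a$-dependent terms in $e^{-\i(\xi_a+\zeta_a)^\top S_a(\eta_a+\zeta_a)}$ by skew-symmetry of $S_a$. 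The paper carries out precisely this computation in four displayed lines, concluding that the result is independent of $\zeta_c,\zeta_a$; your additional remarks on phase conventions and on the uniqueness clause of Lemma~\ref{thm:bochoner_dshift} are sound but not needed beyond what the paper already uses implicitly.
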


\begin{proof} 
See Appendix \ref{app:ProofLemmas}.
\end{proof}

\subsection{Sandwiched quantum L\'evy-Cram\'er continuity theorem}

In this subsection, we generalize the quantum L\'evy-Cram\'er continuity theorem \cite{qLevyCramer} and quantum Le Cam third lemma \cite{qcontiguity} in forms suitable for our discussion. 
Throughout this subsection, we use the following notations.
For each $n\in\N$, let $\rho^{(n)}$ be a quantum state and $X^{(n)}=(X_{1}^{(n)},\dots,X_{d}^{(n)})$ be a list of observables on a finite dimensional Hilbert space $\H^{(n)}$. 
Further, let
$X^{(\infty)}=(X_{1}^{(\infty)},\dots,X_{d}^{(\infty)})$ be the canonical observables for a quantum Gaussian state 
$\rho^{(\infty)}\sim N(h,J)$ 
with $J_{ij}=\Tr \rho^{(\infty)} X_j^{(\infty)} X_i^{(\infty)}$.


The following Lemma is a variant of the noncommutative L\'evy-Cram\'er continuity theorem \cite{qLevyCramer,qLevyCramer2}.

\begin{lemma}[Sandwiched L\'evy-Cram\'er continuity theorem]\label{thm:sand_bounded}
Assume that
\begin{equation} \label{eq:sand_weak_origin}
 ( X^{(n)},\rho^{(n)} )\conv{}N(h,J), 
\end{equation}
and that a uniformly bounded sequence $\{A^{(n)}\}_{n\in\N\cup\{\infty\}}$ of observables 
satisfies 
\begin{equation} \label{eq:sand_weak}
 \lim_{n\to\infty}\Tr\rho^{(n)}e^{\i \xi^{i}X_{i}^{(n)}}A^{(n)}e^{\i\eta^{i}X_{i}^{(n)}}
	=\Tr\rho^{(\infty)}e^{\i \xi^{i}X_{i}^{(\infty)}}A^{(\infty)}e^{\i\eta^{i}X_{i}^{(\infty)}}
\end{equation}
for all $\xi, \eta\in \Q^d$. 
Then for any $\left\{ \xi_{s}\right\} _{s=1}^{r_{1}}, \left\{ \eta_{t}\right\} _{t=1}^{r_{2}}\subset\R^{d}$ and any real-valued bounded Borel functions $\left\{ f_{s}\right\} _{s=1}^{r_{1}}$, $\left\{ g_{t}\right\} _{t=1}^{r_{2}}$ whose discontinuity points form Lebesgue null sets, the following identity holds: 
\begin{align} \label{eq:sand_func}
 & \lim_{n\to\infty}\Tr\rho^{(n)}\left\{ \prod_{s=1}^{r_{1}}f_{s}(\xi_{s}^{i}X_{i}^{(n)})\right\} A^{(n)}\left\{ \prod_{t=1}^{r_{2}}g_{t}(\eta_{t}^{i}X_{i}^{(n)})\right\} ^{*} \\
 & \qquad=\Tr\rho^{(\infty)}\left\{ \prod_{s=1}^{r_{1}}f_{s}(\xi_{s}^{i}X_{i}^{(\infty)})\right\} A^{(\infty)}\left\{ \prod_{t=1}^{r_{2}}g_{t}(\eta_{t}^{i}X_{i}^{(\infty)})\right\}^{*}.  \nonumber
\end{align}
\end{lemma}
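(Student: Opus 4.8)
The plan is to reduce the claimed identity \eqref{eq:sand_func} to the hypothesis \eqref{eq:sand_weak} by a standard approximation argument, combining the joint convergence in distribution \eqref{eq:sand_weak_origin} with the trigonometric‑moment hypothesis \eqref{eq:sand_weak}, and to organize the bookkeeping of the two one‑sided products via Lemma \ref{lem:extend_conv}. The first step is to observe that, by \eqref{eq:sand_weak_origin} together with the quantum L\'evy--Cram\'er theorem (and linearity), for each fixed $\xi\in\R^d$ the spectral distribution of $\xi^i X_i^{(n)}$ under $\rho^{(n)}$ converges weakly to that of $\xi^i X_i^{(\infty)}$ under $\rho^{(\infty)}$; since the limiting distribution is Gaussian and hence continuous, the continuous‑mapping theorem gives $\Tr\rho^{(n)}f(\xi^i X_i^{(n)})\to\Tr\rho^{(\infty)}f(\xi^i X_i^{(\infty)})$ for every bounded Borel $f$ with a Lebesgue‑null discontinuity set. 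More generally the joint spectral distribution of a commuting block converges, but here the $X_i^{(n)}$ need not commute, so the right formulation is in terms of the one‑parameter families $\{\xi_s^i X_i^{(n)}\}_s$ treated separately on each side of $A^{(n)}$.

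Next I would pass from exponentials to bounded Borel functions. Fix $\{\xi_s\}_{s=1}^{r_1}$ and $\{\eta_t\}_{t=1}^{r_2}$. For indices $\xi,\eta$ restricted to $\Q^d$, hypothesis \eqref{eq:sand_weak} already gives convergence of the `characteristic‑function‑type' quantities $\Tr\rho^{(n)}e^{\i\xi^i X_i^{(n)}}A^{(n)}e^{\i\eta^i X_i^{(n)}}$; by taking finite linear combinations this extends to
\[
 \Tr\rho^{(n)}\Bigl(\sum_k c_k e^{\i \xi_{s,k}^i X_i^{(n)}}\Bigr)A^{(n)}\Bigl(\sum_l d_l e^{\i \eta_{t,l}^i X_i^{(n)}}\Bigr)^*,
\]
i.e.\ to products of trigonometric polynomials in each $\xi_s^i X_i^{(n)}$ and $\eta_t^i X_i^{(n)}$. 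The passage to general bounded Borel $f_s,g_t$ is then a three‑$\varepsilon$ argument: approximate each $f_s$ (resp.\ $g_t$) in the sup‑norm, off a small neighbourhood of its discontinuity set and off a large compact set, by a trigonometric polynomial $p_s$; the error terms are controlled because $A^{(n)}$ is uniformly bounded, the states $\rho^{(n)}$ are states (trace one), and — crucially — the mass that $\xi_s^i X_i^{(n)}$ assigns to the bad sets is asymptotically small, by the continuity and tightness of the limiting Gaussian from Step 1. This is exactly the situation Lemma \ref{lem:extend_conv} is designed for: take $X$ = tuples of trigonometric polynomials, $Y$ = tuples of admissible Borel functions, $F_n$ = the trace expression evaluated at polynomials (convergent by \eqref{eq:sand_weak} plus linearity, so the chain hypothesis $F_n\to F_\infty$ holds), $G_n$ = the trace expression evaluated at the $f_s,g_t$, and verify the approximation clause of the Lemma using the Step‑1 estimates; the conclusion $G_n(y)\to G_\infty(y)$ is precisely \eqref{eq:sand_func}.

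The main obstacle is the last approximation clause, specifically estimating the difference between $\prod_s f_s(\xi_s^i X_i^{(n)})\,A^{(n)}\,\prod_t g_t(\eta_t^i X_i^{(n)})^*$ and its polynomial surrogate \emph{uniformly in $n$}. One must telescope the product, replacing one factor at a time; each replacement introduces a factor of the form $(f_s-p_s)(\xi_s^i X_i^{(n)})$ (or $(g_t-p_t)(\eta_t^i X_i^{(n)})$) sandwiched between uniformly bounded operators, and the trace of such a term is bounded, via Cauchy--Schwarz with respect to $\rho^{(n)}$ (or a simple operator‑norm bound on the bad set plus the sup bound off it), by a quantity that tends to $\Tr\rho^{(\infty)}|f_s-p_s|^2(\xi_s^i X_i^{(\infty)})$ or the like, which is small by Step 1. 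Handling the non‑commutativity of the two one‑sided products and the complex conjugation carefully — and making sure the `bad set' estimates survive the $n\to\infty$ limsup — is where the real work lies; once that telescoping estimate is in place, the rest is the routine triangle‑inequality mechanics already encapsulated in Lemma \ref{lem:extend_conv}.
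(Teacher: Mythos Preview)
Your overall strategy---approximate by trigonometric polynomials, control errors via Cauchy--Schwarz, and invoke Lemma~\ref{lem:extend_conv}---matches the paper's, but there is a genuine gap in your telescoping estimate. You claim that when a factor $(f_s-p_s)(\xi_s^iX_i^{(n)})$ is sandwiched between uniformly bounded operators $B_1$ and $B_2$, Cauchy--Schwarz bounds $\bigl|\Tr\rho^{(n)}B_1(f_s-p_s)B_2\bigr|$ by something that tends to $\Tr\rho^{(\infty)}|f_s-p_s|^2(\xi_s^iX_i^{(\infty)})$. But Cauchy--Schwarz actually produces $\Tr\rho^{(n)}B_1(f_s-p_s)^2B_1^*$, and when $B_1$ is a non-unitary product $f_1(\cdot)\cdots f_{s-1}(\cdot)$ there is no mechanism to strip it off: the inequality $B_1(f_s-p_s)^2B_1^*\le\|B_1\|^2(f_s-p_s)^2$ is \emph{false} for noncommuting operators, and bounding by $\|B_1\|^2\|(f_s-p_s)\|_\infty$ is useless since trig polynomials do not approximate bounded Borel functions in sup norm on $\R$. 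Your ``sup-norm off a bad set'' workaround has the same defect, because the indicator of the bad set ends up sandwiched in exactly the same way.

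The paper sidesteps this by a recursive reduction that you are missing. It first proves only the one-function case
\[
 \lim_{n\to\infty}\Tr\rho^{(n)}e^{\i\xi^iX_i^{(n)}}f(\zeta^iX_i^{(n)})A^{(n)}e^{\i\eta^iX_i^{(n)}}
 =\Tr\rho^{(\infty)}e^{\i\xi^iX_i^{(\infty)}}f(\zeta^iX_i^{(\infty)})A^{(\infty)}e^{\i\eta^iX_i^{(\infty)}},
\]
where the operator to the left of $f$ is a \emph{single unitary} $e^{\i\xi^iX_i^{(n)}}$. This unitary can be absorbed into a conjugated state $\rho_\xi^{(n)}:=e^{-\i\xi^iX_i^{(n)}}\rho^{(n)}e^{\i\xi^iX_i^{(n)}}$, so Cauchy--Schwarz genuinely yields $\Tr\rho_\xi^{(n)}(f-\tilde f)^2(Z^{(n)})$; one then checks, using \eqref{eq:sand_weak_origin}, that the spectral law of $Z^{(n)}$ under $\rho_\xi^{(n)}$ converges to a fixed Gaussian, against which the trig-polynomial approximation (via SOT-density of ${\rm Span}\{e^{\i tx}\}$ in $L^\infty(\R)$, i.e.\ approximation in $L^2(p_\xi\,dx)$ rather than sup norm) is small. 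With the one-function case in hand, the full statement follows by iteration: absorb $f(\zeta^iX_i^{(n)})$ into a new uniformly bounded $A^{(n)}$ (which now satisfies \eqref{eq:sand_weak} by what was just proved) and repeat; the right-hand factors are handled by applying the same step to complex conjugates. Along the way the paper also uses the asymptotic Weyl CCR to collapse a product of two exponentials into one (your proposal does not mention this), which is what makes the passage from \eqref{eq:sand_weak} to the two-exponential form \eqref{eq:snad_weakR2} work.
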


\begin{proof} 
See Appendix \ref{app:ProofLemmas}.
\end{proof}

When $A^{(n)}=I^{(n)}$ for all $n\in\N\cup\{\infty\}$, Lemma \ref{thm:sand_bounded} is subsumed by \cite[Lemma 5.3]{qcontiguity}.
In this sense, Lemma \ref{thm:sand_bounded} is a slight generalization of \cite[Lemma 5.3]{qcontiguity}.
However, the assumption of boundedness for functions $f_s$ and $g_t$ in Lemma \ref{thm:sand_bounded} sometimes causes inconvenience in applications. 
We therefore further aim for generalizing Lemma \ref{thm:sand_bounded} to unbounded functions. 
The key to the generalization is the notion of uniform integrability \cite{qcontiguity}.

For quantum states $\{ \rho^{(n)}\} _{n\in\N}$ and observables $\{ B^{(n)}\} _{n\in\N}$ on Hilbert spaces $\{ \H^{(n)}\} _{n\in\N}$, 
we say that $B^{(n)}$ is {\em uniformly integrable} under $\rho^{(n)}$ 
if for all $\varepsilon>0$, there exists $L>0$ that satisfies
\[
\Tr\rho^{(n)} \left|B^{(n)}-h_L(B^{(n)})\right| <\varepsilon
\] 
for all $n$, where the function $h_L$ is defined by
\[ 
h_L(x)=\begin{cases}
x & (|x|\leq L)\\
0 & (|x|>L)
\end{cases}.
\] 
Using the notion of uniform integrability, Lemma \ref{thm:sand_bounded} is generalized as follows.

\begin{lemma}[Sandwiched L\'evy-Cram\'er continuity theorem: an extended version]\label{thm:sand_uni_int}
Under the same setting as in Lemma \ref{thm:sand_bounded} except that the functions $f_{1}$ and $g_{1}$ can be unbounded, assume further that both $\{ f_{1}(\xi_{1}^{i}X_{i}^{(n)}+o_{1}^{(n)})^{2}\} _{n\in\N\cup\{\infty\}}$
and $\{ g_{1}(\eta_{1}^{i}X_{i}^{(n)}+o_{2}^{(n)})^{2}\} _{n\in\N\cup\{\infty\}}$
are uniformly integrable under $\{ \rho^{(n)}\} _{n\in\N\cup\{\infty\}}$, 
where $o_{1}^{(n)}=o_{D}(\xi_{1}^{i}X_{i}^{(n)},\rho^{(n)})$ and $o_{2}^{(n)}=o_{D}(\eta_{1}^{i}X_{i}^{(n)},\rho^{(n)})$ 
for $n\in\N$ and $o_1^{(\infty)}=o_2^{(\infty)}=0$.
Then the following identity holds: 
\begin{align*}
 & \lim_{n\to\infty}\Tr\rho^{(n)}f_{1}(\xi_{1}^{i}X_{i}^{(n)}+o_{1}^{(n)})\left\{ \prod_{s=2}^{r_{1}}f_{s}(\xi_{s}^{i}X_{i}^{(n)})\right\} A^{(n)}\left\{ \prod_{t=2}^{r_{2}}g_{t}(\eta_{t}^{i}X_{i}^{(n)})\right\} ^{*}g_{1}(\eta_{1}^{i}X_{i}^{(n)}+o_{2}^{(n)})\\
 & \qquad =\Tr\rho^{(\infty)}\left\{ \prod_{s=1}^{r_{1}}f_{s}(\xi_{s}^{i}X_{i}^{(\infty)})\right\} A^{(\infty)}\left\{ \prod_{t=1}^{r_{2}}g_{t}(\eta_{t}^{i}X_{i}^{(\infty)})\right\} ^{*}. \nonumber
\end{align*} 
\end{lemma}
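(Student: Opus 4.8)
The plan is to reduce the identity to the bounded case already settled by Lemma~\ref{thm:sand_bounded}, via a truncation argument in which the uniform‑integrability hypothesis supplies a truncation error that is uniform in $n$ and valid also at the limit point $n=\infty$, and then to glue the two pieces together by means of the chain‑of‑convergence lemma, Lemma~\ref{lem:extend_conv}. Abbreviate $Y^{(n)}:=\xi_1^iX_i^{(n)}+o_1^{(n)}$ and $Y'^{(n)}:=\eta_1^iX_i^{(n)}+o_2^{(n)}$, so that $Y^{(\infty)}=\xi_1^iX_i^{(\infty)}$ and $Y'^{(\infty)}=\eta_1^iX_i^{(\infty)}$, and set
\[
 B^{(n)}:=\Bigl\{\prod_{s=2}^{r_1}f_s(\xi_s^iX_i^{(n)})\Bigr\}\,A^{(n)}\,\Bigl\{\prod_{t=2}^{r_2}g_t(\eta_t^iX_i^{(n)})\Bigr\}^{*},
\]
so that the quantity to be analysed is $T^{(n)}:=\Tr\rho^{(n)}f_1(Y^{(n)})\,B^{(n)}\,g_1(Y'^{(n)})$; since $f_s,g_t$ are bounded for $s,t\ge2$ and $\{A^{(n)}\}$ is uniformly bounded, $\sup_n\|B^{(n)}\|<\infty$. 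For $C>0$ put $\phi_C(u):=\operatorname{sgn}(u)\min\{|u|,C\}$ and $\psi_C(u):=u-\phi_C(u)$, and let $T^{(n)}_C$ be $T^{(n)}$ with $f_1,g_1$ replaced by $\phi_C\circ f_1,\phi_C\circ g_1$; these truncations are bounded Borel functions whose discontinuity sets lie inside those of $f_1,g_1$ and hence are still Lebesgue‑null.

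\emph{Uniform truncation estimate.} Write $T^{(n)}-T^{(n)}_C=\Tr\rho^{(n)}(\psi_C\circ f_1)(Y^{(n)})B^{(n)}g_1(Y'^{(n)})+\Tr\rho^{(n)}(\phi_C\circ f_1)(Y^{(n)})B^{(n)}(\psi_C\circ g_1)(Y'^{(n)})$. Inserting $(\rho^{(n)})^{1/2}$ on either side of each trace and applying the Cauchy--Schwarz inequality for the Hilbert--Schmidt inner product $\langle P,Q\rangle:=\Tr P^{*}Q$, together with the pointwise bounds $\psi_C(u)^2\le u^2\ind_{\{|u|>C\}}$ and $\phi_C(u)^2\le u^2$ and the functional calculus, one obtains
\[
 |T^{(n)}-T^{(n)}_C|\le\|B^{(n)}\|\Bigl(\Tr\rho^{(n)}f_1(Y^{(n)})^2\ind_{\{|f_1(Y^{(n)})|>C\}}\Bigr)^{1/2}\Bigl(\Tr\rho^{(n)}g_1(Y'^{(n)})^2\Bigr)^{1/2}+(\text{term with }f_1,Y^{(n)}\leftrightarrow g_1,Y'^{(n)}).
\]
Because $\{f_1(Y^{(n)})^2\}_{n\in\N\cup\{\infty\}}$ and $\{g_1(Y'^{(n)})^2\}_{n\in\N\cup\{\infty\}}$ are uniformly integrable under $\{\rho^{(n)}\}$ — hence, in particular, bounded in $L^1$ — the indicator‑weighted factors tend to $0$ as $C\to\infty$ uniformly in $n$ while the remaining factors stay bounded. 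Thus $\sup_{n\in\N\cup\{\infty\}}|T^{(n)}-T^{(n)}_C|\le\varepsilon(C)$ with $\varepsilon(C)\to0$.

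\emph{Fixed $C$.} For each fixed $C$ the functions $\phi_C\circ f_1,\phi_C\circ g_1$ and all $f_s,g_t$ $(s,t\ge2)$ are bounded Borel with Lebesgue‑null discontinuity sets, so the only discrepancy between $T^{(n)}_C$ and the situation governed by Lemma~\ref{thm:sand_bounded} is the presence of the $o_D$‑remainders inside $Y^{(n)},Y'^{(n)}$. These I would remove by appending $Y^{(n)}$ and $Y'^{(n)}$ to the list $X^{(n)}$: since $o_1^{(n)}=o_D(\xi_1^iX_i^{(n)},\rho^{(n)})$ and $o_2^{(n)}=o_D(\eta_1^iX_i^{(n)},\rho^{(n)})$, this extension preserves both the convergence in distribution \eqref{eq:sand_weak_origin} and the sandwiched characteristic‑function condition \eqref{eq:sand_weak}, with limiting observables $\xi_1^iX_i^{(\infty)},\eta_1^iX_i^{(\infty)}$ — a stability property of $o_D$ that follows from its definition in \cite{qcontiguity} and, if need be, can be checked directly in the spirit of the proof of Lemma~\ref{thm:sand_bounded}. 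Applying Lemma~\ref{thm:sand_bounded} to the extended list then yields $\lim_{n\to\infty}T^{(n)}_C=T^{(\infty)}_C$ for every $C$.

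\emph{Assembly and main difficulty.} Finally, invoke Lemma~\ref{lem:extend_conv} with $X:=(0,\infty)$, $Y$ a one‑point set, $F_n(C):=T^{(n)}_C$ and $G_n:=T^{(n)}$ for $n\in\N\cup\{\infty\}$: the fixed‑$C$ step furnishes $F_n(C)\to F_\infty(C)$ for each $C$, while the truncation estimate provides, for every $\varepsilon>0$, a $C$ with $\limsup_n|G_n-F_n(C)|<\varepsilon$ and $|G_\infty-F_\infty(C)|<\varepsilon$; the lemma then delivers $\lim_{n\to\infty}T^{(n)}=T^{(\infty)}$, which is exactly the asserted identity. I expect the truncation estimate to be the crux: it must be uniform in $n$ \emph{and} hold at $n=\infty$, and this is precisely where uniform integrability is indispensable — both to force the tail contributions to vanish uniformly and, through the elementary implication that a uniformly integrable family is bounded in $L^1$, to keep the complementary Hilbert--Schmidt factors under control. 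By contrast, discarding the $o_D$‑remainders is routine once one records that they can be absorbed into the list of observables without disturbing \eqref{eq:sand_weak_origin} or \eqref{eq:sand_weak}.
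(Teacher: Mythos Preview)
Your proposal is correct and follows essentially the same route as the paper: truncate $f_1,g_1$ to bounded functions, control the truncation error uniformly in $n\in\N\cup\{\infty\}$ via Cauchy--Schwarz together with the uniform-integrability hypothesis (which also supplies the $L^1$-bound on the complementary factor), handle the $o_D$-remainders by absorbing $\tilde Y^{(n)},\tilde Z^{(n)}$ into the list of observables so that Lemma~\ref{thm:sand_bounded} applies to the truncated version, and then pass to the limit. The only cosmetic differences are that the paper uses the cutoff $h_L$ of \eqref{eqn:cutoffFnc} rather than your $\phi_C$, and assembles the pieces by a direct triangle-inequality estimate rather than an explicit appeal to Lemma~\ref{lem:extend_conv}.
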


\begin{proof} 
See Appendix \ref{app:ProofLemmas}.
\end{proof}

By using Lemma \ref{thm:sand_uni_int}, we can further generalize quantum Le Cam third lemma under q-LAN \cite[Corollary 7.5]{qcontiguity} as follows. 

\begin{corollary}[Sandwiched Le Cam third lemma under $D$-extendibility]\label{cor:lecam3_sand}
Assume that a sequence 
$\S^{(n)}=\{ \rho_{\theta}^{(n)} : \theta\in\Theta\subset\R^{d}\}$
of quantum statistical models is q-LAN 
and $D$-extendible at $\theta_0\in\Theta$ 
as in Definition \ref{def:propertyD}.  
Then
\begin{equation}\label{eq:lecam3_sand4}
 \left(X^{(n)},\rho_{\theta_{0}+h/\sqrt{n}}^{(n)}\right)\conv hN(\left({\rm Re}\,\tau\right)h,\Sigma),
\end{equation} 
where $\tau=\Sigma F$.

Assume further that a uniformly bounded sequence $A^{(n)}$ of observables for $n\in\N\cup\{\infty\}$ 
satisfies 
\begin{equation*} 
 \lim_{n\to\infty}\Tr\rho_{\theta_{0}}^{(n)}
	e^{\i \xi^{i}X_{i}^{(n)}}A^{(n)}e^{\i\eta^{i}X_{i}^{(n)}}
 =\Tr\rho_{0}^{(\infty)}e^{\i \xi^{i}X_{i}^{(\infty)}}A^{(\infty)}e^{\i\eta^{i}X_{i}^{(\infty)}}
\end{equation*}
for all $\xi, \eta\in \Q^r$, where $\rho_0^{(\infty)} \sim N(0,\Sigma)$ and $X^{(\infty)}=(X_{1}^{(\infty)},\dots,X_{r}^{(\infty)})$ are the canonical observables.
Then it holds that
\begin{equation}\label{eq:lecam3_sand1}
\lim_{n\to\infty}\Tr\rho_{\theta_{0}}^{(n)}R_{h_{1}}^{(n)}A^{(n)}R_{h_{2}}^{(n)}=\Tr\rho_{0}^{(\infty)}R_{h_{1}}^{(\infty)}A^{(\infty)}R_{h_{2}}^{(\infty)}
\end{equation}
for any $h_{1},h_{2}\in\R^{d}$, where $R_{h}^{(n)}$ are square-root likelihood ratios defined by
\[ 
 R_{h}^{(n)}=\Ratio\left( \left. \rho_{\theta_{0}+h/\sqrt{n}}^{(n)} \,\right|\, \rho_{\theta_{0}}^{(n)}\right) 
 \;\;\mbox{and}\;\;
 R_{h}^{(\infty)}=\exp \left[ \frac{1}{2}\left((Fh)^{i}X_{i}^{(\infty)}-\frac{1}{2}(h^{\top}F^{\top}\Sigma Fh) I^{(\infty)} \right) \right].
\]
Specifically, 
\begin{equation}\label{eq:lecam3_sand2}
 \lim_{n\to\infty}\Tr\rho_{\theta_{0}+h/\sqrt{n}}^{(n)}A^{(n)} 
 =\lim_{n\to\infty}\Tr \rho_{\theta_0}^{(n)} R_{h}^{(n)}A^{(n)}R_{h}^{(n)} 
 =\Tr\rho_{h}^{(\infty)}A^{(\infty)}
\end{equation}
for any $h\in\R^{d}$, where $\rho_{h}^{(\infty)} \sim N(({\rm Re}\,\tau)h,\Sigma)$. 
\end{corollary}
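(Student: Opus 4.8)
The plan is to derive the three displayed assertions in order, using the $D$-extendibility hypothesis to reduce everything to the sandwiched Lévy–Cramér continuity theorem (Lemma \ref{thm:sand_uni_int}) applied to the list $X^{(n)}$ and the limiting Gaussian $\rho_0^{(\infty)}\sim N(0,\Sigma)$. First I would establish \eqref{eq:lecam3_sand4}. Starting from $X^{(n)}\conv{\rho_{\theta_0}^{(n)}}N(0,\Sigma)$ (condition \eqref{eq:X_cond2}), I want to upgrade the convergence in distribution from the fixed parameter $\rho_{\theta_0}^{(n)}$ to the shifted parameter $\rho_{\theta_0+h/\sqrt n}^{(n)}$. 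The mechanism is the quantum Le Cam third lemma under q-LAN \cite[Corollary 7.5]{qcontiguity}: the log-likelihood expansion \eqref{eq:R_expand} together with $\Delta^{(n)}=\sum_i F^i_{\,k}X^{(n)}_i$ (condition \eqref{eq:X_cond1}, so $h^i\Delta^{(n)}_i = (Fh)^i X^{(n)}_i$) and the joint CCR-type characteristic-function limit \eqref{eq:X_cond3} identify the limiting characteristic function of $X^{(n)}$ under $\rho_{\theta_0+h/\sqrt n}^{(n)}$. Concretely, one computes $\lim_n \Tr\rho_{\theta_0}^{(n)} R_h^{(n)} e^{\sqrt{-1}\,\xi^i X^{(n)}_i} R_h^{(n)}$ using \eqref{eq:X_cond3} with the geometric-mean covariance block — this is exactly where the operator geometric mean $\Sigma\#\Sigma^\top$ encodes the ``shift'' — and reads off a Gaussian characteristic function with mean $({\rm Re}\,\tau)h$ where $\tau = \Sigma F$, and covariance still $\Sigma$. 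This yields \eqref{eq:lecam3_sand4}.

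Next, for \eqref{eq:lecam3_sand1}, I would apply Lemma \ref{thm:sand_uni_int} with a careful choice of test functions. The square-root likelihood ratios are $R_{h}^{(n)} = \big(\text{positive part of }\exp\{\tfrac12[(Fh)^iX^{(n)}_i - \tfrac12(h^\top F^\top\Sigma Fh)I^{(n)}]\}\big)$ up to the $o_{L^2}$ and $o_D$ remainders in \eqref{eq:R_expand}; the point is that $R_{h_1}^{(n)}$ and $R_{h_2}^{(n)}$ are, modulo infinitesimals, exponentials of affine functions of $(F h_j)^i X^{(n)}_i$. So I take $f_1(x) = g_1$ of the form $\exp(\tfrac12 x - \text{const})$ applied to $\xi_1^i X^{(n)}_i$ with $\xi_1 = F h_1$ (and likewise $\eta_1 = F h_2$), absorb the $o_D$ remainders into the $o_1^{(n)}, o_2^{(n)}$ slots of Lemma \ref{thm:sand_uni_int}, and leave $A^{(n)}$ as the uniformly bounded middle factor. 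The hypothesis $\lim_n\Tr\rho_{\theta_0}^{(n)}e^{\sqrt{-1}\xi^iX^{(n)}_i}A^{(n)}e^{\sqrt{-1}\eta^iX^{(n)}_i} = \Tr\rho_0^{(\infty)}e^{\sqrt{-1}\xi^iX^{(\infty)}_i}A^{(\infty)}e^{\sqrt{-1}\eta^iX^{(\infty)}_i}$ for rational $\xi,\eta$ is precisely the condition \eqref{eq:sand_weak} needed to run Lemma \ref{thm:sand_uni_int}. The conclusion then reads $\lim_n \Tr\rho_{\theta_0}^{(n)}R_{h_1}^{(n)}A^{(n)}R_{h_2}^{(n)} = \Tr\rho_0^{(\infty)}R_{h_1}^{(\infty)}A^{(\infty)}R_{h_2}^{(\infty)}$ once I check that $R_{h_j}^{(\infty)}$ as defined in the statement is indeed $\exp(\tfrac12[(Fh_j)^iX^{(\infty)}_i-\tfrac12 h_j^\top F^\top\Sigma F h_j I^{(\infty)}])$ — which it is, by inspection. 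Finally \eqref{eq:lecam3_sand2} is the special case $h_1 = h_2 = h$: the middle equality is \eqref{eq:lecam3_sand1}, the left equality is the defining property of the square-root likelihood ratio ($\Tr\rho_{\theta_0+h/\sqrt n}^{(n)}A^{(n)} = \Tr R_h^{(n)}\rho_{\theta_0}^{(n)}R_h^{(n)}A^{(n)}$ modulo the singular part, whose contribution vanishes in the limit by the $o_{L^2}$ control and uniform boundedness of $A^{(n)}$), and the right-hand side $\Tr\rho_h^{(\infty)}A^{(\infty)}$ follows because $\rho_h^{(\infty)} = R_h^{(\infty)}\rho_0^{(\infty)}R_h^{(\infty)}$ is the Gaussian shift $N(({\rm Re}\,\tau)h,\Sigma)$ — this last identity is the limiting (q-LAN) version of the Lebesgue decomposition and should be consistent with \eqref{eq:lecam3_sand4}.

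The main obstacle I anticipate is the rigorous handling of the remainder terms and the uniform-integrability bookkeeping in the second step: the $o_{L^2}(\rho_{\theta_0}^{(n)})$ term inside the logarithm in \eqref{eq:R_expand} and the $o_D(h^i\Delta^{(n)}_i,\rho_{\theta_0}^{(n)})$ term must be shown to be exactly of the type that Lemma \ref{thm:sand_uni_int} is designed to swallow (the $o_1^{(n)} = o_D(\xi_1^iX^{(n)}_i,\rho^{(n)})$ slots), and I must verify that $\{\exp(\xi_1^iX^{(n)}_i + o_1^{(n)})\}_n$ (the square of the relevant $f_1$) is uniformly integrable under $\{\rho_{\theta_0}^{(n)}\}$ — this is where the Gaussian tail estimate from \eqref{eq:lecam3_sand4}/\eqref{eq:X_cond2} combined with the q-LAN structure enters, presumably by an argument already present in \cite{qcontiguity}. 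A secondary technical point is passing from rational $\xi,\eta$ in the hypothesis to the real exponents appearing in $R_h^{(\infty)}$, but this is routine density/continuity (the sandwiched coherent-state machinery of Lemma \ref{thm:sand_bounded} already handles exactly such extensions). Everything else is a matter of organizing the algebra of the geometric-mean covariance block and confirming that the stated formula for $R_h^{(\infty)}$ matches.
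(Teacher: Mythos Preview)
Your proposal is correct and follows essentially the same route as the paper: \eqref{eq:lecam3_sand4} is obtained by citing \cite[Corollary~7.5]{qcontiguity}, \eqref{eq:lecam3_sand1} by applying Lemma~\ref{thm:sand_uni_int} with $f_1,g_1$ the exponential functions $x\mapsto e^{x/2-\text{const}}$ evaluated at $(Fh_j)^iX_i^{(n)}$ (absorbing the $o_D$ remainders), and \eqref{eq:lecam3_sand2} by specializing to $h_1=h_2=h$ and invoking the asymptotic vanishing of the singular part. One small correction: your description of the mechanism behind \eqref{eq:lecam3_sand4} invokes condition~\eqref{eq:X_cond3}, but that condition concerns $\Tr\sqrt{\rho}\,e^{\sqrt{-1}\xi X}\sqrt{\rho}\,e^{\sqrt{-1}\eta X}$ rather than $\Tr\rho\,R_h\,e^{\sqrt{-1}\xi X}R_h$; the paper (and \cite[Corollary~7.5]{qcontiguity}) does not actually need \eqref{eq:X_cond3} for this step---only \eqref{eq:X_cond1}, \eqref{eq:X_cond2}, and the q-LAN expansion are used there.
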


\begin{proof} 
See Appendix \ref{app:ProofLemmas}.
\end{proof}

Finally, the following asymptotic version of the Weyl CCR will turn out to be useful.

\begin{lemma}[Asymptotic Weyl CCR]\label{lem:asymptoticCCR}
Let $W^{(n)}(\xi):=e^{\i\xi^{i}X_{i}^{(n)}}$ for $\xi\in\R^d$, and assume that $(X^{(n)},\rho^{(n)})\conv{}N(0,J)$. 
Then 
\begin{equation*} 
 \lim_{n\to\infty}\left\| W^{(n)}(\xi)W^{(n)}(\eta)\sqrt{\rho^{(n)}}
 - e^{\i \xi^\top S\eta}\; W^{(n)}(\xi+\eta)\sqrt{\rho^{(n)}} \right\|_{\rm HS}=0,
\end{equation*}
for all $\xi,\eta\in\R^d$, where $S:={\rm Im}\,J$, and $\|\,\cdot\,\|_{\rm HS}$ denotes the Hilbert-Schmidt norm.
\end{lemma}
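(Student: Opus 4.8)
The plan is to establish the asymptotic Weyl commutation relation by reducing it, via the Hilbert--Schmidt inner product, to a statement about characteristic functions that follows from the hypothesis $(X^{(n)},\rho^{(n)})\conv{}N(0,J)$. Concretely, to show $\|\,W^{(n)}(\xi)W^{(n)}(\eta)\sqrt{\rho^{(n)}} - e^{\i\xi^\top S\eta}\,W^{(n)}(\xi+\eta)\sqrt{\rho^{(n)}}\,\|_{\rm HS}\to 0$, I would square the norm and expand it into four terms of the form $\Tr\sqrt{\rho^{(n)}}\,W^{(n)}(u)^{*}W^{(n)}(v)\sqrt{\rho^{(n)}}$ with various choices of $u,v\in\R^d$ built from $\xi$, $\eta$, and $\xi+\eta$, each carrying an appropriate scalar phase. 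Since $W^{(n)}(u)^{*}W^{(n)}(v)=W^{(n)}(-u)W^{(n)}(v)$, each such term is (up to a scalar phase which I will track carefully) of the form $\Tr\rho^{(n)}\,e^{\i a^i X_i^{(n)}}\cdots$, so the quantum convergence in distribution hypothesis, which by definition controls the limits of $\Tr\rho^{(n)}W^{(n)}(\xi_1)\cdots W^{(n)}(\xi_k)$, pins down each limit in terms of the quantum Gaussian state $N(0,J)$.

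The key computational input is that for the limiting Gaussian state the exact Weyl relation holds: $W^{(\infty)}(\xi)W^{(\infty)}(\eta)=e^{\i\xi^\top S\eta}W^{(\infty)}(\xi+\eta)$ on ${\rm CCR}(S)$, so the limiting analogue of the HS-difference is identically zero. Thus once I show that each of the four traces in the expansion of $\|\cdot\|_{\rm HS}^2$ converges to the corresponding trace against $\rho^{(\infty)}\sim N(0,J)$ — which is exactly what $(X^{(n)},\rho^{(n)})\conv{}N(0,J)$ delivers, possibly after passing through the sandwiched Lévy--Cramér theorem (Lemma \ref{thm:sand_bounded}) with $A^{(n)}=I^{(n)}$ to handle the precise form needed, though the bare definition of quantum convergence in distribution should already suffice here — the limit of $\|\cdot\|_{\rm HS}^2$ equals the analogous squared quantity computed entirely within ${\rm CCR}(S)$ against $N(0,J)$, and that is $0$ by the exact Weyl relation. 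One subtlety is matching phases: the phase $e^{\i\xi^\top S\eta}$ must be exactly the one produced by the Gaussian characteristic-function identity $\Tr\rho^{(\infty)}W^{(\infty)}(\zeta)=e^{-\frac12\zeta^\top J\zeta}$ combined with the BCH/Weyl multiplication, so I will need to be meticulous that the skew-symmetric part $S={\rm Im}\,J$ enters with the right sign and factor.

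The main obstacle I anticipate is not conceptual but bookkeeping: correctly expanding $\|A-B\|_{\rm HS}^2 = \|A\|_{\rm HS}^2 + \|B\|_{\rm HS}^2 - 2{\rm Re}\langle A,B\rangle_{\rm HS}$ with $A=W^{(n)}(\xi)W^{(n)}(\eta)\sqrt{\rho^{(n)}}$ and $B=e^{\i\xi^\top S\eta}W^{(n)}(\xi+\eta)\sqrt{\rho^{(n)}}$, and verifying term-by-term that the four limits combine to zero. The term $\|A\|_{\rm HS}^2 = \Tr\sqrt{\rho^{(n)}}W^{(n)}(\eta)^{*}W^{(n)}(\xi)^{*}W^{(n)}(\xi)W^{(n)}(\eta)\sqrt{\rho^{(n)}} = \Tr\rho^{(n)}$, since $W^{(n)}(\xi)$ is unitary, hence equals $1$; likewise $\|B\|_{\rm HS}^2=1$; and the cross term $\langle A,B\rangle_{\rm HS} = \overline{e^{\i\xi^\top S\eta}}\,\Tr\sqrt{\rho^{(n)}}W^{(n)}(\eta)^{*}W^{(n)}(\xi)^{*}W^{(n)}(\xi+\eta)\sqrt{\rho^{(n)}}$, whose limit, by quantum convergence in distribution, is $\overline{e^{\i\xi^\top S\eta}}\,\Tr\rho^{(\infty)}W^{(\infty)}(\eta)^{*}W^{(\infty)}(\xi)^{*}W^{(\infty)}(\xi+\eta) = \overline{e^{\i\xi^\top S\eta}}\,\Tr\rho^{(\infty)}W^{(\infty)}(-\eta)W^{(\infty)}(-\xi)W^{(\infty)}(\xi+\eta)$. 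Using the exact Weyl relation in ${\rm CCR}(S)$ repeatedly, $W^{(\infty)}(-\eta)W^{(\infty)}(-\xi)W^{(\infty)}(\xi+\eta)$ collapses to $e^{\i\xi^\top S\eta}I^{(\infty)}$, so the cross term tends to $\overline{e^{\i\xi^\top S\eta}}\cdot e^{\i\xi^\top S\eta} = 1$. Therefore $\limsup_n \|A-B\|_{\rm HS}^2 \le 1 + 1 - 2\cdot 1 = 0$, which gives the claim. I would present the argument in this order: expansion of the HS-norm; identification of the two diagonal terms as $1$ by unitarity; computation of the limit of the cross term via the convergence-in-distribution hypothesis and the exact Weyl relation for $N(0,J)$; and conclusion.
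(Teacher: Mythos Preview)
Your proposal is correct and follows essentially the same approach as the paper: expand $\|A-B\|_{\rm HS}^2$, use unitarity of the Weyl operators to get $\|A\|_{\rm HS}^2=\|B\|_{\rm HS}^2=1$, and evaluate the limit of the cross term $\Tr\rho^{(n)}W^{(n)}(-\eta)W^{(n)}(-\xi)W^{(n)}(\xi+\eta)$ via the convergence-in-distribution hypothesis and the exact Weyl relation on ${\rm CCR}(S)$. One minor bookkeeping remark: your scalar phase in $\langle A,B\rangle_{\rm HS}$ should be $e^{\i\xi^\top S\eta}$ rather than its conjugate, and correspondingly $W^{(\infty)}(-\eta)W^{(\infty)}(-\xi)W^{(\infty)}(\xi+\eta)=e^{\i\eta^\top S\xi}I=e^{-\i\xi^\top S\eta}I$; these two sign slips cancel, so your conclusion is unaffected, and you already flagged the need for care here.
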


\begin{proof} 
See Appendix \ref{app:ProofLemmas}.
\end{proof}

\section{Proof of Theorem \ref{thm:qRep}} \label{sec:proof_qrep}

Since the proof is somewhat lengthy, we first outline the proof. 
By choosing a suitable regular $r\times r$ matrix $K$, one finds another $D$-extension 
${X'}_i^{(n)}:=K_i^j X_j^{(n)}$ of $\Delta^{(n)}$ that exhibits 
\[ 
 X'^{(n)}\convd{\rho_{\theta_0}^{(n)}} N(0,\Sigma_{c}\oplus\Sigma_{q}), 
\]
where $\Sigma_{c}$ is a real $r_{c}\times r_{c}$ matrix and $\Sigma_{q}$ is a complex $r_{q}\times r_{q}$ matrix with $r_{c}+r_{q}=r$ so that the imaginary part $S_q:={\rm Im}\, \Sigma_{q}$ of $\Sigma_q$ is invertible%
\footnote{
In fact, for any $\Sigma \ge 0$ with ${\rm Re}\,\Sigma>0$, we have 
$({{\rm Re}\Sigma} )^{-1}\Sigma (\sqrt{{\rm Re}\Sigma} )^{-1}=I+\i S$, where 
$S:=(\sqrt{{\rm Re}\Sigma} )^{-1} ({\rm Im}\,\Sigma ) (\sqrt{{\rm Re}\Sigma} )^{-1}$ is a real skew-symmetric matrix. 
Further, by choosing a suitable real orthogonal matrix $P$, the matrix $S$ is transformed into the form 
$P^{\top} S P=0\oplus S_{q}$ with $\det S_{q}\neq 0$.
}. 
In what follows, we always adopt such a $D$-extension and simply denote $X'^{(n)}$ as $X^{(n)}$, omitting the prime. 
Further, we label the elements of $X^{(n)}$ as
\[ 
X_{c,1}^{(n)},\dots,X_{c,r_{c}}^{(n)},X_{q,1}^{(n)},\dots,X_{q,r_{q}}^{(n)},
\]
in accordance with the decomposition $\Sigma=\Sigma_{c}\oplus\Sigma_{q}$. 

We need to show that there exists a POVM $M$ on ${\rm CCR}(O_c) \otimes {\rm CCR}(S_q)$ that exhibits
$\phi_h(M(B))=\mathcal{L}_h(B)$ for any $B\in\B(R^s)$,
where $\phi_h\sim N(({\rm Re}\, \tau)h, \Sigma)$. 
To this end, we first formally enlarge $N(({\rm Re}\, \tau)h, \Sigma)$ to $N( ({\rm Re} \, \hat{\tau}) h, \hat{\Sigma} )$,  
where
\[
\hat{\tau}:=\begin{pmatrix}\tau\\ 0 \end{pmatrix},
\qquad
\hat{\Sigma}:=\begin{pmatrix}\Sigma_{c} & 0 & 0\\
0 & \Sigma_{q} & \Sigma_{q}\#\Sigma_{q}^{\top}\\
0 & \Sigma_{q}\#\Sigma_{q}^{\top} & \Sigma_{q}^{\top}
\end{pmatrix}.
\]
We next construct a POVM
on ${\rm CCR}(O_c) \otimes {\rm CCR}(S_q) \otimes {\rm CCR}(-S_q)$ and prove that it is a POVM on ${\rm CCR}(O_c) \otimes {\rm CCR}(S_q) \otimes I_a$. 
Finally, we prove that the POVM thus constructed enjoys the desired property. 

\begin{proof} 
As stated above, we divide the proof into three steps. 
In Step 1, we define a Hilbert space on which $N( ({\rm Re} \, \hat{\tau}) h, \hat{\Sigma} )$ is represented, and designate a fiducial  cyclic vector for ${\rm CCR}(O_c) \otimes {\rm CCR}(S_q) \otimes {\rm CCR}(-S_q)$. 
In Step 2, we construct (a precursor of) a POVM
on ${\rm CCR}(O_c) \otimes {\rm CCR}(S_q) \otimes {\rm CCR}(-S_q)$ and prove that it defines a POVM $M$ on ${\rm CCR}(O_c) \otimes {\rm CCR}(S_q)$. 
In Step 3, we prove that $M$ enjoys the desired property.

\subsection*{Step 1}
As a similar way to the prescription that precedes Lemma \ref{thm:commutant_qc}, we introduce a Hilbert space $\H=\H_c \otimes \H_q \otimes \H_a$ on which the von Neumann algebra ${\rm CCR}(\hat{S})={\rm CCR}(O_c) \otimes {\rm CCR}(S_q) \otimes {\rm CCR}(S_a)$ is represented, where 
\[ 
 \hat{S}={\rm Im}\, \hat{\Sigma}, \quad 
 O_c={\rm Im}\, \Sigma_c \,(=0), \quad 
 S_q={\rm Im}\, \Sigma_q, \quad\mbox{and}\quad 
 S_a={\rm Im}\,\Sigma_q^\top=-S_q.
\]
The canonical observables are 
\[
 \left\{ \hat X_{c,i}:=X_{c,i}\otimes I_{q}\otimes I_{a}\right\} _{i} \cup 
 \left\{ \hat X_{q,j}:=I_{c}\otimes X_{q,j}\otimes I_{a}\right\} _{j} \cup
 \left\{ \hat X_{a,k}:=I_{c}\otimes I_{q}\otimes X_{a,k}\right\} _{k},
\]
where $\{{X}_{c,i}\}_{i}$, $\{{X}_{q,j}\}_{j}$, and $\{{X}_{a,k}\}_{k}$ are the canonical observables of ${\rm CCR}(O_{c})$, ${\rm CCR}(S_{q})$, and ${\rm CCR}(S_{a})$, respectively.

In order to invoke the sandwiched coherent state representation for ${\rm CCR}(\hat S)$, we need a cyclic vector $\psi$ on $\H$. 
We first designate a cyclic vector $\psi_c\in\H_c$ for ${\rm CCR}(O_c)=L^\infty(\R^{r_c})$, 
in which each $\xi\in L^\infty(\R^{r_c})$ is identified with the bounded operator $T_\xi\in B(\H_c)$ defined by
\[  (T_\xi \varphi)(x):=\xi(x)\varphi(x),\qquad (\varphi \in \H_c,\; x\in\R^{r_c}).  \]
Let $\psi_{c}:=\sqrt{g(x)}$, where $g(x)$ is the density function of the (classical) Gaussian distribution $N(0,\Sigma_{c})$. 
Then, any function $f\in L^2(\R^{r_c})$ can be approximated by a series of functions $f_n:=T_{\xi_n} \psi_{c}$, 
where 
\[
  \xi_n:=\frac{f(x)}{\sqrt{g(x)}} \,\mathbbm{1}_{B_n} (x),
\]
with $\mathbbm{1}_{B_n}$ being the indicator function of the ball $B_n$ of radius $n\in\N$ centered at the origin of $\R^{r_c}$.
As a consequence, $\psi_{c}$ is a cyclic vector of $\H_{c}$.

We next specify a cyclic vector $\psi_{qa}\in\H_q\otimes\H_a$ for ${\rm CCR}(S_q)\otimes {\rm CCR}(S_a)$.
Recall that 
\[
N\left(0,\begin{pmatrix}\Sigma_{q} & \Sigma_{q}\#\Sigma_{q}^{\top}\\
\Sigma_{q}\#\Sigma_{q}^{\top} & \Sigma_{q}^{\top}\end{pmatrix}\right)
\]
is a pure state on $\H_{q}\otimes\H_{a}$ (Corollary \ref{lem:pure_double}). 
Let $\psi_{qa}\in\H_{q}\otimes\H_{a}$ be a unit vector that corresponds to the above state. 
Then it is well known in the theory of coherent states that $\psi_{qa}$ is a cyclic vector for ${\rm CCR}(S_q)\otimes {\rm CCR}(S_a)$. 

Now we arrive at a cyclic vector
$\psi:=\psi_{c}\otimes\psi_{qa}\in\H_{c}\otimes\H_{q}\otimes\H_{a}$ for ${\rm CCR}(\hat S)$. 
This cyclic vector has the following nice property. 
Let 
\[
\hat{\Delta}_{i}:=F_{i}^{j}\hat{X_{j}}
\]
for $1\leq i \leq d$, where $F$ is the $r\times d$ real matrix introduced in Definition \ref{def:propertyD}, 
and let
\[
\hat{R}_{h}:=\exp\left[ \frac{1}{2}\left(h^{i}\hat{\Delta}_{i}-\frac{1}{2} (h^{\top}F^{\top}\Sigma Fh) \hat I \right)\right]
\]
for $h\in\R^d$, where $\hat I$ is the identity on ${\rm CCR}(\hat S)$. 
Then, for any $A\in {\rm CCR}(O_c)\otimes {\rm CCR}(S_q)$, the following identity holds:
\begin{equation} \label{eq:qrep_restrict}
\langle \psi, \hat{R}_h (A\otimes I_a) \hat{R}_h \psi \rangle = \phi_h(A),
\end{equation}
where $\phi_h \sim N(({\rm Re}\,\tau)\,h,\Sigma)$ with $\tau:=\Sigma F$.  
This relation will be used as a variant of the quantum Le Cam third lemma that goes back and forth between ${\rm CCR}(\hat S)$ and ${\rm CCR}(S)$.

To prove \eqref{eq:qrep_restrict},
let
\begin{equation}\label{eqn:Xbar}
\overline{X}_{i}:=\left\{
 \array{ll} X_{c,i}\otimes I_q & \;\;{\rm if} \;\; 1\le i \le r_c \\\\
  I_c\otimes X_{q,i-r_c}  & \;\;{\rm if} \;\; r_c+1\le i \le r_c+r_q
  \endarray \right.
\end{equation}
be canonical observables of ${\rm CCR}(O_c) \otimes {\rm CCR}(S_q)$.
Then by a direct computation using the quasi-characteristic function of the vector state%
\footnote{
If $\det\hat S=0$, the Hilbert space $\H_c$ is reducible under the action of ${\rm CCR}(O_c)$, and thus the vector state $\ket\psi\bra\psi$ is a mixed state. 
}
\[
 \ket\psi\bra\psi \sim N(0,\hat\Sigma), 
\]
we can verify that
\[
\langle \psi, \hat{R}_h (e^{\i \xi^i \overline{X_i}} \otimes I_a) \hat{R}_h \psi \rangle
=\langle \psi, \hat{R}_h \,e^{\i \sum_{i=1}^{r} \xi^i \hat{X_i}} \hat{R}_h \psi \rangle
=e^{\i \xi^\top ({\rm Re}\, \tau) h - \frac{1}{2} \xi^\top \Sigma \xi }.
\]
Since the last side is the characteristic function of $\phi_h \sim N(({\rm Re}\,\tau)\,h,\Sigma)$, we have
\[
\langle \psi, \hat{R}_h (e^{\i \xi^i \overline{X_i}} \otimes I_a) \hat{R}_h \psi \rangle
=\phi_h \left(e^{\i \xi^i \overline{X_i} } \right).
\]
Finally, since $\{e^{\i \xi^i \overline{X_i}} \}_{\xi\in\R^r}$ is SOT-dense in ${\rm CCR}(O_c) \otimes {\rm CCR}(S_q)$, the identity \eqref{eq:qrep_restrict} is proved. 

\subsection*{Step 2}

Given a pair of vectors $(\lambda,\mu)\in \R^{r_c}\times \R^{r_q}$, let 
\[
  W(\lambda, \mu):=e^{\i\left(\lambda^{i}X_{c,i}+\mu^{j}X_{q,j}\right)}
\]
be the corresponding Weyl operator on ${\rm CCR}(O_c)\otimes {\rm CCR}(S_q)$. 
By analogy to this operator, we introduce a unitary operator
\[
 W^{(n)}(\lambda, \mu):=e^{\i\left(\lambda^{i}X_{c,i}^{(n)}+\mu^{j}X_{q,j}^{(n)}\right)}
\]
on each $\H^{(n)}$.
We further define, for each $\xi=(\xi_{c},\xi_{q},\xi_{a})\in \R^{r_c}\times \R^{r_q}\times \R^{r_q}$, 
operators
\[
 A^{(n)}(\xi):=W^{(n)}(\xi_c,\xi_q) \sqrt{\rho^{(n)}_{\theta_0}}\, W^{(n)}(0,\xi_a)
\]
and
\[
 B^{(n)}(\xi):=W^{(n)}(0,\xi_q) \sqrt{\rho^{(n)}_{\theta_0}}\, W^{(n)}(\xi_c,\xi_a). 
\]

Note that these operators are asymptotically identified, in that 
\begin{equation}\label{eqn:A-B}
  \lim_{n\to\infty} \left\| A^{(n)}(\xi)-B^{(n)}(\xi) \right\|_{\rm HS} = 0.
\end{equation}
This is proved by observing  
\[
\left\| A^{(n)}(\xi)-B^{(n)}(\xi) \right\|_{\rm HS}^2
=2-2\,{\rm Re}\,\Tr A^{(n)}(\xi)^*B^{(n)}(\xi)
\]
and 
\begin{align*}
 &\lim_{n\to\infty} \Tr A^{(n)}(\xi)^*B^{(n)}(\xi) \\
 &\quad=\lim_{n\to\infty} \Tr W^{(n)}(0,\xi_a)^* \sqrt{\rho^{(n)}_{\theta_0}}\, W^{(n)}(\xi_c,\xi_q)^*
		  W^{(n)}(0,\xi_q) \sqrt{\rho^{(n)}_{\theta_0}}\, W^{(n)}(\xi_c,\xi_a)  \\
 &\quad=\lim_{n\to\infty} \Tr \left\{ W^{(n)}(\xi_c,\xi_a) W^{(n)}(0, -\xi_a)\sqrt{\rho^{(n)}_{\theta_0}}\right\} 
 		  \left\{W^{(n)}(-\xi_c,-\xi_q)W^{(n)}(0,\xi_q) \sqrt{\rho^{(n)}_{\theta_0}} \right\} \\
 &\quad=\lim_{n\to\infty}\Tr \left\{ \exp\left[\i \begin{pmatrix}\xi_{c} \\ \xi_{a} \end{pmatrix}^{\top} (O_c\oplus S_q) 
 		\begin{pmatrix} 0 \\ -\xi_{a} \end{pmatrix} \right]
 	W^{(n)}(\xi_c, 0)\sqrt{\rho^{(n)}_{\theta_0}} \right\} \\
 & \qquad\qquad\;\; \times \left\{ \exp\left[\i \begin{pmatrix} -\xi_{c} \\ -\xi_{q} \end{pmatrix}^{\top} (O_c\oplus S_q) 
 		\begin{pmatrix} 0 \\ \xi_{q} \end{pmatrix} \right]
 	W^{(n)}(-\xi_c, 0)\sqrt{\rho^{(n)}_{\theta_0}} \right\} \\
 &\quad=\lim_{n\to\infty} \Tr  
 	W^{(n)}(\xi_c, 0)\sqrt{\rho^{(n)}_{\theta_0}}\,W^{(n)}(-\xi_c, 0)\sqrt{\rho^{(n)}_{\theta_0}} \\
 &\quad = \exp\left\{ -\frac{1}{2}
 \begin{pmatrix}\xi_c \\ 0 \\ -\xi_c \\ 0 \end{pmatrix}^{\top}
 \begin{pmatrix}\Sigma_{c} & 0 & \Sigma_{c} & 0\\
	0 & \Sigma_{q} & 0 & \Sigma_{q}\#\Sigma_{q}^{\top}\\
	\Sigma_{c} & 0 & \Sigma_{c} & 0\\
	0 & \Sigma_{q}\#\Sigma_{q}^{\top} & 0 & \Sigma_{q}^{\top}
  \end{pmatrix}
 \begin{pmatrix}\xi_c \\ 0 \\ -\xi_c \\ 0 \end{pmatrix}
 \right\} \\
 &\quad= 1.
\end{align*}
Here, the asymptotic Weyl CCR (Lemma \ref{lem:asymptoticCCR}) was used in the third equality, 
and condition \eqref{eq:X_cond3} for $D$-extendibility was used in the second last equality. 

Now, given a POVM $M^{(n)}$ on $\H^{(n)}$ whose outcomes take values in $\R^s$, let
\[
  M_t^{(n)}:=M^{(n)}((-\infty, t])
\]
be the associated resolution of identity, where $t=(t_1,t_2,\dots,t_s)\in\R^s$ and $(-\infty, t]$ is the shorthand of the set
\[
 (-\infty, t_1]\times (-\infty, t_2]\times \cdots \times (-\infty, t_s].
\]
By using the resolution of identity $M_t^{(n)}$, we define the following function
\[
 \varphi_{t}^{(n)}(\xi;\eta):=\Tr A^{(n)}(\xi)^{*}M_{t}^{(n)}A ^{(n)}(\eta)
\]
for $\xi=(\xi_{c},\xi_{q},\xi_{a}), \eta=(\eta_{c},\eta_{q},\eta_{a}) \in \R^{r_c}\times \R^{r_q}\times \R^{r_q}$. 
Since $\varphi_{t}^{(n)}(\xi;\eta)$ is uniformly bounded, 
in that $|\varphi_{t}^{(n)}(\xi;\eta)|\leq 1$ for all $t\in \R^s$, $\xi,\eta\in \R^{r_c+2r_q}$, and $n\in\N$, 
the diagonal sequence trick \cite{ReedSimon} tells us that there is a subsequence $\{n_{m}\}_{m\in\N}\subset\{n\}_{n\in\N}$ such that $\varphi_{\alpha}^{(n_{m})}(\xi;\eta)$ are convergent for all countably many arguments $\alpha\in\Q^{s}$ and $\xi,\eta\in\Q^{r_{c}+2r_{q}}$, defining a limiting function 
\begin{equation}\label{eq:double_shift_lim}
 \varphi_{\alpha}(\xi;\eta):=\lim_{m\to\infty}\varphi_{\alpha}^{(n_{m})}(\xi;\eta). 
\end{equation}
We shall prove that this limiting function $\varphi_{\alpha}$ is the sandwiched coherent state representation of some operator $\tilde{M}_\alpha$ on $\H_c\otimes \H_q$.

First, we formally introduce the function $\varphi_\infty$ by 
\begin{align*}
\varphi_{\infty}(\xi;\eta) 
 & :=\left\langle e^{\i\xi^{i}\hat{X}_{i}} \psi, e^{\i\eta^{j}\hat{X}_{j}} \psi \right\rangle\\
 & =e^{-\i\xi^{\top}\hat{S}\eta} \left\langle \psi, e^{\i\left(\eta-\xi\right)^{i}\hat{X}_{i}} \psi \right\rangle\\
 & =e^{-\i\xi^{\top}\hat{S}\eta}\, e^{-\frac{1}{2}(\eta-\xi)^{\top}\hat{\Sigma}(\eta-\xi)}.
\end{align*}
Then it is shown that
\[
 \varphi_{\infty}(\xi;\eta)=\lim_{n\to\infty}\varphi_{\infty}^{(n)}(\xi;\eta),
\]
where $\varphi_{\infty}^{(n)}(\xi;\eta):=\lim_{t\to\infty} \varphi_t^{(n)}(\xi;\eta)=\Tr A^{(n)}(\xi)^{*}A^{(n)}(\eta)$.
In fact, 
\begin{align*}
 & \lim_{n\to\infty}\varphi_{\infty}^{(n)}(\xi;\eta) \\
 &\quad =\lim_{n\to\infty}\Tr A^{(n)}(\xi)^{*}A^{(n)}(\eta) \\
 &\quad=\lim_{n\to\infty} \Tr W^{(n)}(0,\xi_a)^* \sqrt{\rho^{(n)}_{\theta_0}}\, W^{(n)}(\xi_c,\xi_q)^*
		  W^{(n)}(\eta_c,\eta_q) \sqrt{\rho^{(n)}_{\theta_0}}\, W^{(n)}(0,\eta_a)  \\
 &\quad=\lim_{n\to\infty} \Tr \left\{ W^{(n)}(0,\eta_a) W^{(n)}(0, -\xi_a)\sqrt{\rho^{(n)}_{\theta_0}}\right\} 
 		  \left\{W^{(n)}(-\xi_c,-\xi_q)W^{(n)}(\eta_c,\eta_q) \sqrt{\rho^{(n)}_{\theta_0}} \right\} \\
 &\quad=\lim_{n\to\infty}\Tr \left\{ \exp\left[\i \begin{pmatrix} 0\\ \eta_{a} \end{pmatrix}^{\top} (O_c\oplus S_q) 
 		\begin{pmatrix} 0 \\ -\xi_{a} \end{pmatrix} \right]
 	W^{(n)}(0, \eta_a-\xi_a)\sqrt{\rho^{(n)}_{\theta_0}} \right\} \\
 & \qquad\qquad\;\; \times \left\{ \exp\left[\i \begin{pmatrix} -\xi_{c} \\ -\xi_{q} \end{pmatrix}^{\top} (O_c\oplus S_q) 
 		\begin{pmatrix} \eta_c \\ \eta_{q} \end{pmatrix} \right]
 	W^{(n)}(\eta_c-\xi_c, \eta_q-\xi_q)\sqrt{\rho^{(n)}_{\theta_0}} \right\} \\
 &\quad=e^{\i (-\eta_a^\top S_q \xi_a - \xi_q^\top S_q \eta_q)}
 	\lim_{n\to\infty} \Tr  
 	W^{(n)}(0, \eta_a-\xi_a) \sqrt{\rho^{(n)}_{\theta_0}}\,
	W^{(n)}(\eta_c-\xi_c, \eta_q-\xi_q) \sqrt{\rho^{(n)}_{\theta_0}} \\
 &\quad =e^{-\i \xi^\top \hat{S} \eta}\; 
 	\exp\left\{ -\frac{1}{2}
	 \begin{pmatrix} 0 \\ \eta_a-\xi_a \\ \eta_c-\xi_c \\ \eta_q-\xi_q \end{pmatrix}^{\top}
	 \begin{pmatrix}\Sigma_{c} & 0 & \Sigma_{c} & 0\\
		0 & \Sigma_{q} & 0 & \Sigma_{q}\#\Sigma_{q}^{\top}\\
		\Sigma_{c} & 0 & \Sigma_{c} & 0\\
		0 & \Sigma_{q}\#\Sigma_{q}^{\top} & 0 & \Sigma_{q}^{\top}
	 \end{pmatrix}
	 \begin{pmatrix} 0 \\ \eta_a-\xi_a \\ \eta_c-\xi_c \\ \eta_q-\xi_q \end{pmatrix}
	 \right\} \\
 &\quad= e^{-\i \xi^\top \hat{S} \eta}\; e^{-\frac{1}{2} (\eta-\xi)^\top \hat\Sigma (\eta-\xi)} \\
 &\quad=\varphi_{\infty}(\xi;\eta).
\end{align*}
As a consequence, by taking the limit $m\to\infty$ in $0\prec \varphi_{\alpha}^{(n_m)} \prec\varphi_{\infty}^{(n_m)}$, which follows from $0\le M_\alpha^{(n_m)} \le I^{(n_m)}$, we have
\begin{equation}\label{eqn:(i)}
 0\prec\varphi_{\alpha}\prec\varphi_{\infty}, \qquad (\forall \alpha\in\Q^d). 
\end{equation}

We can also prove the following identity: 
\begin{equation}\label{eqn:(ii)}
 \varphi_{\alpha}(\xi_{c},\xi_{q},\xi_{a};\eta_{c},\eta_{q},\eta_{a})
  =e^{-\i\xi_{a}^{\top}S_{a}\eta_{a}}\varphi_{\alpha}(\xi_{c}-\eta_{c},\xi_{q},\xi_{a}-\eta_{a};0,\eta_{q},0).
\end{equation}
In fact, by using the asymptotic identifiability of $A^{(n)}(\xi)$ and $B^{(n)}(\xi)$ established in \eqref{eqn:A-B}, 
\begin{align*}
 &\varphi_\alpha (\xi ; \eta) \\
 &\quad =\lim_{m\to\infty}\Tr A^{(n_m)}(\xi)^{*} M_\alpha^{(n_m)} A^{(n_m)}(\eta) \\
 &\quad =\lim_{m\to\infty}\Tr B^{(n_m)}(\xi)^{*} M_\alpha^{(n_m)} B^{(n_m)}(\eta) \\
 &\quad=\lim_{m\to\infty} \Tr W^{(n_m)}(\xi_c,\xi_a)^* \sqrt{\rho^{(n_m)}_{\theta_0}}\, W^{(n_m)}(0,\xi_q)^*
 		M_\alpha^{(n_m)} W^{(n_m)}(0,\eta_q) \sqrt{\rho^{(n_m)}_{\theta_0}}\, W^{(n_m)}(\eta_c,\eta_a)  \\
 &\quad=\lim_{m\to\infty} \Tr \left\{ W^{(n_m)}(\eta_c,\eta_a) W^{(n_m)}(-\xi_c, -\xi_a)
 		\sqrt{\rho^{(n_m)}_{\theta_0}}\right\} 
 		  \left\{W^{(n_m)}(0,-\xi_q) M_\alpha^{(n_m)} W^{(n_m)}(0,\eta_q) \sqrt{\rho^{(n_m)}_{\theta_0}} \right\} \\
 &\quad=\lim_{m\to\infty}\Tr \left\{ \exp\left[\i \begin{pmatrix} \eta_c\\ \eta_{a} \end{pmatrix}^{\top} (O_c\oplus S_q) 
 		\begin{pmatrix} -\xi_c \\ -\xi_{a} \end{pmatrix} \right]
 		W^{(n_m)}(\eta_c-\xi_c, \eta_a-\xi_a)\sqrt{\rho^{(n_m)}_{\theta_0}} \right\} \\
 & \qquad\qquad\;\; \times 
 		\left\{W^{(n_m)}(0,-\xi_q) M_\alpha^{(n_m)} W^{(n_m)}(0,\eta_q) \sqrt{\rho^{(n_m)}_{\theta_0}} \right\} \\
 &\quad=e^{-\i \eta_a^\top S_q \xi_a} 
 		\lim_{m\to\infty} \Tr W^{(n_m)}(\xi_c-\eta_c, \xi_a-\eta_a)^*\sqrt{\rho^{(n_m)}_{\theta_0}} \,
		W^{(n_m)}(0,\xi_q)^* M_\alpha^{(n_m)} W^{(n_m)}(0,\eta_q) \sqrt{\rho^{(n_m)}_{\theta_0}} \\
 &\quad=e^{-\i \eta_a^\top S_q \xi_a} 
 		\lim_{m\to\infty} \Tr B^{(n_m)}(\xi_c-\eta_c, \xi_q, \xi_a-\eta_a)^* M_\alpha^{(n_m)} 
			B^{(n_m)}(0,\eta_q,0) \\
 &\quad=e^{-\i \eta_a^\top S_q \xi_a} 
 		\lim_{m\to\infty} \Tr A^{(n_m)}(\xi_c-\eta_c, \xi_q, \xi_a-\eta_a)^* M_\alpha^{(n_m)} 
			A^{(n_m)}(0,\eta_q,0) \\
 &\quad=e^{-\i\xi_{a}^{\top}S_{a}\eta_{a}}\varphi_{\alpha}(\xi_{c}-\eta_{c},\xi_{q},\xi_{a}-\eta_{a};0,\eta_{q},0).
\end{align*}

Now that \eqref{eqn:(i)} and \eqref{eqn:(ii)} are verified, Lemmas \ref{thm:bochoner_dshift} and \ref{thm:commutant_qc} prove that there is a unique operator $\tilde{M}_{\alpha} \in {\rm CCR}(O_c) \otimes {\rm CCR}(S_q)$ satisfying $0\leq\tilde{M}_{\alpha}\leq I_c\otimes I_q$ and 
\begin{equation}\label{eq:M_alpha}
  \varphi_{\alpha}(\xi;\eta)
  =\left\langle e^{\i\xi^{i}\hat{X}_{i}} \psi, (\tilde{M}_{\alpha}\otimes I_{a} )e^{\i\eta^{j}\hat{X}_{j}} \psi \right\rangle,
\end{equation}
for all $\alpha\in\Q^{s}$ and $\xi,\eta\in\R^{r_{c}+2r_{q}}$. 

We are now ready to construct a POVM $M=\{M(B): B\in \B(\R^{s})\}$ from $\{ \tilde M_\alpha \}_{\alpha\in\Q^s}$.  
Since $\tilde M_\alpha$ is monotone in $\alpha\in\Q^s$, we can define, for each $t\in\R^s$,
\[
\bar{M}_{t}:=\inf_{\alpha>t,\alpha\in\Q^{s}}\tilde{M}_{\alpha},
\]
where the infimum is taken in the weak operator topology (WOT).
Since $t\mapsto\bar{M}_{t}$ is right-continuous, 
it uniquely determines a POVM 
$\bar{M}=\{\bar{M}(B): B\in \B(\bar{\R}^{s})\}$ 
over the extended reals $\bar{\R}^{s}$.
Finally, we transfer the `measure at infinity' $\bar M(\bar{\R}^{s}\setminus\R^{s})$ to the origin, to obtain
\[
 M(B):=\bar{M}(B)+\delta_{0}(B)\bar{M}(\bar{\R}^{s}\setminus\R^{s}),\qquad (B\in\B(\R^s)),
\]
where $\delta_0$ is the Dirac measure concentrated at the origin.

\subsection*{Step 3}

We prove that the POVM $M$ constructed in Step 2 is the desired one we have sought. 
Setting $\xi_a=\eta_a=0$ in \eqref{eq:double_shift_lim} and \eqref{eq:M_alpha}, we have
\begin{align*}
\varphi_\alpha(\xi_c,\xi_q,0 ; \eta_c, \eta_q,0)
&=\lim_{m\to\infty}\Tr \sqrt{\rho_{\theta_{0}}^{(n_{m})} } e^{-\i\xi^{i}X_{i}^{(n_{m})}}
	M_{\alpha}^{(n_{m})}e^{\i\eta^{i}X_{i}^{(n_{m})}} \sqrt{\rho_{\theta_{0}}^{(n_{m})} } \\
&=\left\langle \psi, e^{-\i\xi^{i} (\overline{X}_{i}\otimes I_{a}) }\left(\tilde{M}_{\alpha}\otimes I_{a}\right)
	e^{\i\eta^{i} (\overline{X}_{i}\otimes I_{a}) } \psi \right\rangle,
\end{align*}
or equivalently, 
\begin{align*}
\lim_{m\to\infty} \Tr \rho_{\theta_{0}}^{(n_{m})}  e^{-\i\xi^{i}X_{i}^{(n_{m})}}
	M_{\alpha}^{(n_{m})}e^{\i\eta^{i}X_{i}^{(n_{m})}}
=\left\langle \psi, \left( e^{-\i\xi^{i}\overline{X}_{i}} \tilde{M}_{\alpha}
	e^{\i\eta^{i} \overline{X}_{i} } \otimes I_{a} \right) \psi \right\rangle.
\end{align*}
Due to \eqref{eq:qrep_restrict}, this is further equal to
\[
  \phi_0 \left( e^{-\i\xi^{i} \overline{X}_{i}}\tilde{M}_{\alpha}e^{\i\eta^{i} \overline{X}_{i}} \right). 
\]
Therefore, the sandwiched Le Cam third lemma (Corollary \ref{cor:lecam3_sand}) yields 
\begin{equation}\label{eq:last_step4}
 \lim_{m\to\infty} \L^{(n_m)}_h (-\infty,\alpha]
 =\lim_{m\to\infty}\Tr\rho_{\theta_{0}+h/\sqrt{n_{m}}}^{(n_{m})}M_{\alpha}^{(n_{m})} 
 = \phi_h(\tilde{M}_{\alpha})
\end{equation}
for all $h\in\R^{d}$ and $\alpha\in\Q^{s}$. 

Fix $h\in\R^{d}$ arbitrarily. 
Due to assumption, $\L_h^{(n)}$ weakly converges to $\L_h$. 
Therefore, for any continuity point $t\in\R^{s}$ of $t\mapsto\L_{h}(-\infty,t]$, 
\begin{align*}
\L_{h}(-\infty,t] &= \lim_{m\to\infty} \L^{(n_m)}_h (-\infty,t] \\
 & \leq \inf_{\alpha>t,\alpha\in\Q^s} \lim_{m\to\infty} \L^{(n_m)}_h (-\infty,\alpha] \\
 & \leq \inf_{\alpha>t,\alpha\in\Q^s} \L_{h}(-\infty,\alpha] \\
 & = \L_{h}(-\infty,t].
\end{align*}
In the second inequality, we used the portmanteau lemma.
It then follows from \eqref{eq:last_step4} that
\begin{align*}
\L_{h}(-\infty,t] & =\inf_{\alpha>t,\alpha\in\Q^s} \lim_{m\to\infty} \L^{(n_m)}_h (-\infty,\alpha] 
  =\inf_{\alpha>t,\,\alpha\in\Q^{s}} \phi_h (\tilde{M}_{\alpha})
 = \phi_h(\bar{M}_{t}),
\end{align*}
and thus $\L_{h}(B)=\phi_h(\bar{M}(B))$ for all $B\in\B(\R^s)$: 
in particular, 
\[ \phi_h (\bar{M}(\R^{s}))=\L_{h}(\R^{s})=1. \]
Since $\L_{h}(\bar{\R}^{s}\setminus\R^{s})=0$, 
we have $\phi_h (\bar{M}(B))=\phi_h(M(B))$ for all $B\in\B(\R^s)$.

In summary,
\[ \L_{h}(B)=\phi_h(M(B)) \qquad (\forall B\in\B(\R^s)). \]
This completes the proof of Theorem \ref{thm:qRep}. 
\end{proof}

\section{Applications}\label{sec:applications}

In this section, we apply the asymptotic representation Theorem \ref{thm:qRep} to the analysis of asymptotic efficiency for sequences of quantum estimators.

\subsection{Quantum Hodges estimator}

In order to motivate ourselves to study asymptotic efficiency in the quantum domain, let us touch upon the issue of quantum superefficiency first.
In classical statistics, there was a well-known superefficient estimator called the Hodges estimator that asymptotically breaks the Cram\'er-Rao bound \cite{vaart}. 
An analogous estimator can be constructed in the quantum domain that asymptotically breaks the Holevo bound. 

Let us consider the pure state model 
\[
\S=\left\{ \rho_{\theta}=\frac{1}{2}\left(I+\theta^{1}\sigma_{1}+\theta^{2}\sigma_{2}
+\sqrt{1-\left(\theta^1\right)^{2}-\left(\theta^2\right)^{2}}\, \sigma_{3}\right)
: \theta\in\R^{2},\,\left(\theta^1\right)^{2}+\left(\theta^2\right)^{2}<1\right\},
\]
on $\H=\C^2$ having two-dimensional parameter $\theta=(\theta^1,\theta^2)$, where $\sigma_{1},\sigma_{2},\sigma_{3}$ are the Pauli matrices. 
It is well known \cite{holevo} that the weighted trace of the covariant matrix $V_{\theta}[M,\hat{\theta}]$ for a locally unbiased estimator $(M,\hat{\theta})$ with a weight matrix $G >0$ is bounded from below by the Holevo bound $c_G^{(H)}$ as 
\[
\Tr G V_\theta[M,\hat{\theta}]\geq c_G^{(H)}.
\]
If we set $G$ to be the SLD Fisher information matrix $J_{\theta}^{(S)}$, the Holevo bound $c_{J_{\theta}^{(S)}}^{(H)}$ is reduced to $4$, which is independent of $\theta$, and is achieved when and only when $V_{\theta}[M,\hat{\theta}]=(J_\theta^{(S)}/2)^{-1}$; specifically, it is achievable by a randomized measurement scheme without invoking any collective measurement \cite{YamagataTomo}. 

Now we construct a sequence of estimators that asymptotically breaks the Holevo bound. 
It is known that for the i.i.d.~model $\S^{(n)}=\{ \rho_{\theta}^{\otimes n}\} _{\theta}$, there is an adaptive estimation scheme $(\hat{M}^{(n)},\hat{\theta}^{(n)})$ in which $\sqrt{n}(\hat{\theta}^{(n)}-\theta)$ weakly converges to the (classical) normal distribution $N(0, (J_\theta^{(S)}/2)^{-1})$ for every $\theta$ \cite{Fujiwara:2006}.
Introduce a second estimator $T_n$ by
\begin{equation}\label{eq:q_hodges}
 T_n:=\begin{cases}
  \hat{\theta}^{(n)} & \text{if }\text{\ensuremath{\|\hat{\theta}^{(n)}\|}\ensuremath{\ensuremath{\geq1/\sqrt[4]{n}}}}\\
0 & \text{if }\text{\ensuremath{\|\hat{\theta}^{(n)}\|}\ensuremath{<\ensuremath{1/\sqrt[4]{n}}}}
\end{cases}.
\end{equation}
Then $\sqrt{n}(T_n-\theta)$ converges to $N(0, (J_\theta^{(S)}/2)^{-1})$ in distribution if $\theta\neq 0$, whereas it converges to $0$ in probability if $\theta=0$. 
At first sight, $T_n$ is an improvement on $\hat\theta^{(n)}$. 
However, as demonstrated below, this reasoning is a bad use of asymptotics \cite{vaart}.

In order to evaluate the asymptotic behavior of $\sqrt{n}(T_n-\theta)$ in more detail, we assume the following situation: 
through the first stage of estimation, the adaptive measurement $\hat M^{(k)}$ has converged to a measurement $M_\theta$ that is optimal at the true value of $\theta$ \cite{Fujiwara:2006}. 
Now we proceed to the second stage:  
fix the measurement to be the one that has been obtained through the first stage, i.e. $M_\theta$,
and take $\hat\theta^{(n)}$ to be the sample average of outcomes over $n$-i.i.d.~experiments, each being distributed as $N(\theta, V_\theta)$, where $V_\theta=(J_\theta^{(S)}/2)^{-1}$, so that
$
 \hat\theta^{(n)}\sim N(\theta, {V_\theta}/{n} ).
$
Under this situation, the weighted trace $\Tr J_\theta^{(S)} V_\theta[M_\theta, T_n]$ of covariance matrix of the quantum Hodges estimator $T_n$ can be evaluated as follows. 
Because of the rotational symmetry of the model $\S$ around the origin of the parameter space, we can assume without loss of generality that the true parameter $\theta$ lies on the plane $\theta^2=0$.
In this case, 
\[
 \Tr J_\theta^{(S)} V_\theta[M_\theta, T_n]
 =\int_0^{2\pi} d\phi \int_{1/\sqrt[4]{n}}^\infty w_\theta(r,\phi)  q_\theta(r,\phi) r dr
  +\int_0^{2\pi} d\phi \int_0^{1/\sqrt[4]{n}} w_\theta(0,\phi) q_\theta(r,\phi) r dr,
\]
where
\[
 w_\theta(r,\phi):=\frac{(r\cos\phi-\theta^1)^2}{1-(\theta^1)^2}+r^2\sin^2\phi
\]
is the weighted sum of squared errors, and
\[
 q_\theta(r,\phi) r dr d\phi:=\frac{n}{4\pi\sqrt{1-(\theta^1)^2}} \exp \left[-\frac{n}{4}\, w_\theta(r,\phi) \right] r dr d\phi
\]
is the probability density of $\hat\theta^{(n)}-\theta \sim N(0,V_\theta/n)$ in the polar coordinate system. 

\begin{figure}
\begin{center}
\includegraphics[keepaspectratio, scale=1]{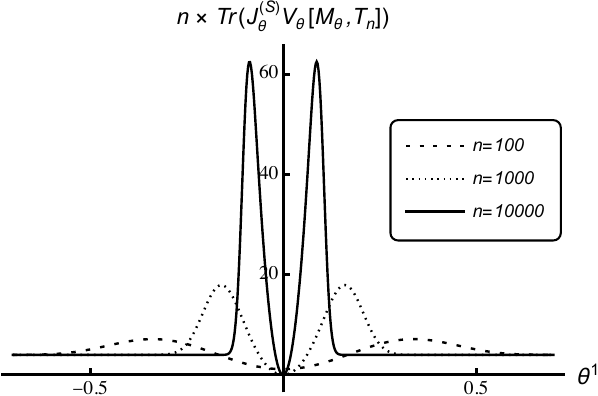}
\caption{Weighted trace of covariance matrix of the quantum Hodges estimator $T_n$ with the weight $J_\theta^{(S)}$ for the spin coherent state model $\S$ based on the means of samples of size 100 (dashed), 1000 (dotted), and 10000 (solid) observations. For reference, the corresponding Holevo bound is $c_{J_{\theta}^{(S)}}^{(H)}=4$.}
\label{fig:hodge}
\end{center}
\end{figure}

Figure \ref{fig:hodge} shows the graph of $n\times \Tr J_\theta^{(S)} V_\theta[M_\theta, T_n]$ for three different values of $n$. 
These functions are close to the Holevo bound $c_{J_{\theta}^{(S)}}^{(H)}=4$ on most of the domain but possess peaks close to zero. 
As $n\to\infty$, the location and widths of the peaks converge to zero but their heights to infinity. 
Because the values of $\theta$ at which $T_n$ behaves badly differ from $n$ to $n$, the pathological behavior of $\sqrt{n}(T_n-\theta)$ is not visible in the pointwise limit distributions under fixed $\theta$ as in the classical case \cite{vaart}.

\subsection{Quantum regular estimator}

In classical statistics, it is customary to restrict ourselves to a certain class of estimators 
in order to avoid pathological behavior like the Hodges estimator.
In this section, we shall extend such a strategy to the quantum domain. 

We begin with a standard estimation problem for a quantum Gaussian shift model. 
Our problem is to estimate the parameter $h\in\R^d$ of the quantum Gaussian shift model 
$\phi_h\sim N(\left({\rm Re}\, \tau\right)h,\Sigma)$, where $\Sigma$ is an $r\times r$ complex nonnegative matrix ($r\ge d$) with ${\rm Re}\,\Sigma>0$, and $\tau$ is an $r\times d$ complex matrix with  
${\rm rank} \left({\rm Re}\,\tau\right)=d$.

An estimator for the model $\phi_h$ is represented by a POVM $M$ over $\R^{d}$.
For each $h\in\R^d$, let $M-h$ denotes the shifted POVM in which the outcome $x$ of $M$ is transformed into $y=x-h$.
It is formally defined by
\[ \int_B f(y) \,\phi_h((M-h)(dy)) := \int_{B+h} f(x-h)\, \phi_h (M(dx))  \qquad (\forall B\in\B(\R^d)). \]
An estimator $M$ for $\phi_h$ is called {\em equivalent in law} if the probability distribution of the outcomes of the shifted POVM $M-h$ applied to $\phi_h$ is independent of $h\in\R^d$, in that
\[
 \phi_h((M-h)(B))=\phi_0(M(B))
\]
holds for all $B\in\B(\R^d)$.
The following result is standard.

\begin{lemma}\label{lem:equi_law}
Assume that an estimator $M$ for the shift parameter $h$ of a quantum Gaussian shift model 
$\phi_h \sim N(\left({\rm Re}\,\tau\right)h,\Sigma)$ is equivalent in law. 
Then, for any $d\times d$ weight matrix $G>0$, 
\[
  \int_{\R^{d}} G_{ij} (x-h)^{i}(x-h)^{j} \phi_{h} (M(dx))\geq c_{G}^{(H)},
\]
where $c_{G}^{(H)}$ is the Holevo bound. 
\end{lemma}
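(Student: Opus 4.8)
The plan is to peel the (necessarily constant) bias off $M$, thereby reducing equivalence in law to global unbiasedness, and then to quote the Holevo bound for locally unbiased estimators on the Gaussian shift model $\phi_h$.

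First I would dispose of the trivial case: if the left-hand side is $+\infty$ there is nothing to prove, so assume it is finite. Putting $y=x-h$ and using the defining identity $\phi_h((M-h)(B))=\phi_0(M(B))$ of equivalence in law, the weighted mean-square error is independent of $h$ and equals $\int_{\R^d} G_{ij}\,y^i y^j\,\phi_0(M(dy))$. Finiteness of this second moment makes the first moment $b^i:=\int_{\R^d} y^i\,\phi_0(M(dy))$ finite as well, and the elementary identity $y^i y^j=(y^i-b^i)(y^j-b^j)+y^i b^j+b^i y^j-b^i b^j$, integrated against $\phi_0(M(\cdot))$, gives
\[
 \int_{\R^d} G_{ij}(x-h)^i(x-h)^j\,\phi_h(M(dx))=\Tr G\,V+b^{\top}Gb,
\]
where $V^{ij}:=\int_{\R^d}(y^i-b^i)(y^j-b^j)\,\phi_0(M(dy))$ is the covariance matrix of $M$ under $\phi_0$.

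Next I would introduce the translated POVM $\tilde M:=M-b$, whose outcome is the outcome $x$ of $M$ relabelled as $x-b$. A second application of equivalence in law shows that the law of $\tilde M-h$ under $\phi_h$ is $h$-independent with mean zero; hence $\tilde M$ is a globally unbiased estimator of $h$, so in particular it is locally unbiased at $h=0$, and its covariance there is exactly $V$, since a constant shift does not affect the covariance. The Holevo bound for locally unbiased estimators on the quantum Gaussian shift model $\phi_h\sim N(({\rm Re}\,\tau)h,\Sigma)$ — whose value $c_G^{(H)}$ is the same at every $h$ because $\{\phi_h\}$ is a shift family — then gives $\Tr G\,V\ge c_G^{(H)}$. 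Combining this with the displayed identity and $b^{\top}Gb\ge 0$ (as $G>0$) finishes the argument.

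The only point I expect to require care is the passage from ``equivalent in law'' to ``locally unbiased after translation'': one must check that equivalence in law is inherited by $M-b$ and that the translated estimator has $h$-independent mean equal to $h$, so that the already-available Holevo bound for the Gaussian shift model applies without further work. Everything else is routine second-moment bookkeeping.
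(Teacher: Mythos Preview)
Your proof is correct and follows essentially the same route as the paper: translate $M$ by its (constant) bias $b$ to obtain an unbiased estimator, apply the Holevo bound to get $\Tr G V\ge c_G^{(H)}$, and observe that the full mean-square error equals $\Tr G V+b^\top G b\ge c_G^{(H)}$. The paper's computation is phrased in terms of the $h$-independent law $\L(B)=\phi_0(M(B))$ and its mean $m$ (your $b$), but the decomposition and the use of the quantum Cram\'er--Rao/Holevo inequality for the debiased POVM are identical.
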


\begin{proof}
See Appendix \ref{app:ProofApplications}.
\end{proof}

Now we introduce the notion of regular estimators%
\footnote{
In classical statistics, regularity is also called {\em asymptotically equivalent in law}. 
}
for q-LAN models. 
Suppose that we are given a sequence 
$\S^{(n)}=\{ \rho_{\theta}^{(n)} : \theta\in\Theta\subset\R^{d}\}$
of quantum statistical models that is q-LAN at $\theta_0\in\Theta$. 
A sequence $M^{(n)}$ of POVMs is called {\em regular at $\theta_0\in\Theta$} 
if the classical distribution $\L^{(n)h}$ of outcomes of the shifted POVM
\[
 M^{(n)h}:=\sqrt{n}\left\{ M^{(n)}-\left(\theta_{0}+h/\sqrt{n}\right)\right\}
\]
under $\rho_{\theta_{0}+h/\sqrt{n}}^{(n)}$ converges to a classical distribution $\L$ that is independent of $h$: 
\begin{equation}\label{eq:regular}
 \L^{(n)h}\conv{}\L\qquad(\forall h\in\R^{d})
\end{equation}

Note that $M^{(n)h}$ is a measurement in which the outcome $\hat{\theta}\in\R^{d}$ of $M^{(n)}$ is transformed into $\sqrt{n}\{ \hat{\theta}-\left(\theta_{0}+h/\sqrt{n}\right)\}$. 
Since 
\[
 \sqrt{n}\left\{ \hat{\theta}-\left(\theta_{0}+h/\sqrt{n}\right)\right\} \leq t
 \quad \Longleftrightarrow \quad
 \hat{\theta}\leq\theta_{0}+(h+t)/\sqrt{n},
\]
we see that
\[
 M^{(n)h}(-\infty,t]=M^{(n)} \left(-\infty,\theta_{0}+\frac{h+t}{\sqrt{n}} \right].
\]

When a sequence $\S^{(n)}$ of quantum statistical models is q-LAN and $D$-extendible at $\theta_{0}\in\Theta$, 
the next theorem is an immediate consequence of the asymptotic representation Theorem \ref{thm:qRep} and Lemma \ref{lem:equi_law}. 

\begin{theorem}[Bound for quantum regular estimator]\label{thm:regular}
Let $\S^{(n)}=\{ \rho_{\theta}^{(n)} : \theta\in\Theta\subset\R^{d}\}$ be a sequence of quantum statistical models that is q-LAN and $D$-extendible at $\theta_{0}\in\Theta$. 
For any estimator $M^{(n)}$ that is regular at $\theta_0$, and a $d\times d$ weight matrix $G>0$,
we have
\begin{equation}\label{eq:regular_bound}
 \int_{\R^{d}}G_{ij} x^{i}x^{j}\L(dx)\geq c_{G}^{(rep)},
\end{equation}
and hence
\begin{equation}\label{eq:regular_bound2}
 \liminf_{n\to\infty} \int_{\R^d} G_{ij} (x-h)^i (x-h)^j \;\Tr\rho_{\theta_{0}+h/\sqrt{n}}^{(n)} M^{(n)}(dx)
  \ge c_G^{(rep)},
\end{equation}
where 
$\L$ is the limit distribution of $M^{(n) h}$ under $\rho_{\theta_{0}+h/\sqrt{n}}^{(n)}$, and 
$c_{G}^{(rep)}$ is the asymptotic representation bound defined by \eqref{eq:rep_bound}. 
\end{theorem}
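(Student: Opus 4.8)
The plan is to transport the regular sequence $M^{(n)}$ to the limiting Gaussian shift model by means of Theorem~\ref{thm:qRep}, and there to apply the static Holevo-type bound of Lemma~\ref{lem:equi_law}. The one genuine adjustment is that Theorem~\ref{thm:qRep} requires a single, $h$-independent sequence of POVMs, whereas regularity is formulated through the $h$-indexed shifts $M^{(n)h}$. To remedy this I would first pass to the rescaled POVM $N^{(n)}:=\sqrt{n}\,(M^{(n)}-\theta_0)$, i.e.\ the POVM whose outcome is $\sqrt{n}(\hat\theta-\theta_0)$ when $\hat\theta$ is an outcome of $M^{(n)}$. An outcome of $M^{(n)h}$ is then an outcome of $N^{(n)}$ shifted by $-h$, so the law $\mu_h^{(n)}:=\Tr\rho_{\theta_0+h/\sqrt{n}}^{(n)}N^{(n)}$ is the translate of $\L^{(n)h}$ by $h$; since a fixed translation is weakly continuous, the regularity hypothesis \eqref{eq:regular} gives $\mu_h^{(n)}\conv{}\mu_h$ for every $h$, where $\mu_h(B)=\L(B-h)$ and $\L$ is the $h$-independent limit.

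Now $N^{(n)}$ meets the hypothesis of Theorem~\ref{thm:qRep} (with $s=d$): there exists a POVM $M^{(\infty)}$ on ${\rm CCR}({\rm Im}\,\Sigma)$ with $\phi_h(M^{(\infty)}(B))=\mu_h(B)$ for every $h$, where $\phi_h\sim N(({\rm Re}\,\tau)h,\Sigma)$ and $\tau=\Sigma F$. Because $\mu_h$ is the $h$-translate of $\L$, the shifted POVM $M^{(\infty)}-h$ satisfies $\phi_h\big((M^{(\infty)}-h)(B)\big)=\mu_h(B+h)=\L(B)$, independent of $h$, so $M^{(\infty)}$ is equivalent in law for the model $\phi_h$ (its standing hypotheses hold: ${\rm Re}\,\Sigma>0$ by \eqref{eq:X_cond2}, and ${\rm Re}\,\tau=({\rm Re}\,\Sigma)F$ has rank $d$ since $F$ has full column rank). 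Lemma~\ref{lem:equi_law} then yields, for any weight $G>0$,
\[
\int_{\R^d}G_{ij}(x-h)^i(x-h)^j\,\phi_h(M^{(\infty)}(dx))\ \ge\ c_G^{(H)},
\]
and $c_G^{(H)}$ here is, by the very definition \eqref{eq:rep_bound}, the asymptotic representation bound $c_G^{(rep)}$. Substituting $y=x-h$ and using the equivalence in law of $M^{(\infty)}$ rewrites the left-hand side as $\int_{\R^d}G_{ij}y^iy^j\,\L(dy)$, which proves \eqref{eq:regular_bound}.

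It remains to derive \eqref{eq:regular_bound2}, which follows from \eqref{eq:regular_bound} by lower semicontinuity: since $x\mapsto G_{ij}x^ix^j$ is nonnegative and continuous, the portmanteau lemma applied to $\L^{(n)h}\conv{}\L$ gives $\liminf_n\int G_{ij}y^iy^j\,\L^{(n)h}(dy)\ge\int G_{ij}y^iy^j\,\L(dy)\ge c_G^{(rep)}$, and the left-hand integral is precisely the (rescaled) risk written in \eqref{eq:regular_bound2}. I expect Step~1 (the reduction to an $h$-independent POVM sequence and the attendant translation bookkeeping) to be the only subtle point; the rest is then immediate from Theorem~\ref{thm:qRep}, Lemma~\ref{lem:equi_law}, and the portmanteau lemma, with no new estimates required.
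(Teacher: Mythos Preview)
Your proposal is correct and follows essentially the same route as the paper: pass to the $h$-independent rescaled POVM $N^{(n)}=\sqrt{n}(M^{(n)}-\theta_0)$, apply Theorem~\ref{thm:qRep} to obtain a limiting POVM on the Gaussian shift model that is equivalent in law, invoke Lemma~\ref{lem:equi_law}, and finish with the portmanteau lemma. Your brief verification of the rank condition ${\rm rank}\,({\rm Re}\,\tau)=d$ (which the paper leaves implicit) is a welcome addition.
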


\begin{proof}
See Appendix \ref{app:ProofApplications}.
\end{proof}

It is natural to inquire whether there exists a regular estimator $M^{(n)}$ that achieves the lower bound $c_{G}^{(rep)}$ in Theorem \ref{thm:regular}.
The answer is given by the following

\begin{theorem}[Achievability of asymptotic representation bound]\label{thm:achieve}
Assume that a quantum statistical model 
$\S^{(n)}=\{ \rho_{\theta}^{(n)} : \theta\in\Theta\subset\R^{d}\}$
is q-LAN and $D$-extendible at $\theta_0\in\Theta$. 
Given a $d\times d$ weight matrix $G>0$, there exist a regular estimator $M_\star^{(n)}$ and a $d\times d$ real strictly positive matrix $V_\star$ such that
\[
\left( M_\star^{(n)},\, \rho_{\theta_{0}+h/\sqrt{n}}^{(n)}\right)\conv hN(h,V_\star)
\]
and
\[
\Tr GV_\star=c_{G}^{(rep)}
\]
for all $h\in\R^{d}$. 
\end{theorem}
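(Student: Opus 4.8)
The plan is to reduce the problem, via the asymptotic representation Theorem~\ref{thm:qRep} and the sandwiched Le~Cam third lemma (Corollary~\ref{cor:lecam3_sand}), to the purely Gaussian-shift task of exhibiting a POVM on $\mathrm{CCR}(\mathrm{Im}\,\Sigma)$ that estimates $h$ from $\phi_h\sim N((\mathrm{Re}\,\tau)h,\Sigma)$ with covariance achieving the Holevo bound $c_G^{(H)}$ of that limiting model, then transporting such an estimator back to the sequence $\S^{(n)}$ so that it becomes a regular estimator whose limit law has covariance $V_\star$ with $\Tr GV_\star=c_G^{(rep)}$. So the first step is to recall that, by definition of $c_G^{(rep)}$ in \eqref{eq:rep_bound}, the minimum is attained by some real $r\times d$ matrix $K_\star$ with $K_\star^\top(\mathrm{Re}\,\tau)=I$, giving $Z_\star=K_\star^\top\Sigma K_\star$ and a complex matrix decomposition $Z_\star = V_\star/2 \,(\text{or similar}) + \i\,\mathrm{Im}\,Z_\star$; the classical Holevo-bound theory for Gaussian shift models (as in \cite{holevo}, and used in Lemma~\ref{lem:equi_law}) then guarantees a POVM $M^{(\infty)}$ on $\mathrm{CCR}(\mathrm{Im}\,\Sigma)$, built from the canonical observables $X^{(\infty)}$ contracted through $K_\star$ together with an ancillary randomization, such that $(M^{(\infty)},\phi_h)\sim N(h,V_\star)$ for every $h$, where $V_\star$ is the real part entering $Z_\star$ and $\Tr GV_\star=c_G^{(H)}$ for the limiting model, which equals $c_G^{(rep)}$ by construction.

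The second step is to pull this limiting POVM back to a sequence of POVMs on $\H^{(n)}$. Here the idea is to use the $D$-extension observables $X^{(n)}=(X_{c}^{(n)},X_{q}^{(n)})$: since \eqref{eq:X_cond3} together with the sandwiched Le~Cam third lemma (Corollary~\ref{cor:lecam3_sand}) tells us that $(X^{(n)},\rho^{(n)}_{\theta_0+h/\sqrt n})\convd{h}N((\mathrm{Re}\,\tau)h,\Sigma)$ and that bounded functionals of $X^{(n)}$ converge to the corresponding functionals on the Gaussian-shift algebra, I would define $M_\star^{(n)}$ to be the finite-dimensional analogue of $M^{(\infty)}$ obtained by replacing $X^{(\infty)}$ by $X^{(n)}$ in the spectral construction of $M^{(\infty)}$ (adjoining the same classical ancilla for randomization, which is legitimate since randomized POVMs are allowed). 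One must then recentre: the estimator for $h$ is turned into an estimator for $\theta$ by setting $\hat\theta = \theta_0 + \hat h/\sqrt n$, i.e. $M_\star^{(n)}(B) := M^{(n)}_{\text{raw}}(\sqrt n(B-\theta_0))$, so that $M^{(n)h}_\star = \sqrt n\{M_\star^{(n)} - (\theta_0+h/\sqrt n)\}$ is exactly the shifted estimator appearing in the definition of regularity \eqref{eq:regular}.

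The third step is convergence: apply Corollary~\ref{cor:lecam3_sand}, specifically \eqref{eq:lecam3_sand2}, to the bounded observables $A^{(n)} = f(X^{(n)})$ that represent the indicator (or smooth bump) functions defining $M^{(n)h}_\star(-\infty,t]$, to conclude that $\Tr\rho^{(n)}_{\theta_0+h/\sqrt n}M^{(n)h}_\star(-\infty,t]$ converges to $\phi_h$ of the corresponding spectral projection of $M^{(\infty)}$ — which by construction is the distribution function of $N(h,V_\star)$ shifted back, hence independent of $h$ and equal to $N(0,V_\star)$ in the shifted variable. This gives $(M_\star^{(n)},\rho^{(n)}_{\theta_0+h/\sqrt n})\convd{h}N(h,V_\star)$ and in particular regularity with limit law $\mathcal L = N(0,V_\star)$. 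Since $\Tr GV_\star = c_G^{(H)}$ of the limiting model, and since the proof of Theorem~\ref{thm:regular} (together with the independence-of-$D$-extension Corollary~\ref{cor:rep_bound}) identifies this Holevo bound with $c_G^{(rep)}$, we obtain $\Tr GV_\star = c_G^{(rep)}$, completing the proof.

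The main obstacle I anticipate is the second step: making the ``finite-dimensional mimic'' of the optimal Gaussian POVM precise and verifying that its centred, rescaled version is genuinely a POVM on $\H^{(n)}$ and that it converges in the right sense. The optimal $M^{(\infty)}$ achieving $c_G^{(H)}$ typically requires either a covariant (shift-equivariant) construction on $\mathrm{CCR}(\mathrm{Im}\,\Sigma)$ — e.g.\ a heterodyne-type measurement on an extended phase space — or an explicit spectral measure of a commuting family $K_\star^\top X^{(\infty)}$ plus classical noise; translating this into observables $X^{(n)}$ that only satisfy \emph{asymptotic} commutation relations (Lemma~\ref{lem:asymptoticCCR}) means the finite-$n$ operators $K_\star^\top X^{(n)}$ need not commute, so one cannot directly take their joint spectral measure. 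The fix is to perform the commuting construction entirely on the limiting algebra to define $M^{(\infty)}$, and then appeal to Theorem~\ref{thm:qRep} in reverse together with a density/approximation argument (as in Step~2 of the proof of Theorem~\ref{thm:qRep}, using the sandwiched coherent-state representation) to manufacture the approximating sequence $M_\star^{(n)}$; checking uniform integrability so that Lemma~\ref{thm:sand_uni_int} applies to the unbounded quadratic weight $G_{ij}x^ix^j$, rather than merely to bounded functionals, is the remaining technical point that requires care.
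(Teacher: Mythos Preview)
Your overall strategy---optimize in the limiting Gaussian model via a matrix $K_\star$ attaining the minimum in \eqref{eq:rep_bound}, build a POVM there with outcome law $N(h,V_\star)$, then transport it to $\H^{(n)}$---matches the paper's plan. But the transport step, which you correctly flag as the main obstacle, is where your proposal has a genuine gap, and your suggested fix does not work.

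First, the ancilla you need is \emph{not} classical. Writing $Z_\star=K_\star^\top\Sigma K_\star$ and $V_\star=\mathrm{Re}\,Z_\star+\sqrt{G}^{-1}|\sqrt{G}\,\mathrm{Im}\,Z_\star\,\sqrt{G}|\sqrt{G}^{-1}$, the paper adjoins an ancillary quantum Gaussian $\sigma_\star^{(n)}\to N(0,\tilde Z_\star)$ with $\tilde Z_\star:=V_\star-Z_\star$; since $\mathrm{Im}\,\tilde Z_\star=-\mathrm{Im}\,Z_\star$ is generically nonzero, this is a genuinely quantum ancilla whose CCR structure cancels that of $Z_\star$, making the compound observables $\bar Y_{\star\,i}^{(n)}:=(K_\star)_i^jX_j^{(n)}\otimes I+I\otimes\tilde Y_{\star\,i}^{(n)}$ converge to the \emph{classical} Gaussian $N(h,V_\star)$ under $\rho_{\theta_0+h/\sqrt n}^{(n)}\otimes\sigma_\star^{(n)}$.

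Second, and more importantly, your proposal to ``appeal to Theorem~\ref{thm:qRep} in reverse'' is not valid: the representation theorem produces a limiting POVM from a convergent sequence, not the other way round. The paper never uses it here. Instead it confronts the non-commutativity of $\{\bar Y_{\star\,i}^{(n)}\}_i$ head-on with a concrete discretization: for each lattice point $\omega\in(\tfrac1m\Z)^d$ set
\[
M_\star^{(n,m)}(\omega):=\Bigl(\prod_{i=1}^d S^{(m)}(\bar Y_{\star\,i}^{(n)}-\omega_i)\Bigr)\Bigl(\prod_{i=1}^d S^{(m)}(\bar Y_{\star\,i}^{(n)}-\omega_i)\Bigr)^*,
\]
where $S^{(m)}=\ind_{(-\frac1{2m},\frac1{2m}]}$. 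The one-dimensional resolution $\sum_k S^{(m)}(x-k/m)=1$ makes this a bona fide POVM even though the factors do not commute. Convergence as $n\to\infty$ (for fixed $m$) to the discretized classical Gaussian follows from the quantum L\'evy--Cram\'er theorem applied to bounded a.e.-continuous functions of $\bar Y^{(n)}$; letting $m\to\infty$ recovers the continuous Gaussian cdf; a diagonal-sequence choice $m=m(n)$ yields the desired $M_\star^{(n)}$ for $h\in\Q^d$, and an $L^2$-continuity-in-$h$ argument via $R_h^{(n)}$ (not uniform integrability of the quadratic loss) extends to all $h\in\R^d$. Your sketch is missing precisely this construction.
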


\begin{proof}
See Appendix \ref{app:ProofApplications}.
\end{proof}

Theorem \ref{thm:achieve} implies that the asymptotic representation bound $c_{G}^{(rep)}$ is achievable, 
in that
\begin{align*}
   \sup_{L>0} \lim_{n\to\infty} \int_{\R^d}  L\wedge \left\{ G_{ij} (x-h)^i (x-h)^j \right\} \;\Tr\rho_{\theta_{0}+h/\sqrt{n}}^{(n)} M_\star^{(n)}(dx) 
   = c_G^{(rep)}.
\end{align*}
Moreover, in combination with Theorem \ref{thm:regular}, Theorem \ref{thm:achieve} tells us that the bound $c_{G}^{(rep)}$ gives the ultimate limit of estimation precision.
This fact has the following important consequence: 
since an achievable `scalar' lower bound for an estimation problem is necessarily unique, the bound $c_{G}^{(rep)}$ is uniquely determined. 
More precisely, we have the following

\begin{corollary}[Well-definedness of asymptotic representation bound]\label{cor:rep_bound}
For each $d\times d$ weight matrix $G>0$, the asymptotic representation bound $c_{G}^{(rep)}$ is independent of the choice of a $D$-extension. 
\end{corollary}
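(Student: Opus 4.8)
The plan is to exploit the fact that $c_G^{(rep)}$ is squeezed between the converse bound of Theorem~\ref{thm:regular} and the direct (achievability) bound of Theorem~\ref{thm:achieve}, neither of which privileges a particular $D$-extension: an achievable scalar lower bound for a fixed estimation problem is necessarily unique. Concretely, I would fix two $D$-extensions $X^{(n)}$ and $Y^{(n)}$ of $\{\Delta_i^{(n)}\}_{1\le i\le d}$, with associated data $(\Sigma,\tau,F)$ and $(\tilde\Sigma,\tilde\tau,\tilde F)$, and denote by $c_G^{(rep)}$ and $\tilde c_G^{(rep)}$ the two numbers produced by formula \eqref{eq:rep_bound} for these choices. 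The goal is to show $c_G^{(rep)}=\tilde c_G^{(rep)}$.

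First I would invoke Theorem~\ref{thm:achieve} for the $D$-extension $X^{(n)}$: it produces a regular estimator $M_\star^{(n)}$ and a real strictly positive matrix $V_\star$ with $(M_\star^{(n)},\rho_{\theta_{0}+h/\sqrt n}^{(n)})\convd h N(h,V_\star)$ and $\Tr G V_\star=c_G^{(rep)}$. Unwinding the definition of regularity \eqref{eq:regular}, the limit law of the shifted POVM $M_\star^{(n)h}$ under $\rho_{\theta_{0}+h/\sqrt n}^{(n)}$ is $\L_\star=N(0,V_\star)$, which does not depend on $h$; in particular $\int_{\R^{d}}G_{ij}x^{i}x^{j}\,\L_\star(dx)=\Tr G V_\star=c_G^{(rep)}$.

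Next I would apply the converse Theorem~\ref{thm:regular} to the \emph{other} $D$-extension $Y^{(n)}$. Since $M_\star^{(n)}$ is regular, inequality \eqref{eq:regular_bound} gives $\int_{\R^{d}}G_{ij}x^{i}x^{j}\,\L_\star(dx)\ge\tilde c_G^{(rep)}$. Combining this with the previous display yields $c_G^{(rep)}\ge\tilde c_G^{(rep)}$, and interchanging the roles of $X^{(n)}$ and $Y^{(n)}$ gives the reverse inequality, hence $c_G^{(rep)}=\tilde c_G^{(rep)}$.

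The only delicate point — and the main obstacle — is the bookkeeping: one must be sure that Theorems~\ref{thm:regular} and~\ref{thm:achieve} are genuinely formulated and proved for an \emph{arbitrary} $D$-extension, so that the two inequalities above may legitimately be applied to different extensions, and that the scaling/shift conventions relating $(M_\star^{(n)},\rho^{(n)})\convd h N(h,V_\star)$ to the regularity condition \eqref{eq:regular} are matched precisely enough that $\L_\star=N(0,V_\star)$. Once this is verified, the uniqueness-of-achievable-bound argument closes the proof.
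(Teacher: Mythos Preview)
Your proposal is correct and is exactly the argument the paper gives (in a single sentence preceding the corollary): use achievability from Theorem~\ref{thm:achieve} with one $D$-extension and the converse inequality from Theorem~\ref{thm:regular} with the other, then swap roles. Your bookkeeping concern is legitimate but does not bite: both theorems are stated and proved for an arbitrary fixed $D$-extension, and the regularity of $M_\star^{(n)}$ is a property of the sequence of POVMs on the local model alone, independent of which extension is used to verify it, so Theorem~\ref{thm:regular} may indeed be applied with the second extension to the estimator produced by the first.
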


It should be emphasized here that Theorem \ref{thm:achieve} is valid for {\em all} $h\in\R^d$. 
This is a remarkable refinement of the former result \cite[Theorem 3.1]{qlan_first}, in which the Holevo bound $c_{G}^{(H)}$ for an i.i.d.~model was  achieved only on a {\em countable dense subset} of $\R^d$.

\subsection{Quantum minimax theorem}\label{sec:minimax}

We can also study efficiency in terms of minimax criteria. 
Let us begin with a minimax theorem for a quantum Gaussian shift model.

\begin{theorem}[Minimax theorem for quantum Gaussian shift model]\label{thm:q_gauss_minimax}
Suppose that we are given a quantum Gaussian shift model 
$\phi_h \sim N(\left({\rm Re}\,\tau\right)h,\Sigma)$. 
Then, for any estimator $M$ and a weight matrix $G>0$, 
\[
\sup_{h\in\R^{d}}\int_{\R^{d}}G_{ij}(x-h)^{i}(x-h)^{j}\phi_{h}(M(dx))\geq c_{G}^{(H)}.
\]
\end{theorem}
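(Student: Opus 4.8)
The plan is to adapt the classical Hunt--Stein argument to the present quantum setting. The structural input is that the shift $h\mapsto h+g$ on the parameter is implemented unitarily: there are unitaries $U(h)$ on the representation space of ${\rm CCR}({\rm Im}\,\Sigma)$, built from Weyl operators and depending only on $({\rm Re}\,\tau)h$, satisfying $U(g)U(h)=(\text{phase})\,U(g+h)$ and $\phi_h(A)=\phi_0\bigl(U(h)^{*}AU(h)\bigr)$ for all $A$, where $\phi_0\sim N(0,\Sigma)$. Thus $g\mapsto{\rm Ad}\,U(g)$ is a genuine action of $\R^d$ by automorphisms carrying $\phi_h$ to $\phi_{h+g}$. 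We symmetrize an arbitrary estimator over this action, pass to a limiting estimator that is equivalent in law, and then invoke Lemma~\ref{lem:equi_law}.

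First, fix a F\o lner sequence of compact sets $K_n\uparrow\R^d$, e.g.\ the balls of radius $n$, so that $\lambda\bigl((K_n+g)\,\triangle\,K_n\bigr)/\lambda(K_n)\to0$ for every fixed $g$, with $\lambda$ Lebesgue measure. For a POVM $M$ over $\R^d$ and $g\in\R^d$, let $M^{g}(B):=U(g)M(B-g)U(g)^{*}$; a short computation using the phase cocycle gives $R_{M^{g}}(h)=R_{M}(h-g)$, where $R_{N}(h):=\int_{\R^d}G_{ij}(x-h)^{i}(x-h)^{j}\,\phi_h(N(dx))$. Put $\bar M_n(B):=\lambda(K_n)^{-1}\int_{K_n}M^{g}(B)\,dg$, which is again a POVM (the needed measurability of $g\mapsto M^{g}(B)$ follows from strong continuity of $g\mapsto U(g)$). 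By Fubini,
\[
 R_{\bar M_n}(h)=\frac{1}{\lambda(K_n)}\int_{K_n}R_{M}(h-g)\,dg\;\le\;\sup_{h'\in\R^d}R_{M}(h'),
\]
so in particular $R_{\bar M_n}(0)\le\sup_h R_M(h)$ for every $n$.

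Next, exactly as in Step~2 of the proof of Theorem~\ref{thm:qRep}, a diagonal-sequence argument over a countable generating family of Borel sets together with weak-operator compactness of the order interval $[0,I]$ yields a subsequence along which $\bar M_n$ converges to a monotone resolution of identity. If a positive fraction of mass escapes to infinity under $\phi_0$ along this subsequence, then $R_{\bar M_n}(0)\to\infty$ and we are done; otherwise the limit defines a genuine POVM $M^{(\infty)}$ on $\R^d$. The F\o lner property forces $M^{(\infty)}$ to be invariant under the action above --- for fixed $g$, $\bar M_n$ and its $g$-translate differ only on $(K_n+g)\,\triangle\,K_n$, a vanishing fraction of $K_n$ --- hence $M^{(\infty)}$ is equivalent in law in the sense of Lemma~\ref{lem:equi_law}, so $R_{M^{(\infty)}}(h)$ is constant in $h$ and, by that lemma, $R_{M^{(\infty)}}(0)\ge c_G^{(H)}$. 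Since $x\mapsto G_{ij}x^{i}x^{j}$ is nonnegative and continuous, hence lower semicontinuous, the portmanteau lemma gives $\liminf_n R_{\bar M_n}(0)\ge R_{M^{(\infty)}}(0)\ge c_G^{(H)}$ along the subsequence. Combined with the displayed inequality, this gives $\sup_h R_M(h)\ge c_G^{(H)}$, as claimed.

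The main obstacle is the limit extraction: one must check that the weak-operator limit is a bona fide POVM and is \emph{exactly}, not merely approximately, equivalent in law, all while keeping the unbounded quadratic loss under control. These are handled, respectively, by the compactness/diagonalization machinery already developed for Theorem~\ref{thm:qRep}, by the F\o lner property of $\{K_n\}$, and by lower semicontinuity of the loss under weak convergence. (An alternative route avoids the estimator limit by taking a Bayes prior $N(0,\sigma^2 I_d)$ on $h$ and letting $\sigma\to\infty$, but this would require a separate evaluation of the quantum Bayes risk for Gaussian families, so the Hunt--Stein argument above seems preferable given that Lemma~\ref{lem:equi_law} is already available.)
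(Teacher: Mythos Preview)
Your proposal is essentially the same approach as the paper's: a Hunt--Stein averaging of $M$ over the unitarily implemented shifts using a F{\o}lner sequence (the paper uses cubes $[-L,L]^d$ rather than balls), extraction of a weak-operator limit via the diagonal trick and the machinery of Lemma~\ref{thm:bochoner_dshift}, verification that the limiting POVM is equivalent in law from the F{\o}lner property, and then an appeal to Lemma~\ref{lem:equi_law}. The paper handles the mass-at-infinity issue by first bounding the second moments of $\mu_h^{(m)}:=\phi_h(\hat N^{(L_m)}(\cdot))$ uniformly (your displayed inequality), deducing tightness via Prohorov, and then transferring any residual mass at infinity to the origin; your ``if mass escapes then $R_{\bar M_n}(0)\to\infty$'' is the contrapositive of the same observation.
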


\begin{proof}
See Appendix \ref{app:ProofApplications}.
\end{proof}

By using the asymptotic representation Theorem \ref{thm:qRep} as well as Theorem \ref{thm:q_gauss_minimax}, we can prove the following

\begin{theorem}[Local asymptotic minimax theorem]\label{thm:minimax_local}
Let $\S^{(n)}=\{ \rho_{\theta}^{(n)} : \theta\in\Theta\subset\R^{d}\}$
be a sequence of quantum statistical models that is q-LAN and $D$-extendible at $\theta_{0}\in\Theta$. 
Then, for any sequence $M^{(n)}$ of estimators
and $d\times d$ weight matrix $G>0$,
\begin{align}\label{eq:minimax1}
 & \lim_{\delta\to0}\liminf_{n\to\infty}\sup_{\left\Vert h\right\Vert \leq\delta\sqrt{n}}\int_{\R^{d}}G_{ij}(x-h)^{i}(x-h)^{j}
 	\; \Tr \rho_{\theta_{0}+h/\sqrt{n}}^{(n)}M^{(n)}(dx)  \\
 & \geq \sup_{H}
 	\liminf_{n\to\infty}\sup_{h\in H}\int_{\R^{d}}G_{ij}(x-h)^{i}(x-h)^{j}
 	\; \Tr \rho_{\theta_{0}+h/\sqrt{n}}^{(n)}M^{(n)}(dx)
 	\nonumber \\ 
 & \geq\sup_{L>0}\sup_{H}\liminf_{n\to\infty}\sup_{h\in H}\int_{\R^{d}} L\wedge
 	\left\{G_{ij}(x-h)^{i}(x-h)^{j}\right\} 
 	\; \Tr \rho_{\theta_{0}+h/\sqrt{n}}^{(n)}M^{(n)}(dx) 
 	\nonumber \\ 
 & \geq c_G^{(rep)}. \nonumber 
\end{align}
Here, $a \wedge b:=\min\{a,b\}$, and $H$  runs over all finite subsets of $\R^d$. 
Moreover, the last inequality is tight.
\end{theorem}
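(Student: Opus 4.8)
The plan is to establish the chain of inequalities from top to bottom and then verify tightness. The first inequality is purely set-theoretic: for any $\delta>0$ and any finite $H\subset\R^d$, once $n$ is large enough we have $H\subset\{\|h\|\le\delta\sqrt n\}$, so the supremum over the ball dominates the supremum over $H$; taking $\liminf_n$ and then $\sup_H$ on the right, and finally $\lim_{\delta\to0}$ on the left (the left side being monotone nonincreasing in $\delta$), gives the bound. The second inequality is the elementary truncation estimate $a\ge a\wedge L$ applied pointwise inside the integral, valid for every $L>0$, after which one takes $\sup_{L>0}$. So the real content is the last inequality
\[
 \sup_{L>0}\sup_{H}\liminf_{n\to\infty}\sup_{h\in H}\int_{\R^{d}} L\wedge\bigl\{G_{ij}(x-h)^i(x-h)^j\bigr\}\;\Tr\rho_{\theta_0+h/\sqrt n}^{(n)}M^{(n)}(dx)\ \ge\ c_G^{(rep)}.
\]

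For this, first I would reduce to a situation where the representation theorem applies. Fix $L>0$ and a finite set $H$. The bounded Borel function $f_{L,h}(x):=L\wedge\{G_{ij}(x-h)^i(x-h)^j\}$ is continuous, so by a standard subsequence argument I may pass to a subsequence along which, simultaneously for each of the finitely many $h\in H$, the measures $\L_h^{(n)}:=\Tr\rho_{\theta_0+h/\sqrt n}^{(n)}M^{(n)}$ converge weakly to some $\L_h$ (tightness is automatic after composing with the bounded continuous truncation; more carefully, one first extracts a weakly convergent subsequence of the possibly-defective limits and then argues as in Step 3 of the proof of Theorem \ref{thm:qRep}, moving any escaping mass to a point, which only decreases the truncated risk in the limit). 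Along that subsequence, $\liminf_n \sup_{h\in H}\int f_{L,h}\,d\L_h^{(n)}$ equals $\sup_{h\in H}\int f_{L,h}\,d\L_h$ because $H$ is finite. Now Theorem \ref{thm:qRep} supplies a POVM $M^{(\infty)}$ on ${\rm CCR}({\rm Im}\,\Sigma)$ with $\phi_h(M^{(\infty)}(B))=\L_h(B)$ for all $h$, where $\phi_h\sim N(({\rm Re}\,\tau)h,\Sigma)$. Hence the truncated risk in the limit is exactly the truncated risk of the fixed POVM $M^{(\infty)}$ on the quantum Gaussian shift model:
\[
 \sup_{h\in H}\int_{\R^d} L\wedge\bigl\{G_{ij}(x-h)^i(x-h)^j\bigr\}\,\phi_h(M^{(\infty)}(dx)).
\]
Taking $\sup_H$ over all finite $H$ and then $\sup_{L>0}$ (using monotone convergence to remove the truncation) yields $\sup_{h\in\R^d}\int G_{ij}(x-h)^i(x-h)^j\,\phi_h(M^{(\infty)}(dx))$, which by Theorem \ref{thm:q_gauss_minimax} is bounded below by $c_G^{(H)}$ of the limiting Gaussian shift model. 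It remains to identify this $c_G^{(H)}$ with $c_G^{(rep)}$: this is exactly the definition \eqref{eq:rep_bound}, since the Holevo bound of $N(({\rm Re}\,\tau)h,\Sigma)$ with weight $G$ is $\min_K\{\Tr GZ+\Tr|\sqrt G\,{\rm Im}Z\sqrt G|: Z=K^\top\Sigma K,\ K^\top({\rm Re}\,\tau)=I\}=c_G^{(rep)}$.

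Finally, tightness of the last inequality follows from Theorem \ref{thm:achieve}: take the regular estimator $M_\star^{(n)}$ there, for which $(M_\star^{(n)},\rho_{\theta_0+h/\sqrt n}^{(n)})\convd h N(h,V_\star)$ with $\Tr GV_\star=c_G^{(rep)}$. For each $L>0$ and finite $H$, weak convergence gives $\limsup_n\sup_{h\in H}\int L\wedge\{G_{ij}(x-h)^i(x-h)^j\}\,\Tr\rho_{\theta_0+h/\sqrt n}^{(n)}M_\star^{(n)}(dx)=\sup_{h\in H}\int L\wedge\{G_{ij}y^iy^j\}\,N(0,V_\star)(dy)\le\Tr GV_\star=c_G^{(rep)}$; since regularity makes the limit distribution $h$-independent, even the top line (the $\delta\to0$ double limit) is $\le c_G^{(rep)}$, so all four quantities coincide for this estimator. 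The main obstacle I anticipate is the measure-theoretic bookkeeping in the subsequence/defect-of-mass step: one must ensure that after extracting a common weakly convergent subsequence over the finite set $H$ and relocating any mass escaping to infinity, the resulting limiting sub-probability measures are still of the form $\phi_h\circ M^{(\infty)}$ and that the relocation does not spuriously increase the truncated risk — this is handled exactly as in Step 3 of the proof of Theorem \ref{thm:qRep}, but it requires care.
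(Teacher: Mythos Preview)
Your handling of the first two inequalities and of tightness (via Theorem \ref{thm:achieve}) is fine and matches the paper. The gap is in the third inequality, at the sentence ``Taking $\sup_H$ over all finite $H$ and then $\sup_{L>0}$ \dots\ yields $\sup_{h\in\R^d}\int G_{ij}(x-h)^i(x-h)^j\,\phi_h(M^{(\infty)}(dx))$.'' Your POVM $M^{(\infty)}$ was produced from a subsequence that depends on the fixed pair $(L,H)$; it therefore changes as you vary $H$ and $L$. What your argument actually yields is
\[
\sup_{L>0}\sup_{H}\liminf_{n}\sup_{h\in H}\int f_{L,h}\,d\mu_h^{(n)}
\;\ge\;
\sup_{L>0}\sup_{H}\ \inf_{M}\ \sup_{h\in H}\int f_{L,h}\,\phi_h(M(dx)),
\]
and the right-hand side is a $\sup\inf$, not an $\inf\sup$. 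To conclude $\ge c_G^{(rep)}$ you would need a minimax theorem for the Gaussian shift model with \emph{finite} parameter set $H$ and \emph{truncated} loss, together with a limiting argument as $H\uparrow\R^d$ and $L\uparrow\infty$; Theorem \ref{thm:q_gauss_minimax} does not provide this, since it requires a single POVM and the supremum over all of $\R^d$ with untruncated loss.

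The paper circumvents this by coupling $H$ and $L$: it enumerates $\Q^d$, sets $H_k$ to be the first $k$ rationals, takes truncation level $k$, and defines $c_k^n=\sup_{h\in H_k}\int k\wedge\{\cdots\}\,d\mu_h^{(n)}$, $c=\lim_k\liminf_n c_k^n$. Along a diagonal subsequence $n_k$ with $c_k^{n_k}\to c$, the finiteness of $c$ forces uniform tightness of $\{\mu_h^{(n_k)}\}_k$ for \emph{every} $h\in\Q^d$ simultaneously (via the Chebyshev-type bound $c\ge L\cdot\mu_h^{(n_k)}(K_L^c)$). A further diagonal extraction then gives weak limits $\mu_h$ for all $h\in\Q^d$, and the representation theorem (with parameter domain $\Q^d$) produces a \emph{single} POVM $M^{(\infty)}$. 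One then lets $L\to\infty$ and passes from $\Q^d$ to $\R^d$ by continuity of $h\mapsto\phi_h(A)$, and only at that point invokes Theorem \ref{thm:q_gauss_minimax}. The essential idea you are missing is that the subsequence and the limiting POVM must be chosen \emph{once}, independent of $(L,H)$; this is exactly what the diagonal construction over $\Q^d$ accomplishes. Your anticipated ``measure-theoretic bookkeeping'' obstacle is real but secondary: in the paper it is handled not by relocating escaping mass, but by the tightness argument just described, which is only available because $c<\infty$ and because the truncation level grows with $k$.
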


\begin{proof}
See Appendix \ref{app:ProofApplications}.
\end{proof}

Note that the quantities appeared in the first and second lines of \eqref{eq:minimax1} 
correspond to the minimax theorems due to H\'ayak \cite{HajekMinimax} and in van der Vaart's book \cite{vaart}, respectively.

\subsection{Quantum James-Stein estimator}\label{sec:stein}

As the final topic of this section, we touch upon a superefficient estimator that uniformly breaks the asymptotic representation bound $c_G^{(rep)}$. 

Let us consider the i.i.d.~quantum statistical model $\S^{(n)}:=\{ \rho_{\theta}^{\otimes n}\}$ with the base model
\[
\rho_{\theta}=\frac{1}{2}\left(I+\theta^{1}\sigma_{1}+\theta^{2}\sigma_{2}+\theta^{3}\sigma_{3}\right),
\qquad \left( \theta=(\theta^1,\theta^2, \theta^3), \; \| \theta\|^2< 1 \right)
\]
on $\H=\C^{2}$.
We see from 
Appendix \ref{app:D-extension}
that $\S^{(n)}$ is q-LAN and $D$-extendible at every point $\theta$.
In fact, since the linear span of SLDs at each $\theta$ is $\D_{\rho_\theta}$-invariant, the set of SLDs itself gives a $D$-extension. 

Here we focus our attention on the local asymptotic estimation at around the origin $\theta=0$. 
The SLDs at $\theta=0$ are $\sigma_{i}$ ($i=1,2,3$), and the corresponding SLD Fisher information matrix $J^S$ is the identity matrix. 
Let
\[
\Delta_{i}^{(n)}:=\frac{1}{\sqrt{n}}\sum_{k=1}^{n}I^{\otimes(k-1)}\otimes\sigma_{i}\otimes I^{\otimes(n-k)}.
\]
Then the asymptotic representation Theorem \ref{thm:qRep} allows us to convert the problem of estimating the local parameter $h$ of $\rho_{h/\sqrt{n}}^{\otimes n}$ into that of estimating the shift parameter $h$ of the limiting (classical) Gaussian shift model
\begin{equation}\label{eq:stein_lim_model}
\left\{ N(h,I):h\in\R^{3}\right\}.
\end{equation}
Specifically, for any regular POVM $M^{(n)}$ that satisfies
\[
\left(M^{(n)},\rho_{h/\sqrt{n}}^{\otimes n}\right)\conv{} \exists \L_{h},
\]
we see from Theorem \ref{thm:regular} and \eqref{eq:rep_bound} that 
\[
\int_{\R^{d}}\left\Vert x-h\right\Vert ^{2}\L_{h}(dx)\geq c_{I}^{(rep)}=\Tr I=3,
\]
where we have taken the weight $G$ to be the SLD Fisher information matrix $J^S=I$. 

Now we demonstrate that, if one discards the requirement of regularity, one can construct an estimator that breaks the above inequality for all $h$. 
An estimator on the classical Gaussian shift model \eqref{eq:stein_lim_model} that changes observed data $x\in\R^3$ into 
\begin{equation}\label{eqn:Stein}
 y=\left(1-\frac{1}{\left\Vert x\right\Vert }\right)x
\end{equation}
is called the James-Stein estimator \cite{JudgeBock}. 
Letting $\L_{h}^{(JS)}$ be the corresponding probability distribution of $y$, it is well known that
\[
 \int_{\R^{d}}\left\Vert y-h\right\Vert ^{2}\L_{h}^{(JS)}(dy)<3
\]
for all $h\in\R^3$. 
Now we see from Theorem \ref{thm:achieve} that there is a regular POVM $N^{(n)}$ that exhibits
\[
 \left(N^{(n)},\rho_{h/\sqrt{n}}^{\otimes n}\right)\conv{}N(h,I). 
\]
For each $n$, let $N^{(JS,\,n)}$ be a POVM that changes the outcome $x\in\R^3$ of $N^{(n)}$ into $y\in\R^3$ as \eqref{eqn:Stein}.
Then 
\[
 \left(N^{(JS,\,n)},\rho_{h/\sqrt{n}}^{\otimes n}\right)\conv{}\L_{h}^{(JS)}, 
\]
and thus $N^{(JS,\,n)}$ asymptotically breaks the asymptotic representation bound $c_{I}^{(rep)}$ for all $h\in\R^{3}$.

\section{Conclusions} \label{sec:conclusions}

In this paper, we derived a noncommutative analogue of asymptotic representation theorem for a $D$-extendible q-LAN model (Theorem \ref{thm:qRep}). 
This theorem converts an estimation problem for a local model $\{\rho^{(n)}_{\theta_0+h/\sqrt{n}} : h \in \R^d \}$ into another for the limiting quantum Gaussian shift model $\{ N( ({\rm Re}\, \tau) h, \Sigma) : h \in \R^d \}$. 
As a corollary, we arrived at a new bound $c_{G}^{(rep)}$ defined by the Holevo bound for the limiting model.
This bound turned out to have universal  importance in asymptotic quantum statistics. 
For example, it gave the ultimate limit of estimation precision for regular estimators (Theorems \ref{thm:regular} and \ref{thm:achieve}) and minimax estimators (Theorem \ref{thm:minimax_local}). 
Note that, since the bound $c_{G}^{(rep)}$ for an i.i.d.~model 
is reduced to the standard Holevo bound for the base model, 
the achievability theorem (Theorem \ref{thm:achieve}) gives a substantial refinement of the former result \cite[Theorem 3.1]{qlan_first} in which the Holevo bound was achieved only on a countable dense subset of the parameter space.
 
The key ingredient of Theorem \ref{thm:qRep} was the notion of $D$-extendibility. 
Its importance is first realized in the present paper; however, its trace can be found elsewhere. 
Gu\c{t}\u{a} and Kahn \cite{{guta_qubit}, {guta_qudit}} and Lahiry and Nussbaum \cite{low_rank} treated  i.i.d.~extensions of a quantum statistical model that has $\D_{\rho_\theta}$-invariant SLD-tangent space at every $\theta$ from the outset, and thus they did not need to care about the $D$-extendibility (Remark \ref{rem:DinvExt}). 
In their framework, the difficulty demonstrated in Example \ref{eg:noLimitPOVM} is automatically avoided by regarding the model as a submodel of its ambient full pure state model.
Yamagata {\em et al.} \cite{qlan_first} introduced the notion of joint q-LAN for $(X^{(n)}, \Delta^{(n)})$.
In view of the present paper, this was a forerunner of the $D$-extension $X^{(n)}$ of SLDs $\Delta^{(n)}$, whereby the achievability of the Holevo bound was proved. 
The notion of $D$-extendibility made it possible to generalize the Holevo bound to non-i.i.d.~models, providing a  proper perspective on the achievability of the asymptotic representation bound $c_{G}^{(rep)}$. 

We believe that the paper has established a solid foundation of the theory of (weak) quantum local asymptotic normality.
Nevertheless, its application has just begun, and many open problems are left for future study.
For example, it is not clear whether the $D$-extendibility condition can be replaced with a weaker one. 
One would convince oneself that there are quantum statistical models that are not i.i.d.~but are, nevertheless, q-LAN and $D$-extendible. 
Imagine a convergent sequence $\sigma_\theta^{(n)}\to\sigma_\theta^{(\infty)}$ of quantum statistical models on a fixed finite dimensional Hilbert space $\H$.
Then the tensor product models $\rho_\theta^{(n)}:=\bigotimes_{k=1}^n \sigma_\theta^{(k)}$ would  be q-LAN and $D$-extendible because they are `almost' i.i.d.~in the asymptotic limit. 
In fact, it is not difficult to realize this idea with some additional conditions
(cf., Appendix \ref{app:D-extension}).
In this way, the $D$-extendibility condition enables us to study quantum asymptotics beyond the i.i.d.~assumption. 
In view of applications, however, it would be nice if there were a more tractable weaker condition that establishes an asymptotic representation theorem.

It also remains to be investigated whether an asymptotically optimal statistical procedure for the local model indexed by the parameter $\theta_0+h/\sqrt{n}$ can be translated into useful statistical procedures for the real world case in which $\theta_0$ is unknown. 
Some authors \cite{GillMassar}, advocated two-step estimation procedures, in which one first measures a small portion of the quantum system, in number $n_1$ say, using some standard measurement scheme and constructs an initial estimate, say $\tilde\theta_1$, of the parameter.
One next applies the theory of q-LAN to compute the asymptotically optimal measurement scheme which corresponds to the situation $\theta_0=\tilde \theta_1$, 
and then proceeds to implement this measurement on the remaining $n_2\, (:=n-n_1)$ quantum systems collectively, estimating $h$ in the model $\theta=\tilde \theta_1+h/\sqrt{n_2}$. 
However such procedures are inherently limited to within the scope of weak consistency. 
Studying the strong consistency and asymptotic efficiency \cite{Fujiwara:2006} 
in the framework of collective quantum estimation scheme is an important open problem.

\section*{Acknowledgments}


The present study was supported by JSPS KAKENHI Grant Numbers JP17H02861, JP22K03466, 
and MEXT Quantum Leap Flagship Program (MEXT Q-LEAP) Grant Number JPMXS0120351339.

\appendix

\section*{Appendix}

\section{{ Asymptotic representation theorem for classical LAN}}\label{suppl:clan}

This section gives a comprehensible proof of the asymptotic representation theorem for classical LAN models (Theorem \ref{thm:crep}). This also provides an alternative view for the `randomized' statistics appeared in the theorem. 

In constructing a statistic $T$ that enjoys $T^{(n)}\convd{h} T$ for all $h$, 
van der Vaart \cite{vaart} emphasized that one must invoke external information.
This prescription reminds us of a quantum POVM in which one makes use of an ancillary system in realizing it. 
In what follows, therefore, 
we identify the randomized statistic $T$ with a $\sigma$-finite measure on $\R^s\times\R^d$ that gives the desired limit distribution $\mathcal{L}_h$ for every $h\in\R^d$.

\begin{proof} 
For each $t\in \R^s$, let
\begin{equation}
 M^{(n)}(t, \omega):=
\ind _{{T^{(n)}}^{-1}( (-\infty,t])} (\omega),\qquad(n\in\N,\;\omega\in\Omega^{(n)}).
\end{equation}
Referring to the diagram
\begin{eqnarray*}
 \array{ccccc}
     & & \Omega^{(n)} & \displaystyle \mathop{\longrightarrow}^{\Delta^{(n)}}& \R^d  \\
     \mbox{[0,\,1]} & \displaystyle\mathop{\longleftarrow}^{P^{(n)}_{\theta_0}} & \mathcal{F}^{(n)}  & \displaystyle \mathop{\longleftarrow}^{{\Delta^{(n)}}^{-1}} &\mathcal{B}(\R^d)
 \endarray,
\end{eqnarray*}
we define, for each $t\in\R^s$, a finite Borel measure $\mu^{(n)}_t$ on $\R^d$ as follows:
\begin{equation}\label{muNT}
  \mu^{(n)}_t(B):=\int_{{\Delta^{(n)}}^{-1}(B)} M^{(n)}(t,\omega) dP^{(n)}_{\theta_0}(\omega),\qquad (B\in \mathcal{B}(\R^d) ).
\end{equation}
Note that the set $\{\mu^{(n)}_t\}_n$ is tight. 
In fact, since $\Delta^{(n)}\convd{0} N(0,J)$, the sequence $\Delta^{(n)}$ is tight under $P^{(n)}_{\theta_0}$, that is, for any $\varepsilon>0$, there exists a $K>0$ such that for all $n$,
\begin{equation*}
 P^{(n)}_{\theta_0}( \Delta^{(n)} \notin [-K,K]^d) < \varepsilon.
\end{equation*}
Consequently,
\begin{eqnarray*}
 \mu^{(n)}_t(\R^d\backslash [-K,K]^d) 
 &=&\int_{\Delta^{(n)} \notin [-K,K]^d} M^{(n)}(t,\omega) dP^{(n)}_{\theta_0}(\omega) \nonumber \\
 &\le& \int_{\Delta^{(n)} \notin [-K,K]^d} dP^{(n)}_{\theta_0}(\omega) \nonumber \\
 &=&P^{(n)}_{\theta_0}(\Delta^{(n)} \notin [-K,K]^d) < \varepsilon, \nonumber
\end{eqnarray*}
proving the tightness of $\{\mu^{(n)}_t\}_n$. 
It then follows from the Prohorov theorem that there is a subsequence $\{\mu^{(n_k)}_t\}_k$ that is weakly convergent for all $t\in\Q^s$, i.e.,
\begin{equation}\label{mu}
  \mu^{(n_k)}_t \convd{0} \,{}^\exists \mu_t,\qquad (\forall t\in\Q^s).
\end{equation}
Observe that for any continuity point $x\in\R^d$ of $\mu_t$, 
\begin{eqnarray*} 
 \mu_t((-\infty,x])
 &=&\lim_{k\to\infty} \int_{\Delta^{(n_k)}\le x} M^{(n_k)}(t,\omega) dP^{(n_k)}_{\theta_0}(\omega) \\
 &=&\lim_{k\to\infty} P^{(n_k)}_{\theta_0}(\{T^{(n_k)}\le t\}\cap\{\Delta^{(n_k)}\le x\}). \nonumber
\end{eqnarray*}

Let us extend $\mu_t$ to all $t\in\R^s$ so that $\mu_t ((-\infty,x])$ is right-continuous in $t$ for each $x\in\R^d$, and denote the extension by $\overline\mu_t$, that is,
\begin{equation}\label{muBar}
  \overline\mu_t((-\infty,x]):=\inf\{\mu_\alpha((-\infty,x])\,|\,\alpha\in\Q^s,\,  \alpha > t\}.
\end{equation}
Specifically, since $T^{(n)}\convd{0} \mathcal{L}_0$, the total mass of $\overline\mu_t$
for a continuity point $t$ of $\mathcal{L}_0$ is given by
\begin{eqnarray*}
 \overline\mu_t(\R^d)=\mu_t(\R^d) 
 =\lim_{k\to\infty} P_{\theta_0}^{(n_k)}(T^{(n_k)}\le t) 
 =\mathcal{L}_0((-\infty,t]).
\end{eqnarray*}
Further, since $\Delta^{(n)}\convd{0} N(0,J)$, we have from the joint tightness of $(\Delta^{(n)}, T^{(n)})$ that
\begin{eqnarray*}
\overline\mu_\infty(B)
 :=\lim_{t\to\infty}\mu_t(B)
 =\lim_{k\to\infty} P^{(n_k)}_{\theta_0}(\Delta^{(n_k)}\in B) 
 =\int_B g_0(x) dx. 
\end{eqnarray*}
Here, $g_h(x)$ denotes the density of $N(Jh,J)$ with respect to the Lebesgue measure $dx$. 
Put differently, $\overline\mu_\infty\sim N(0,J)$. 

Since $\overline\mu_t(B)\le \overline\mu_\infty(B)$ for all $t\in\R^s$, we find that $\overline\mu_t$ is absolutely continuous to $\overline\mu_\infty$, and hence to the Lebesgue measure. 
This guarantees the existence of the density
\begin{equation}\label{MT}
 M_t(x):=\frac{d \overline\mu_t}{d \overline\mu_\infty}(x)=\frac{1}{g_0(x)} \frac{d \overline\mu_t}{d x}(x).
\end{equation}
Note that $0\le M_t(x)\le 1$ and $M_t(x)\,\uparrow\, M_\infty(x)=1$ for each $x\in\R^d$.

We prove that this is the one that gives the desired limit distribution, in that 
\begin{eqnarray}  \label{conclusion}
\mathcal{L}_h((-\infty,t])
 =\int_{\R^d} g_h(x)\, M_t(x)\, dx 
\end{eqnarray}
for any $h\in\R^d$ and any continuity point $t\in\R^s$ of $\mathcal{L}_h$.

Because of \eqref{eq:lan},  we have
\begin{equation} \label{contiguous}
 P_{\theta_0+h/\sqrt{n}}^{(n)} \vartriangleleft \vartriangleright P_{\theta_0}^{(n)},
\end{equation}
which, in particular, entails that 
$\displaystyle \frac{dP_{\theta_0+h/\sqrt{n}}^{(n)}}{dP_{\theta_0}^{(n)}} $ is uniformly integrable under $P_{\theta_0}^{(n)}$, and hence under $\mu_t^{(n)}$ for any $t\in\R^s$.
Consequently, for any continuity point $t\in\R^s$ of $\mathcal{L}_h$, we have
\begin{eqnarray} \label{mainConv0}
\mathcal{L}_h((-\infty,t])
&=&
 \lim_{k\to\infty} \int_{\Omega^{(n_k)} } M^{(n_k)}(t,\omega)\, dP_{\theta_0+h/\sqrt{n_k}}^{(n_k)}(\omega) \\
 &=&\lim_{k\to\infty} \int_{\Omega^{(n_k)} } M^{(n_k)}(t,\omega)\, 
     \frac{dP_{\theta_0+h/\sqrt{n_k}}^{(n_k)}}{dP_{\theta_0}^{(n_k)}}\,{dP_{\theta_0}^{(n_k)}}(\omega) \nonumber  \\
 &=&
 \lim_{k\to\infty} \int_{\R^d }
     e^{h^i \Delta_i^{(n_k)}-\frac{1}{2}h^i h^j J_{ij}}\,{d\mu_t^{(n_k)}}(\Delta^{(n_k)}). \nonumber
\end{eqnarray}
Here, the first equality follows from the assumption that $T^{(n)}\convd{h} \;{}^\exists\mathcal{L}_h$, the second from \eqref{contiguous} and Lemma \ref{lem:contiguity} below, and the third from \eqref{eq:lan} and \eqref{muNT}.
Now we prove that
\begin{equation}\label{mainConv}
 \lim_{k\to\infty} \int_{\R^d } e^{h^i \Delta_i^{(n_k)}-\frac{1}{2}h^i h^j J_{ij}}\,{d\mu_t^{(n_k)}}(\Delta^{(n_k)}) 
=\int_{\R^d} \frac{g_h(x)}{g_0(x)}\, d \overline\mu_t(x)
\end{equation}
for any continuity point $t\in\R^s$ of $\mathcal{L}_h$.
Given $\varepsilon >0$ arbitrarily, take another continuity point $t'\in\R^s$ of $\mathcal{L}_h$ and a rational point $\alpha\in\Q^s$ 
$(t<\alpha<t')$ such that 
\begin{equation}\label{L}
 0\le \mathcal{L}_h((-\infty,t']) - \mathcal{L}_h((-\infty,t]) <\varepsilon
\end{equation}
and
\begin{equation}\label{alphaT}
  0\le \int_{\R^d} \frac{g_h(x)}{g_0(x)}\, d\mu_\alpha(x) -  \int_{\R^d} \frac{g_h(x)}{g_0(x)}\, d \overline\mu_t(x) <\varepsilon.
\end{equation}
The existence of such $t'$ is assured by the assumption that $t$ is a continuity point of $\mathcal{L}_h$,  and the existence of such $\alpha\in\Q^s$ by \eqref{muBar} and the monotone convergence theorem. 
Then 
\begin{align} \label{mainConv1}
&
 \left| \int_{\R^d }e^{h^i \Delta_i^{(n_k)}-\frac{1}{2}h^i h^j J_{ij}}\,{d\mu_t^{(n_k)}}(\Delta^{(n_k)})
 - \int_{\R^d} \frac{g_h(x)}{g_0(x)}\, d \overline\mu_t(x) \right| \\
&\qquad
 \le \left| \int_{\R^d }e^{h^i \Delta_i^{(n_k)}-\frac{1}{2}h^i h^j J_{ij}}\,{d\mu_t^{(n_k)}}(\Delta^{(n_k)})
 -\int_{\R^d }e^{h^i \Delta_i^{(n_k)}-\frac{1}{2}h^i h^j J_{ij}}\,{d\mu_\alpha^{(n_k)}}(\Delta^{(n_k)}) \right| 
 \nonumber \\ 
&\qquad\quad
 + \left| \int_{\R^d }e^{h^i \Delta_i^{(n_k)}-\frac{1}{2}h^i h^j J_{ij}}\,{d\mu_\alpha^{(n_k)}}(\Delta^{(n_k)})
 - \int_{\R^d} \frac{g_h(x)}{g_0(x)}\, d\mu_\alpha(x) \right| \nonumber \\
&\qquad\quad
 +\left| \int_{\R^d} \frac{g_h(x)}{g_0(x)}\, d\mu_\alpha(x) 
 - \int_{\R^d} \frac{g_h(x)}{g_0(x)}\, d \overline\mu_t(x)\right|.\nonumber
\end{align}
Firstly, due to \eqref{mainConv0} and \eqref{L}, for sufficiently large $k$, the second line of \eqref{mainConv1} is evaluated from above by
\begin{align*}
 &\left| \int_{\R^d }e^{h^i \Delta_i^{(n_k)}-\frac{1}{2}h^i h^j J_{ij}}\,{d\mu_t^{(n_k)}}(\Delta^{(n_k)})
 -\int_{\R^d }e^{h^i \Delta_i^{(n_k)}-\frac{1}{2}h^i h^j J_{ij}}\,{d\mu_{t'}^{(n_k)}}(\Delta^{(n_k)}) \right|  \\
 &\qquad \le
 \left| \int_{\R^d }e^{h^i \Delta_i^{(n_k)}-\frac{1}{2}h^i h^j J_{ij}}\,{d\mu_t^{(n_k)}}(\Delta^{(n_k)}) - \mathcal{L}_h((-\infty, t]) \right|  \\
&\qquad\qquad
 + \left| \mathcal{L}_h((-\infty, t]) - \mathcal{L}_h((-\infty, t']) \right| \\
&\qquad\qquad
 +  \left|  \mathcal{L}_h((-\infty, t']) 
 	-\int_{\R^d }e^{h^i \Delta_i^{(n_k)}-\frac{1}{2}h^i h^j J_{ij}}\,{d\mu_{t'}^{(n_k)}}(\Delta^{(n_k)}) \right| \\
&\qquad< 3\varepsilon.
\end{align*}
Secondly, due to \eqref{mu} and the Lemma \ref{lem:pseudoL1conv} below, for sufficiently large $k$, the third line of \eqref{mainConv1} gets smaller than $\varepsilon$.
Finally, due to \eqref{alphaT}, the last line of \eqref{mainConv1} is bounded from above by $\varepsilon$.
Putting these evaluations together, we find that 
\begin{equation*}
\left| \int_{\R^d }e^{h^i \Delta_i^{(n_k)}-\frac{1}{2}h^i h^j J_{ij}}\,{d\mu_t^{(n_k)}}(\Delta^{(n_k)})
 - \int_{\R^d} \frac{g_h(x)}{g_0(x)}\, d \overline\mu_t(x) \right|
 < 5\varepsilon,
\end{equation*}
proving \eqref{mainConv}. 

Now that \eqref{mainConv0} and  \eqref{mainConv} are established, the desired identity \eqref{conclusion} follows immediately from \eqref{MT} and the assumption that $T^{(n)}\convd{h} \;{}^\exists\mathcal{L}_h$.
\end{proof}

\begin{lemma}\label{lem:contiguity}
Let the probability measures $P_n$ and $Q_n$ on $\Omega_n$ satisfy $Q_n\vartriangleleft P_n$. 
Then, for any measurable subset $F_n$ of $\Omega_n$, 
\begin{equation}\label{contiguity}
 \lim_{n\to\infty} E_{Q_n}[\ind_{F_n}] = \lim_{n\to\infty} E_{P_n}\left[\ind_{F_n} \frac{dQ_n}{dP_n} \right],
\end{equation}
provided either of the limits exists.
\end{lemma}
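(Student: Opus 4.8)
The plan is to reduce the claimed identity to the single fact that the part of $Q_n$ which is singular with respect to $P_n$ carries asymptotically vanishing mass, which is exactly what contiguity delivers. First I would make the symbol $dQ_n/dP_n$ precise by taking the Lebesgue decomposition $Q_n = Q_n^{\mathrm{ac}} + Q_n^{\perp}$ of $Q_n$ relative to $P_n$, so that $Q_n^{\mathrm{ac}} \ll P_n$ with density $W_n := dQ_n^{\mathrm{ac}}/dP_n = dQ_n/dP_n \ge 0$ (hence $E_{P_n}[W_n] \le 1$, so $E_{P_n}[\ind_{F_n} W_n]$ is well defined), while $Q_n^{\perp}$ is concentrated on some $P_n$-null set $N_n$. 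For an arbitrary measurable $F_n \subset \Omega_n$ one has $E_{P_n}[\ind_{F_n} W_n] = Q_n^{\mathrm{ac}}(F_n)$, whence
\[
  \bigl| E_{Q_n}[\ind_{F_n}] - E_{P_n}[\ind_{F_n} W_n] \bigr|
  = Q_n(F_n) - Q_n^{\mathrm{ac}}(F_n)
  = Q_n^{\perp}(F_n) \;\le\; Q_n^{\perp}(\Omega_n) = Q_n^{\perp}(N_n) \;\le\; Q_n(N_n),
\]
a bound uniform over the choice of $F_n$.

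The second and only substantive step is to invoke contiguity. By the definition of $Q_n \vartriangleleft P_n$ (equivalently, by Le Cam's first lemma; cf.\ \cite{vaart}), every sequence of measurable sets $A_n \subset \Omega_n$ with $P_n(A_n) \to 0$ satisfies $Q_n(A_n) \to 0$. Taking $A_n = N_n$ and using $P_n(N_n) = 0$ yields $Q_n(N_n) \to 0$. Combined with the display above, this gives
\[
  \sup_{F_n}\, \bigl| E_{Q_n}[\ind_{F_n}] - E_{P_n}[\ind_{F_n} W_n] \bigr| \longrightarrow 0 \qquad (n\to\infty),
\]
the supremum running over all measurable $F_n \subset \Omega_n$. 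Hence the two sequences $\bigl(E_{Q_n}[\ind_{F_n}]\bigr)_n$ and $\bigl(E_{P_n}[\ind_{F_n} W_n]\bigr)_n$ differ by a null sequence, so if either converges then so does the other, to the same limit; this is the assertion.

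I do not expect a genuine obstacle here: once the right tool is identified the argument is a two-line estimate. The only point that requires care is that $Q_n$ is not assumed to be dominated by $P_n$, so $dQ_n/dP_n$ must be read as the density of the absolutely continuous component; the entire content then resides in the elementary identity $Q_n^{\perp}(\Omega_n) = Q_n(N_n)$ together with the characterisation of contiguity as the preservation of asymptotically negligible events.
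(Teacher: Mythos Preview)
Your proof is correct and follows essentially the same route as the paper: both take the Lebesgue decomposition $Q_n = Q_n^{\mathrm{ac}} + Q_n^{\perp}$, observe that the difference between the two expectations is bounded by $Q_n$ of a $P_n$-null set (the set carrying the singular part), and use contiguity to kill that term. Your version is in fact slightly cleaner in notation (working directly with the null set $N_n$ rather than the complement of a support), and your observation that the bound is uniform over the choice of $F_n$ is a harmless extra, but the core argument is identical.
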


\begin{proof}
Let $Q_n=Q_n^{ac}+Q_n^\perp$ be the Lebesgue decomposition with respect to $P_n$, and let $A_n:={\rm supp}\, Q_n^{ac}$. Since $P_n(A_n^c)=0$ for all $n$, we have from $Q_n\vartriangleleft P_n$ that $Q_n(A_n^c)\to 0$. Therefore,
\begin{eqnarray}
 \int_{\Omega_n} \ind_{F_n}(\omega) dQ_n
 &=&\int_{A_n} \ind_{F_n}(\omega) dQ_n+\int_{A_n^c} \ind_{F_n}(\omega) dQ_n \nonumber \\
 &=&\int_{\Omega_n} \ind_{F_n}(\omega) \frac{dQ_n}{dP_n}\,dP_n+Q_n(A_n^c\cap F_n), \nonumber
\end{eqnarray}
from which \eqref{contiguity} immediately follows.
\end{proof}

\begin{lemma}\label{lem:pseudoL1conv}
Let $X_n\in L^1(P_n)$ for all $n$, and let $X\in L^1(P)$. 
Suppose that $\{X_n\}_n$ is uniformly integrable and $X_n\convd{} X$. 
Then
\begin{equation*}
 \lim_{n\to\infty} E_{P_n}[X_n]=E_P[X].
\end{equation*}
\end{lemma}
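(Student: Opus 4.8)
The plan is the classical Vitali-type argument: reduce everything to the laws of the random variables, approximate the (unbounded) identity by a bounded continuous truncation, pass the truncation to the limit by weak convergence, and control the truncation error uniformly in $n$ via uniform integrability. Note first that $E_{P_n}[X_n]$, $E_P[X]$, and the hypothesis $X_n\convd{} X$ depend only on the distributions of $X_n$ and $X$ on $\R$; hence the fact that the $X_n$ are defined on different probability spaces $(\Omega_n,P_n)$ is irrelevant, and we may argue entirely on $\R$ with the pushforward measures.

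Concretely, I would proceed in three steps. First, for $M>0$ introduce the clamp function $\psi_M(x):=(x\wedge M)\vee(-M)$, which is bounded and $1$-Lipschitz, hence bounded continuous; therefore $X_n\convd{} X$ yields $\lim_{n\to\infty}E_{P_n}[\psi_M(X_n)]=E_P[\psi_M(X)]$ for each fixed $M$. Second, estimate the truncation error: since $|x-\psi_M(x)|=(|x|-M)^+\le |x|\,\ind_{\{|x|>M\}}$, one gets $|E_{P_n}[X_n]-E_{P_n}[\psi_M(X_n)]|\le E_{P_n}[\,|X_n|\,\ind_{\{|X_n|>M\}}\,]$, and by uniform integrability of $\{X_n\}_n$ the supremum over $n$ of the right-hand side tends to $0$ as $M\to\infty$; similarly $|E_P[X]-E_P[\psi_M(X)]|\le E_P[\,|X|\,\ind_{\{|X|>M\}}\,]\to 0$ as $M\to\infty$ because $X\in L^1(P)$. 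Third, combine: given $\varepsilon>0$, fix $M$ so large that $\sup_n E_{P_n}[\,|X_n|\,\ind_{\{|X_n|>M\}}\,]<\varepsilon$ and $E_P[\,|X|\,\ind_{\{|X|>M\}}\,]<\varepsilon$, then pick $N$ with $|E_{P_n}[\psi_M(X_n)]-E_P[\psi_M(X)]|<\varepsilon$ for $n\ge N$; a triangle inequality gives $|E_{P_n}[X_n]-E_P[X]|<3\varepsilon$ for all $n\ge N$, which proves the claim.

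I do not anticipate a genuine obstacle here; this is a textbook fact. The only point demanding a moment's care is the \emph{uniformity in $n$} of the tail bound in the second step, which is precisely what the uniform integrability hypothesis supplies. An alternative route, if one prefers to avoid the truncation bookkeeping, is to invoke the Skorokhod representation theorem to realize all the $X_n$ and $X$ on a single probability space with almost sure convergence and then apply the Vitali convergence theorem directly; but the truncation argument above is self-contained and avoids that machinery.
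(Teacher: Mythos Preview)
Your argument is correct and matches the paper's proof essentially line for line: both introduce the same clamp function (the paper calls it $f_K$), bound the truncation error by $E[\,|X_n|\ind_{\{|X_n|>K\}}\,]$ via uniform integrability, pass the bounded truncation through weak convergence, and finish with the three-term triangle inequality to get $3\varepsilon$.
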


\begin{proof}
For $K\in [0,\infty)$, define a function $f_K:\R\to [-K,K]$ as follows:
\begin{equation*}
 f_K(x):=\left\{\array{ll} K,& (x>K) \\ x, & (-K\le x\le K) \\ -K, & (x<-K) \endarray\right.
\end{equation*}
Given $\varepsilon>0$, we can choose $K$ so that,  
by the uniform integrability, 
\begin{equation*}
 E_{P_n}[ \left| X_n- f_K(X_n) \right|] \le E_{P_n}[|X_n| \,;\, |X_n|>K]< \frac{\varepsilon}{3},\quad(\forall n)
\end{equation*}
and  
\begin{equation*}
 E_{P}[ \left| X- f_K(X) \right|] 
 \le E_P[|X| \,;\, |X|>K] < \frac{\varepsilon}{3}.
\end{equation*}
Further, since $X_n\convd{} X$, we can choose $n_0\in\N$ such that, for all $n\ge n_0$, 
\begin{equation*}
 \left| E_{P_n}[ f_K(X_n) ]- E_P[f_K(X)] \right| < \frac{\varepsilon}{3}.
\end{equation*}
The triangular inequality therefore implies that, for $n\ge n_0$, 
\begin{equation*}
 \left| E_{P_n}[ X_n ] - E_P[ X ] \right| < \varepsilon,
\end{equation*}
and the proof is complete.
\end{proof}

\section{Gaussian states on degenerate CCR algebras}\label{app:degenerate_CCR}

This section gives a brief account of degenerate canonical commutation relations (CCR) and hybrid classical/quantum Gaussian states. 

Let $V$ be a real symplectic space with nonsingular symplectic form $\Delta$. 
The unital $\ast$-algebra generated by elements of $V$ satisfying
\[ fg-gf = \sqrt{-1} \,\Delta(f,g),\quad f^*=f,\qquad (\forall f, g\in V) \]
is called the {\em canonical commutation relations (CCR) algebra}. 
There is a distinct, but closely related notion of the CCR. 
Let $\H$ be a separable Hilbert space and let $W: V\to B(\H)$ satisfy the relations
\[
 W(f) W(g)=e^{-\frac{\sqrt{-1}}{2}\,\Delta(f,g)} W(f+g),\quad W(f)^*=W(-f), \qquad (f,g\in V).
\]
These are called the {\em Weyl form} of the CCR. 
Specifically, the above relations imply that $W(f)$ is unitary and $W(0)=1$. 

One would like to represent the CCR by using selfadjoint operators on $\H$.
We first treat the case when $V$ is a two-dimensional symplectic space with symplectic basis $\{e_1, f_1\}$ satisfying $\Delta(e_1, f_1)=1$. 
Then the above relation reduces to 
\[
 W(t e_1) W(s f_1)=e^{-\frac{\sqrt{-1}}{2}\,st } \, W(t e_1+s f_1)=e^{-\sqrt{-1}\,st } \, W(s f_1) W(t e_1). 
\]
Let us regard $U(t):=W(t e_1)$ and $V(s):=W(s f_1)$ as one-parameter unitary groups acting on $\H$. 
By Stone's theorem, there is a one-to-one correspondence between selfadjoint operators and (strongly continuous) one-parameter unitary groups. 
Thus one defines a pair of selfadjoint operators $Q$ and $P$ by 
\[ U(t):=e^{\sqrt{-1}tQ}, \quad V(s)=e^{\sqrt{-1}s P}, \]
which fulfills the Weyl form of the CCR
\[ e^{\sqrt{-1}tQ} e^{\sqrt{-1}s P} =e^{-\sqrt{-1}st}  e^{\sqrt{-1}s P} e^{\sqrt{-1}tQ}. \]
Formally differentiating this identity with respect to $t$ and $s$ at $s=t=0$, one has the Heisenberg form of the CCR
\[  QP-PQ=\sqrt{-1}\,I. \] 
The operators $Q$ and $P$ are called the {\em canonical observables} of the CCR. 

There are variety of choices of Hilbert spaces $\H$ and irreducible representations of canonical observables on $\H$. 
However, according to the Stone-von Neumann theorem, they are unitarily equivalent \cite{holevo}.
Thus one may use any one of them. 
In this paper, we canonically use the Schr\"odinger representation on the Hilbert space $\H=L^2(\R)$. 
Note that the von Neumann algebra generated by $\{e^{\sqrt{-1} t Q}\}$ is $L^\infty(\R)$, and the von Neumann algebra generated by $\{e^{\sqrt{-1} (t Q+ s P)}\}$ is $B(\H)$. 

Extending the above formulation to a generic even-dimensional symplectic space $V$ is standard. 
This also allows us to use a more flexible formulation as follows. 
Given a regular $(2k)\times (2k)$ real skew-symmetric matrix $S=(S_{ij})$, let {\rm CCR}($S$) denote the 
von Neumann algebra generated by 
$\{e^{\sqrt{-1}\,t_1X_1},\dots,e^{\sqrt{-1}\,t_{2k}X_{2k}}\}$ that satisfy the CCR
\[  e^{\sqrt{-1}\, t_i X_i} e^{\sqrt{-1}\, t_j X_j} =e^{\sqrt{-1} \, t_i t_j S_{ij}}  e^{\sqrt{-1} (t_i X_i+ t_j X_j)}, \]
and call $X=(X_1,\dots, X_{2k})$ the canonical observables of the ${\rm CCR}(S)$. 
This is done by just finding a regular matrix $T$ satisfying 
\[
 T^\top S T
 =\frac{1}{2}
  \left[\array{ccccccc}
  0 & -1 &&&&& \\
  1 & 0 &&&&& \\
  && 0 & -1 &&& \\
  && 1 & 0 &&& \\
  &&&&\ddots &&\\
  &&&&& 0 & -1\\
  &&&&& 1 & 0
  \endarray\right]
\]
to obtain a suitable symplectic basis $\{e_i, f_i\}_{1\le i\le k}$ which generates $\{Q_i, P_i\}_{1\le i\le k}$ such that each $X_i$ belongs to an $\R$-linear span of $\{Q_i, P_i\}_{1\le i\le k}$. 

Now we formally extend this formulation to a generic $d\times d$ real skew-symmetric matrix $S=(S_{ij})$ as follows. 
We first find a regular matrix $T$ satisfying 
\[
  T^\top S T
 =\frac{1}{2}
  \left[\array{ccc|ccccccc}
  0 && &&&&&& \\
  & \ddots & &&&&&& \\
  && 0 &&&&&&\\
  \hline
  &&& 0 & -1 &&&&& \\
  &&& 1 & 0 &&&&& \\
  &&& && 0 & -1 &&& \\
  &&& && 1 & 0 &&& \\
  &&& &&&&\ddots &&\\
  &&& &&&&& 0 & -1\\
  &&& &&&&& 1 & 0
  \endarray\right],
\]
to obtain a basis $\{\tilde e_1,\dots, \tilde e_r\}\sqcup \{e_i, f_i\}_{1\le i\le k}$, where $r+2k=d$. 
We then extend $\{\tilde e_1,\dots, \tilde e_r\}$ to $\{\tilde e_i, \tilde f_i\}_{1\le i\le r}$ to form a symplectic basis $\{\tilde e_i, \tilde f_i\}_{1\le i\le r} \sqcup \{e_i, f_i\}_{1\le i\le k}$ of a $2(r+k)$-dimensional symplectic space $V$, 
which defines a von Neumann algebra
$\mathcal{A}$, the canonical observables of which are denoted by $\{\tilde Q_i, \tilde P_i\}_{1\le i\le r} \sqcup \{Q_i, P_i\}_{1\le i\le k}$.
Now we denote  ${\rm CCR}(S)$ to be the von Neumann subalgebra of $\mathcal{A}$ generated by 
\[ \{e^{\sqrt{-1}\, \tilde t_i \tilde Q_i}\}_{1\le i\le r} \sqcup \{e^{\sqrt{-1}\, t_i Q_i}, e^{\sqrt{-1}\, s_i P_i}\}_{1\le i\le k}.  \]

In summary, given a possibly degenerate $d\times d$ real skew-symmetric matrix $S=(S_{ij})$,
let ${\rm CCR}(S)$ denote the algebra generated by the observables $X=(X_1,\dots,X_d)$ that satisfy the following Weyl form of the CCR 
\[ e^{\i\xi^{i}X_{i}}e^{\i\eta^{j}X_{j}}=e^{\i\xi^{\top}S\eta}e^{\i\left(\xi+\eta\right)^{i}X_{i}}\qquad(\xi,\eta\in\R^{d}) \]
which is formally rewritten in the Heisenberg form
\[ \frac{\sqrt{-1}}{2}[X_i,X_j] = S_{ij} \qquad(1\leq i,j\leq d). \]

This formulation is useful in handling hybrid classical/quantum Gaussian states.
Given a possibly degenerate $d\times d$ real skew-symmetric matrix $S=(S_{ij})$, 
a state $\phi$ on ${\rm CCR}(S)$ with the canonical observables $X=(X_1,\dots,X_d)$ 
is called a {\em quantum Gaussian state}, denoted $\phi\sim N(\mu,\Sigma)$, if the characteristic function 
${\cal F}_{\xi}\{\phi\}:=\phi(e^{\sqrt{-1}\xi^{i}X_{i}})$ takes the form
\[ {\cal F}_{\xi}\{\phi\}=e^{\sqrt{-1}\xi^{i}\mu_{i}-\frac{1}{2}\xi^{i}\xi^{j}V_{ij}} \]
where $\xi=(\xi^i)_{i=1}^d\in\R^d$, $\mu=(\mu_i)_{i=1}^d\in\R^d$, and $V=(V_{ij})$ is a $d\times d$ real symmetric matrix such that the Hermitian matrix $\Sigma:=V+\sqrt{-1} S$ is positive semidefinite. 
When the canonical observables $X$ need to be specified, we also use the notation $(X,\phi)\sim N(\mu,\Sigma)$.

When we discuss relationships between a quantum Gaussian state $\phi$ on a CCR and a state on another algebra, we need to use the {\em quasi-characteristic function} \cite{qLevyCramer2}
\begin{equation}\label{eqn:q-GaussianCharFnc}
\phi\left(\prod_{t=1}^{r} e^{\sqrt{-1}\xi_{t}^{i}X_{i}}\right)
=
\exp\left(\sum_{t=1}^{r}\left(\sqrt{-1}\xi_{t}^{i}\mu_{i}-\frac{1}{2}\xi_{t}^{i}\xi_{t}^{j} \Sigma_{ji} \right)-\sum_{t=1}^{r}\sum_{u=t+1}^{r}\xi_{t}^{i}\xi_{u}^{j} \Sigma_{ji} \right)
\end{equation}
of a quantum Gaussian state, where $(X,\phi)\sim N(\mu,\Sigma)$ and $\{\xi_t\}_{t=1}^r \subset \R^d$.  
Note that \eqref{eqn:q-GaussianCharFnc} is analytically continued to $\{\xi_t\}_{t=1}^r \subset \C^d$.

The notion of quasi-characteristic function is exploited in discussing the quantum counterpart of the weak convergence \cite{{qLevyCramer},{qLevyCramer2},{qcontiguity}}.
For each $n\in\N$, let $\rho^{(n)}$ be a quantum state and $X^{(n)}=(X_1^{(n)},\dots,X_d^{(n)})$ be a list of observables on a finite dimensional Hilbert space $\H^{(n)}$.
We say the sequence $(X^{(n)},\rho^{(n)})$ {\em converges in distribution} to 
$(X,\phi)\sim N(\mu,\Sigma)$, in symbols
\[
(X^{(n)},\rho^{(n)}) \convd{} (X,\phi)
\quad \text{or} \quad
X^{(n)} \convd{\rho^{(n)}} N(\mu,\Sigma)  
\]
if
\[
\lim_{n\to\infty} \Tr \rho^{(n)} \left(\prod_{t=1}^{r} e^{\sqrt{-1}\xi_{t}^{i}X^{(n)}_{i}}\right)
=
\phi\left(\prod_{t=1}^{r} e^{\sqrt{-1}\xi_{t}^{i}X_{i}}\right)
\]
holds for any $r\in\N$ and subset $\{\xi_t\}_{t=1}^r$ of $\R^d$.

\section{$D$-extendibility for i.i.d.~and non-i.i.d.~models}\label{app:D-extension}

This section is a continuation of Remark \ref{rem:DinvExt},  demonstrating the $D$-extendibility of i.i.d.~models, the idea behind the terms `asymptotic $D$-invariance' and `$D$-extension', and a proper asymptotic treatment of the model presented in Example \ref{eg:noLimitPOVM}.
We also give an example of a sequence of quantum statistical models that is non-i.i.d.~but is, nevertheless, q-LAN and $D$-extendible.

Given a quantum state $\rho$ on a finite dimensional Hilbert space $\H$, let $\D_\rho: B(\H)\to B(\H)$ be Holevo's commutation operator \cite{holevo} with respect to $\rho$ defined by 
\[
 \D_\rho=\sqrt{-1}\,\frac{\mathcal{L}_{\rho}-\mathcal{R}_{\rho}}{\mathcal{L}_{\rho}+\mathcal{R}_{\rho}},
 \]
where $\mathcal{L}_{\rho}$ and $\mathcal{R}_{\rho}$ are superoperators defined by
\[ \mathcal{L}_{\rho} Z:=\rho Z,\qquad \mathcal{R}_{\rho} Z:=Z \rho,\qquad (Z\in B(\H)). \]
They are positive (selfadjoint) operators with respect to the Hilbert-Schmidt inner product $\langle A, B\rangle_{\rm HS}:=\Tr A^*B$ of $B(\H)$ because
\[ \langle Z, \mathcal{L}_{\rho} Z\rangle_{\rm HS}=\Tr Z^{*}\rho Z \ge 0 \]
and
\[ \langle Z, \mathcal{R}_{\rho} Z\rangle_{\rm HS}=\Tr Z^{*} Z \rho= \Tr Z \rho Z^*\ge 0 \]
for all $Z\in B(\H)$. 

When $\rho$ is not faithful, $\D_\rho$ is regarded as a superoperator acting on the quotient space $B(\H)/K_\rho$, where%
\footnote{
In \cite{holevo}, commutation operator $\D_\rho$ was defined on the space $L^2(\rho)$ of square-summable operators, which is the completion of $B(\H)$ with respect to the pre-inner product $\langle X,Y\rangle_\rho:=\frac{1}{2}\,\Tr\rho(X^*Y+YX^*)$. Note that $\langle K,K\rangle_\rho=0$ if and only if $K\rho=\rho K=0$. The `if' part is obvious, and the `only if' part is proved by observing $2\langle K,K\rangle_\rho=\Tr (K\sqrt{\rho})^*(K\sqrt{\rho})+\Tr (\sqrt{\rho} K)^*(\sqrt{\rho}K)$. }
\[ K_\rho:=\{ K\in B(\H) :  K\rho=\rho K=0\}. \]
Since $\D_\rho$ sends selfadjoint operators to selfadjoint operators, it is also regarded as a superoperator on  $B_{\rm sa}(\H)/K_\rho$, where $B_{\rm sa}(\H)$ is the set of selfadjoint operators. 

A subspace $V$ of $B_{\rm sa}(\H)$ is called {\em $\D_\rho$-invariant} if $V/K_\rho$ is $\D_\rho$-invariant. 
Given two lists of selfadjoint operators $\left(X_{1},\dots,X_{r}\right)$ and $\left(L_{1},\dots,L_{d}\right)$, the former is called a {\em $\D_\rho$-invariant extension} of the latter if 
${\rm Span}_{\R} \left\{ X_i \right\}_{i=1}^r \supset {\rm Span}_{\R} \left\{ L_i \right\}_{i=1}^d$ and 
${\rm Span}_{\R} \left\{ X_i \right\}_{i=1}^r$ is $\D_\rho$-invariant.

The following theorem motivated us to adopt the term `asymptotic $D$-invariance' in order to describe an asymptotic version of $\D_{\rho_{\theta_0}}$-invariance.

\begin{theorem}\label{thm:iid}
Given a quantum statistical model $\S:=\{ \rho_{\theta} : \theta\in\Theta\subset\R^{d}\}$ on a finite dimensional Hilbert space $\H$, let $(L_{1},\dots,L_{d})$ be its SLDs at $\theta_{0}\in\Theta$, and let $\S^{(n)}:=\{ \rho_{\theta}^{\otimes n} : \theta\in\Theta\subset\R^{d}\}$ be its i.i.d.~extensions. 
Take a linearly independent $\D_{\rho_{\theta_0}}$-invariant extension $(X_{1},\dots,X_{r})$ of $(L_{1},\dots,L_{d})$ satisfying $\Tr\rho_{\theta_0} X_i=0$ for all $i=1,\dots,r$, and let
\begin{equation*}
\Delta_{i}^{(n)}:=\frac{1}{\sqrt{n}}\sum_{k=1}^{n}I^{\otimes(k-1)}\otimes L_{i}\otimes I^{\otimes(n-k)},\qquad(1\leq i\leq d),
\end{equation*}
\begin{equation*}
X_{i}^{(n)}:=\frac{1}{\sqrt{n}}\sum_{k=1}^{n}I^{\otimes(k-1)}\otimes X_{i}\otimes I^{\otimes(n-k)},\qquad(1\leq i\leq r).
\end{equation*}
Then $X^{(n)}$ satisfies conditions \eqref{eq:X_cond2} -- \eqref{eq:X_cond1} in Definition \ref{def:propertyD}.
\end{theorem}

We prove Theorem \ref{thm:iid} in a series of lemmas.

\begin{lemma}\label{lem:Dinv_ineq}
Given a quantum state $\rho$ and a list of observables $(X_{1},\dots,X_{d})$ on a finite dimensional Hilbert space $\H$, let $A$ and $J$ be $d\times d$ nonnegative matrices whose $(i,j)$th entries are 
$A_{ij}=\Tr\sqrt{\rho}X_{j}\sqrt{\rho}X_{i}$ and $J_{ij}=\Tr\rho X_{j}X_{i}$. 
Then, both $A$ and $J\#J^{\top}$ are real matrices and satisfy
\begin{equation*}
A\leq J\#J^{\top},
\end{equation*}
where $\#$ denotes the operator geometric mean. 
\end{lemma}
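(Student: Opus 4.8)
The plan is to realize the matrix $\left(\begin{smallmatrix} J & A\\ A & J^{\top}\end{smallmatrix}\right)$ as a Gram matrix, hence a positive semidefinite matrix, and then deduce $A\le J\#J^{\top}$ from the extremal property of the operator geometric mean. First I would dispose of the reality and symmetry claims. Since each $X_{i}$ is selfadjoint and $\rho=\rho^{*}$, taking adjoints and using cyclicity of the trace gives $\overline{A_{ij}}=\Tr(\sqrt{\rho}X_{j}\sqrt{\rho}X_{i})^{*}=\Tr X_{i}\sqrt{\rho}X_{j}\sqrt{\rho}=A_{ji}$, whereas cyclicity alone gives $A_{ij}=\Tr\sqrt{\rho}X_{j}\sqrt{\rho}X_{i}=\Tr\sqrt{\rho}X_{i}\sqrt{\rho}X_{j}=A_{ji}$; so $A$ is both Hermitian and symmetric, i.e.\ real symmetric. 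Moreover $J$ is Hermitian, hence $J^{\top}=\overline{J}\ge0$ and $J\#J^{\top}$ is well defined; from $P\#Q=P^{1/2}(P^{-1/2}QP^{-1/2})^{1/2}P^{1/2}$ one sees $\overline{P\#Q}=\overline{P}\#\overline{Q}$, so $\overline{J\#J^{\top}}=\overline{J}\#\overline{J^{\top}}=J^{\top}\#J=J\#J^{\top}$ by symmetry of $\#$, and being Hermitian and real it is real symmetric as well. (When $\rho$ is not faithful, so $J$ may be singular, I would first run this for $J+\varepsilon I$ and let $\varepsilon\downarrow0$, invoking continuity of $\#$.)

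Next, the Gram matrix. Working in $B(\H)$ with the Hilbert-Schmidt inner product $\langle A,B\rangle_{\rm HS}=\Tr A^{*}B$, set $\xi_{i}:=\sqrt{\rho}\,X_{i}$ and $\eta_{i}:=X_{i}\sqrt{\rho}$ for $i=1,\dots,d$. Cyclicity of the trace gives
\[
\langle\xi_{i},\xi_{j}\rangle_{\rm HS}=\Tr X_{i}\rho X_{j}=J_{ij},\qquad
\langle\eta_{i},\eta_{j}\rangle_{\rm HS}=\Tr\rho X_{i}X_{j}=(J^{\top})_{ij},
\]
\[
\langle\xi_{i},\eta_{j}\rangle_{\rm HS}=\Tr\sqrt{\rho}\,X_{j}\sqrt{\rho}\,X_{i}=A_{ij},\qquad
\langle\eta_{i},\xi_{j}\rangle_{\rm HS}=\overline{A_{ji}}=A_{ji},
\]
the last equality using the reality of $A$. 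Hence the Gram matrix of the family $(\xi_{1},\dots,\xi_{d},\eta_{1},\dots,\eta_{d})$, which is automatically positive semidefinite, is exactly
\[
\begin{pmatrix} J & A\\ A & J^{\top}\end{pmatrix}\ge0 .
\]

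Finally, I would pass from this block positivity to the inequality. The shortest route is to quote the characterization (\cite{kubo,Ando}) of $J\#J^{\top}$ as the largest Hermitian matrix $X$ for which $\left(\begin{smallmatrix} J & X\\ X & J^{\top}\end{smallmatrix}\right)\ge0$, applied with $X=A$. A self-contained alternative: for $\varepsilon>0$ one has $\left(\begin{smallmatrix} J+\varepsilon I & A\\ A & J^{\top}+\varepsilon I\end{smallmatrix}\right)\ge0$, whose Schur complement reads $A(J+\varepsilon I)^{-1}A\le J^{\top}+\varepsilon I$; writing $R_{\varepsilon}:=(J+\varepsilon I)^{-1/2}(J^{\top}+\varepsilon I)(J+\varepsilon I)^{-1/2}$ and $Y_{\varepsilon}:=(J+\varepsilon I)^{-1/2}A(J+\varepsilon I)^{-1/2}$ this becomes $Y_{\varepsilon}^{2}\le R_{\varepsilon}$, and since $Y_{\varepsilon}\le|Y_{\varepsilon}|=(Y_{\varepsilon}^{2})^{1/2}\le R_{\varepsilon}^{1/2}$ (operator monotonicity of $t\mapsto\sqrt{t}$), conjugating by $(J+\varepsilon I)^{1/2}$ gives $A\le(J+\varepsilon I)\#(J^{\top}+\varepsilon I)$; letting $\varepsilon\downarrow0$ and using continuity of $\#$ finishes the proof. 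The only genuinely delicate points are bookkeeping: checking that both off-diagonal Gram blocks really are $A$ (which rests on the real-symmetry of $A$ from the first step), and the routine $\varepsilon$-regularization needed when $\rho$ is degenerate; everything else is standard.
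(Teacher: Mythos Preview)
Your proof is correct and follows essentially the same approach as the paper: the paper likewise computes the Hilbert--Schmidt Gram matrix of the family $\{\sqrt{\rho}X_i\}_i\cup\{X_i\sqrt{\rho}\}_i$, obtains the block form $\left(\begin{smallmatrix} J & A\\ A & J^{\top}\end{smallmatrix}\right)\ge0$, and then invokes the extremal characterization $P\#Q=\max\{X\ge0:\left(\begin{smallmatrix}P&X\\X&Q\end{smallmatrix}\right)\ge0\}$ to conclude $A\le J\#J^{\top}$. Your write-up is slightly more detailed (you spell out the reality and symmetry checks and supply a self-contained Schur-complement argument for the extremal step), but the underlying idea is identical.
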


\begin{proof}
$\overline{A}=A$ is obvious, and
\[ \overline{J\#J^\top}=\overline{J\#\overline{J}}=\overline{J}\#J=J\#\overline{J}=J\#J^\top. \]
Now recall that the operator geometric mean $P\#Q$ for positive operators $P$ and $Q$ is characterized as 
\cite{Ando}
\begin{equation*}
 P\#Q=\max\left\{ X\geq0 : \begin{pmatrix}P & X \\ X & Q \end{pmatrix} \geq 0 \right\}.
\end{equation*}
Since the Gram matrix for $\{\sqrt{\rho}X_{1},\dots,\sqrt{\rho}X_{d}\}\cup\{X_{1}\sqrt{\rho},\dots,X_{d}\sqrt{\rho}\}$
with respect to the Hilbert-Schmidt inner product is
\begin{equation*}
 \begin{pmatrix}J & A \\ A & J^{\top} \end{pmatrix}, 
\end{equation*}
the inequality $A\leq J\#J^{\top}$ immediately follows.
\end{proof}

\begin{lemma}\label{lem:concreteA}
Let $J=V+\i S$ be nonnegative matrix, and assume that $V={\rm Re}\,J$ is strictly positive. 
Then 
\begin{equation}
J\#J^\top=V^{1/2}\left\{ I+ \left(V^{-1/2}SV^{-1/2} \right)^{2} \right\}^{1/2} V^{1/2}.
\end{equation}
\end{lemma}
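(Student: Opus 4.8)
\emph{Plan.} The idea is to remove the real part $V$ by an invertible congruence and reduce the claim to a statement about a single real skew-symmetric matrix. Since $V>0$, the matrix $V^{1/2}$ is real, symmetric, and invertible, and the operator geometric mean is covariant under congruence by invertible matrices: $(A^{*}PA)\#(A^{*}QA)=A^{*}(P\#Q)A$ for any invertible $A$ and positive semidefinite $P,Q$ (Kubo--Ando \cite{kubo,Ando}). I would apply this with $A=V^{-1/2}$, $P=J$, $Q=J^{\top}$. Since $V^{\top}=V$ and $S^{\top}=-S$ we have $J^{\top}=\overline{J}$, so that, writing $M:=V^{-1/2}SV^{-1/2}$ (a real skew-symmetric matrix),
\[
 V^{-1/2}JV^{-1/2}=I+\i M,\qquad V^{-1/2}J^{\top}V^{-1/2}=I-\i M.
\]
It therefore suffices to prove the identity $(I+\i M)\#(I-\i M)=(I+M^{2})^{1/2}$; multiplying the resulting equality $V^{-1/2}(J\#J^{\top})V^{-1/2}=(I+M^{2})^{1/2}$ on both sides by $V^{1/2}$ then yields the lemma. (Here $M^{2}\le 0$ because $M$ is real skew-symmetric, and since both $I+\i M$ and $I-\i M$ are positive semidefinite we get $-I\le \i M\le I$, whence $-M^{2}=(\i M)^{2}\le I$ and $I+M^{2}\ge 0$; its positive square root is thus well defined.)

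For the displayed identity, the key point is that $I+\i M$ and $I-\i M$ are both polynomials in $M$, hence commute. For commuting positive semidefinite operators the geometric mean is simply the positive square root of the product, $P\#Q=(PQ)^{1/2}$: when $P$ and $Q$ are moreover invertible this is immediate from $P\#Q=P^{1/2}(P^{-1/2}QP^{-1/2})^{1/2}P^{1/2}$, and the general case follows by the regularization $(P+\varepsilon I)\#(Q+\varepsilon I)=\bigl((P+\varepsilon I)(Q+\varepsilon I)\bigr)^{1/2}$ and letting $\varepsilon\downarrow 0$, using continuity of the Kubo--Ando mean and of the square root. Consequently
\[
 (I+\i M)\#(I-\i M)=\bigl((I+\i M)(I-\i M)\bigr)^{1/2}=\bigl(I-(\i M)^{2}\bigr)^{1/2}=(I+M^{2})^{1/2},
\]
as required.

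The main point to watch is that $J$ itself need not be invertible even though $V={\rm Re}\,J>0$ --- for instance $J=\left(\begin{smallmatrix}1 & \i\\ -\i & 1\end{smallmatrix}\right)$ has $0$ as an eigenvalue --- so one cannot simply write $J\#J^{\top}=J^{1/2}(J^{-1/2}J^{\top}J^{-1/2})^{1/2}J^{1/2}$. This is exactly why the argument is routed through the invertible congruence $V^{-1/2}$ and the commuting-operator formula above. Once that is arranged, the remaining ingredients --- that $M$ is real skew-symmetric, that $J^{\top}=\overline{J}$, and the elementary identity $(I+\i M)(I-\i M)=I+M^{2}$ --- are routine.
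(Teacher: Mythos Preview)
Your proof is correct, and it takes a genuinely different route from the paper's. The paper first regularizes $J$ to $J+\varepsilon I$ so that $J>0$, and then simply \emph{verifies the Riccati equation}: writing $S_V:=V^{-1/2}SV^{-1/2}$ and $X:=V^{1/2}(I+S_V^{2})^{1/2}V^{1/2}$, it checks directly that $XJ^{-1}X=J^{\top}$, which is the defining relation for $J\#J^{\top}$ when $J$ is invertible. Your argument instead exploits two structural properties of the Kubo--Ando mean --- congruence covariance $(A^{*}PA)\#(A^{*}QA)=A^{*}(P\#Q)A$ and the identity $P\#Q=(PQ)^{1/2}$ for commuting $P,Q\ge 0$ --- to reduce the problem to the elementary product $(I+\i M)(I-\i M)=I+M^{2}$. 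The paper's approach is a short ad~hoc verification requiring only the Riccati characterization; yours is slightly more conceptual and makes transparent why the formula holds (it is, after congruence by $V^{-1/2}$, just the scalar identity $\sqrt{(1+is)(1-is)}=\sqrt{1-s^{2}}$ applied spectrally). Both handle the possible singularity of $J$ by an $\varepsilon$-regularization at different points in the argument.
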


\begin{proof}
By changing $J$ into $J+\varepsilon I$ for $\varepsilon>0$ and considering the limit $\varepsilon \downarrow 0$, 
it suffices to treat the case when $J > 0$. 
Set 
\[ S_{V}:=V^{-1/2}SV^{-1/2}\quad\mbox{and}\quad X:=V^{1/2}\left\{I+S_{V}^{2}\right\}^{1/2} V^{1/2}. \] 
Then
\begin{align*}
X J^{-1} X & =X \left\{ V^{1/2}(I+\i S_{V})V^{1/2}\right\} ^{-1} X\\
 & =V^{1/2} \left\{ I+S_{V}^{2} \right\} \left\{ I+\i S_{V} \right\}^{-1}V^{1/2}\\
 & =V^{1/2} \left\{ I-\i S_{V} \right\} V^{1/2}\\
 & =J^{\top}.
\end{align*}
This proves that $X=J\# J^\top$. 
\end{proof}

\begin{lemma}\label{lem:Dinv_equiv}
Under the setting of Lemma \ref{lem:Dinv_ineq}, assume further that $V={\rm Re}\,J$ is strictly positive. 
Then the following conditions are equivalent. 
\begin{itemize}
\item [{(i)}] $A=J\#J^\top$
\item [{(ii)}] ${\rm Span}_{\C}\left\{ \sqrt{\rho}X_{i}+X_{i}\sqrt{\rho}\right\} _{i=1}^{d}\supset{\rm Span}_{\C}\left\{ \sqrt{\rho}X_{i}-X_{i}\sqrt{\rho}\right\} _{i=1}^{d}$
\item [{(iii)}] ${\rm Span}_{\C}\left\{ \rho X_{i}+X_{i}\rho\right\} _{i=1}^{d}\supset{\rm Span}_{\C}\left\{ \rho X_{i}-X_{i}\rho\right\} _{i=1}^{d}$
\item [{(iv)}] ${\rm Span}_{\C}\left\{ {X}_{i}\right\} _{i=1}^{d}$ is $\D_\rho$-invariant.
\item [{(v)}] ${\rm Span}_{\R}\left\{ {X}_{i}\right\} _{i=1}^{d}$ is $\D_\rho$-invariant.
\end{itemize}
\end{lemma}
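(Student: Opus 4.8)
The plan is to funnel all five conditions into the single statement that $\mathcal{X}:=\mathrm{Span}_\C\{X_i\}$, read in the quotient $B(\H)/K_\rho$, is $\D_\rho$-invariant, and then to treat (i) $\iff$ (ii) as a self-contained piece of linear algebra. I would work throughout inside $B(\H)/K_\rho$, identified with the Hilbert--Schmidt orthogonal complement $K_\rho^\perp$. Two preliminary facts will be used repeatedly. Since $\mathrm{Re}\,J=V>0$ and $J\ge0$ is the Gram matrix of $\{\sqrt\rho X_i\}_i$, the matrix $J$ (and likewise $J^\top$, the Gram matrix of $\{X_i\sqrt\rho\}_i$) is strictly positive; hence $\mathcal{U}:=\mathrm{Span}_\C\{\sqrt\rho X_i\}$ and $\mathcal{V}:=\mathrm{Span}_\C\{X_i\sqrt\rho\}=\mathcal{U}^*$ are $d$-dimensional, and the $X_i$ are linearly independent modulo $K_\rho$. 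A short computation also gives $\ker(\mathcal{L}_{\sqrt\rho}+\mathcal{R}_{\sqrt\rho})=\ker(\mathcal{L}_\rho+\mathcal{R}_\rho)=K_\rho$, so both $\mathcal{L}_{\sqrt\rho}+\mathcal{R}_{\sqrt\rho}$ and $\mathcal{L}_\rho+\mathcal{R}_\rho$ are invertible on $B(\H)/K_\rho$.

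For (iii) $\iff$ (iv), I would argue directly: with $\mathcal{N}_\pm:=\mathrm{Span}_\C\{\rho X_i\pm X_i\rho\}=(\mathcal{L}_\rho\pm\mathcal{R}_\rho)\mathcal{X}$, and since $\D_\rho=\i(\mathcal{L}_\rho-\mathcal{R}_\rho)(\mathcal{L}_\rho+\mathcal{R}_\rho)^{-1}$ commutes with the invertible operator $\mathcal{L}_\rho+\mathcal{R}_\rho$, one has $\D_\rho\mathcal{X}\subseteq\mathcal{X}\iff(\mathcal{L}_\rho-\mathcal{R}_\rho)\mathcal{X}\subseteq(\mathcal{L}_\rho+\mathcal{R}_\rho)\mathcal{X}\iff\mathcal{N}_-\subseteq\mathcal{N}_+$, which is (iii). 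For (ii) $\iff$ (iv), the same manipulation with $\mathcal{M}_\pm:=\mathrm{Span}_\C\{\sqrt\rho X_i\pm X_i\sqrt\rho\}=(\mathcal{L}_{\sqrt\rho}\pm\mathcal{R}_{\sqrt\rho})\mathcal{X}$ shows that (ii) is equivalent to $\mathcal{E}\mathcal{X}\subseteq\mathcal{X}$, where $\mathcal{E}:=(\mathcal{L}_{\sqrt\rho}-\mathcal{R}_{\sqrt\rho})(\mathcal{L}_{\sqrt\rho}+\mathcal{R}_{\sqrt\rho})^{-1}$. The crux is that $\mathcal{E}$ and $-\i\D_\rho$ are continuous functions of each other: both are self-adjoint on $B(\H)/K_\rho$ and lie in the commutative von Neumann algebra generated by the commuting self-adjoints $\mathcal{L}_{\sqrt\rho},\mathcal{R}_{\sqrt\rho}$, and on the joint spectrum $\{(\sqrt{p_k},\sqrt{p_l}):(p_k,p_l)\ne(0,0)\}$, with $p_k$ the eigenvalues of $\rho$, one reads off $\mathcal{E}=\tfrac{a-b}{a+b}\in[-1,1]$ and $-\i\D_\rho=\tfrac{a^2-b^2}{a^2+b^2}=\phi(\mathcal{E})$ where $\phi(t)=\tfrac{2t}{1+t^2}$; since $\phi$ restricts to a homeomorphism of $[-1,1]$ onto itself, $\mathcal{E}=\phi^{-1}(-\i\D_\rho)$, and a finite-dimensional subspace is invariant under a self-adjoint operator iff it is invariant under every continuous function of it. Finally (iv) $\iff$ (v) follows because $\D_\rho$ is complex linear and maps self-adjoint operators to self-adjoint operators, while linear independence modulo $K_\rho$ gives $\mathrm{Span}_\C\{X_i\}\cap B_{\rm sa}(\H)=\mathrm{Span}_\R\{X_i\}$ modulo $K_\rho$.

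For (i) $\iff$ (ii), I would use that $A_{ij}=\langle\sqrt\rho X_i,\,X_j\sqrt\rho\rangle_{\rm HS}$, so the $2d\times2d$ Gram matrix $G=\begin{pmatrix}J&A\\A&J^\top\end{pmatrix}$ of $\{\sqrt\rho X_i\}_i\cup\{X_j\sqrt\rho\}_j$ satisfies $\mathrm{rank}\,G=\dim(\mathcal{U}+\mathcal{V})$, and the Schur complement with $J>0$ gives $\mathrm{rank}\,G=d+\mathrm{rank}(J^\top-AJ^{-1}A)$. Since $A\ge0$ (indeed $v^\top Av=\|\rho^{1/4}(v^iX_i)\rho^{1/4}\|_{\rm HS}^2$) and $J\#J^\top$ is the unique positive solution $X$ of $XJ^{-1}X=J^\top$ (cf.\ Lemma \ref{lem:concreteA}), it follows that $A=J\#J^\top\iff J^\top-AJ^{-1}A=0\iff\mathrm{rank}\,G=d\iff\mathcal{U}=\mathcal{V}$. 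To close the circle I would note $\mathcal{M}_\pm\subseteq\mathcal{U}+\mathcal{V}$ and $\mathcal{M}_++\mathcal{M}_-=\mathcal{U}+\mathcal{V}$, so (ii) is precisely $\mathcal{M}_+=\mathcal{U}+\mathcal{V}$: if $\mathcal{U}=\mathcal{V}$ then $\dim(\mathcal{U}+\mathcal{V})=d$ and the $d$ operators $\{\sqrt\rho X_i+X_i\sqrt\rho\}$ are linearly independent — a nontrivial relation would place $\sum c_iX_i$ in $\ker(\mathcal{L}_{\sqrt\rho}+\mathcal{R}_{\sqrt\rho})=K_\rho$ — so $\dim\mathcal{M}_+=d=\dim(\mathcal{U}+\mathcal{V})$ and $\mathcal{M}_+=\mathcal{U}+\mathcal{V}$; conversely $\mathcal{M}_+=\mathcal{U}+\mathcal{V}$ forces $\dim(\mathcal{U}+\mathcal{V})=\dim\mathcal{M}_+\le d$, whence $\mathcal{U}=\mathcal{V}$.

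I expect the main obstacle to be (ii) $\iff$ (iv): recognizing $\mathcal{E}$ and $-\i\D_\rho$ as functions of one another hinges on the observation that $\phi(t)=2t/(1+t^2)$, although not globally injective, is a homeomorphism of the spectral interval $[-1,1]$, and it requires careful bookkeeping with $K_\rho$ so that $\mathcal{L}_{\sqrt\rho}+\mathcal{R}_{\sqrt\rho}$ and $\mathcal{L}_\rho+\mathcal{R}_\rho$ are invertible on the quotient and $\mathcal{E},-\i\D_\rho$ are genuinely self-adjoint with spectrum in $[-1,1]$ there. The remaining implications are comparatively mechanical once the geometric-mean equality in (i) is recast as the rank condition $J^\top=AJ^{-1}A$.
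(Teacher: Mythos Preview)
Your argument for (ii)$\iff$(iii)$\iff$(iv)$\iff$(v) is correct and essentially identical to the paper's: your $\mathcal{E}$ is the paper's $\D_1$, your $-\i\D_\rho$ is its $\D_2$, and the functional relation $\D_2=2\D_1/(1+\D_1^2)$ with inverse on $[-1,1]$ is exactly what the paper invokes.

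The gap is in (i)$\iff$(ii), and it stems from your opening claim that $J>0$. You write ``Since $\mathrm{Re}\,J=V>0$ and $J\ge0$\dots, the matrix $J$\dots\ is strictly positive,'' but this inference is false. Take $\rho=|0\rangle\langle0|$ on $\C^2$ with $X_1=\sigma_x$, $X_2=\sigma_y$: then $V=I$ while $J=\bigl(\begin{smallmatrix}1&-i\\ i&1\end{smallmatrix}\bigr)$ is singular. Here $\sqrt\rho\,X_2=i\sqrt\rho\,X_1$, so $\dim\mathcal{U}=1<d$, and your chain
\[
A=J\#J^\top\ \iff\ AJ^{-1}A=J^\top\ \iff\ \mathrm{rank}\,G=d\ \iff\ \mathcal{U}=\mathcal{V}
\]
fails at every link: $J^{-1}$ does not exist, the characterization of the geometric mean as the unique positive solution of $XJ^{-1}X=J^\top$ is unavailable, and in the example $\mathrm{rank}\,G=2=d$ holds even though $\mathcal{U}=\C|0\rangle\langle1|\ne\C|1\rangle\langle0|=\mathcal{V}$. (Incidentally, the linear independence of the $X_i$ modulo $K_\rho$ \emph{does} follow from $V>0$ alone---real combinations via $c^\top Vc=\Tr\rho(\sum c_iX_i)^2$, then complex ones by splitting into real and imaginary parts---so that conclusion survives, just not your stated reason for it.)

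The paper sidesteps this by computing the Gram matrix in the basis $\{\sqrt\rho\,X_i+X_i\sqrt\rho\}\cup\{\sqrt\rho\,X_i-X_i\sqrt\rho\}$ instead of $\{\sqrt\rho\,X_i\}\cup\{X_i\sqrt\rho\}$. The top-left block is then $2(V+A)$, which is strictly positive whenever $V>0$ (since $A\ge0$), so the Schur-type reduction goes through unconditionally: the condition $\mathrm{rank}\,G=d$ becomes an eigenvalue statement for $\bigl(\begin{smallmatrix}\i S_V & A_V\\ A_V & -\i S_V\end{smallmatrix}\bigr)$, which squares to the identity iff $A_V=(I+S_V^2)^{1/2}$, i.e.\ $A=J\#J^\top$ by Lemma~\ref{lem:concreteA}. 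Your route can be repaired by exactly this change of basis; the point is that $V+A>0$ is the robust pivot, not $J>0$.
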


\begin{proof}
We first prove that (i) $\Leftrightarrow$ (ii). Letting $J=V+\i S$, the Gram matrix $G$ for $\{ \sqrt{\rho}X_{i}+X_{i}\sqrt{\rho}\} _{i=1}^{d}\cup\{ \sqrt{\rho}X_{i}-X_{i}\sqrt{\rho}\} _{i=1}^{d}$ with respect to the Hilbert-Schmidt inner product is written as
\begin{equation*} 
G=
2\begin{pmatrix}V+A \;\; & \sqrt{-1}S \\ \sqrt{-1}S\;\; & V-A \end{pmatrix}
=\begin{pmatrix}I & I \\ I & -I \end{pmatrix}
  \begin{pmatrix}J & A\\ A & J^\top \end{pmatrix}
  \begin{pmatrix}I & I\\ I & -I \end{pmatrix}.
\end{equation*}
Condition (ii) is equivalent to saying that 
\[  {\rm rank}\, G={\rm rank}\, (V+A) =d. \]
Since 
\[
  \begin{pmatrix}J & A\\ A & J^\top \end{pmatrix}
  = \begin{pmatrix} V^{1/2} &  0 \\ 0 & V^{1/2} \end{pmatrix}
     \begin{pmatrix}I+\i S_{V} & A_{V} \\ A_{V} & I-\i S_{V} \end{pmatrix}
     \begin{pmatrix} V^{1/2} &  0 \\ 0 & V^{1/2} \end{pmatrix},
\]
where $A_{V}:=V^{-1/2}AV^{-1/2}$ and $S_{V}:=V^{-1/2}SV^{-1/2}$, 
condition (ii) is further equivalent to saying that the nonnegative matrix 
\[ 
 \begin{pmatrix}I+\i S_{V} & A_{V} \\ A_{V} & I-\i S_{V} \end{pmatrix}
 = \begin{pmatrix}  \,I\, &  0 \\ 0 & \,I\, \end{pmatrix}
 + \begin{pmatrix}\i S_{V} & A_{V} \\ A_{V} & -\i S_{V} \end{pmatrix}
\]
is of rank $d$, that is, the matrix 
\[
\begin{pmatrix}\i S_{V} & A_{V} \\ A_{V} & -\i S_{V} \end{pmatrix}
\]
has eigenvalues $-1$ and $+1$ each with multiplicity $d$. 
(Note that if $(x,y)^\top$ is an eigenvector corresponding to the eigenvalue $-1$, then $(y,-x)^\top$ is an eigenvector corresponding to the eigenvalue $+1$.) 
This is equivalent to 
\[
 \begin{pmatrix}\i S_{V} & A_{V} \\ A_{V} & -\i S_{V} \end{pmatrix}^{2}
 =\begin{pmatrix}A_{V}^{2}-S_{V}^{2} & \i(S_{V}A_{V}-A_{V}S_{V})\\
	\i(S_{V}A_{V}-A_{V}S_{V}) & A_{V}^{2}-S_{V}^{2} \end{pmatrix}
 =\begin{pmatrix} \,I\, & 0 \\ 0 & \,I\, \end{pmatrix}, 
\]
or 
\[ A_{V}=\left\{ I+S_{V}^{2} \right\}^{1/2}. \] 
Due to Lemma \ref{lem:concreteA}, this is further equivalent to 
\[ A=V^{1/2} \left\{ I+S_{V}^{2} \right\}^{1/2} V^{1/2}=J\#J^\top. \]

We next prove that (ii) $\Leftrightarrow$ (iii).
Condition (ii) says that ${\rm Span}_{\C}\{ {X}_{i}\} _{i=1}^{d}$ is invariant under the action of 
\[ \D_{1}=\frac{\sqrt{\mathcal{L}_\rho}-\sqrt{\mathcal{R}_\rho}}{\sqrt{\mathcal{L}_\rho}+\sqrt{\mathcal{R}_\rho}}, \]
while condition (iii) says that ${\rm Span}_{\C}\{ {X}_{i}\} _{i=1}^{d}$ is invariant under the action of
\[ \D_{2}=\frac{\mathcal{L}_\rho-\mathcal{R}_\rho}{\mathcal{L}_\rho+\mathcal{R}_\rho}. \]
Since both $\D_{1}$ and $\D_2$ are selfadjoint with respect to the Hilbert-Schmidt inner product, 
and $-I\leq\D_{1},\D_{2}\leq I$, continuous functional calculus shows that they are related as
\[ 
 \D_{1}=\frac{\sqrt{1+\D_{2}}-\sqrt{1-\D_{2}}}{\sqrt{1+\D_{2}}+\sqrt{1-\D_{2}}} 
 \quad\mbox{and}\quad
 \D_{2}=\frac{2\D_{1}}{1+\D_{1}^{2}}. 
\]
Consequently, $\D_1$-invariance and $\D_2$-invariance are equivalent. 

Further, since $\D_\rho=\i\,\D_{2}$, we have (iii) $\Leftrightarrow$ (iv).
Finally, (iv) $\Leftrightarrow$ (v) is obvious. 
\end{proof}

\begin{proof}[Proof of Theorem \ref{thm:iid}]
Firstly, condition \eqref{eq:X_cond1} is obvious because $(X_{1},\dots,X_{r})$ is a  $\D_{\rho_{\theta_0}}$-invariant extension of $(L_{1},\dots,L_{d})$. 
Secondly, condition \eqref{eq:X_cond2} follows from the quantum central limit theorem for sums of i.i.d.~observables \cite{qLevyCramer2} (cf., Lemma \ref{lem:qclt_niid} below), in that
\begin{equation*}
X^{(n)}\conv{\rho_{\theta_{0}}^{\otimes n}}N(0,\Sigma),
\end{equation*}
where $\Sigma_{ij}=\Tr \rho_{\theta_0} X_{j}X_{i}$. 
Now we prove the key condition \eqref{eq:X_cond3}. 

Let us regard $\hat{\H}:=B(\H)$ as a Hilbert space endowed with the Hilbert-Schmidt inner product. 
We introduce selfadjoint operators
$\superL_{X_i}$ and $\superR_{X_i}$ on $\hat\H$ for $i=1,\dots, r$ by
\[ \superL_{X_i} Z:=X_i Z,\qquad \superR_{X_i} Z:=Z X_i,\qquad (Z\in B(\H)). \]
Further, let $\psi_0:=\sqrt{\rho_{\theta_0}}$ be a reference vector in $\hat{\H}$. 
Note that
\[ 
 \langle \psi_0, \superL_{X_{i}} \psi_0 \rangle_{\rm HS}
 =\Tr \sqrt{\rho_{\theta_0}} (X_{i} \sqrt{\rho_{\theta_0}})
 =\Tr \rho_{\theta_0} X_{i}=0,
\]
and $\langle \psi_0, \superR_{X_{i}} \psi_0 \rangle_{\rm HS}=0$ likewise. 
Now consider the operators on $\hat\H^{\otimes n}$ defined by
\begin{align*}
\superL_{X_{i}}^{(n)} & :=\frac{1}{\sqrt{n}}\sum_{k=1}^{n}I^{\otimes(k-1)}\otimes\superL_{X_{i}}\otimes I^{\otimes(n-k)}, \\
\superR_{X_{i}}^{(n)} & :=\frac{1}{\sqrt{n}}\sum_{k=1}^{n}I^{\otimes(k-1)}\otimes\superR_{X_{i}}\otimes I^{\otimes(n-k)},
\end{align*}
and apply the quantum central limit theorem to the i.i.d.~extension states 
$\left(\ket{\psi_0}\bra{\psi_0}\right)^{\otimes n}$, to obtain
\begin{equation}\label{eq:LR_qclt}
\left(\superL_{X_1}^{(n)},\dots,\superL_{X_r}^{(n)},\superR_{X_1}^{(n)},\dots,\superR_{X_r}^{(n)}\right)\conv{\left(\ket{\psi_0}\bra{\psi_0}\right)^{\otimes n}}N\left(0,\begin{pmatrix}\Sigma & A\\
A & \Sigma^{\top}
\end{pmatrix}\right),
\end{equation}
where
\[
A_{ij} 
= \langle \psi_0, \superL_{X_{i}} \superR_{X_{j}} \psi_0 \rangle_{\rm HS}
  =\Tr\sqrt{\rho_{\theta_{0}}}X_{i}\sqrt{\rho_{\theta_{0}}}X_{j}. 
\]
Since ${\rm Span}_{\R}\left\{ X_{i}\right\} _{i=1}^{r}$ is $\D_{\rho_0}$-invariant, we see from 
Lemma \ref{lem:Dinv_equiv} that 
\begin{equation*}
A=\Sigma\#\Sigma^{\top}. 
\end{equation*}
Further, for all $\xi,\eta\in\R^{d}$, we have
\begin{eqnarray*}
 && \lim_{n\to\infty}\Tr\sqrt{\rho_{\theta_0}^{\otimes n}}e^{\sqrt{-1}\xi^{i} X_{i}^{(n)} }
	\sqrt{\rho_{\theta_0}^{\otimes n}}e^{\sqrt{-1}\eta^{i} X_{i}^{(n)} }  \\
 &&\qquad =\lim_{n\to\infty} \Tr \left(\ket{\psi_0}\bra{\psi_0}\right)^{\otimes n}
	e^{\sqrt{-1}\left(\xi^{i}\superL_{X_{i}}^{(n)}+\eta^{i}\superR_{X_{i}}^{(n)}\right)} \\
 &&\qquad =\exp\left[ {-\frac{1}{2}
 	\begin{pmatrix}\xi \\ \eta \end{pmatrix}^\top
 	\begin{pmatrix}\Sigma & A \\ A & \Sigma^{\top} \end{pmatrix}
	\begin{pmatrix}\xi \\ \eta \end{pmatrix}} \right],
\end{eqnarray*}
where (\ref{eq:LR_qclt}) is used in the second equality.
This proves (\ref{eq:X_cond3}). 
\end{proof}

\begin{remark}
Several remarks on the $D$-extendibility of the one-dimensional pure state model $\rho_\theta$ treated in Example \ref{eg:noLimitPOVM} are now in order. 
Let us first show that 
\[
\Delta^{(n)}:=\frac{1}{\sqrt{n}}\sum_{k=1}^{n}I^{\otimes(k-1)}\otimes\sigma_x\otimes I^{\otimes(n-k)}
\]
is not asymptotically $D$-invariant at $\theta=0$.
To this end, it suffices to prove that $\Delta^{(n)}$ does {\em not} satisfy the identity
\[
 \lim_{n\to\infty}
 \Tr\sqrt{\rho_0^{\otimes n}}e^{\sqrt{-1}\xi \Delta^{(n)}}\sqrt {\rho_0^{\otimes n}}e^{\sqrt{-1}\eta \Delta^{(n)}} 
=e^{-\frac{1}{2}
 \begin{pmatrix} \xi \\ \eta \end{pmatrix}^\top
 \begin{pmatrix} J & J \\ J & J \end{pmatrix}
 \begin{pmatrix} \xi \\ \eta \end{pmatrix}}
\]
for $\xi,\eta\in\R$, 
where $J:=\Tr \rho_0\, \sigma_x^2=1$ is the SLD Fisher information of the model at $\theta=0$.
In fact, since $\Tr\sqrt{\rho_0}\,\sigma_x\,\sqrt{\rho_0}\,\sigma_x=0$, we can compute in a quite similar way to the proof of Theorem \ref{thm:iid} that
\[
 \lim_{n\to\infty}
 \Tr\sqrt{\rho_0^{\otimes n}}e^{\sqrt{-1}\xi \Delta^{(n)}}\sqrt{ \rho_0^{\otimes n}}e^{\sqrt{-1}\eta \Delta^{(n)}} 
=e^{-\frac{1}{2}
 \begin{pmatrix} \xi \\ \eta \end{pmatrix}^\top
 \begin{pmatrix} 1 & 0 \\ 0 & 1 \end{pmatrix}
 \begin{pmatrix} \xi \\ \eta \end{pmatrix}}.
\]
This proves the claim.

We next verify that $\Delta^{(n)}$ has a $D$-extension and therefore the model is $D$-extendible at  $\theta=0$. 
While this is a straightforward consequence of Theorem \ref{thm:iid}, we demonstrate this by a direct computation. 
Let $(X_1, X_2):=(\sigma_x, \sigma_y)$, which is a $\D_{\rho_0}$-invariant extension of the SLD $\sigma_x$ at $\rho_0$, and let
\[
X_{i}^{(n)}:=\frac{1}{\sqrt{n}}\sum_{k=1}^{n}I^{\otimes(k-1)}\otimes X_{i}\otimes I^{\otimes(n-k)},\qquad (i=1,2). 
\]
Then by a direct computation similar to the above identity, we have
\[
 \lim_{n\to\infty}
 \Tr\sqrt{\rho_0^{\otimes n}}e^{\sqrt{-1}\xi^i X_i^{(n)}}\sqrt{\rho_0^{\otimes n}}e^{\sqrt{-1}\eta^j X_j^{(n)}} 
=e^{-\frac{1}{2}
 \begin{pmatrix} \xi \\ \eta \end{pmatrix}^\top
 \begin{pmatrix} \Sigma & 0 \\ 0 & \Sigma^\top \end{pmatrix}
 \begin{pmatrix} \xi \\ \eta \end{pmatrix}}
\]
for $\xi,\eta\in\R^2$, where
\[
 \Sigma=\left[ \Tr \rho_0 X_j X_i \right]_{ij}=\begin{pmatrix} 1 & -\i \\ \i & 1 \end{pmatrix}. 
\]
Since $\Sigma\#\Sigma^\top=0$, we see that $X^{(n)}$ is a $D$-extension of $\Delta^{(n)}\,(=X_1^{(n)} )$ 
with $F=(1,0)^\top$.

Finally, we demonstrate a proper perspective on the local parameter i.i.d.~model $\rho_{h/\sqrt{n}}^{\otimes n}$.
As shown in Example \ref{eg:noLimitPOVM}, the sequence $M^{(n)}=\{\rho_0^{\otimes n},\,I^{(n)}-\rho_0^{\otimes n} \}$ of binary POVMs does not have a binary POVM on the `classical' Gaussian shift model $N(h,1)$ that gives the limiting distribution $\L_h=( e^{-\frac{1}{4}h^2},\, 1- e^{-\frac{1}{4}h^2} )$. 
This fact nullifies the naive conjecture presented just before Example \ref{eg:noLimitPOVM}, 
but it does not rule out the existence of a POVM on {\em another} CCR algebra that gives the above limiting distribution $\L_h$. 
In fact, Theorem \ref{thm:qRep} tells us that $M^{(n)}$ has a limiting binary POVM $M^{(\infty)}=\{M^{(\infty)}(0), M^{(\infty)}(1)\}$ on the `quantum' Gaussian shift model 
\[
\phi_h\sim N(({\rm Re}\,\Sigma F)\,h, \Sigma)
=N\left( \begin{pmatrix} h\\ 0 \end{pmatrix},\,\begin{pmatrix} 1 & -\i \\ \i & 1 \end{pmatrix} \right)
\]
that satisfies $\phi_h(M^{(\infty)}(0))=\L_{h}(0)$ for every $h\in\R$.
To be specific, let $\H^{(\infty)}$ be a separable Hilbert space that irreducibly represents the ${\rm CCR}({\rm Im}\, \Sigma)$, and let $\rho^{(\infty)}_{h}$ be the density operator of the quantum Gaussian state $\phi_h$ on $\H^{(\infty)}$.
Then, from the noncommutative Parseval identity \cite{holevo}, we see that the POVM $M^{(\infty)}:=\{ \rho^{(\infty)}_{0}, \, I^{(\infty)}-\rho^{(\infty)}_{0} \}$ fulfills
\begin{align*}
\Tr \rho^{(\infty)}_h M^{(\infty)}(0) 
& = \sqrt{\frac{\det ({\rm Im}\, \Sigma)}{\pi^2}} \int_{\R^2} \, \overline{\F_\xi[\rho^{(\infty)}_h]}\, \F_\xi[M^{(\infty)}(0)]  \, d\xi\\
& = \frac{1}{\pi} \int_{\R^2} \, e^{-\i \, \xi^1 h -\frac{1}{2} \|\xi \|^2} \cdot  e^{-\frac{1}{2} \|\xi \|^2} d\xi
= e^{-\frac{1}{4}h^2}
\end{align*}
for every $h\in\R$. 
\end{remark}

Let us proceed to the issue of handling non-i.i.d.~quantum statistical models.
We start with a slightly generalized version of the quantum central limit theorem. 

\begin{lemma}[Quantum central limit theorem for sums of non-i.i.d.~observables]\label{lem:qclt_niid}
For each $k \in\N$, let $\H^{(k)}$ be a finite dimensional Hilbert space, 
and let $\sigma^{(k)}$ and $A^{(k)}=(A_{1}^{(k)},\dots,A_{r}^{(k)})$ be a quantum state and a list of observables on $\H^{(k)}$. 
Assume that $A^{(k)}$ are zero-mean:
\[
\Tr\sigma^{(k)}A_{i}^{(k)}=0\qquad(1\leq i\leq r),
\]
uniformly bounded: 
\[
\sup_{k\in\N,1\leq i\leq r}\left\Vert A_{i}^{(k)}\right\Vert <\infty,
\]
and there is an $r\times r$ nonnegative matrix $\Sigma$ such that
\[
 \lim_{n\to\infty}
 \Tr\sigma^{(k)}A_{j}^{(k)}A_{i}^{(k)}=\Sigma_{ij}
 \qquad(1\leq i,j\leq r). 
\]
Then under the tensor product states:
\[
\rho^{(n)}:=\bigotimes_{k=1}^{n}\sigma^{(k)},
\]
the scaled sums of observables
\[
X_{i}^{(n)}:=\frac{1}{\sqrt{n}}\sum_{k=1}^{n}I^{\otimes(k-1)}\otimes A_{i}^{(k)}\otimes I^{\otimes(n-k)}
\]
exhibit
\[
X^{(n)}\conv{\rho^{(n)}}N(0,\Sigma). 
\]
\end{lemma}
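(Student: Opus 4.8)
The plan is to reduce the assertion to a convergence of quasi-characteristic functions and then carry out a Lindeberg-type expansion site by site. By the definition of quantum convergence in distribution recalled in Appendix~\ref{app:degenerate_CCR}, it suffices to prove that for every $r'\in\N$ and all $\xi_1,\dots,\xi_{r'}\in\R^r$,
\[
 \lim_{n\to\infty}\Tr\rho^{(n)}\prod_{t=1}^{r'}e^{\sqrt{-1}\,\xi_t^iX_i^{(n)}}
 =\phi\Bigl(\prod_{t=1}^{r'}e^{\sqrt{-1}\,\xi_t^iX_i}\Bigr),
\]
where the right-hand side is the quasi-characteristic function \eqref{eqn:q-GaussianCharFnc} of $N(0,\Sigma)$ with $h=0$ and $J=\Sigma$. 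The first step would be a factorization over sites: since the operators $I^{\otimes(k-1)}\otimes A_i^{(k)}\otimes I^{\otimes(n-k)}$ for distinct $k$ are supported on distinct tensor factors and hence commute, one has $e^{\sqrt{-1}\,\xi^iX_i^{(n)}}=\bigotimes_{k=1}^n e^{(\sqrt{-1}/\sqrt n)\,\xi^iA_i^{(k)}}$, so that
\[
 \Tr\rho^{(n)}\prod_{t=1}^{r'}e^{\sqrt{-1}\,\xi_t^iX_i^{(n)}}
 =\prod_{k=1}^n\Tr\sigma^{(k)}\Bigl(\prod_{t=1}^{r'}e^{(\sqrt{-1}/\sqrt n)\,\xi_t^iA_i^{(k)}}\Bigr).
\]

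Next I would Taylor-expand each single-site factor to second order in $n^{-1/2}$. Writing $B_t^{(k)}:=\xi_t^iA_i^{(k)}$ and invoking the zero-mean hypothesis $\Tr\sigma^{(k)}A_i^{(k)}=0$ to kill the linear term, the $k$-th factor equals $1-c_k/n+\varepsilon_{k,n}$, where
\[
 c_k:=\tfrac12\sum_{t=1}^{r'}\Tr\sigma^{(k)}\bigl(B_t^{(k)}\bigr)^2+\sum_{1\le t<u\le r'}\Tr\sigma^{(k)}B_t^{(k)}B_u^{(k)}
 =\tfrac12\sum_{t}\xi_t^i\xi_t^j\Sigma_{ji}^{(k)}+\sum_{t<u}\xi_t^i\xi_u^j\Sigma_{ji}^{(k)},
\]
with $\Sigma^{(k)}_{ij}:=\Tr\sigma^{(k)}A_j^{(k)}A_i^{(k)}$. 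The uniform-boundedness hypothesis $\sup_{k,i}\|A_i^{(k)}\|<\infty$ is exactly what yields a remainder bound $|\varepsilon_{k,n}|\le Cn^{-3/2}$ with $C$ independent of $k$. By hypothesis $\Sigma^{(k)}\to\Sigma$, so the numbers $c_k$ form a bounded sequence converging to $c:=\tfrac12\sum_{t}\xi_t^i\xi_t^j\Sigma_{ji}+\sum_{t<u}\xi_t^i\xi_u^j\Sigma_{ji}$; in particular their Cesàro means satisfy $\tfrac1n\sum_{k=1}^nc_k\to c$.

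The final step would be to pass from the product $\prod_{k=1}^n(1-c_k/n+\varepsilon_{k,n})$ to its limit by taking logarithms: since $(c_k)$ is bounded and $|\varepsilon_{k,n}|\le Cn^{-3/2}$, for all large $n$ each factor lies in a fixed small disc around $1$, whence $\log(1-c_k/n+\varepsilon_{k,n})=-c_k/n+\varepsilon_{k,n}+O(n^{-2})$ uniformly in $k\le n$, and summing over $k$ gives $-\tfrac1n\sum_{k=1}^nc_k+nO(n^{-3/2})+nO(n^{-2})\to-c$. Hence $\Tr\rho^{(n)}\prod_t e^{\sqrt{-1}\,\xi_t^iX_i^{(n)}}\to e^{-c}$, which is precisely the value of \eqref{eqn:q-GaussianCharFnc} at $h=0$, $J=\Sigma$, finishing the proof. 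The main obstacle, modest as it is, is the error control: one must ensure the $O(n^{-3/2})$ remainder in each of the $n$ (growing in number) factors is genuinely uniform in $k$, which is the sole role of the uniform operator-norm bound; the Cesàro averaging of second moments then plays the part that the weak law of large numbers plays in the i.i.d.\ case treated in \cite{qLevyCramer2}.
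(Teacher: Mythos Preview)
Your proposal is correct and follows essentially the same approach as the paper's own proof: factorize the quasi-characteristic function over tensor sites, Taylor-expand each factor to second order using the zero-mean and uniform-boundedness hypotheses to get a uniform $O(n^{-3/2})$ remainder, then take logarithms and use Ces\`aro convergence of $\Sigma^{(k)}\to\Sigma$ to identify the limit. The only cosmetic difference is that the paper writes out the multi-index remainder $c^{(k)}(n)$ explicitly as a sum over $m_1+\cdots+m_T\ge 3$, whereas you package it as $\varepsilon_{k,n}$; the error control and the logarithm step are identical.
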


\begin{proof}
We need only check the convergence of quasi-characteristic functions
\[
 \lim_{n\to\infty}\Tr\rho^{(n)}\prod_{t=1}^{T}e^{\i\xi_{t}^{i}X_{i}^{(n)}}
 =\exp\left\{ -\frac{1}{2}\sum_{t=1}^{T}\xi_{t}^{\top}\Sigma\xi_{t}
 	-\sum_{t=1}^{T}\sum_{s=t+1}^{T}\xi_{s}^{\top}\Sigma\xi_{t} \right\},
\]
for all $T\in\N$ and $\{\xi_{t}\}_{t=1}^{T}\subset\R^{r}$.
Observe 
\begin{align*}
 & \Tr\rho^{(n)}\prod_{t=1}^{T}e^{\i\xi_{t}^{i}X_{i}^{(n)}}  \\
 &\qquad =\prod_{k=1}^{n}\left\{ \Tr\sigma^{(k)}\prod_{t=1}^{T}e^{\frac{\i}{\sqrt{n}}\xi_{t}^{i}A_{i}^{(k)}}\right\} \\
 &\qquad =\prod_{k=1}^{n}\left\{ \Tr\sigma^{(k)}\sum_{m\in\Z_{+}^{T}}\prod_{t=1}^{T}\frac{1}{m_{t}!}\left(\frac{\i}{\sqrt{n}}\xi_{t}^{i}A_{i}^{(k)}\right)^{m_{t}}\right\} \\
 &\qquad =\prod_{k=1}^{n}\left\{ 1-\frac{1}{n}\left(\frac{1}{2}\sum_{t=1}^{T}\xi_{t}^{\top}\Sigma^{(k)}\xi_{t}
 	+\sum_{t=1}^{T}\sum_{s=t+1}^{T}\xi_{s}^{\top}\Sigma^{(k)}\xi_{t} \right)+c^{(k)}(n)\right\} 
\end{align*}
where $\Z_{+}=\{0\}\cup\N$, $\Sigma_{ij}^{(k)}:=\Tr\sigma^{(k)}A_{j}^{(k)}A_{i}^{(k)}$,  and
\[
c^{(k)}(n)
 :=\sum_{m_{1}+\cdots+m_{T}\geq3}
 \frac{(\i)^{m_1+\cdots + m_T}}{n^{(m_1+\cdots + m_T)/2}} \Tr\sigma^{(k)}
 \prod_{t=1}^{T} \frac{1}{m_{t}!} \left(\xi_{t}^{i}A_{i}^{(k)}\right)^{m_{t}}. 
\]
Note that, since $\{A_{i}^{(k)}\}_{i,k}$ are assumed to be uniformly bounded,  
\[
\max_{1\le k \le n}\left|c^{(k)}(n)\right|=O\left(\frac{1}{n\sqrt{n}}\right). 
\]
Consequently, we can further evaluate the quasi-characteristic function as
\begin{align*}
 &\log \left\{ \Tr\rho^{(n)}\prod_{t=1}^{T}e^{\i\xi_{t}^{i} X_{i}^{(n)}}  \right\} \\
 &\qquad =\sum_{k=1}^{n}\log\left\{ 1-\frac{1}{n}\left(\frac{1}{2}\sum_{t=1}^{T}\xi_{t}^{\top}\Sigma^{(k)}\xi_{t}
 	+\sum_{t=1}^{T}\sum_{s=t+1}^{T}\xi_{s}^{\top}\Sigma^{(k)}\xi_{t} \right)+O\left(\frac{1}{n\sqrt{n}}\right) \right\} \\
 &\qquad =\sum_{k=1}^{n}\left\{ -\frac{1}{n}\left(\frac{1}{2}\sum_{t=1}^{T}\xi_{t}^{\top}\Sigma^{(k)}\xi_{t}
 	+\sum_{t=1}^{T}\sum_{s=t+1}^{T}\xi_{s}^{\top}\Sigma^{(k)}\xi_{t} \right)+O\left(\frac{1}{n\sqrt{n}}\right)\right\}.
\end{align*}
Taking the limit $n\to\infty$, we have
\[
 \lim_{n\to\infty}\log \left\{ \Tr\rho^{(n)}\prod_{t=1}^{T}e^{\i\xi_{t}^{i} X_{i}^{(n)}} \right\}
 =-\frac{1}{2}\sum_{t=1}^{T}\xi_{t}^{\top}\Sigma\xi_{t}
 	-\sum_{t=1}^{T}\sum_{s=t+1}^{T}\xi_{s}^{\top}\Sigma\xi_{t}.
\]
This proves the claim. 
\end{proof}

We are now ready to give an example of a sequence of quantum statistical models that is non-i.i.d.~but is, nevertheless, q-LAN and $D$-extendible.

\begin{example}
Given a sequence 
$\{ \sigma_{\theta}^{(k)}:\theta\in\Theta\subset\R^{d}\}_{k\in\N}$
of quantum statistical models on a fixed finite dimensional Hilbert space $\H$, 
let us consider their tensor products on $\H^{\otimes n}$ defined by
\[
 \rho_{\theta}^{(n)}:=\bigotimes_{k=1}^{n}\sigma_{\theta}^{(k)}.
\]
If $\sigma_{\theta}^{(k)}$ converges to a model $\sigma_\theta^{(\infty)}$ 
in a certain mode of convergence as $k\to\infty$, 
it is expected that 
the model $\rho_{\theta}^{(n)}$ will be q-LAN and $D$-extendible, because it is almost i.i.d.~in the asymptotic limit.
In what follows, we demonstrate a sufficient condition for realizing this scenario.

Assume that, for some $\theta_0\in\Theta$,
\begin{equation}\label{eqn:niid_Dext1}
 \lim_{k\to\infty} \sigma_{\theta_0}^{(k)}= \sigma_{\theta_0}^{(\infty)} \qquad 
\end{equation}
and the SLDs $\{L^{(k)}_i\}_{i=1}^d$ of $\sigma^{(k)}_{\theta}$ at $\theta_0\in\Theta$ is convergent:
\begin{equation}\label{eqn:niid_Dext2}
\lim_{k\to\infty} L_{i}^{(k)}= L_{i}^{(\infty)}\qquad (i=1,\dots, d). 
\end{equation}
Then $\rho_{\theta}^{(n)}$ with 
\[
\Delta_{i}^{(n)}=\frac{1}{\sqrt{n}}\sum_{k=1}^{n}I^{\otimes(k-1)}\otimes L_{i}^{(k)}\otimes I^{\otimes(n-k)}
\]
is $D$-extendible at $\theta_0$.

Assume further that the square-root likelihood ratios
$R_{h}^{(k)}:=\Ratio(\sigma_{\theta_{0}+h}^{(k)}\mid\sigma_{\theta_{0}}^{(k)})$
around $\theta_0$ satisfy
\begin{equation}\label{eq:niid_qlan_diff1}
\sup_{k\in\N\cup{\{\infty\}}} \left\Vert R_h^{(k)}-I-\frac{1}{2} h^i L_i^{(k)}\right\Vert = o(\Vert h \Vert)
\end{equation}
and
\begin{equation}\label{eq:niid_qlan_diff2}
\sup_{k\in\N\cup{\{\infty\}}} \left( 1-\Tr \sigma^{(k)}_{\theta_0} R_h^{(k)^2} \right)=o(\Vert h \Vert^2).
\end{equation}
In the left-hand side of \eqref{eq:niid_qlan_diff1}, the norm $\| \cdot \|$ stands for the operator norm.
Then $\rho_{\theta}^{(n)}$ is q-LAN at $\theta_0$. 
\end{example}

\begin{proof}
Let $D^{(\infty)}=(D_{1}^{(\infty)},\dots,D_{r}^{(\infty)})$ be a $\D_{\sigma^{(\infty)}}$-invariant extension of 
$L^{(\infty)}=(L_1^{(\infty)},\dots, L_d^{(\infty)})$ such that $D_{i}^{(\infty)}=L_{i}^{(\infty)}$ for $i=1,\dots,d$.
Accordingly, we define, for each $k\in\N$, a set of observables $D^{(k)}=(D_{1}^{(k)},\dots,D_{r}^{(k)})$ by
\[
 D_{i}^{(k)}
 =\begin{cases}
 L_{i}^{(k)} & \quad(1\leq i\leq d)\\
 D_{i}^{(\infty)}-(\Tr\sigma_{\theta_0}^{(k)}D_{i}^{(\infty)})I & \quad(d+1 \le i \le r)
 \end{cases}.
\]
It then follows from \eqref{eqn:niid_Dext1} and \eqref{eqn:niid_Dext2} that
\[
 \Sigma_{ij}^{(k)}:=\Tr\sigma_{\theta_0}^{(k)}D_{j}^{(k)}D_{i}^{(k)},\qquad 
 A_{ij}^{(k)}=:\Tr\sqrt{\sigma_{\theta_0}^{(k)}}D_{j}^{(k)}\sqrt{\sigma_{\theta_0}^{(k)}}D_{i}^{(k)}
\]
for $k\in\N\cup\{\infty\}$ satisfy 
\[
 \lim_{k\to\infty}\Sigma^{(k)}=\Sigma^{(\infty)},\qquad
 \lim_{k\to\infty}A^{(k)}=A^{(\infty)}.
\]
Moreover, since $D^{(\infty)}$ is $\D_{\sigma^{(\infty)}}$-invariant, we see from Lemma \ref{lem:Dinv_equiv} that $A^{(\infty)}=\Sigma^{(\infty)}\#\Sigma^{(\infty)^{\top}}$.

Let $X^{(n)}=(X^{(n)}_1,\dots, X^{(n)}_r)$, where
\[
 X_{i}^{(n)}=\frac{1}{\sqrt{n}}\sum_{k=1}^{n}I^{\otimes(k-1)}\otimes D_{i}^{(k)}\otimes I^{\otimes(n-k)}. 
\]
In order to prove the $D$-extendibility, it suffices to verify the following: 
\begin{itemize}
\item[(i)] There is an $r\times d$ matrix $F$ satisfying $\Delta_k^{(n)}=F_k^i X_i^{(n)}$ for all $n$.\\
\item[(ii)] $X^{(n)}\conv{\rho_{\theta_0}^{(n)}}N(0,\Sigma^{(\infty)})$.\\
\item[(iii)] For all $\xi,\eta\in\R^{r}$, 
\[
\lim_{n\to\infty}\Tr\sqrt{\rho_{\theta_{0}}^{(n)}}e^{\sqrt{-1}\xi^{i}X_{i}^{(n)}}\sqrt{\rho_{\theta_{0}}^{(n)}}e^{\sqrt{-1}\eta^{i}X_{i}^{(n)}}
=e^{-\frac{1}{2}
\begin{pmatrix}\xi \\ \eta \end{pmatrix}^\top
\begin{pmatrix}\Sigma^{(\infty)} & \Sigma^{(\infty)}\#\Sigma^{(\infty)^{\top}}\\
	\Sigma^{(\infty)}\#\Sigma^{(\infty)^{\top}} & \Sigma^{(\infty)^{\top}}\end{pmatrix}
\begin{pmatrix}\xi \\ \eta\end{pmatrix}.
}
\]
\end{itemize}
Firstly, by definition of $D^{(k)}$, (i) is satisfied by the following matrix
\[
 F=\left(\array{cc} I \\  O \endarray\right),
\]
where $I$ is the $d\times d$ identity matrix and $O$ is the $(r-d)\times d$ zero matrix.
We next show (ii) and (iii) simultaneously by modifying the proof of Theorem \ref{thm:iid}. 
Let us regard $\hat{\H}:=\B(\H)$ as a Hilbert space endowed with the Hilbert-Schmidt inner product
$\left\langle A,B\right\rangle_{\rm HS}:=\Tr A^{*}B$,
and let us introduce, for each $X\in B(\H)$, linear operators $\superL_{X}$ and
$\superR_{X}$ $(1\leq i\leq r)$ on $\hat{\H}$
by
\[
\superL_{X}Z=XZ,\qquad\superR_{X}Z=ZX,\qquad(Z\in B(\hat\H)).
\]
Further, let $\psi_{0}^{(k)}:=\sqrt{\sigma_{\theta_0}^{(k)}}\in \hat{\H}$ and introduce operators on $\hat{\H}^{\otimes n}$ by
\begin{align*}
\superL_{X_{i}}^{(n)} 
	& :=\frac{1}{\sqrt{n}} \sum_{k=1}^n I^{\otimes(k-1)}\otimes\superL_{X_{i}^{(k)}}\otimes I^{\otimes(n-k)},\\
\superR_{X_{i}}^{(n)} 
	& :=\frac{1}{\sqrt{n}} \sum_{k=1}^n I^{\otimes(k-1)}\otimes\superR_{X_{i}^{(k)}}\otimes I^{\otimes(n-k)}.
\end{align*}
Then, applying the quantum central limit theorem (Lemma \ref{lem:qclt_niid})
to the product states
\[ \hat{\rho}_{0}^{(n)}:=\bigotimes_{k=1}^{n}\ \ket{\psi_{0}^{(k)}}\bra{\psi_{0}^{(k)}}, \]
we obtain
\[
\left(\superL_{X^{(n)}}^{(n)},\superR_{X^{(n)}}^{(n)}\right)\conv{\hat{\rho}_{0}^{(n)}}
N\left(0,
\begin{pmatrix}
 \Sigma^{(\infty)} & \Sigma^{(\infty)}\#\Sigma^{(\infty)^{\top}}\\
 \Sigma^{(\infty)}\#\Sigma^{(\infty)^{\top}} & \Sigma^{(\infty)^{\top}}
\end{pmatrix}
\right).
\]
This proves (ii) and (iii). 

We next prove that, with additional assumptions \eqref{eq:niid_qlan_diff1} and \eqref{eq:niid_qlan_diff2}, the model $\rho_{\theta}^{(n)}$ is q-LAN at $\theta_0$.
Let $J_{ij}^{(k)}:=\Tr\sigma_{\theta_0}^{(k)}L_{j}^{(k)}L_{i}^{(k)}$ 
for $k\in\N\cup\{\infty\}$.
Since 
\[ \Delta^{(n)} \conv{\rho_{\theta_0}^{(n)}} N(0,J^{(\infty)}) \]
has been shown in (ii) of the proof of $D$-extendibility, 
it suffices to prove that 
\begin{equation}\label{eq:infinitesimal_niid}
\lim_{n\to\infty}\Tr\rho_{\theta_{0}}^{(n)}\left\{ e^{\frac{1}{2}\left(h^{i}\Delta_{i}^{(n)}-\frac{1}{2} h^{\top} J^{(\infty)} h \right)}-\bar{R}_{h}^{(n)}\right\} ^{2}=0
\end{equation}
for $\bar{R}_{h}^{(n)}:=\Ratio(\rho_{\theta_{0}+h/\sqrt{n}}^{(n)}\mid\rho_{\theta_{0}}^{(n)})$. 
The sequence appeared in the left-hand side of \eqref{eq:infinitesimal_niid} is rewritten as
\begin{align} \label{eq:niid_qlan_proof}
 \Tr \rho_{\theta_{0}}^{(n)}e^{h^{i}\Delta_{i}^{(n)}-\frac{1}{2}h^{\top} J^{(\infty)} h} 
 +\Tr\rho_{\theta_{0}}^{(n)} \bar{R}_{h}^{(n)^2}
 -2\,{\rm Re}\,\Tr\rho_{\theta_{0}}^{(n)}\bar{R}_{h}^{(n)}e^{\frac{1}{2}\left(h^{i}\Delta_{i}^{(n)}-\frac{1}{2}h^{\top} J^{(\infty)} h\right)}.
\end{align}
In order to prove \eqref{eq:infinitesimal_niid}, therefore, it suffices to verify the following:
\begin{itemize}
\item[(iv)] $\displaystyle \lim_{n\to\infty}  \Tr \rho_{\theta_{0}}^{(n)}e^{h^{i}\Delta_{i}^{(n)}-\frac{1}{2}h^{\top} J^{(\infty)} h} =1.$\\
\item[(v)] $\displaystyle \lim_{n\to\infty} \Tr\rho_{\theta_{0}}^{(n)} \bar{R}_{h}^{(n)^2}=1$.\\
\item[(vi)] $\displaystyle \lim_{n\to\infty} \Tr\rho_{\theta_{0}}^{(n)}\bar{R}_{h}^{(n)}e^{\frac{1}{2}\left(h^{i}\Delta_{i}^{(n)}-\frac{1}{2}h^{\top} J^{(\infty)} h\right)}=1$.
\end{itemize}
 
Firstly, because of \eqref{eqn:niid_Dext2}, 
the SLDs $\{L_{i}^{(k)} \}_{k\in\N, 1\le i\le d}$ are uniformly bounded, and thus
\begin{align*}
\Tr\rho_{\theta_{0}}^{(n)}e^{h^i \Delta_{i}^{(n)}} 
 & =\prod_{k=1}^{n}\Tr\sigma_{\theta_{0}}^{(k)}e^{\frac{1}{\sqrt{n}}h^{i}L_{i}^{(k)}}\\
 & =\prod_{k=1}^{n}\left(1+\frac{1}{2n}h^{\top}J^{(k)} h+O\left(\frac{1}{n \sqrt{n}}\right)\right).
\end{align*}
In the second line, $\Tr\sigma_{\theta_0}^{(k)}L_{i}^{(k)}=0$ was used, and the remainder term $O(1/n\sqrt{n})$ is uniform in $k$. 
Consequently, 
\begin{align*}
\lim_{n\to\infty}\log\Tr\rho_{\theta_{0}}^{(n)}e^{h^{i}\Delta_{i}^{(n)}} & =\lim_{n\to\infty}\sum_{k=1}^{n}\log\left(1+\frac{1}{2n}h^{\top}J^{(k)}h+O\left(\frac{1}{n\sqrt{n}}\right)\right)\\
 & =\lim_{n\to\infty}\sum_{k=1}^{n}\left(\frac{1}{2n}h^{\top}J^{(k)}h+O\left(\frac{1}{n\sqrt{n}}\right)\right) \\
 & =\frac{1}{2}h^{\top}J^{(\infty)} h,
\end{align*}
proving (iv). 

Secondly, taking the logarithm of
\[
\Tr\rho_{\theta_{0}}^{(n)} {\bar{R}_{h}^{(n)^2}}
=\prod_{k=1}^{n}\Tr\sigma_{\theta_{0}}^{(k)} {R_{h/\sqrt{n}}^{(k)^2}}, 
\]
we have
\begin{align*}
  \log\Tr\rho_{\theta_{0}}^{(n)} {\bar{R}_{h}^{(n)^2}}
 & =\sum_{k=1}^{n}\log\Tr\sigma_{\theta_{0}}^{(k)} {R_{h/\sqrt{n}}^{(k)^2}}\\
 & =\sum_{k=1}^{n}\log\left\{ 1-\left(1-\Tr\sigma_{\theta_{0}}^{(k)} {R_{h/\sqrt{n}}^{(k)^2}}\right)\right\} \\
 &= \sum_{k=1}^{n}\log\left\{ 1- o\left(\frac{1}{n}\right) \right\}.
\end{align*}
In the last equality, we used \eqref{eq:niid_qlan_diff2}. 
Since the last line converges to $0$ as $n\to\infty$, we have (v). 

In order to prove (vi), we need to show that
\[
 B^{(k)}(h):=I+\frac{1}{2}h^{i}L_{i}^{(k)}-R_{h}^{(k)}
\]
satisfies 
\begin{equation}\label{eqn:non-iid_rhoB}
 \Tr\sigma_{\theta_{0}}^{(k)}B^{(k)}(h/\sqrt{n})=\frac{1}{8n}h^{\top}J^{(k)}h+o\left(\frac{1}{n}\right), 
\end{equation}
where the remainder term $o(1/n)$ is uniform in $k$. 
This is shown by observing the identity
\begin{align*}
1-\Tr\sigma_{\theta_{0}}^{(k)} {R_{h/\sqrt{n}}^{(k)^2}} & =1-\Tr\sigma_{\theta_{0}}^{(k)}\left(I+\frac{1}{2\sqrt{n}}h^{i}L_{i}^{(k)}-B^{(k)}(h/\sqrt{n})\right)^{2}\\
 & =-\frac{1}{4n}h^{\top}J^{(k)}h-\Tr\sigma_{\theta_{0}}^{(k)} {B^{(k)}(h/\sqrt{n})}^2 \\
 & \qquad +2\,\Tr\sigma_{\theta_{0}}^{(k)}B^{(k)}(h/\sqrt{n})
 	+\frac{h^{i}}{\sqrt{n}}\,{\rm Re}\,\Tr\sigma_{\theta_{0}}^{(k)}L_{i}^{(k)}B^{(k)}(h/\sqrt{n})\\
 & =-\frac{1}{4n}h^{\top}J^{(k)}h + 2\,\Tr\sigma_{\theta_{0}}^{(k)}B^{(k)}(h/\sqrt{n})
 	+o\left(\frac{1}{n}\right).
\end{align*}
Here, \eqref{eq:niid_qlan_diff1} was used in the last equality.
Since the above quantity is of order $o(1/n)$ due to \eqref{eq:niid_qlan_diff2},
the equality \eqref{eqn:non-iid_rhoB} is proved. 

Now we are ready to prove (vi), i.e.,
\[
\Tr\rho_{\theta_{0}}^{(n)}\bar{R}_{h}^{(n)}e^{\frac{1}{2}\left(h^{i}\Delta_{i}^{(n)}-\frac{1}{2}h^{\top}J^{(\infty)} h\right)}
=e^{-\frac{1}{4}h^{\top}J^{(\infty)} h}\prod_{k=1}^{n}\Tr\sigma_{\theta_{0}}^{(k)}R_{h/\sqrt{n}}^{(k)} \; e^{\frac{1}{2 \sqrt{n}}h^iL_{i}^{(k)}}
\to 1 \quad (n\to\infty).
\]
We have from \eqref{eq:niid_qlan_diff1} and \eqref{eqn:non-iid_rhoB} that
\begin{align*}
 & \Tr\sigma_{\theta_{0}}^{(k)}R_{h/\sqrt{n}}^{(k)}\; e^{\frac{1}{2 \sqrt{n}}h^i L_i^{(k)}}\\
 &\qquad =\Tr\sigma_{\theta_{0}}^{(k)}\left(I+\frac{1}{2\sqrt{n}}h^i L_i^{(k)}-B^{(k)}(h/\sqrt{n})\right) \\
 &\qquad\qquad\qquad \times \left(I+\frac{1}{2\sqrt{n}}h^i L_i^{(k)}+\frac{1}{8n}\left(h^i L_i^{(k)}\right)^{2}+o\left(\frac{1}{n}\right) \right)\\
 &\qquad =1+\frac{1}{8n}h^{\top} J^{(k)}h+\frac{1}{4n}h^{\top}J^{(k)}h-\Tr\sigma_{\theta_{0}}^{(k)}B^{(k)}(h/\sqrt{n}) + o\left(\frac{1}{n}\right) \\
 &\qquad =1+\frac{1}{4n}h^{\top}J^{(k)}h+o\left(\frac{1}{n}\right).
\end{align*}
Therefore, 
\begin{align*}
\lim_{n\to\infty}\log\prod_{k=1}^{n}\Tr\sigma_{\theta_{0}}^{(k)}R_{h/\sqrt{n}}^{(k)}e^{\frac{1}{2 \sqrt{n}}h^{i}L_{i}^{(k)}} 
 & =\lim_{n\to\infty}\sum_{k=1}^{n}\log\left(1+\frac{1}{4n}h^{\top}J^{(k)}h+o\left(\frac{1}{n}\right)\right)\\
 & =
 \lim_{n\to\infty}\sum_{k=1}^{n}\left(\frac{1}{4n}h^{\top}J^{(k)}h+o\left(\frac{1}{n}\right)\right)\\
 & =\frac{1}{4}h^{\top}J^{(\infty)} h
\end{align*}
or equivalently,
\[
 \lim_{n\to\infty}\Tr\rho_{\theta_{0}}^{(n)}\bar{R}_{h}^{(n)}e^{\frac{1}{2}h^{i}\Delta_{i}^{(n)}}
 =e^{\frac{1}{4}h^{\top}J^{(\infty)} h}.
\]
This proves (vi), and the proof of \eqref{eq:infinitesimal_niid} is complete. 
\end{proof}

\section{Proofs of Lemmas in Section \ref{sec:preliminary}}\label{app:ProofLemmas}

In this section, we give detailed proofs of lemmas presented in Section \ref{sec:preliminary}. 

\subsection{Proof of Lemma \ref{lem:pure_gaussian}}

\begin{proof}
Let $\rho$ be the density operator of $N(0,J)$ on the irreducible representation Hilbert space $\H$. 
Then $\text{\ensuremath{\rho}}$ is pure if and only if $\Tr\rho^{2}=1$. 
On the other hand, due to the noncommutative Parseval identity \cite{holevo}, 
\begin{align*}
\Tr\rho^{2}  =\sqrt{\frac{\det S}{\pi^{d}}}\int_{\R^{d}}d\xi\left|\F_{\xi}[\rho]\right|^{2} 
 =\sqrt{\frac{\det S}{\pi^{d}}}\int_{\R^{d}}d\xi e^{-\xi^{\top}V\xi}
 =\sqrt{\frac{\det S}{\pi^{d}}}\sqrt{\frac{\pi^{d}}{\det V}}=\sqrt{\frac{\det S}{\det V}}.
\end{align*}
As a consequence, $\rho$ is pure if and only if $\det V=\det S$.  
\end{proof}

\subsection{Proof of Corollary \ref{lem:pure_double}}

\begin{proof}
We first remark that the dimension $d$ of the matrix $J$ is even because the skew-symmetric matrix $S={\rm Im}\,J$ is invertible. 
Now we set
\[ \hat{J}:=\begin{pmatrix}J & J\#J^{\top}\\ J\#J^{\top} & J^{\top} \end{pmatrix}. \] 
Then
\begin{equation*}
\det({\rm Im}\,\hat{J})
=\det \begin{pmatrix} S & 0\\ 0 & S^\top \end{pmatrix}
=(\det S)^{2}.
\end{equation*}
On the other hand, $\hat J$ is rewritten as
\[ \hat J =\frac{1}{2}\begin{pmatrix}I & I\\
I & -I
\end{pmatrix}\begin{pmatrix}V+J\#J^{\top} & \sqrt{-1}S\\
\sqrt{-1}S & V-J\#J^{\top}
\end{pmatrix}\begin{pmatrix}I & I\\
I & -I
\end{pmatrix}.
\]
Therefore, setting $S_{V}:=V^{-1/2}SV^{-1/2}$, we have
\begin{align*}
\det ({\rm Re}\,\hat{J}) & =\det \begin{pmatrix}V+J\#J^{\top} & 0\\
0 & V-J\#J^{\top}
\end{pmatrix}\\
 & =\det (V+J\#J^{\top})\det (V-J\#J^{\top})\\
 & =(\det V)^{2}\det \left(I+\sqrt{I+S_{V}^{2}} \right) \det \left(I-\sqrt{I+S_{V}^{2}} \right)\\
 & =(\det V)^{2}\det (-S_{V}^{2}) \\
 & =(\det S)^{2}. 
\end{align*}
Here we used Lemma \ref{lem:concreteA} in the third equality.
It then follows from Lemma \ref{lem:pure_gaussian} that $N(0,\hat{J})$ is a pure state. 
\end{proof}

\subsection{Proof of Lemma \ref{thm:bochoner_dshift}}

\begin{proof}
We first verify that the subspace $\mathring{\H}:={\rm Span}_{\C}\{\psi(\xi)\}_{\xi\in D}$ is dense in $\H$. 
Since $\psi$ is a cyclic vector, the subspace ${\rm Span}_{\C}\{\psi(\xi)\}_{\xi\in\R^{d}}$ is dense in $\H$. 
Further, given $\xi\in\R^{d}$, take an arbitrary sequence $\xi^{(n)}\in D$ that is convergent to $\xi$. 
Then
\begin{align*}
 \lim_{n\to\infty}\left\Vert \psi(\xi)-\psi(\xi^{(n)})\right\Vert ^{2} 
 & =2-2\, \lim_{n\to\infty}{\rm Re}\,\langle\psi(\xi), \psi(\xi^{(n)}) \rangle \\
 & =2-2\, \lim_{n\to\infty}{\rm Re\,}\, \left\{ e^{- \i\xi^{\top}S\xi^{(n)}}
 	\left\langle \psi, e^{- \i\left(\xi-\xi^{(n)}\right)^i X_i} \psi \right\rangle\right\} \\
 & =0.
\end{align*}
This proves that $\mathring{\H}$ is dense in $\H$. 

We next introduce a sesquilinear functional $F:\mathring{\H}\times\mathring{\H}\to\C$ by
\[
 F\left( \sum_{i=1}^{n}a_{i}\, \psi(\xi^{(i)}),\sum_{j=1}^{m}b_{j}\, \psi(\eta^{(j)}) \right)
 :=\sum_{i=1}^{n}\sum_{j=1}^{m}\bar{a}_{i}b_{j} \,\varphi(\xi^{(i)};\eta^{(j)}).
\]
We need to verify that $F$ is well-defined.  
Let  
\[
 \sum_{i=1}^{n}a_{i} \,\psi(\xi^{(i)})=\sum_{i=1}^{n'}a'_{i} \,\psi(\xi'^{(i)})
 \quad\mbox{and}\quad
 \sum_{j=1}^{m}b_{j}\, \psi(\eta^{(j)})=\sum_{j=1}^{m'}b'_{j}\, \psi(\eta'^{(j)}). 
\]
be different representations of the same vectors in $\mathring{\H}$. 
The well-definedness of $F$ is proved by showing the following series of equalities: 
\begin{equation}\label{eqn:proof_Lemma3.3_1}
\sum_{i=1}^{n}\sum_{j=1}^{m}\bar{a}_{i}b_{j} \, \varphi(\xi^{(i)};\eta^{(j)})
 =\sum_{i=1}^{n}\sum_{j=1}^{m'}\bar{a}_{i}b'_{j} \, \varphi(\xi^{(i)};\eta'^{(j)})
 =\sum_{i=1}^{n'}\sum_{j=1}^{m'}\bar{a}'_{i}b'_{j} \, \varphi(\xi'^{(i)};\eta'^{(j)}). 
\end{equation}
The first equality in \eqref{eqn:proof_Lemma3.3_1} is equivalent to
\[
 \sum_{j=1}^{m}b_{j} \, \varphi(\xi^{(i)};\eta^{(j)})-\sum_{j=1}^{m'}b'_{j} \, \varphi(\xi^{(i)};\eta'^{(j)})=0,
 \qquad (\forall \xi^{(i)}\in D),
\]
which is further equivalent to the following proposition: 
\begin{equation}\label{eqn:sandProp}
 \sum_{k=1}^{r} c_{k} \, \psi(\xi^{(k)})=0
 \quad \Longrightarrow \quad
 \sum_{k=1}^{r} c_{k} \, \varphi(\xi^{(0)};\xi^{(k)})=0, \qquad (\forall \xi^{(0)}\in D). 
\end{equation}
Since $0\prec \varphi \prec \varphi_I$, the antecedent of the above proposition \eqref{eqn:sandProp} implies that for any $r\in\N$, $\{\xi^{(i)}\}_{0\le i\le r}\subset D$, and $\{c_i\}_{0\le i\le r}\subset \C$ with $c_0=0$, 
\[
 0 \leq \sum_{i=0}^{r} \sum_{j=0}^{r} \bar{c}_{i} c_{j} \, \varphi(\xi^{(i)};\xi^{(j)})
    \leq \sum_{i=0}^{r} \sum_{j=0}^{r} \bar{c}_{i} c_{j} \, \varphi_{I}(\xi^{(i)};\xi^{(j)})
    =\left\Vert \sum_{k=0}^{r} c_{k} \, \psi(\xi^{(k)}) \right\Vert ^{2}=0.
\]
This shows that the vector $(c_{0}, c_1, \dots,c_{r})^\top$ belongs to the kernel of the positive-semidefinite  matrix 
\[ \left[\varphi(\xi^{(i)};\xi^{(j)})\right]_{0\leq i,j \leq r}. \]
As a consequence,  
\[ \sum_{k=1}^{r} c_{k}\,\varphi(\xi^{(0)};\xi^{(k)})=\sum_{k=0}^{r} c_{k} \,\varphi(\xi^{(0)};\xi^{(k)})=0, \]
proving the proposition \eqref{eqn:sandProp}.
The second equality in \eqref{eqn:proof_Lemma3.3_1} is proved in the same way. 

Now, fix an element $\phi_0\in\mathring{\H}$ arbitrarily.
Then the map $\phi_1 \mapsto F(\phi_1, \phi_0)$ is a bounded conjugate-linear functional on a dense subset $\mathring{\H}$ of $\H$, so that it is continuously extended to the totality of $\H$, and there is a vector $\phi_0^F \in\H$ such that
\[
\left\langle \phi_1, \phi_0^F \right\rangle =F(\phi_1, \phi_0). 
\]
Since the map $\phi_0\mapsto \phi_0^F$ is a bounded linear transformation on $\mathring{\H}$, it is continuously extended to $\H$, and there is a bounded operator $A$ satisfying $\phi_0^F=A\phi_0$ for all $\phi_0\in\H$. 
In summary, 
\[  F(\phi_1,\phi_0)=\langle \phi_1, A\phi_0\rangle. \]
Since $F(\phi_1,\phi_0)=\overline{F(\phi_0,\phi_1)}$, the operator $A$ is selfadjoint.  
Further, since $0 \prec \varphi \prec \varphi_I$, we see that $0\le A\le I$. 
Finally, since
\[ \varphi(\xi ; \eta)=F(\psi(\xi), \psi(\eta))=\langle \psi(\xi), A \psi(\eta) \rangle \]
on $D\times D$, it is continuously extended to $\R^d\times \R^d$. 
\end{proof}

\subsection{Proof of Lemma \ref{thm:commutant_qc}}

\begin{proof}
In view of (\ref{eq:commutant}), it suffices to show that $V(\varphi)$ and 
$e^{\i(\zeta_{c}^{i} \hat{X}_{c,i}+\zeta_{a}^{j} \hat{X}_{a,j})}$
commute for any $\zeta_{c}\in\R^{d_{c}}$ and $\zeta_{a}\in\R^{d_{a}}$.
For any $\xi_{c},\eta_{c}\in\R^{d_{c}}$, $\xi_{q},\eta_{q}\in\R^{d_{q}}$, and $\xi_{a},\eta_{a}\in\R^{d_{a}}$, 
\begin{align*}
 & \left\langle \psi(\xi_{c},\xi_{q},\xi_{a}),
 	e^{-\i(\zeta_{c}^{i} \hat{X}_{c,i}+\zeta_{a}^{j} \hat{X}_{a,j})}V(\varphi)
	 	e^{\i(\zeta_{c}^{i} \hat{X}_{c,i}+\zeta_{a}^{j} \hat{X}_{a,j})}
	 \psi(\eta_{c},\eta_{q},\eta_{a}) \right\rangle \\
 &\quad =e^{\i( -\zeta_{a}^{\top}S_{a}\xi_{a} +\zeta_{a}^{\top}S_{a}\eta_{a})}
 	\left\langle \psi(\xi_{c}+\zeta_{c},\xi_{q},\xi_{a}+\zeta_{a}), 
	V(\varphi) \psi(\eta_{c}+\zeta_{c},\eta_{q},\eta_{a}+\zeta_{a}) \right\rangle\\
 &\quad =e^{\i( -\zeta_{a}^{\top}S_{a}\xi_{a} +\zeta_{a}^{\top}S_{a}\eta_{a})}\varphi(\xi_{c}+\zeta_{c},\xi_{q},\xi_{a}+\zeta_{a};\eta_{c}+\zeta_{c},\eta_{q},\eta_{a}+\zeta_{a})\\
 &\quad =e^{\i( -\zeta_{a}^{\top}S_{a}\xi_{a} +\zeta_{a}^{\top}S_{a}\eta_{a})}e^{-\i(\xi_{a}+\zeta_{a})^{\top}S_{a}(\eta_{a}+\zeta_{a})}\varphi(\xi_{c}-\eta_{c},\xi_{q},\xi_{a}-\eta_{a};0,\eta_{q},0)\\
 &\quad =e^{-\i\xi_{a}^{\top}S_{a}\eta_{a}}\varphi(\xi_{c}-\eta_{c},\xi_{q},\xi_{a}-\eta_{a};0,\eta_{q},0)
\end{align*}
Since the last line is independent of $\zeta_{c}$ and $\zeta_{a}$, the proof is complete.
\end{proof}

\subsection{Proof of Lemma \ref{thm:sand_bounded}}

\begin{proof}
Let $L^\infty_{c.a.e.}(\R)$ denote the set of real-valued bounded Borel functions on $\R$ that are continuous almost everywhere. 
To prove Lemma \ref{thm:sand_bounded}, it suffices to verify that 
\begin{align}\label{eq:sand_func_simple}
&\lim_{n\to\infty}\Tr\rho^{(n)}e^{ \i \xi^{i}X_{i}^{(n)}}f(\zeta^{i}X_{i}^{(n)})A^{(n)}e^{\i\eta^{i}X_{i}^{(n)}} \\
&\qquad =\Tr\rho^{(\infty)}e^{ \i \xi^{i}X_{i}^{(\infty)}}f(\zeta^{i}X_{i}^{(\infty)})A^{(\infty)}e^{\i\eta^{i}X_{i}^{(\infty)}}
\nonumber
\end{align}
for all $\xi,\eta,\zeta\in\R^{d}$ and $f\in L^\infty_{c.a.e.}(\R)$.
In fact, since $f(\zeta^{i}X_{i}^{(n)})A^{(n)}$ are also uniformly bounded, \eqref{eq:sand_func} can be derived by applying \eqref{eq:sand_func_simple} and its complex conjugate recursively.

We first show that \eqref{eq:sand_weak} can be extended to all $\xi\in\R^{d}$.
For any $\varepsilon>0$ and $\xi\in\R^d$, there exists a $\tilde{\xi}\in\Q^{d}$ such that 
\[
\Tr\rho^{(\infty)}\left| \left( e^{\i\xi^{i}X_{i}^{(\infty)}}
 -e^{\i\tilde{\xi}^{i}X_{i}^{(\infty)}} \right)^* \right|^{2}<\varepsilon
\]
Then, by using the Schwarz inequality, 
\begin{align*}
 & \limsup_{n\to\infty}\left|\Tr\rho^{(n)}
 	\left(e^{\i\xi^{i}X_{i}^{(n)}}-e^{\i\tilde{\xi}^{i}X_{i}^{(n)}}\right)
	A^{(n)}e^{\i\eta^{i}X_{i}^{(n)}}\right|^{2} \\
 &\qquad \leq\limsup_{n\to\infty}\Tr\rho^{(n)}
 	\left| \left( e^{\i\xi^{i}X_{i}^{(n)}}-e^{\i\tilde{\xi}^{i}X_{i}^{(n)}} \right)^* \right|^{2}
 	\times\Tr\rho^{(n)}\left|A^{(n)}e^{\i\eta^{i}X_{i}^{(n)}}\right|^{2}\\
 &\qquad \leq M^{2} \;\Tr\rho^{(\infty)}
 	\left| \left(e^{\i\xi^{i}X_{i}^{(\infty)}}-e^{\i\tilde{\xi}^{i}X_{i}^{(\infty)}} \right)^* \right|^{2}\\
 &\qquad < M^{2} \varepsilon,
\end{align*}
where $M:=\sup_n \| A^{(n)} \|$. 
Similarly, we have
\[
 \left|\Tr\rho^{(\infty)}
 	\left(e^{\i\xi^{i}X_{i}^{(\infty)}}-e^{\i\tilde{\xi}^{i}X_{i}^{(\infty)}}\right)
	A^{(\infty)}e^{\i\eta^{i}X_{i}^{(\infty)}}\right|^{2}
 < M^{2} \varepsilon.
\]
It then follows from Lemma \ref{lem:extend_conv} that \eqref{eq:sand_weak} holds for all $\xi\in\R^{d}$ and 
$\eta\in\Q^{d}$. 
By applying a similar argument to $\eta$, we see that \eqref{eq:sand_weak} holds for all $\xi, \eta \in\R^{d}$. 

We next show that 
\begin{align}\label{eq:snad_weakR2}
&  \lim_{n\to\infty}\Tr\rho^{(n)}e^{\i\xi^{i}X_{i}^{(n)}}e^{\i\zeta^{i}X_{i}^{(n)}}A^{(n)}e^{\i\eta^{i}X_{i}^{(n)}} \\
 &\qquad\qquad
 =\Tr\rho^{(\infty)}e^{\i\xi^{i}X_{i}^{(\infty)}}e^{\i\zeta^{i}X_{i}^{(\infty)}}A^{(\infty)}e^{\i\eta^{i}X_{i}^{(\infty)}}
 \nonumber
\end{align}
for all $\xi,\eta,\zeta\in\R^{d}$. In fact, letting $S={\rm Im}\,J$,
\begin{align*}
 &\limsup_{n\to\infty}\left|\Tr\rho^{(n)}\left( e^{\i\xi^{i}X_{i}^{(n)}}e^{\i\zeta^{i}X_{i}^{(n)}}-e^{\i\xi^{\top}S\zeta}e^{\i\left(\xi+\zeta\right)^{i}X_{i}^{(n)}}\right) A^{(n)}e^{\i\eta^{i}X_{i}^{(n)}}\right|^{2}\\
 &\qquad \leq\limsup_{n\to\infty}\Tr\rho^{(n)}
 	\left| \left( e^{\i\zeta^{i}X_{i}^{(n)}}e^{\i\xi^{i}X_{i}^{(n)}}-e^{\i\xi^{\top}S\zeta}e^{\i\left(\xi+\zeta\right)^{i}X_{i}^{(n)}} \right)^* \right|^{2} \\
 &\qquad\qquad\qquad \times \Tr\rho^{(n)}\left|A^{(n)}e^{\i\eta^{i}X_{i}^{(n)}}\right|^{2}\\
 &\qquad \leq  M^{2} \;  \Tr\rho^{(\infty)}
 	\left| \left( e^{\i\zeta^{i}X_{i}^{(\infty)}}e^{\i\xi^{i}X_{i}^{(\infty)}}-e^{\i\xi^{\top}S\zeta}e^{\i\left(\xi+\zeta\right)^{i}X_{i}^{(\infty)}} \right)^* \right|^{2} \\
 &\qquad =0. 
\end{align*}
By using this, \eqref{eq:snad_weakR2} is proved as follows. 
\begin{align*}
 & \lim_{n\to\infty}\Tr\rho^{(n)}e^{\i\xi^{i}X_{i}^{(n)}}e^{\i\zeta^{i}X_{i}^{(n)}}A^{(n)}e^{\i\eta^{i}X_{i}^{(n)}} \\
 &\qquad\qquad =e^{\i\xi^{\top}S\zeta}\; \lim_{n\to\infty} \Tr\rho^{(n)}e^{\i\left(\xi+\zeta\right)^{i}X_{i}^{(n)}}A^{(n)}e^{\i\eta^{i}X_{i}^{(n)}} \\
 &\qquad\qquad =e^{\i\xi^{\top}S\zeta}\; \Tr\rho^{(\infty)}e^{\i\left(\xi+\zeta\right)^{i}X_{i}^{(\infty)}}A^{(\infty)}e^{\i\eta^{i}X_{i}^{(\infty)}}  \\
 &\qquad\qquad =\Tr\rho^{(\infty)}e^{\i\xi^{i}X_{i}^{(\infty)}}e^{\i\zeta^{i}X_{i}^{(\infty)}}A^{(\infty)}e^{\i\eta^{i}X_{i}^{(\infty)}}.
\end{align*}
In the second equality, \eqref{eq:sand_weak} is used. 

Now we are ready to prove (\ref{eq:sand_func_simple}).
Let $Z^{(n)}:=\zeta^{i}X_{i}^{(n)}$ for each $n\in\N\cup\{\infty\}$ and $\zeta\in\R^{d}$. 
According to (\ref{eq:snad_weakR2}), for any $f\in{\rm Span}_{\C}\{ e^{\i tx} \} _{t\in\R}$, we have
\begin{align}\label{eq:sand_bound_conv}
&\lim_{n\to\infty}\Tr\rho^{(n)}e^{\i\xi^{i}X_{i}^{(n)}} f(Z^{(n)})A^{(n)}e^{\i\eta^{i}X_{i}^{(n)}}  \\
&\qquad\qquad =\Tr\rho^{(\infty)}e^{\i\xi^{i}X_{i}^{(\infty)}} f(Z^{(\infty)})A^{(\infty)} e^{\i\eta^{i}X_{i}^{(\infty)}}.
 \nonumber
\end{align}
Our goal is to prove this identity for all $f\in L_{c.a.e.}^{\infty}(\R)$. 

Let 
\[
 \rho_\xi^{(n)}:=e^{ - \i\xi^{i}X_{i}^{(n)}}\rho^{(n)}e^{\i\xi^{i}X_{i}^{(n)}},
\]
and let $\mu_\xi^{(n)}$ be the classical probability measure on $\R$ that has the characteristic function 
\[
 \varphi_\xi^{(n)}(t) := \Tr \rho_\xi^{(n)} e^{\i t Z^{(n)}}.
\]
It then follows from (\ref{eq:sand_weak_origin}) that, for all $t\in\R$, 
\begin{align*}
\lim_{n\to\infty} \varphi_\xi^{(n)}(t)
&= \lim_{n\to\infty} \Tr \rho^{(n)} e^{ \i\xi^{i}X_{i}^{(n)}}e^{\i t\,\zeta^{i}X_{i}^{(n)}}e^{ -\i\xi^{i}X_{i}^{(n)}}  \\
 &=\Tr\rho^{(\infty)}e^{ \i\xi^{i}X_{i}^{(\infty)}}e^{\i t\,\zeta^{i}X_{i}^{(\infty)}}e^{ -\i\xi^{i}X_{i}^{(\infty)}}\\
 &=e^{-2\i t\,\zeta^{\top}S\xi}\;\Tr\rho^{(\infty)}e^{\i t\,\zeta^{i}X_{i}^{(\infty)}}\\
 &=\exp\left[ \i t\left(\zeta^{\top} h - 2\zeta^{\top}S\xi\right)-\frac{t^{2}}{2}\zeta^{\top}J\zeta \right].
\end{align*}
This shows that that
\[
  \mu_\xi^{(n)}\conv{}N\left(\zeta^{\top} h - 2\zeta^{\top}S\xi,\,\zeta^{\top}J\zeta \right).
\]
Let $p_{\xi}$ be the density function of the classical Gaussian distribution $N(\zeta^{\top} h - 2\zeta^{\top}S\xi,\,\zeta^{\top}J\zeta)$, and let $\psi_\xi:=\sqrt{p_\xi} \in L^2(\R)$. 
Then, the portmanteau lemma shows that, for all $f \in L_{c.a.e.}^{(\infty)}(\R)$,
\[
 \lim_{n\to\infty}\Tr \rho_\xi^{(n)} f(Z^{(n)})
 =\lim_{n\to\infty} \int_\R f(z) \mu_\xi^{(n)}(dz)
 =\int_\R f(z) p_\xi(z) dz,
 =\left\langle \psi_\xi, f \psi_\xi \right\rangle 
\]
where
\[ 
 \left\langle \psi_1, \psi_2 \right\rangle:=\int_\R \overline{\psi_1(z)} \psi_2(z) dz
 \qquad ( \psi_1, \psi_2\in L^2(\R) ).
\]
Now recall that
\[
 \overline{{\rm Span}}^{\rm SOT}_{\C}\left\{  e^{\i tx} \right\}_{t\in\R}=L^{\infty}(\R). 
\]
Thus, for all $\varepsilon>0$ and $f \in L_{c.a.e.}^{(\infty)}(\R)$, there exists a real-valued function
$\tilde{f}\in{\rm Span}_{\C}\{  e^{\i tx} \} _{t\in\R}$ such that 
\[
 \| (f-\tilde{f}) \psi_\xi \|^{2}=\langle \psi_\xi, ( f-\tilde{f} )^{2}  \psi_\xi\rangle <\varepsilon.
\]
Then by using the Schwarz inequality, 
\begin{align} \label{eqn:sand_bound_conv1}
 & \limsup_{n\to\infty}\left|\Tr\rho^{(n)}e^{\i\xi^{i}X_{i}^{(n)}}\left\{ f(Z^{(n)})-\tilde{f}(Z^{(n)})\right\} A^{(n)}e^{\i\eta^{i}X_{i}^{(n)}}\right|^{2}\\
 & \leq\limsup_{n\to\infty}\Tr \rho_\xi^{(n)} \left\{ f(Z^{(n)})-\tilde{f}(Z^{(n)})\right\}^{2}\times\Tr\rho^{(n)}\left|A^{(n)}e^{\i\eta^{i}X_{i}^{(n)}}\right|^{2}  \nonumber \\
 & \leq M^{2} \langle \psi_\xi, ( f-\tilde{f} )^{2} \psi_\xi \rangle  \nonumber\\
 & < M^{2} \varepsilon. \nonumber
\end{align}
Likewise, 
\begin{equation} \label{eqn:sand_bound_conv2}
\left|\Tr\rho^{(\infty)}e^{\i\xi^{i}X_{i}^{(\infty)}}\left\{ f(Z^{(\infty)})-\tilde{f}(Z^{(\infty)})\right\} A^{(\infty)}e^{\i\eta^{i}X_{i}^{(\infty)}}\right|^{2}< M^{2} \varepsilon.
\end{equation}
Now that \eqref{eq:sand_bound_conv} \eqref{eqn:sand_bound_conv1} \eqref{eqn:sand_bound_conv2} have been verified, the identity \eqref{eq:sand_func_simple} is an immediate consequence of Lemma \ref{lem:extend_conv}. \end{proof}

\subsection{Proof of Lemma \ref{thm:sand_uni_int}}

We begin with a brief review of the uniform integrability. 
Given sequences of quantum states $\{ \rho^{(n)}\} _{n\in\N}$ and observables $\{ B^{(n)}\} _{n\in\N}$ on Hilbert spaces $\{ \H^{(n)}\} _{n\in\N}$, 
we say that $B^{(n)}$ is {\em uniformly integrable} with respect to $\rho^{(n)}$ 
if for all $\varepsilon>0$, there exists $L>0$ that satisfies
\begin{equation} \label{eq:uniform_int_def}
\Tr\rho^{(n)} \left|B^{(n)}-h_L(B^{(n)})\right| <\varepsilon
\end{equation}
for all $n$. 
Here, the function $h_L$ is defined by
\begin{equation} \label{eqn:cutoffFnc}
h_L(x)=\begin{cases}
x & (|x|\leq L)\\
0 & (|x|>L)
\end{cases}.
\end{equation}
When $\Tr \rho^{(n)} |B^{(n)}|<\infty$ for all $n\in\N$, the uniform integrability is equivalent to saying that
\[
\limsup_{n\to\infty}\Tr\rho^{(n)} \left|B^{(n)}-h_L(B^{(n)}) \right| <\varepsilon.
\]
Note that \eqref{eq:uniform_int_def} implies $\Tr\rho^{(n)} |B^{(n)}|<L+\varepsilon$ for all $n$.
In other words, uniform integrability entails uniform boundedness of $\Tr\rho^{(n)} |B^{(n)}|$. 

\begin{proof}[Proof of Lemma \ref{thm:sand_uni_int}]
Set
\[
 \tilde{A}^{(n)}:=\left\{ \prod_{s=2}^{r_{1}}f_{s}(\xi_{s}^{i}X_{i}^{(n)})\right\} A^{(n)}\left\{ \prod_{t=2}^{r_{2}}g_{t}(\eta_{t}^{i}X_{i}^{(n)})\right\} ^{*}.
\]
Then $\tilde{A}^{(n)}$ is uniformly bounded, i.e., there is an $\tilde{M}>0$ such that
$\Vert \tilde{A}^{(n)}\Vert <\tilde{M}$
for all $n\in\N\cup\{\infty\}$. 
Further, set
\[
 Y^{(n)}:=\xi_{1}^{i}X_{i}^{(n)}, \quad
 Z^{(n)}:=\eta_{1}^{i}X_{i}^{(n)}, \quad
 \tilde{Y}^{(n)}:=Y^{(n)}+o_{1}^{(n)}, \quad
 \tilde{Z}^{(n)}:=Z^{(n)}+o_{2}^{(n)}.
\]
It then follows from the proof of Lemma \ref{thm:sand_bounded} that, for any $s, t \in\R$, 
\[
 \lim_{n\to\infty}\Tr\rho^{(n)}e^{\i s Y^{(n)}}\tilde{A}^{(n)}e^{\i t Z^{(n)}}
 =\Tr\rho^{(\infty)}e^{\i s Y^{(\infty)}}\tilde{A}^{(\infty)}e^{\i t Z^{(\infty)}}.
\]
and therefore
\[
 \lim_{n\to\infty}\Tr\rho^{(n)}e^{\i s \tilde{Y}^{(n)}}\tilde{A}^{(n)}e^{\i t\tilde{Z}^{(n)}}
 =\Tr\rho^{(\infty)}e^{\i s Y^{(\infty)}}\tilde{A}^{(\infty)}e^{\i t Z^{(\infty)}}.
\]
We can further deduce from Lemma \ref{thm:sand_bounded} that, for any $L>0$, 
\begin{equation}\label{eqn:proof_lem3.7_1}
 \lim_{n\to\infty}\Tr\rho^{(n)}f_L (\tilde{Y}^{(n)}) \tilde{A}^{(n)} g_L(\tilde{Z}^{(n)})
 =\Tr\rho^{(\infty)}f_L(Y^{(\infty)}) \tilde{A}^{(\infty)} g_L (Z^{(\infty)}),
\end{equation}
where $f_L:=h_L\circ f_1$ and $g_L:=h_L\circ g_1$ are bounded functions. 
Our goal is to prove that $f_L$ and $g_L$ in \eqref{eqn:proof_lem3.7_1} can be replaced with $f_1$ and $g_1$ if both $f_1(\tilde Y^{(n)})^2$ and $g_1(\tilde Z^{(n)})^2$ are uniformly integrable under $\rho^{(n)}$. 

As stated in the preliminary remark of this subsection, there exists a $K>0$ that fulfills 
\begin{equation} \label{eqn:proof_lem3.7_2}
 \max\left\{ \Tr\rho^{(n)} f_{1}(\tilde Y^{(n)})^{2},\; \Tr\rho^{(n)} g_{1}(\tilde Z^{(n)})^{2} \right\} \leq K
\end{equation}
for all $n\in\N\cup\{\infty\}$.
In addition, for any $\varepsilon>0$, there exists an $L>0$ such that 
\begin{equation}\label{eqn:proof_lem3.7_3}
 \max\left\{ 
	\Tr\rho^{(n)}\left( f_{1}({\tilde Y}^{(n)})-f_L(\tilde{Y}^{(n)}) \right)^{2},\; 
	\Tr\rho^{(n)}\left( g_{1}({\tilde Z}^{(n)})-g_L({\tilde Z}^{(n)}) \right)^{2}
 \right\} <\varepsilon 
\end{equation}
for all $n\in\N\cup\{\infty\}$.
Observe that, for all $n\in\N$, 
\begin{align}\label{eqn:proof_lem3.7_4}
 & \left|\Tr\rho^{(n)}f_{1}({\tilde Y}^{(n)})\tilde{A}^{(n)}g_{1}({\tilde Z}^{(n)})
 	-\Tr\rho^{(\infty)}f_{1}(Y^{(\infty)})\tilde{A}^{(\infty)}g_{1}(Z^{(\infty)})\right| \\
 & \qquad \leq\left|\Tr\rho^{(n)}f_{1}({\tilde Y}^{(n)})\tilde{A}^{(n)}g_{1}({\tilde Z}^{(n)})
 	-\Tr\rho^{(n)}f_L({\tilde Y}^{(n)})\tilde{A}^{(n)}g_L({\tilde Z}^{(n)})\right|  \nonumber \\
 & \qquad\quad +\left|\Tr\rho^{(n)}f_L({\tilde Y}^{(n)})\tilde{A}^{(n)}g_L({\tilde Z}^{(n)})
 	-\Tr\rho^{(\infty)}f_L(Y^{(\infty)})\tilde{A}^{(\infty)}g_L(Z^{(\infty)})\right|  \nonumber \\
 & \qquad\quad +\left|\Tr\rho^{(\infty)}f_L(Y^{(\infty)})\tilde{A}^{(\infty)}g_L(Z^{(\infty)})
 	-\Tr\rho^{(\infty)}f_{1}(Y^{(\infty)})\tilde{A}^{(\infty)}g_{1}(Z^{(\infty)})\right|. \nonumber
\end{align}
The second line in \eqref{eqn:proof_lem3.7_4} is evaluated as follows.
For any $\varepsilon>0$, take $L>0$ satisfying \eqref{eqn:proof_lem3.7_3}. 
Then by using \eqref{eqn:proof_lem3.7_2},
\begin{align} \label{eqn:proof_lem3.7_5}
 & \left|\Tr\rho^{(n)}f_{1}({\tilde Y}^{(n)})\tilde{A}^{(n)}g_{1}({\tilde Z}^{(n)})
 	-\Tr\rho^{(n)}f_L ({\tilde Y}^{(n)})\tilde{A}^{(n)}g_L({\tilde Z}^{(n)})\right|  \\
 & \qquad \leq\left|\Tr\rho^{(n)}\left\{ f_{1}({\tilde Y}^{(n)})-f_L({\tilde Y}^{(n)})\right\}
 	\tilde{A}^{(n)}g_{1}({\tilde Z}^{(n)})\right|   \nonumber \\
 & \qquad\qquad
	+\left|\Tr\rho^{(n)}f_L({\tilde Y}^{(n)})\tilde{A}^{(n)}
 	\left\{ g_{1}({\tilde Z}^{(n)})-g_L({\tilde Z}^{(n)})\right\} \right|  \nonumber \\
 & \qquad
 	\leq \sqrt{ \Tr\rho^{(n)} \left\{  f_{1}({\tilde Y}^{(n)})-f_L({\tilde Y}^{(n)}) \right\}^2 
	\times\Tr\rho^{(n)}\left|\tilde{A}^{(n)}g_{1}({\tilde Z}^{(n)})\right|^2}   \nonumber \\
 & \qquad\qquad
 	+ \sqrt{ \Tr\rho^{(n)} \left| \left( f_L ({\tilde Y}^{(n)}) \tilde{A}^{(n)} \right)^* \right|^{2}
		\times \Tr\rho^{(n)} \left\{ g_{1}({\tilde Z}^{(n)})-g_L({\tilde Z}^{(n)})\right\}^{2} }  \nonumber \\
 & \qquad < 2\tilde{M} \sqrt{ \varepsilon K}.  \nonumber
\end{align}
The last line in \eqref{eqn:proof_lem3.7_4} is evaluated just by setting $n=\infty$ in \eqref{eqn:proof_lem3.7_5}. 
Finally, the third line in \eqref{eqn:proof_lem3.7_4} is evaluated as follows: 
because of \eqref{eqn:proof_lem3.7_1}, for any $\varepsilon>0$, there is an $N\in\N$ such that $n\ge N$ implies
\[
 \left|\Tr\rho^{(n)}f_L({\tilde Y}^{(n)})\tilde{A}^{(n)}g_L({\tilde Z}^{(n)})
 	-\Tr\rho^{(\infty)}f_L(Y^{(\infty)})\tilde{A}^{(\infty)}g_L(Z^{(\infty)})\right| 
 <\varepsilon.
\]
Putting these evaluations together, we have
\[
  \left|\Tr\rho^{(n)}f_{1}({\tilde Y}^{(n)})\tilde{A}^{(n)}g_{1}({\tilde Z}^{(n)})
 	-\Tr\rho^{(\infty)}f_{1}(Y^{(\infty)})\tilde{A}^{(\infty)}g_{1}(Z^{(\infty)})\right|
 < 4\tilde{M} \sqrt{ \varepsilon K}+\varepsilon.
\]
Since $\varepsilon>0$ is arbitrary, the proof is complete. 
\end{proof}

\subsection{Proof of Corollary \ref{cor:lecam3_sand}}

\begin{proof}
The first assertion (\ref{eq:lecam3_sand4}) immediately follows from the conventional quantum Le Cam third lemma \cite[Corollary 7.5]{qcontiguity}.
We focus our attention on the proof of \eqref{eq:lecam3_sand1} and \eqref{eq:lecam3_sand2}. 

The basic observation for the proof of \eqref{eq:lecam3_sand1} is that the square-root likelihood ratio $R_{h}^{(n)}$ is an (unbounded) function of $X^{(n)}$.
Therefore, in order to invoke the extended version of the sandwiched L\'evy-Cram\'er continuity theorem (Lemma \ref{thm:sand_uni_int}), we need to show the uniform integrability.
As a matter of fact, uniform integrability of $(R_{h}^{(n)}+o_{L^{2}}(\rho_{\theta_{0}}^{(n)}) )^{2}$ under $ \rho_{\theta_{0}}^{(n)}$ has been shown in \cite[Theorem 6.2]{qcontiguity}. 
For the sake of the reader's convenience, however, we give a simplified proof.

Let $\sigma^{(\infty)}$ be the density operator of $N(0,J)$ and let $\{ \Delta_{i}^{(\infty)}\} _{i=1}^{d}$ be the corresponding canonical observables. 
Since the model $\rho_\theta^{(n)}$ is q-LAN at $\theta_0$, for any $\varepsilon>0$, there exists an $L>0$ such that
\begin{align*}
 & \limsup_{n\to\infty}\Tr\rho_{\theta_{0}}^{(n)}\left\{ \left( R_{h}^{(n)}+o_{L^{2}}(\rho_{\theta_{0}}^{(n)})\right)^{2}-h_L\left(\left\{ R_{h}^{(n)}+o_{L^{2}}(\rho_{\theta_{0}}^{(n)})\right\} ^{2}\right)\right\} \\
 &\qquad \leq\limsup_{n\to\infty}\left\{ 1-\Tr\rho_{\theta_{0}}^{(n)}h_L \left(\left\{ R_{h}^{(n)}+o_{L^{2}}(\rho_{\theta_{0}}^{(n)})\right\} ^{2}\right)\right\} \\
 &\qquad =\limsup_{n\to\infty}\left\{ 1-\Tr\rho_{\theta_{0}}^{(n)}h_L \left(e^{h^{i}\Delta_{i}^{(n)}-\frac{1}{2}h^{\top}Jh+o_{D}(h^{i}\Delta_{i}^{(n)},\,\rho_{\theta_{0}}^{(n)})}\right)\right\} \\
 &\qquad =
 	1-\Tr \sigma^{(\infty)} h_L\left(e^{h^{i}\Delta_{i}^{(\infty)}-\frac{1}{2}h^{\top}Jh}\right) \\
 &\qquad <\varepsilon. 
\end{align*}
Here, the function $h_L$ is defined by \eqref{eqn:cutoffFnc}, and the last equality is guaranteed by
the quantum L\'evy-Cram\'er continuity theorem 
(cf., Lemma \ref{thm:sand_bounded} with $A^{(n)}=I^{(n)}$).
This proves that $(R_{h}^{(n)}+o_{L^{2}}(\rho_{\theta_{0}}^{(n)}) )^{2}$ is uniformly integrable.

Now we prove \eqref{eq:lecam3_sand1}. 
Since $\{X_k^{(n)}\}_{1\le k \le r}$ is a $D$-extension of $\{\Delta_i^{(n)}\}_{1\le i \le d}$, 
\[ \Delta_i^{(n)}=F_i^k X_k^{(n)}. \] 
It then follows from the definition of q-LAN that, for all $h\in\R^{d}$,
\[
 R_{h}^{(n)}=\exp\left\{ \frac{1}{2}\left((Fh)^{i}X_{i}^{(n)}-\frac{1}{2}h^{\top}Jh+o_{h}^{(n)}\right)\right\} 
 -o_{L^{2}}(\rho_{\theta_{0}}^{(n)}),
\]
where $o_{h}^{(n)}=o_{D}((Fh)^{i}X_{i}^{(n)},\rho_{\theta_{0}}^{(n)})$, and $J=F^{\top}\Sigma F$.
Since $(R_{h}^{(n)}+o_{L^{2}}(\rho_{\theta_{0}}^{(n)}))^{2}$ is uniformly integrable for all $h$ under $\rho_{\theta_{0}}^{(n)}$, we can conclude from the extended version of the sandwiched L\'evy-Cram\'er continuity theorem (Lemma \ref{thm:sand_uni_int}) that
\begin{align*}
& \lim_{n\to\infty}\Tr\rho_{\theta_{0}}^{(n)}R_{h_{1}}^{(n)}A^{(n)}R_{h_{2}}^{(n)} \\
&\qquad =\lim_{n\to\infty}\Tr\rho_{\theta_{0}}^{(n)}e^{\frac{1}{2}\left((Fh_{1})^{i}X_{i}^{(n)}-\frac{1}{2}h_{1}^{\top}Jh_{1}+o_{h_{1}}^{(n)}\right)}A^{(n)}e^{\frac{1}{2}\left((Fh_{2})^{i}X_{i}^{(n)}-\frac{1}{2}h_{2}^{\top}Jh_{2}+o_{h_{2}}^{(n)}\right)}\\
 &\qquad =\Tr\rho_{0}^{(\infty)}e^{\frac{1}{2}\left((Fh_{1})^{i}X_{i}^{(\infty)}-\frac{1}{2}h_{1}^{\top}Jh_{1}\right)}A^{(\infty)}e^{\frac{1}{2}\left((Fh_{2})^{i}X_{i}^{(\infty)}-\frac{1}{2}h_{2}^{\top}Jh_{2}\right)}\\
 &\qquad =\Tr\rho_{0}^{(\infty)}R_{h_{1}}^{(\infty)}A^{(\infty)}R_{h_{2}}^{(\infty)}.
\end{align*}

Finally, \eqref{eq:lecam3_sand2} immediately follows from \eqref{eq:lecam3_sand1} and the fact that the singular parts asymptotically vanish \cite[Corollary 7.5]{qcontiguity},
 in that, 
\begin{equation*} 
\lim_{n\to\infty}\Tr\left|\rho_{\theta_{0}+h/\sqrt{n}}^{(n)}-R_{h}^{(n)}\rho_{\theta_{0}}^{(n)}R_{h}^{(n)}\right|=\lim_{n\to\infty}\left\{ 1-\Tr\rho_{\theta_{0}}^{(n)}R_{h}^{(n)^{2}}\right\} =0.
\end{equation*}
The proof is complete. 
\end{proof}

\subsection{Proof of Lemma \ref{lem:asymptoticCCR}}

\begin{proof}
The Hilbert-Schmidt norm under consideration is calculated as 
\begin{align*}
& \left\| W^{(n)}(\xi)W^{(n)}(\eta)\sqrt{\rho^{(n)}}
	-e^{\i \xi^\top S\eta}\; W^{(n)}(\xi+\eta)\sqrt{\rho^{(n)}} \right\|_{\rm HS}^2 \\
&\qquad  =\Tr \sqrt{\rho^{(n)}} W^{(n)}(\eta)^* W^{(n)}(\xi)^* W^{(n)}(\xi) W^{(n)}(\eta) \sqrt{\rho^{(n)}} \\
&\qquad\qquad   +\Tr \sqrt{\rho^{(n)}} W^{(n)}(\xi+\eta)^* W^{(n)}(\xi+\eta) \sqrt{\rho^{(n)}} \\
&\qquad\qquad -2\,{\rm Re}\left\{ e^{\i \xi^\top S\eta}\cdot
	\Tr \sqrt{\rho^{(n)}} W^{(n)}(\eta)^* W^{(n)}(\xi)^* W^{(n)}(\xi+\eta) \sqrt{\rho^{(n)}}  \right\} \\
&\qquad =2 -2\, {\rm Re}\left\{ e^{\i \xi^\top S\eta}\cdot
	\Tr \rho^{(n)} W^{(n)}(-\eta) W^{(n)}(-\xi) W^{(n)}(\xi+\eta) \right\}.
\end{align*}
Letting $W(\xi):=e^{\i\xi^{i}X_{i}}$, we see from the assumption $(X^{(n)},\rho^{(n)}) \conv{} N(0,J)$ that
\begin{align*}
 & \lim_{n\to\infty} \Tr \rho^{(n)} W^{(n)}(-\eta) W^{(n)}(-\xi) W^{(n)}(\xi+\eta) \\
 &\qquad = \Tr\rho^{(\infty)} W(-\eta) W(-\xi) W(\xi+\eta) \\
 &\qquad = e^{\i (-\eta)^i(-\xi)^j S_{ij}}\; \Tr\rho^{(\infty)} W(-\eta-\xi) W(\xi+\eta) \\
 &\qquad = e^{\i \eta^\top S\xi},
\end{align*}
where $\rho^{(\infty)}$ is the density operator of $N(0,J)$. Since $S$ is real skew-symmetric, 
\begin{align*}
 & \lim_{n\to\infty}\left\| W^{(n)}(\xi)W^{(n)}(\eta)\sqrt{\rho^{(n)}}
	-e^{\i \xi^\top S\eta}\; W^{(n)}(\xi+\eta)\sqrt{\rho^{(n)}} \right\|_{\rm HS}^2 \\
 &\qquad =2 -2\, {\rm Re}\left\{ e^{\i \xi^\top S\eta}\cdot e^{\i \eta^\top S\xi} \right\}
 =0.
\end{align*}
This proves the claim.
\end{proof}

\section{Proofs of Theorems in Section \ref{sec:applications}}\label{app:ProofApplications}

In this section, we give detailed proofs of theorems presented in Section \ref{sec:applications}. 

\subsection{Chain of convergence}

We begin with the following Lemma, which is elementary but is useful in later applications.

\begin{lemma}[Chain of convergence]\label{lem:extend_conv}
Let $X$ and $Y$ be sets and $\left\langle Z,d\right\rangle$ be a metric space. 
Suppose that sequences of functions $F_{n}:X\to Z$ and $G_{n}:Y\to Z$ for $n\in\N\cup\{\infty\}$ satisfy the following conditions: 
\begin{equation*}
 \lim_{n\to\infty}F_{n}(x)=F_{\infty}(x),\qquad (\forall x\in X),
\end{equation*}
and for all $\varepsilon>0$ and $y\in Y$, there exists $x\in X$ satisfying 
\begin{equation*} 
 \limsup_{n\to\infty}d\left(G_{n}(y), F_{n}(x)\right)<\varepsilon
 \quad\mbox{and}\quad 
 d\left(G_{\infty}(y), F_{\infty}(x)\right)<\varepsilon.
\end{equation*}
Then
\begin{equation*} 
\lim_{n\to\infty}G_{n}(y)=G_{\infty}(y), \qquad (\forall y\in Y). 
\end{equation*}
\end{lemma}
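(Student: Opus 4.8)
The plan is a direct three-term triangle inequality estimate in $\langle Z,d\rangle$, exploiting that the middle term is controlled by the hypothesis $F_n(x)\to F_\infty(x)$ while the two outer terms are controlled by the approximation hypothesis. No deeper machinery is needed; the content is purely a "chaining" of convergences.

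First I would fix $y\in Y$ and an arbitrary $\varepsilon>0$, and invoke the second hypothesis to produce a point $x=x(y,\varepsilon)\in X$ with
\[
\limsup_{n\to\infty}d\bigl(G_n(y),F_n(x)\bigr)<\varepsilon
\quad\text{and}\quad
d\bigl(G_\infty(y),F_\infty(x)\bigr)<\varepsilon.
\]
For each finite $n\in\N$ the triangle inequality gives
\[
d\bigl(G_n(y),G_\infty(y)\bigr)\le d\bigl(G_n(y),F_n(x)\bigr)+d\bigl(F_n(x),F_\infty(x)\bigr)+d\bigl(F_\infty(x),G_\infty(y)\bigr).
\]

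Next I would take $\limsup_{n\to\infty}$ of both sides and use subadditivity of $\limsup$: the first term on the right contributes at most $\varepsilon$ by the choice of $x$; the second term contributes $0$ because $F_n(x)\to F_\infty(x)$ by the first hypothesis; and the third term is a constant strictly below $\varepsilon$. Hence $\limsup_{n\to\infty}d\bigl(G_n(y),G_\infty(y)\bigr)\le 2\varepsilon$. Since $\varepsilon>0$ was arbitrary and the left-hand side does not depend on $\varepsilon$, it must vanish, i.e. $d\bigl(G_n(y),G_\infty(y)\bigr)\to 0$. As $y\in Y$ was arbitrary, this is the assertion.

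There is essentially no obstacle: the only point requiring a bit of care is that the approximation bound is stated with a $\limsup$ rather than a genuine limit, so the argument should be phrased throughout via $\limsup$ (and its subadditivity) rather than by passing to convergent subsequences. The harmless $2\varepsilon$ slack can be removed by applying the hypothesis with $\varepsilon/2$ from the outset if the cleaner bound $\varepsilon$ is preferred.
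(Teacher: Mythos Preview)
Your proof is correct and follows essentially the same approach as the paper: both use the three-term triangle inequality $d(G_n(y),G_\infty(y))\le d(G_n(y),F_n(x))+d(F_n(x),F_\infty(x))+d(F_\infty(x),G_\infty(y))$ and control each term via the hypotheses. The only cosmetic difference is that you argue via $\limsup$ and its subadditivity, while the paper picks an explicit threshold $N$ (obtaining $4\varepsilon$ instead of your $2\varepsilon$); the underlying idea is identical.
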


\begin{proof}
Take $\varepsilon>0$ and $y\in Y$ arbitrarily.
Then, there exist $x\in X$ and $N\in\N$ such that 
\begin{equation*} 
d\left(F_{n}(x),F_{\infty}(x)\right)<\varepsilon
\quad \mbox{and} \quad 
d\left(G_{n}(y),F_{n}(x)\right)<2\varepsilon 
\end{equation*}
for all $n\geq N$, and
\[ d\left(G_{\infty}(y), F_{\infty}(x)\right)<\varepsilon. \]
Thus
\[
d\left(G_{n}(y),G_{\infty}(y)\right)
\leq d\left(G_{n}(y),F_{n}(x)\right)+d\left(F_{n}(x),F_{\infty}(x)\right)+d\left(F_{\infty}(x),G_{\infty}(y)\right)
<4\varepsilon, 
\]
proving the claim. 
\end{proof}

\subsection{Proof of Lemma \ref{lem:equi_law}}

\begin{proof}
Let $\L$ be the classical distribution obtained by applying the shifted POVM $M-h$ to $\phi_{h}$, that is, $\L(B):=\phi_h((M-h)(B))$, which is independent of $h$ by assumption. 
Let $m$ be the first moment of $\L$.
Then, for each $i=1,\dots, d$, 
\begin{align*}
 \int_{\R^{d}} (z-h)^{i} \phi_h ((M-m)(dz))
 &= \int_{\R^{d}+m}\left(x-m-h\right)^{i} \phi_h(M(dx))  \\
 &= \int_{\R^{d}+m-h} (y-m)^{i} \phi_h ((M-h)(dy)) \\
 &= \int_{\R^{d}}(y-m)^{i} \L(dy)=0.
\end{align*}
This implies that $M-m$ is an unbiased estimator for the parameter $h$ of $\phi_h$. 
It then follows from the quantum Cram\'er-Rao type inequality \cite{holevo} that
\begin{align*}
c_{G}^{(H)}
 &\le G_{ij}\int_{\R^{d}}(z-h)^i (z-h)^j \phi_h ((M-m)(dz)) \\ 
 &= G_{ij}\int_{\R^{d}+m-h}(y-m)^{i}(y-m)^{j} \phi_h ((M-h)(dy))  \\
 &= G_{ij}\left\{ \int_{\R^{d}}y^{i}y^{j}\L(dy)-m^{i}m^{j}\right\}. 
\end{align*}
As a consequence,
\begin{align*}
  G_{ij}\int_{\R^{d}}\left(x-h\right)^{i} \left(x-h\right)^{j} \phi_h(M(dx))
 = G_{ij}\int_{\R^{d}-h} y^{i}y^{j} \phi_h ((M-h)(dy)) 
 \geq c_{G}^{(H)}, 
\end{align*}
proving the claim.
\end{proof}

Note that this result is closely related to what Holevo established in \cite{holevo} within the framework of group covariant measurement, where the achievability of the lower bound was also discussed.

\subsection{Proof of Theorem \ref{thm:regular}}

\begin{proof}
By applying the representation Theorem \ref{thm:qRep} to the sequence
\[
 N^{(n)}:=M^{(n)h}+h =\sqrt{n} (M^{(n)}-\theta_{0} )
\]
of POVMs that is independent of $h\in\R^{d}$, we see that there exists a POVM $N$ on $\phi_{h}\sim N(({\rm Re}\,\tau)h,\Sigma)$ such that 
\[
 \left(N^{(n)},\rho_{\theta_{0}+h/\sqrt{n}}^{(n)}\right)\conv h\left(N,\phi_{h}\right)
 \qquad (\forall h\in\R^d).
\]
Let $\L_h$ denote the classical probability distribution of outcomes of $N$ applied to $\phi_h$. 
Then, by construction, for any $B\in\B(\R^d)$ that satisfies $\L_h(\partial B)=0$,
\begin{align*}
  \L_h(B)
  &=\phi_h(N(B)) \\
  &=\lim_{n\to\infty} \Tr \rho_{\theta_{0}+h/\sqrt{n}}^{(n)} N^{(n)}(B) \\
  &=\lim_{n\to\infty} \Tr \rho_{\theta_{0}+h/\sqrt{n}}^{(n)} (M^{(n)h}+h)(B) \\
  &=\lim_{n\to\infty} \Tr \rho_{\theta_{0}+h/\sqrt{n}}^{(n)} M^{(n)h}(B-h) \\
  &=\L(B-h).
\end{align*}
Here, $\L$ is the limit distribution of $M^{(n) h}$ under $\rho_{\theta_{0}+h/\sqrt{n}}^{(n)}$, which is independent of $h$ by regularity. 
As a consequence,
\[
 \phi_h((N-h)(B))=\phi_h(N(B+h))=\L_h(B+h)=\L(B).
\]
Since the last side is independent of $h$, $N$ is equivalent in law. 
Thus, Lemma \ref{lem:equi_law} yields
\[
 \int_{\R^d} G_{ij} (x-h)^i (x-h)^j \phi_h(N(dx)) \ge c_G^{(rep)},
\]
which implies \eqref{eq:regular_bound}, 
and the portmanteau lemma proves \eqref{eq:regular_bound2}. 
\end{proof}

\subsection{Proof of Theorem \ref{thm:achieve}}

\begin{proof}
Let $\{ X_{i}\} _{i=1}^{r}$ be the canonical observables of 
$\phi_{h} \sim N(({\rm Re}\,\tau)h,\Sigma)$, and let
\begin{equation}\label{eqn:XtoY}
 Y_{\star\, i}:=(K_\star)_{i}^{j}X_{j},
\end{equation}
where $K_\star$ is the $r\times d$ matrix $K$ that achieves the minimum in the definition \eqref{eq:rep_bound} of $c_{G}^{(rep)}$. 
Thus, 
\[
 c_{G}^{(rep)}= \Tr G\,{\rm Re}\,Z_\star+\Tr \left|\sqrt{G}\,{\rm Im}\,Z_\star\,\sqrt{G}\right| =\Tr G V_\star,
\]
where
\[
 Z_\star=K_\star^\top \Sigma K_\star
\]
is the complex $d \times d$ matrix whose $(i,j)$th entry is $\phi_{0}(Y_{\star\, j}Y_{\star\, i})$, 
and
\[
 V_\star:={\rm Re}\,Z_\star+\sqrt{G}^{-1} \left|\sqrt{G}\,{\rm Im}\,Z_\star\,\sqrt{G}\right| \sqrt{G}^{-1}, 
\]
is a real $d \times d$ matrix. 
Note that $V_\star\ge Z_\star$, since
\[
 V_\star \ge {\rm Re}\,Z_\star+ \i \sqrt{G}^{-1} \sqrt{G}\,{\rm Im}\,Z_\star\,\sqrt{G} \sqrt{G}^{-1}=Z_\star.
\]
By analogy to \eqref{eqn:XtoY}, we introduce a sequence of transformed observables
\[
 Y_{\star\, i}^{(n)}:=(K_\star)_{i}^{j}X_{j}^{(n)}
\]
on $\H^{(n)}$. 
 
Let us consider another quantum Gaussian state $\tilde{\phi}\sim N(0,\tilde{Z}_\star)$ with $\tilde{Z}_\star:=V_\star-Z_\star$ and canonical observables $\tilde{Y}=(\tilde{Y}_{1},\dots,\tilde{Y}_{d})$ on an ancillary Hilbert space $\tilde{\H}$. 
Accordingly, for each $n\in\N$, we introduce a quantum state $\sigma_\star^{(n)}$ and observables $\tilde{Y_\star}^{(n)}=(\tilde{Y}_{\star\,1}^{(n)},\dots,\tilde{Y}_{\star\,d}^{(n)})$ on an ancillary Hilbert space $\tilde{\H}^{(n)}$ satisfying
\begin{equation} \label{eqn:achieve_toZ*}
 (\tilde{Y}_\star^{(n)},\sigma_\star^{(n)} )\conv{}N(0,\tilde{Z_\star}).
\end{equation}

A key observation is that the series of observables%
\footnote{
This construction was inspired by the optical heterodyne measurement \cite{holevo}.
}
\[
 \bar{Y}_{\star\,i}^{(n)}:=Y_{\star\,i}^{(n)}\otimes I+I\otimes\tilde{Y}_{\star\,i}^{(n)}\qquad(1\leq i\leq d)
\]
on the enlarged Hilbert spaces $\H^{(n)}\otimes \tilde\H^{(n)}$ exhibits
\begin{equation} \label{eq:achieve_toV}
 (\bar{Y}_\star^{(n)},\rho_{\theta_{0}+h/\sqrt{n}}^{(n)}\otimes\sigma_\star^{(n)})\conv{h} N(h,V_\star). 
\end{equation}
This can be verified by calculating the limit of the quasi-characteristic function
\begin{align} \label{eqn:quasiCharFnc}
 & \Tr\left(\rho_{\theta_{0}+h/\sqrt{n}}^{(n)}\otimes\sigma_\star^{(n)}\right)
 	\prod_{t=1}^T e^{\i\xi_{t}^{i}\bar{Y}_{\star\,i}^{(n)}} \\
 &\qquad =
    \left\{ \Tr \rho_{\theta_{0}+h/\sqrt{n}}^{(n)}
  	\prod_{t=1}^T e^{\i\xi_{t}^{i} (K_\star)_i^j X_{j}^{(n)}} \right\}  \times 
	 \left\{ \Tr \sigma_\star^{(n)}
  	\prod_{t=1}^T e^{\i\xi_{t}^{i}\tilde{Y}_{\star\,i}^{(n)}} \right\}, \nonumber
\end{align}
where $\{ \xi_{t}\} _{t=1}^T \subset\R^{d}$. 
In fact, since 
\[  (X^{(n)},\rho_{\theta_{0}+h/\sqrt{n}}^{(n)})\conv hN(({\rm Re}\,\tau)h,\Sigma), \]
which follows from the quantum Le Cam third lemma (Corollary \ref{cor:lecam3_sand}), 
the first factor in the second line of \eqref{eqn:quasiCharFnc} has the limit
\begin{align} \label{eqn:quasiCharFnc1}
 & \exp\left[ \sum_{t=1}^T \left\{ \i \, \xi_t^i (K_\star)_i^j  ({\rm Re}\,\tau)_{jk}  h^k 
 	-\frac{1}{2} \xi_t^i (K_\star)_i^k \cdot \xi_t^j (K_\star)_j^\ell \, \Sigma_{\ell k}\right\} \right. \\
 &\qquad\qquad\qquad\qquad  \left.
	-\sum_{t=1}^T \sum_{u=r+1}^T \xi_t^i (K_\star)_i^k \cdot \xi_u^j (K_\star)_j^\ell \, \Sigma_{\ell k}
	\right] \nonumber \\
 &\qquad = 
 	\exp\left[ \sum_{t=1}^T \left\{ \i \, \xi_t^i \delta_{ij} h^j
 	-\frac{1}{2} \xi_t^i \xi_t^j (Z_\star)_{ji} \right\}
	-\sum_{t=1}^T \sum_{u=r+1}^T \xi_t^i \xi_u^j (Z_\star)_{ji} \right]. \nonumber
\end{align}
Here, the equalities $K_\star^\top ({\rm Re}\,\tau)=I$ and $K_\star^\top \Sigma K_\star=Z_\star$ have been used.
On the other hand, due to the assumption \eqref{eqn:achieve_toZ*}, the second factor in the second line of \eqref{eqn:quasiCharFnc} has the limit
\begin{align} \label{eqn:quasiCharFnc2}
 \exp\left[ \sum_{t=1}^T \left\{ -\frac{1}{2} \xi_t^i \xi_t^j (\tilde Z_\star)_{ji} \right\}
	-\sum_{t=1}^T \sum_{u=r+1}^T \xi_t^i \xi_u^j (\tilde Z_\star)_{ji} \right].
\end{align}
Since $Z_\star+\tilde Z_\star=V_\star$, \eqref{eqn:quasiCharFnc}, \eqref{eqn:quasiCharFnc1}, and \eqref{eqn:quasiCharFnc2} yield
\begin{align*}
 & \lim_{n\to\infty}
 	\Tr\left(\rho_{\theta_{0}+h/\sqrt{n}}^{(n)}\otimes\sigma_\star^{(n)}\right)
 	\prod_{t=1}^T e^{\i\xi_{t}^{i}\bar{Y}_{\star\,i}^{(n)}} \\
 &\qquad = 
 	\exp\left[ \sum_{t=1}^T \left\{ \i \, \xi_t^i  \delta_{ij} h^j 
 	-\frac{1}{2} \xi_t^i \xi_t^j (V_\star)_{ji} \right\}
	-\sum_{t=1}^T \sum_{u=r+1}^T \xi_t^i \xi_u^j (V_\star)_{ji} \right].
\end{align*}
This is nothing but the quasi-characteristic function of the classical Gaussian shift model $N(h,V_\star)$, proving \eqref{eq:achieve_toV}. 

We next construct a sequence $M_\star^{(n)}$ of POVMs by means of functional calculus for $\bar{Y}_\star^{(n)}$. 
Since $\{ \bar{Y}_{\star\,i}^{(n)}\} _{i=1}^{d}$ do not in general commute, we need some elaboration.
For each positive integer $m\in\N$, define an indicator function $S^{(m)}:\R\to\{0,1\}$
by
\[
S^{(m)}(x):=\indicate_{(-\frac{1}{2m},\frac{1}{2m}]}(x)=\begin{cases}
1, & \text{if }x\in(-\frac{1}{2m},\frac{1}{2m}]\\
0, & \text{if }x\not\in(-\frac{1}{2m},\frac{1}{2m}]
\end{cases}.
\]
Then, for each $x\in\R$ and $n\in\N$, the map enjoys the identity
\begin{equation}
\sum_{k\in\Z}S^{(m)}\left(x-\frac{k}{m}\right)=1.\label{eq:S_achieve}
\end{equation}
For each pair $(n,m)$ of positive integers, define
\[
M_\star^{(n,m)}(\omega):=\left(\prod_{i=1}^{d}S^{(m)}\left(\bar{Y}_{\star\,i}^{(n)}-\omega_{i}\right)\right)\left(\prod_{i=1}^{d}S^{(m)}\left(\bar{Y}_{\star\,i}^{(n)}-\omega_{i}\right)\right)^{*},
\]
where
\[
\omega=\left(\omega_{1},\dots,\omega_{d}\right)\in\Omega^{(m)}:=\left\{ \left(\frac{z_{1}}{m},\dots,\frac{z_{d}}{m}\right)\mid z_{1},\dots,z_{d}\in\Z\right\} .
\]
It then follows from (\ref{eq:S_achieve}) that $M_\star^{(n,m)}$ is a
POVM on $\H^{(n)}\otimes\tilde{\H}$ whose outcomes take values on
$\Omega^{(m)}$. 

Note that, due to (\ref{eq:achieve_toV}) and the quantum L\'evy-Cram\'er continuity theorem 
(cf., Lemma \ref{thm:sand_bounded} with $A^{(n)}=I^{(n)}$),
as well as the fact that the set of
discontinuity points of $S^{(m)}$ has Lebesgue measure zero, the
following equality holds:
\[
\lim_{n\to\infty}\Tr\left(\rho_{\theta_{0}+h/\sqrt{n}}^{(n)}\otimes\sigma_\star^{(n)}\right)M_\star^{(n,m)}(\omega)
=\int_{\R^{d}}p_{h}(x)\prod_{i=1}^{d}S^{(m)}(x_{i}-\omega_{i})dx,
\]
where $p_{h}(x)$ denotes the probability density function of $N(h,V_\star)$.
Note also that for each $t=(t_{i})\in\R^{d}$, 
the indicator function $\chi_{t}(x):=\indicate_{(-\infty,t]}(x)$
fulfills the following equality
\[
 \lim_{m\to\infty}\sum_{\omega\in\Omega^{(m)}}\chi_{t}(\omega)\prod_{i=1}^{d}S^{(m)}(x_{i}-\omega_{i})
 =\chi_{t}(x)
\]
for all $x\in\R^{d}$ but $x=t$. Combining these equalities, we have
\[
 \lim_{m\to\infty}\lim_{n\to\infty}\sum_{\omega\in\Omega^{(m)}}\chi_{t}(\omega)\Tr\left(\rho_{\theta_{0}+h/\sqrt{n}}^{(n)}\otimes\sigma_\star^{(n)}\right)M_\star^{(n,m)}(\omega)
 =\int_{\R^{d}}\chi_{t}(x)p_{h}(x)dx
\]
for all $h\in\R^{d}$ and $t\in\R^{d}$. 

As a consequence, the diagonal sequence trick shows that there exists a subsequence $\{m(n)\}_{n\in\N}$
such that
\[
\lim_{n\to\infty}\sum_{\omega\in\Omega^{(m(n))}}\chi_{t}(\omega)\Tr\left(\rho_{\theta_{0}+h/\sqrt{n}}^{(n)}\otimes\sigma_\star^{(n)}\right)M_\star^{(n,m(n))}(\omega)
=\int_{\R^{d}}\chi_{t}(x)p_{h}(x)dx
\]
for all $h\in\Q^{d}$ and $t\in\Q^{d}$. 
Setting $\bar{M}_\star^{(n)}:=M_\star^{(n,m(n))}$
and $\bar{\Omega}^{(n)}:=\Omega^{(m(n))}$, we get
\begin{equation}\label{eq:achieve_Q1}
\lim_{n\to\infty}\sum_{\omega\in\bar{\Omega}^{(n)}}\chi_{t}(\omega)\Tr\left(\rho_{\theta_{0}+h/\sqrt{n}}^{(n)}\otimes\sigma_\star^{(n)}\right)\bar{M}_\star^{(n)}(\omega)
=\int_{\R^{d}}\chi_{t}(x)p_{h}(x)dx
\end{equation}
for all $h\in\Q^{d}$ and $t\in\Q^{d}$. Moreover, since both sides
of (\ref{eq:achieve_Q1}) are monotone increasing in $t$, and the
right-hand side is continuous in $t$, (\ref{eq:achieve_Q1}) holds
for all $t\in\R^{d}$ and all $h\in\Q^{d}$. 

Now let, for each $t\in\R^{d}$, 
\[
M_{\star\,t}^{(n)}
:=\Tr_{\tilde{\H}^{(n)}}\left\{ \sum_{\omega\in\bar{\Omega}^{(n)}}\chi_{t}(\omega)\bar{M}_\star^{(n)}(\omega)\right\} \left(I^{(n)}\otimes\sigma_\star^{(n)}\right),
\]
where $\Tr_{\tilde{\H}^{(n)}}$ denotes the partial trace on $\tilde\H^{(n)}$. 
Then, $M_{\star\,t}^{(n)}$ is a resolution of identity on $\H^{(n)}$ satisfying
\begin{align}\label{eq:achieve_Q2}
\lim_{n\to\infty}\Tr\rho_{\theta_{0}+h/\sqrt{n}}^{(n)}M_{\star\,t}^{(n)} 
 & =\lim_{n\to\infty}\sum_{\omega\in\bar{\Omega}^{(n)}}\chi_{t}(\omega)\Tr\left(\rho_{\theta_{0}+h/\sqrt{n}}^{(n)}\otimes\sigma_\star^{(n)}\right)\bar{M}_\star^{(n)}(\omega) \\
 & =\int_{\R^{d}}\chi_{t}(x)p_{h}(x)dx  \nonumber
\end{align}
for all $t\in\R^{d}$ and all $h\in\Q^{d}$. 

Finally, we extend the identity \eqref{eq:achieve_Q2} to all $h\in\R^{d}$. 
Let 
\[ 
 R_{h}^{(n)}:=\Ratio\left( \left. \rho_{\theta_{0}+h/\sqrt{n}}^{(n)} \,\right|\, \rho_{\theta_{0}}^{(n)}\right)
 \quad\mbox{and}\quad
 R_{h}^{(\infty)}:=e^{\frac{1}{2}\left((Fh)^{i}X_{i}^{(\infty)}-\frac{1}{2}h^{\top}F^{\top}\Sigma Fh\right)}
\]
and fix $h\in\R^{d}$ and $t\in\R^{d}$ arbitrarily. 
Then for all $\varepsilon>0$, there exists an $\tilde{h}\in\Q^{d}$ that satisfies
\[
 \phi_{0}\left(\left(R_{h}^{(\infty)}-R_{\tilde{h}}^{(\infty)}\right)^{2}\right)<\varepsilon. 
\]
On the other hand, 
\begin{align} \label{eq:achieve_R1}
 & \limsup_{n\to\infty}\left|\Tr\left(\rho_{\theta_{0}+h/\sqrt{n}}^{(n)}-\rho_{\theta_{0}+\tilde{h}/\sqrt{n}}^{(n)}\right)
 	M_{\star\,t}^{(n)}\right| \\
 &\qquad =\limsup_{n\to\infty}\left|\Tr\left(R_{h}^{(n)}\rho_{\theta_{0}}^{(n)}R_{h}^{(n)}-R_{\tilde{h}}^{(n)}\rho_{\theta_{0}}^{(n)}R_{\tilde{h}}^{(n)}\right)M_{\star\,t}^{(n)}\right| \nonumber \\
 &\qquad \leq\limsup_{n\to\infty}\left\{ \left|\Tr\left(R_{h}^{(n)}-R_{\tilde{h}}^{(n)}\right)\rho_{\theta_{0}}^{(n)}R_{h}^{(n)}M_{\star\,t}^{(n)}\right| \right. \nonumber \\
 &\qquad\qquad \qquad\qquad 
  \left. +\left|\Tr R_{\tilde{h}}^{(n)}\rho_{\theta_{0}}^{(n)}\left(R_{h}^{(n)}-R_{\tilde{h}}^{(n)}\right)M_{\star\,t}^{(n)}\right|\right\}. \nonumber
\end{align}
Here, the second line follows from \eqref{eq:lecam3_sand2} in Corollary \ref{cor:lecam3_sand}, which tells us that the contribution of the singular parts of $\rho_{\theta_{0}+h/\sqrt{n}}^{(n)}$ are asymptotically negligible. 
By using Corollary \ref{cor:lecam3_sand}, 
the third line of \eqref{eq:achieve_R1} is evaluated as follows: 
\begin{align*}
 & \limsup_{n\to\infty}\left|\Tr\left(R_{h}^{(n)}-R_{\tilde{h}}^{(n)}\right)\rho_{\theta_{0}}^{(n)}R_{h}^{(n)}
 	M_{\star\,t}^{(n)}\right|^{2}\\
 &\qquad \leq\limsup_{n\to\infty}\Tr\rho_{\theta_{0}}^{(n)}\left(R_{h}^{(n)}-R_{\tilde{h}}^{(n)}\right)^{2}
 	\times \Tr\rho_{\theta_{0}}^{(n)}\left|  \left(R_{h}^{(n)}M_{\star\,t}^{(n)}\right)^*  \right|^{2}\\
 &\qquad \leq\limsup_{n\to\infty}\Tr\rho_{\theta_{0}}^{(n)}\left(R_{h}^{(n)}-R_{\tilde{h}}^{(n)}\right)^{2}\\
 &\qquad =\phi_{0}\left(\left(R_{h}^{(\infty)}-R_{\tilde{h}}^{(\infty)}\right)^{2}\right)<\varepsilon.
\end{align*}
Since the fourth line of \eqref{eq:achieve_R1} is evaluated similarly, 
we can conclude that
\begin{equation}\label{eq:achieve_R3}
\limsup_{n\to\infty}\left|\Tr\left(\rho_{\theta_{0}+h/\sqrt{n}}^{(n)}-\rho_{\theta_{0}+\tilde{h}/\sqrt{n}}^{(n)}\right)
	M_{\star\,t}^{(n)}\right|<2\sqrt{\varepsilon}
\end{equation}
In a quite similar way, we can prove that
\begin{equation}\label{eq:achieve_R4}
\left|\phi_{h}\left(M_{t}^{(\infty)}\right)-\phi_{\tilde{h}}\left(M_{\star\,t}^{(\infty)}\right)\right|<2\sqrt{\varepsilon}
\end{equation}
Now that we have established \eqref{eq:achieve_Q2} \eqref{eq:achieve_R3} \eqref{eq:achieve_R4}, 
Lemma \ref{lem:extend_conv} leads us to
\begin{align*}
\lim_{n\to\infty}\Tr\rho_{\theta_{0}+h/\sqrt{n}}^{(n)}M_{\star\,t}^{(n)} 
 =\phi_{h}\left(M_{\star\,t}^{(\infty)}\right)
 =\int_{\R^{d}}\chi_{t}(x)p_{h}(x)dx
\end{align*}
for all $t\in\R^{d}$ and $h\in\R^{d}$. 
To put it differently, letting $M_\star^{(n)}$ be the POVM that corresponds to the resolution of identity $M_{\star\,t}^{(n)}$, we have
\[
\left(M_\star^{(n)},\rho_{\theta_{0}+h/\sqrt{n}}^{(n)}\right)\conv hN(h,V_\star)\qquad(\forall h\in\R^{d}). 
\]
This completes the proof. 
\end{proof}

\subsection{Proof of Theorem \ref{thm:q_gauss_minimax}}

\begin{proof}
For the quantum Gaussian shift model $\phi_h\sim N(({\rm Re}\,\tau) h,\Sigma)$, take a family of unitary operators $\{ U(k) \} _{k\in\R^{d}}$ on $\H$ that satisfy
\[
\phi_{h}(U(k)^{*}AU(k))=\phi_{h+k}(A) \qquad (\forall A\in {\rm CCR}({\rm Im}\Sigma) ).
\]
Given a POVM $M$, let $M_t:=M(-\infty, t]$ be the corresponding resolution of identity, and let us define, for each $t\in\R^d$ and $L\in\N$, a bounded operator
\[
 \hat{N}_t^{(L)}
 :=\frac{1}{(2L)^{d}}\int_{[-L,L]^{d}} U(k)^{*}M_{t+k}U(k) dk,
\]
where the integration is taken in the weak operator topology (WOT). 
It is not difficult to show that $\{\hat{N}_t^{(L)}\}_{t\in\R^d}$ is a resolution of identity for all $L\in\N$. 
From this resolution of identity, we shall construct a POVM $N$ that is equivalent in law and surpasses the original $M$.
Here we follow the method used in Step 2 of the proof of Theorem \ref{thm:qRep}.

Take a cyclic vector $\psi$ on the Hilbert space $\H$, and consider the sandwiched coherent state representation 
\[
 \varphi_t^{(L)}(\xi ; \eta):=\left\langle e^{\i \xi^i X_i} \psi, \hat{N}_t^{(L)} e^{\i \eta^i X_i} \psi \right\rangle.
\]
Since $\left| \varphi_t^{(L)}(\xi ; \eta) \right|\le 1$ for all $L\in\N$, $t\in\R^d$, and $\xi,\eta\in \R^r$, the diagonal sequence trick shows that there is a subsequence $\{ L_{m} \}\subset \{L\}$ through which $\varphi_\alpha^{(L_m)}(\xi ; \eta)$ is convergent for all $\alpha\in\Q^d$ and $\xi, \eta\in \Q^r$, yielding a limiting function $\varphi_\alpha(\xi; \eta)$.
Due to Lemma \ref{thm:bochoner_dshift}, this limiting function uniquely determines an operator $\hat{N}_\alpha$ that satisfies
\[ 
 \varphi_\alpha(\xi; \eta)=\left\langle e^{\i \xi^i X_i} \psi, \hat{N}_\alpha e^{\i \eta^i X_i} \psi \right\rangle.
\]
In this way, we obtain the WOT-limit
\begin{equation} \label{eq:gaus_minimax_proof_wot}
 \hat N_\alpha:=\lim_{m\to\infty} \hat{N}_\alpha^{(L_m)}
\end{equation}
for all $\alpha\in\Q^d$.
Further, for each $t\in\R^{d}$, let
\begin{equation}\label{eq:gaus_minimax_proof1}
 \bar{N}_{t}:=\inf_{\alpha>t, \alpha\in\Q^d}\hat{N}_{\alpha}.
\end{equation}
Then $\{\bar N_t\}_{t\in\R^d}$ determines a POVM $\bar N$ over $\bar{\R}^{d}$, 
and by transferring the measure at infinity $\bar{N}(\bar{\R}^{d}\setminus\R^{d})$ to the origin, we have a POVM $N$ over $\R^{d}$ defined by
\[
  N(B):=\bar{N}(B)+\delta_{0}(B)\bar{N}(\bar{\R}^{d}\setminus\R^{d})
  \qquad (B\in\B(\R^d)).
\]

Let us prove that $\phi_h(\bar{N}(B))=\phi_h(N(B))$ for all $B\in\B(\R^d)$.
For each $m\in\N$, let $\hat{N}^{(L_{m})}$ be the POVM that corresponds to the resolution of identity $\hat{N}_t^{(L_{m})}$, and let
\begin{align*}
 \mu_{h}^{(m)}(B)
 &:=\phi_h\left(\hat{N}^{(L_{m})}(B)\right) \\
 &=\frac{1}{(2L_m)^d}  \int_{[-L_{m},L_{m}]^{d}} \phi_h\left(U(k)^* M(B+k) U(k)\right) dk \\
 &=\frac{1}{(2L_m)^d}  \int_{[-L_{m},L_{m}]^{d}} \phi_{h+k}\left(M(B+k)\right) dk.
\end{align*}
Then, letting $y:=x+k$, 
\begin{align}\label{eq:gaus_minimax_proof_bound}
 & \int_{\R^{d}}G_{ij}(x-h)^{i}(x-h)^{j}\mu_{h}^{(m)}(dx) \\
 &\qquad =\frac{1}{(2L_m)^d} \int_{[-L_{m},L_{m}]^{d}}dk
 	\int_{\R^{d}+k} G_{ij}(y-h-k)^{i}(y-h-k)^{j}\phi_{h+k}\left(M(dy)\right)\nonumber \\
 &\qquad \le \sup_{\ell \in\R^{d}}\int_{\R^{d}}G_{ij}(y-\ell)^{i}(y-\ell)^{j} \phi_{\ell} \left(M(dy)\right). \nonumber
\end{align}
This shows that the second moments of $\{ \mu_{h}^{(m)}\} _{m\in\N}$ are uniformly bounded. 
As a consequence, $\{ \mu_{h}^{(m)}\} _{m\in\N}$ is uniformly tight, and by Prohorov's lemma, there exists a subsequence $\{m_{s}\}\subset\{m\}$ and a probability measure $\check{\mu}_{h}$ that satisfy $\mu_{h}^{(m_{s})}\conv{}\check{\mu}_{h}$.
We show that
\begin{equation} \label{eq:gaus_proof_bar_tilde}
 \check{\mu}_{h}(B)=\phi_{h}(\bar{N}(B))=\phi_{h}(N(B))
\end{equation}
for all $B\in\B(\R^d)$. 
Actually, since $\check{\mu}_{h}$ is a probability measure on $\R^d$, having no positive mass at infinity,
it suffices to prove that $\check{\mu}_{h}(-\infty,t]=\phi_{h}(\bar{N}_{t})$ for all continuity point $t\in\R^d$ of $t\mapsto\check{\mu}_{h}(-\infty,t]$. 
For any $\alpha\in\Q^{d}$ satisfying $\alpha>t$,
\begin{align*}
\check{\mu}_{h}(-\infty,t] & =\lim_{s\to\infty}\phi_{h}\left(\hat{N}_{t}^{(m_{s})}\right)
  \leq\lim_{s\to\infty}\phi_{h}\left(\hat{N}_{\alpha}^{(m_{s})}\right)
 \leq\check{\mu}_{h}(-\infty,\alpha].
\end{align*}
In the last inequality, the portmanteau lemma is used. 
Taking the limit $\alpha \downarrow t$, and recalling the definition \eqref{eq:gaus_minimax_proof1} as well as
\[ \lim_{s\to\infty}\phi_{h}\left(\hat{N}_{\alpha}^{(m_{s})}\right)=\phi_{h}\left(\hat{N}_{\alpha}\right), \]
which follows from \eqref{eq:gaus_minimax_proof_wot}, 
we have $\check{\mu}_{h}(-\infty,t]=\phi_{h}(\bar{N}_{t})$. 

Now we proceed to the proof Theorem \ref{thm:q_gauss_minimax}. 
To this end, it suffices to show the following  (i) and (ii):
\begin{itemize}
\item[(i)] $N$ is equivalent in law.
\item[(ii)] $N$ satisfies the following inequality: 
\begin{align*} 
 &\sup_{h\in\R^{d}}\int_{\R^{d}}G_{ij}(x-h)^{i}(x-h)^{j}\phi_{h}\left(N(dx)\right) \\
 &\qquad \leq \sup_{h\in\R^{d}}\int_{\R^{d}}G_{ij}(x-h)^{i}(x-h)^{j}\phi_{h}(M(dx)). 
\end{align*}
\end{itemize}
In fact, suppose that (i) is true. 
Then Lemma \ref{lem:equi_law} tells us that the first line of the inequality in (ii) is further bounded from below by 
the Holevo bound $c^{(H)}_G$.
This is nothing but the desired minimax theorem.

Let us prove (ii) first.
From \eqref{eq:gaus_minimax_proof_bound}, we have
\begin{align*}
 &\sup_{h\in\R^{d}}\int_{\R^{d}}G_{ij}(x-h)^{i}(x-h)^{j}\phi_{h}(M(dx)) \\
 &\qquad \geq\liminf_{s\to\infty}\int_{\R^{d}}G_{ij}(x-h)^{i}(x-h)^{j} \mu_h^{(m_s)}(dx) \\
 &\qquad \geq\int_{\R^{d}}G_{ij}(x-h)^{i}(x-h)^{j}\check{\mu}_{h}(dx)\\
 &\qquad =\int_{\R^{d}}G_{ij}(x-h)^{i}(x-h)^{j}\phi_{h}\left(N(dx)\right).
\end{align*}
Here, the second inequality follows from the portmanteau lemma,
and the last equality from \eqref{eq:gaus_proof_bar_tilde}. 
Since the first line is independent of $h$, we have (ii). 

We next prove (i), that is, 
\begin{equation} \label{eq:gaus_minimax_proof4}
 \phi_{h}\left(N_{t+h}\right)=\phi_{0}\left(N_{t}\right)
\end{equation}
for all $t\in\R^{d}$ and $h\in\R^{d}$.
Since
\[
 \phi_{h}\left(N_{t+h}\right)
 =\phi_{h}\left(\bar{N}_{t+h}\right)
 =\inf_{\alpha>t,\alpha\in\Q^{d}}\phi_{h}\left(\hat{N}_{\alpha+h}\right),
\]
it suffice to prove  
\begin{equation}\label{eq:gaus_minimax_proof5}
 \phi_{h}\left(\hat{N}_{\alpha+h}\right)=\phi_{0}\left(\hat{N}_{\alpha}\right)
\end{equation}
for all $\alpha\in\Q^{d}$ and $h\in\R^{d}$. 
The left-hand side is rewritten as
\begin{align*}
\phi_{h}\left(\hat{N}_{\alpha+h}\right) 
 & =\lim_{m\to\infty}\phi_{h}\left(\hat{N}_{\alpha+h}^{(L_{m})}\right)\\
 & =\lim_{m\to\infty}\frac{1}{(2L_{m})^{d}}\int_{[-L_{m},L_{m}]^{d}} 
 	\phi_{h}\left(U(k)^*M_{\alpha+h+k}U(k)\right) dk\\
 & =\lim_{m\to\infty}\frac{1}{(2L_{m})^{d}}\int_{[-L_{m},L_{m}]^{d}} 
 	\phi_{h+k}\left(M_{\alpha+h+k}\right) dk \\
 & =\lim_{m\to\infty}\frac{1}{(2L_{m})^{d}}\int_{[-L_{m},L_{m}]^{d}+h} 
 	\phi_{\ell}\left(M_{\alpha+\ell}\right) d\ell. 
\end{align*}
Since $|h^i|<2L_m$ ($i=1,\dots,d$) for sufficiently large $m$, 
\begin{align*}
\left|\phi_{h}\left(\hat{N}_{\alpha+h}\right)-\phi_{0}\left(\hat{N}_{\alpha}\right)\right| & \leq\lim_{m\to\infty}\frac{1}{(2L_{m})^{d}}\int_{\R^{d}}\ind_{\left([-L_{m},L_{m}]^{d}+h\right)\triangle\left([-L_{m},L_{m}]^{d}\right)}(k)dk\,\\
 & =\lim_{m\to\infty}\frac{1}{(2L_{m})^{d}} \times 
 	2\left\{ \left(2L_{m}\right)^{d}-\prod_{i=1}^{d}(2L_{m}-| h^{i} | )\right\} \\
 & =\lim_{m\to\infty}2\left\{ 1-\prod_{i=1}^{d}(1-\frac{|h^{i}|}{2L_{m}})\right\} =0. 
\end{align*}
Here, $\triangle$ denotes the symmetric difference. 
This proves \eqref{eq:gaus_minimax_proof5}. 
\end{proof}

\subsection{Proof of Theorem \ref{thm:minimax_local}}

\begin{proof}
For notational simplicity, we denote by $\mu_{h}^{(n)}$ the probability measure of outcomes of POVM $M^{(n)}$ applied to 
$\rho_{\theta_{0}+h/\sqrt{n}}^{(n)}$. 
The first inequality immediately follows from the fact that 
for any $\delta>0$ and finite subset $H$ of $\R^{d}$, there exist $N\in\N$ so that $n\ge N$ implies 
\[
\sup_{\left\Vert h\right\Vert \leq\delta\sqrt{n}}\int_{\R^{d}}G_{ij}(x-h)^{i}(x-h)^{j}\mu_{h}^{(n)}(dx)
\geq\sup_{h\in H}\int_{\R^{d}}G_{ij}(x-h)^{i}(x-h)^{j}\mu_{h}^{(n)}(dx). 
\]
The second inequality is obvious. 
We prove the last inequality.

Following the proof of \cite[Theorem 8.11]{vaart}, place the elements of $\Q^{d}$ in an arbitrary order, 
and let $H_k$ consist of the first $k$ elements in this sequence. 
Let
\[
c_{k}^{n}:=\sup_{h\in H_{k}}\int_{\R^{d}} k\wedge\left\{ G_{ij}(x-h)^{i}(x-h)^{j}\right\}\mu_{h}^{(n)}(dx), 
\]
and let
\[
 c_{k}:=\liminf_{n\to\infty}c_{k}^{n}
 \quad\mbox{and}\quad
 c:=\lim_{k\to\infty}c_{k}.
\]
Since $c$ is not greater than the third line of \eqref{eq:minimax1}, it suffices to show that
$c\geq c_{\theta_{0}}^{(rep)}$.
Since the inequality is trivial when $c=\infty$, we assume that $c<\infty$.

Take a subsequence $\{n_{k}\}\subset\{n\}$ that satisfies
\[
\lim_{k\to\infty}c_{k}^{n_{k}}=c.
\]
In fact, just choose $n_{k}$ so that $n_{k}>n_{k-1}$ and
\[
\left|c_{k}^{n_{k}}-c_{k}\right|<1/k
\]
hold for all $k\in\N$.
Let us prove that $\{ \mu_{h}^{(n_{k})}\} _{k}$ is uniformly tight for all $h\in\Q^{d}$.

Suppose that $\{ \mu_{h}^{(n_{k})}\} _{k}$ is not uniformly tight for some $h\in\Q^d$. 
For this $h$, let
\[
K_{L}:=\left\{ x\in\R^{d}:G_{ij}(x-h)^{i}(x-h)^{j}\leq L\right\},\qquad (L>0). 
\]
Then there exists an $\varepsilon>0$ such that 
\[
\limsup_{k\to\infty}\mu^{(n_{k})}(K_{L}^{c})\geq\varepsilon
\]
for all $L>0$. 
Since $h\in H_k$ for sufficiently large $k$, it holds that
\begin{align*}
c & =\lim_{k\to\infty}c_{k}^{n_{k}}\\
 & \geq\limsup_{k\to\infty}\int_{\R^{d}} k \wedge \left\{ G_{ij}(x-h)^{i}(x-h)^{j}\right\} \mu_{h}^{(n_{k})}(dx)\\
 & \geq L\cdot \limsup_{k\to\infty} \mu_{h}^{(n_{k})}(K_{L}^{c}) \\
 & \geq L\cdot \varepsilon.
\end{align*}
Since $L>0$ is arbitrary, this contradicts the assumption that $c<\infty$. 

Now that $\{ \mu_{h}^{(n_{k})}\} _{k}$ is proved uniformly tight for all $h\in\Q^d$, 
by the Prohorov lemma and the diagonal sequence trick, we can take a further subsequence $\{k_{s}\}\subset\{k\}$ that satisfies 
\[
\mu_{h}^{(n_{k_s})}\conv{} \exists \mu_{h}
\]
for all $h\in\Q^{d}$. 
It then follows from the asymptotic representation theorem for $h\in\Q^{d}$ that there is a POVM $M^{(\infty)}$ on $N(\left({\rm Re}\,\tau\right)h,\Sigma)\sim\left(X,\phi_{h}\right)$
such that 
\[
\phi_{h}\left(M^{(\infty)}(B)\right)=\mu_{h}(B), \qquad (\forall B\in\B(\R^d),\;\forall h\in\Q^d). 
\]

Now, for any $h\in\Q^{d}$ and $L>0$, 
\begin{align*}
c & =\lim_{s\to\infty}c_{k_{s}}^{n_{k_s}}\\
 & \geq \liminf_{s\to\infty}
 	\int_{\R^{d}} L\wedge \left\{ G_{ij}(x-h)^{i}(x-h)^{j}\right\}\mu_{h}^{(n_{k_s})}(dx)\\
 & =\int_{\R^{d}} L \wedge \left\{ G_{ij}(x-h)^{i}(x-h)^{j}\right\} \mu_{h}(dx).
\end{align*}
In the last equality, we used the portmanteau lemma.
Thus
\begin{align*}
c & \geq\sup_{L>0}\sup_{h\in\Q^{d}}\int_{\R^{d}} L \wedge \left\{ G_{ij}(x-h)^{i}(x-h)^{j}\right\} \phi_{h}\left(M^{(\infty)}(dx)\right) \\ 
 & =\sup_{L>0}\sup_{h\in\R^{d}}\int_{\R^{d}} L \wedge \left\{ G_{ij}(x-h)^{i}(x-h)^{j}\right\}\phi_{h}\left(M^{(\infty)}(dx)\right) \\ 
 & =\sup_{h\in\R^{d}}\int_{\R^{d}}G_{ij}(x-h)^{i}(x-h)^{j}\phi_{h}\left(M^{(\infty)}(dx)\right) \\
 & \geq c_{G}^{(rep)}.  
\end{align*}
Here, the second line follows from the fact that $h\mapsto\phi_{h}(A)$ is continuous for all $A\in {\rm CCR}({\rm Im}\Sigma)$ satisfying $\left\Vert A\right\Vert \leq 1$,
the third line is due to the monotone convergence theorem, 
and the last line follows from the minimax Theorem \ref{thm:q_gauss_minimax} for a quantum Gaussian shift model, as well as the fact that the asymptotic representation bound $c_{G}^{(rep)}$ is nothing but the Holevo bound for the quantum Gaussian shift model 
$\{ N(({\rm Re}\,\tau)h,\Sigma):h\in\R^{d}\}$. 

Finally, we prove that the last inequality of \eqref{eq:minimax1} is tight. 
Recall that the sequence $M_\star^{(n)}$ of POVMs constructed in the proof of Theorem \ref{thm:achieve} satisfies 
\[
\left(M_\star^{(n)},\,\rho_{\theta_{0}+h/\sqrt{n}}^{(n)}\right)\conv{}N(h,V_\star)\qquad (\forall h\in\R^{d})
\]
and
\[
\Tr GV_\star=c_{G}^{(rep)}. 
\]
We show that this sequence $M_\star^{(n)}$ saturates the last inequality in \eqref{eq:minimax1}. 
Let $p_{h}$ be the probability density of the classical Gaussian shift model $N(h,V_\star)$.  Then
\begin{align*}
 & \sup_{L>0}\sup_{H} \liminf_{n\to\infty}\sup_{h\in H}\int_{\R^{d}} L \wedge \left\{ G_{ij}(x-h)^{i}(x-h)^{j}\right\}
 \Tr\rho_{\theta_{0}+h/\sqrt{n}}^{(n)}M_\star^{(n)}(dx)\\
 &\qquad =\sup_{L>0}\sup_{H} \sup_{h\in H}\int_{\R^{d}} L \wedge \left\{ G_{ij}(x-h)^{i}(x-h)^{j}\right\} p_{h}(x)dx\\
 &\qquad =\Tr GV_\star.
\end{align*}
Here, the first equality follows from the portmanteau lemma and the fact that $H$ is a finite set.
The proof is complete. 
\end{proof}

\end{document}